\documentclass[acmsmall,screen,dvipsnames,
%  anonymous,
%  review,
]{acmart}
\pdfoutput=1

% do not indent theorems, proofs, definitions, etc.
\makeatletter
\def\@acmplainindent{0pt}
\def\@acmdefinitionindent{0pt}
\def\@proofindent{\noindent}

\usepackage{hyperref}
\usepackage{enumitem}
\usepackage{cancel}
\usepackage{cleveref}
\usepackage{graphicx}
\usepackage{listings}
\usepackage{mathpartir}
\usepackage{mathtools}
\usepackage{relsize}
\usepackage{xfrac}
\usepackage{rotating}
\usepackage{stmaryrd}
\usepackage{subcaption}
\usepackage{lscape}
\usepackage{xcolor}
 \hypersetup{
    colorlinks,
    linkcolor={red!50!black},
    citecolor={blue!50!black},
    urlcolor={blue!80!black}
}
\usepackage{wrapfig}
\usepackage{scalerel}
\usepackage{thmtools} 
\usepackage{thm-restate}
\usepackage[all, cmtip]{xy}
\usepackage{framed}
\usepackage{trimclip}
\usepackage{xr}

\usepackage{tikz}
\usetikzlibrary{shapes,arrows,positioning}

\def\extended{0}
\usepackage{olmacros}

\ifx\extended\undefined
\externaldocument[A-]{extended}
\fi

\ifx\apponly\undefined\else
\usepackage[29-69]{pagesel}
\fi

\bibliographystyle{ACM-Reference-Format}
\citestyle{acmauthoryear}
%\citestyle{acmnumeric}

%\def\@acmdefinitionindent{0pt}

\setlength{\abovecaptionskip}{4pt plus 3pt minus 2pt} 
\setlength{\belowcaptionskip}{-8pt plus 3pt minus 2pt}

% Used for displaying a sample figure. If possible, figure files should
% be included in EPS format.
%
% If you use the hyperref package, please uncomment the following two lines
% to display URLs in blue roman font according to Springer's eBook style:
%\usepackage{color}
%\renewcommand\UrlFont{\color{blue}\rmfamily}
%

\definecolor{codegreen}{rgb}{0,0.6,0}
\definecolor{codegray}{rgb}{0.5,0.5,0.5}
\definecolor{codepurple}{rgb}{0.58,0,0.82}
\definecolor{backcolour}{rgb}{0.95,0.95,0.92}

\lstdefinelanguage{gcl}{%
  keywords={def,let,in,while,do,repeat,if,then,else,skip,return},
  comment=[l]{//},
  commentstyle = {\color{codecomment}},
}[keywords,comments,strings]

\lstdefinestyle{mystyle}{
    commentstyle=\color{codegreen},
    keywordstyle=\color{magenta}\bfseries\ttfamily,
    numberstyle=\tiny\color{codegray},
    stringstyle=\color{codepurple},
    basicstyle=\itshape,
    breakatwhitespace=false,         
    breaklines=true,                 
    captionpos=b,                    
    keepspaces=true,                 
    showspaces=false,                
    showstringspaces=false,
    showtabs=false,                  
    tabsize=2
}
\lstset{style=mystyle}

%%% The following is specific to  and the paper
%%% 'Probabilistic Concurrent Reasoning in Outcome Logic: Independence, Conditioning, and Invariants'
%%% by Noam Zilberstein, Alexandra Silva, and Joseph Tassarotti.
%%%
\setcopyright{cc}
\setcctype{by}
\acmDOI{10.1145/3776651}
\acmYear{2026}
\acmJournal{PACMPL}
\acmVolume{10}
\acmNumber{POPL}
\acmArticle{9}
\acmMonth{1}
\received{2025-07-02}
\received[accepted]{2025-11-06}

\begin{CCSXML}
<ccs2012>
   <concept>
       <concept_id>10003752.10003790.10011742</concept_id>
       <concept_desc>Theory of computation~Separation logic</concept_desc>
       <concept_significance>500</concept_significance>
       </concept>
   <concept>
       <concept_id>10003752.10003790.10002990</concept_id>
       <concept_desc>Theory of computation~Logic and verification</concept_desc>
       <concept_significance>500</concept_significance>
       </concept>
   <concept>
       <concept_id>10003752.10010124.10010138.10010142</concept_id>
       <concept_desc>Theory of computation~Program verification</concept_desc>
       <concept_significance>500</concept_significance>
       </concept>
   <concept>
       <concept_id>10003752.10003753.10003761</concept_id>
       <concept_desc>Theory of computation~Concurrency</concept_desc>
       <concept_significance>500</concept_significance>
       </concept>
   <concept>
       <concept_id>10003752.10003753.10003757</concept_id>
       <concept_desc>Theory of computation~Probabilistic computation</concept_desc>
       <concept_significance>500</concept_significance>
       </concept>
 </ccs2012>
\end{CCSXML}

\ccsdesc[500]{Theory of computation~Separation logic}
\ccsdesc[500]{Theory of computation~Logic and verification}
\ccsdesc[500]{Theory of computation~Program verification}
\ccsdesc[500]{Theory of computation~Concurrency}
\ccsdesc[500]{Theory of computation~Probabilistic computation}

\begin{document}
\title{Probabilistic Concurrent Reasoning in Outcome Logic: Independence, Conditioning, and Invariants}
%
%\titlerunning{Abbreviated paper title}
% If the paper title is too long for the running head, you can set
% an abbreviated paper title here
%
\author{Noam Zilberstein}
\email{noamz@cs.cornell.edu}
\orcid{0000-0001-6388-063X}
\affiliation{%
  \institution{Cornell University}
  \country{USA}
}

\author{Alexandra Silva}
\email{alexandra.silva@cornell.edu}
\orcid{0000-0001-5014-9784}
\affiliation{%
  \institution{Cornell University}
  \country{USA}
}

\author{Joseph Tassarotti}
\email{jt4767@nyu.edu}
\orcid{0000-0001-5692-3347}
\affiliation{%
  \institution{New York University}
  \country{USA}
}

\keywords{Outcome Logic, Separation Logic, Concurrency, Probabilistic Programming}
\begin{abstract}
Although randomization has long been used in distributed computing, formal methods for reasoning about probabilistic concurrent programs have lagged behind. No existing program logics can express specifications about the full \emph{distributions of outcomes} resulting from programs that are both probabilistic and concurrent. To address this, we introduce \emph{Probabilistic Concurrent Outcome Logic} (\pcol), which incorporates ideas from concurrent and probabilistic separation logics into Outcome Logic to introduce new compositional reasoning principles. At its core, \pcol reinterprets the rules of Concurrent Separation Logic in a setting where separation models probabilistic independence, so as to compositionally describe joint distributions over variables in concurrent threads. Reasoning about outcomes also proves crucial, as case analysis is often necessary to derive precise information about threads that rely on randomized shared state. We demonstrate \pcol on a variety of examples, including to prove almost sure termination of unbounded loops.
\end{abstract}

\maketitle              % typeset the header of the contribution

\section{Introduction}

Randomization is an important tool in concurrent and distributed computing.  Concurrent algorithms can be made more efficient using randomization \cite{rabin1980n-process,rabin1982choice,morris1978counting} and some distributed synchronization problems have no deterministic solution \cite{fischer1985impossibility,lehmann1981advantages}. But despite the prevalence of randomization in concurrent computing over the last several decades, formal methods for such programs are limited.
The mixture of \emph{computational effects} in probabilistic concurrent programs is a major source of difficulty in developing verification techniques; random choice is introduced by sampling operations and nondeterminism arises from scheduling the concurrent threads. These two computational effects do not compose in standard ways \cite{varacca_winskel_2006}, so even just describing the semantics of such programs requires specialized models \cite{jifeng1997probabilistic,mciver2005abstraction,zilberstein2025denotational,zilberstein2025demonic}.

In this paper, we introduce \emph{Probabilistic Concurrent Outcome Logic} (\pcol), a logic for reasoning about programs that are both probabilistic and concurrent.
In \pcol, preconditions and postconditions are not just assertions about a single program state.
Instead, they describe the \emph{distribution of possible outcomes} that can arise from executing the program.
A key challenge is that, in the concurrent setting, different orderings of threads can give rise to different distributions over program behaviors.
To address this, \pcol takes inspiration from the recently introduced Demonic Outcome Logic (\dol) \cite{zilberstein2025demonic}, which supports reasoning about sequential probabilistic programs that additionally have a nondeterministic choice operator resolved by an adversary.
Different nondeterministic choices can cause different distributions of behaviors in these programs, just as different thread interleavings can cause different distributions in the concurrent setting.
% The non-determinism arising from this choice operator is treated \emph{adversarially}, meaning that specifications proven in \dol hold for all possible ways the choice can be resolved.
% This adversarial treatment is analogous to the goal of reasoning about concurrent programs, where we want specifications to apply to all possible non-deterministic interleavings of threads.
% Similarly, specifications in \pcol treat the concurrent scheduler adversarially, 
% computation in the presence of an adversarial scheduler that controls some aspects of the program behavior.
% But whereas nondeterminism in \dol\ is introduced via an explicit choice operator, it arises in \pcol\ due to scheduling of parallel threads.

However, while \dol's approach to describing the space of possible distributions provides a basis for \pcol, reasoning about the nondeterminism that arises from concurrent scheduling is substantially more complicated than reasoning about a choice operator.
With a choice operator, nondeterminism is \emph{localized} to the points where the operator is used, whereas in a concurrent program, every step can potentially involve nondeterminism from thread interleaving.
Reasoning explicitly about nondeterminism at every step is intractable and non-compositional.

To recover compositional reasoning, \pcol{} incorporates ideas from various \emph{separation logics}. Concurrent Separation Logic (\csl) uses disjointness of resources to ensure that concurrent computations only interact in controlled ways, so that each thread can be analyzed on its own \cite{csl,brookes2004semantics}. Probabilistic Separation Logics (\psl) use the notions of independence and conditioning to reason about the interaction between randomness and control flow \cite{psl,li2023lilac,bao2025bluebell,bao2021bunched,bao2022separation,yan2025combining}. \csl and \psl achieve compositional reasoning in concurrent and probabilistic settings, respectively, so combining their reasoning principles appears to be a natural way to derive a compositional logic for the combination of both effects.
However, as we will see in \Cref{sec:overview}, such a combination is challenging to achieve because the metatheories of the two logics are highly specialized to their respective domains, and a direct combination of their rules would not be sound.
This paper develops the metatheory in a more complex semantic domain, where concurrency and probabilistic computation can coexist.
As a result, \pcol is the first logic to combine all of the following features:

\subsubsection*{Compositional Concurrency Reasoning}
\pcol supports \emph{compositional} concurrency reasoning, meaning that each thread in a concurrent program can be analyzed in isolation, without considering all the possible interleavings or behaviors of the scheduler. Similar to Concurrent Separation Logic (\csl), compositionality stems from \emph{separation}---as long as two threads operate on their own portions of memory, they cannot interfere with each other. In addition, shared state is handled via \emph{resource invariants}, properties about shared state that remain true at every step.

\subsubsection*{Compositional Probabilistic Reasoning}
In a probabilistic context, separation of memory footprints is not sufficient; while it can tell us how the local variables of each thread are distributed, it does not give us the \emph{joint distribution} over the entire global memory.
In response, we use a \psl-style model where separation additionally models \emph{probabilistic independence}  \cite{psl}. Going beyond prior work, our parallel composition rule guarantees that the scheduler cannot introduce any probabilistic correlation between the local states of each thread.
%To prove this, we formulate a new invariant-sensitive semantics, which simulates the possible behaviors of parallel threads to guarantee that the postcondition remains valid in any context.

\subsubsection*{Compositional Outcome Reasoning}
When concurrent threads depend on randomized shared state, it is often necessary to do case analysis over the possible values of that shared state in order to capture the probabilistic correlation between the threads (\eg see \Cref{sec:overview-outcomes}).
In the style of Demonic Outcome Logic (\dol) \cite{zilberstein2025demonic}, \pcol supports compositional reasoning about the outcomes generated via both probabilistic branching and the nondeterministic behavior of the scheduler. 
But unlike \dol---which does not support separation---case analysis must be done with care, as it can invalidate the independence guarantees of the \ruleref{Frame} rule. Compared to prior outcome logics, the \emph{outcome conjunction} of \pcol has a new measure theoretic foundation.

\subsubsection*{Unbounded Looping and Almost Sure Termination}
Our logic includes rules for establishing \emph{almost sure termination}---termination with probability 1---for unbounded loops. This goes beyond the capabilities of all prior separation logics that model separation with probabilistic independence, which either have only bounded looping constructs (\eg for loops) or require loops to always terminate (which must be established externally to the logic). Unbounded looping is important in randomized concurrent programs, as such programs often only achieve the desired distribution of outcomes in the limit, and not after a bounded number of steps (\eg \Cref{sec:von-neumann}).

\medskip
\noindent We begin in \Cref{sec:overview} with an overview of the technical challenges and design of \pcol. Next, in \Cref{sec:semantics} we outline the programming language and semantic model that we will use. The logic and inference rules are defined in \Cref{sec:model,sec:logic}. We demonstrate the capabilities of \pcol on four case studies in \Cref{sec:examples}. Finally, we conclude by discussing related work and future directions in \Cref{sec:related,sec:discussion}. Omitted proofs and details are given in the appendix \cite{zilberstein2025probabilistic}.

\section{Overview: Familiar Reasoning Principles in a New Setting}
\label{sec:overview}

In this paper, we develop a logic for verifying the correctness of randomized concurrent imperative programs, written in
a language that includes control flow operations (if statements and while loops), parallel composition $C_1 \parallel C_2$, and random sampling $x \samp d$.
A major hurdle in concurrency analysis is that the semantics is \emph{non-compositional}; two programs can have completely different behavior when run in parallel than they do when run in isolation. Enumerating all possible interleavings of the threads is not a viable strategy, so abstractions must be introduced for sound compositional analysis. In this section, we explore the mechanisms that enable compositional reasoning about probabilistic concurrent programs in \pcol, including the challenges that arise with shared state.

\subsection{Concurrent Separation Logic meets Probabilistic Separation Logic}
\label{sec:overview-csl-psl}

Concurrent Separation Logic (\csl) achieves compositionality via \emph{separation} \cite{csl,brookes2004semantics}---if two threads act on disjoint memory regions, then their behavior will not change when run in parallel. This idea is encapsulated by the \hyperlink{rule:par1}{\textsc{Par}} rule, where the \emph{separating conjunction} $\varphi\sep\psi$ means that the machines's memory cells can be divided to satisfy $\varphi$ and $\psi$ individually. In addition, the \hyperlink{rule:frame1}{\textsc{Frame}} rule guarantees that threads cannot interfere with memory outside of their local state.
\begin{mathpar}
\hypertarget{rule:par1}{
  \inferrule{
    \triple{\varphi_1}{C_1}{\psi_1}
    \\
    \triple{\varphi_2}{C_2}{\psi_2}
  }{
    \triple{\varphi_1\sep\varphi_2}{C_1\parallel C_2}{\psi_1\sep \psi_2}
  }{\textsc{Par}}
}

\hypertarget{rule:frame1}{
\inferrule{
  \triple{\varphi}C{\psi}
}{
  \triple{\varphi\sep\vartheta}C{\psi\sep\vartheta}
}{\textsc{Frame}}
}
\end{mathpar}
Now suppose that $C_1$ and $C_2$ in the \hyperlink{rule:par1}{\textsc{Par}} rule are probabilistic programs. For example, if we flip two coins and store the results in the variables $x$ and $y$, then the respective variables will be distributed according to Bernoulli distributions with parameter $\frac12$, as shown in the following specifications, where $\sure P$ means that $P$ holds with probability 1 (almost surely),  ${x\mapsto -}$ means that the current thread has permission to read and write $x$, and $x \sim d$ means that $x$ is distributed according to $d$.
\[
  \triple{\sure{x\mapsto -}}{x \samp \bern{\tfrac12}}{x \sim \bern{\tfrac12}}
  \qquad\quad
  \triple{\sure{y\mapsto -}}{y \samp \bern{\tfrac12}}{y \sim \bern{\tfrac12}}
\]
Composing these programs in parallel, we would ideally want to derive a specification that dictates not only how $x$ and $y$ are distributed, but also their \emph{joint distribution}. In this case, $x$ and $y$ are \emph{probabilistically independent}, meaning that each outcome (\eg $x$ and $y$ are both 1) occurs with probability $\frac12\cdot\frac12=\frac14$.
Thus, it is natural to consider using the interpretation of separation proposed in Probabilistic Separation Logic (\psl) \cite{psl}, in which $\varphi\sep\psi$ states that the events described by $\varphi$ and $\psi$ are probabilistically independent.
By treating separation as \emph{both} disjointness of memory \emph{and} probabilistic independence, one might hope to validate a probabilistic interpretation of the \hyperlink{rule:par1}{\textsc{Par}} rule, with which we could derive the following specification.
%, which is indeed valid in this case.
\[
%\inferrule{
%  \triple{\sure{x\mapsto X}}{x \samp \bern{\tfrac12}}{x \sim \bern{\tfrac12}}
%  \\
%  \triple{\sure{y\mapsto Y}}{y \samp \bern{\tfrac12}}{y \sim \bern{\tfrac12}}
%}{
  \triple{\sure{x\mapsto -} \sep \sure{y\mapsto -}}{x \samp \bern{\tfrac12} \parallel y \samp \bern{\tfrac12}}{x \sim \bern{\tfrac12} \sep y \sim \bern{\tfrac12}}
%}{\ruleref{Par}}
\]
Intuitively, the probabilistic interpretation of \hyperlink{rule:par1}{\textsc{Par}} in this instance is justifiable because $C_1$ and $C_2$ execute without interaction, so there is no correlation between their random behaviors.
The lack of correlation is due to the fact that there is no shared state, and therefore the nondeterminism introduced by the interleaving behavior of the scheduler is not observable in any way.

Bigger challenges arise when the threads \emph{do} share state, making the nondeterministic behavior of the scheduler observable.
\csl handles shared state with \emph{resource invariants}---threads can interact with shared state as long as the invariant $I$ is preserved by every atomic step. More specifically, \csl provides the following inference rules: a more general \hyperlink{rule:par2}{\textsc{Par}} rule allows the invariant resources to be used in both threads; the \hyperlink{rule:atom2}{\textsc{Atom}} rule \emph{opens} the invariant, as long as the program is a single atomic command $a$; and \hyperlink{rule:share2}{\textsc{Share}} allocates an invariant that is true before and after the program execution.
\begin{mathpar}\small
\hypertarget{rule:par2}{
  \inferrule{
    I\vdash\triple{\varphi_1}{C_1}{\psi_1}
    \quad
    I\vdash\triple{\varphi_2}{C_2}{\psi_2}
  }{
    I\vdash\triple{\varphi_1\sep\varphi_2}{C_1\parallel C_2}{\psi_1\sep \psi_2}
  }{\textsc{Par}}
}

\hypertarget{rule:atom2}{
\inferrule{
  \vdash\triple{\varphi\sep\sure I}a{\psi\sep \sure I}
}{
  I\vdash\triple\varphi{a}\psi
}{\textsc{Atom}}
}

\hypertarget{rule:share2}{
\inferrule{
  I \vdash\triple{\varphi}C{\psi}
}{
  \vdash\triple{\varphi\sep\sure{I}}C{\psi\sep\sure I}
}{\textsc{Share}}
}
\end{mathpar}
Under a probabilistic interpretation of $\sep$, this stronger \hyperlink{rule:par2}{\textsc{Par}} rule is valid when the shared state described by the invariant $I$ is \emph{deterministic}, for example, in the following program, where $z$ has a fixed value, which is read from both threads.
\[
  z \mapsto 1 \vdash\triple{\sure{x\mapsto-}\sep\sure{y\mapsto-}}{x\samp \bern{\tfrac{z}2} \parallel y \samp \bern{\tfrac{z}2}}{(x\sim\bern{\tfrac12}) \sep (y \sim\bern{\tfrac12})}
  \]
However, with nondeterministic or randomized shared state, reasoning about programs becomes more complicated because correlations can be introduced in very subtle ways. In the remainder of this section, we will see a few such representative scenarios, and how they are handled in \pcol.

\subsection{Handling Randomized Shared State with Outcome Logic}
\label{sec:overview-outcomes}

When shared state is randomized, compositional reasoning about outcomes becomes essential.
As an example, consider the following program, where $x$ and $y$ read from the shared randomized variable $z$ in different threads.
\begin{equation}\label{eq:conditioning}
  z \samp \bern{\tfrac12} \fatsemi ( x\coloneqq z \parallel y \coloneqq 1-z )
\end{equation}
Here, $x$ and $y$ are clearly not independently distributed, since both are derived from $z$. So it seems dubious that \ruleref{Par} could be used to reason about this program.
The key observation is that $x$ and $y$ are \emph{conditionally} independent on $z$, so we can compositionally reason about the threads by first breaking down the outcomes of the sampling operation such that $z$ is deterministic in each case.
We draw inspiration from the logics \DIBI \cite{bao2021bunched}, Lilac~\cite{li2023lilac}, and Bluebell~\cite{bao2025bluebell}, which include constructs to reason about conditioning, but which are not sufficient for the kind of analysis needed here.
\DIBI's model of separation is too coarse (see \Cref{sec:ov-precise}), Lilac has no rule for case analysis over conditioning modalities, and Bluebell's case analysis rule (\textsc{c-wp-swap}) has side conditions which would preclude adapting it for use in parallel composition.

To support \emph{both} case analysis \emph{and} parallel composition, in \pcol we introduce an \emph{outcome conjunction} $\bigoplus_{X\sim d}\varphi$, which binds a new logical variable $X$ that can be referenced in $\varphi$, and is distributed according to $d$, \eg $z \sim \bern{\frac12}$ is syntactic sugar for $\bigoplus_{Z \sim\bern{\frac12}} \sure{z\mapsto Z}$.
Outcome conjunctions feature in prior logics \cite{outcome,zilberstein2025outcome,zilberstein2025demonic,zilberstein2024outcome,zhang2024quantitative}, but in this paper we use a new measure theoretic foundation based on \emph{direct sums} \cite{fremlin2001measure}, which interacts well with separation, and represents conditional probabilities via Bayes' Law.
As a result, the \ruleref{Split} rule---shown below---is admissible in \pcol, which is critical for concurrency reasoning.\footnote{Our approach of using direct sums has tradeoffs: the outcome conjunction $\bigoplus$ is fundamentally discrete, whereas other logics, notably Lilac~\cite{li2023lilac}, have continuous conditioning modalities. Despite this restriction, our examples will illustrate the expressive power of the logic, as many uses of randomization in concurrent and distributed programs only require discrete distributions. See \Cref{sec:related} for a deeper comparison with other logics.}
\[
  \ruledef{Split}{
    I\vdash\triple{\varphi}C{\psi}
  }{\textstyle
    I\vdash\triple{\bigoplus_{X\sim d}\varphi}C{\bigoplus_{X\sim d}\psi}
  }
\]
In the premise of \ruleref{Split}, $X$ is unbound, and therefore it is implicitly universally quantified. So, the rule allows us to partition the sample space according to $d$, reason about the program as if $X$ is a deterministic value, and then once again bind $X$ under an outcome conjunction in the conclusion.

Returning to Example (\ref{eq:conditioning}), by considering each possible outcome of the sampling operation, we end up in a situation where $z$ is deterministic, and therefore the \hyperlink{rule:par2}{\textsc{Par}} rule can apply.
Indeed, the \ruleref{Split} rule is the missing piece that we need. After executing the sampling operation, we get the assertion $(z \sim \bern{\frac12})\sep\sure{x\mapsto - \sep y\mapsto -}$, which is equivalent to $\bigoplus_{Z\sim\bern{\frac12}} \sure{z \mapsto Z \sep x\mapsto-\sep y\mapsto-}$. So, to analyze the parallel composition, all we have to do is apply \ruleref{Split} to make $z$ deterministic, and then allocate the invariant $z \mapsto Z$, stating that $z$ has some fixed---but universally quantified---value.
The derivation is sketched below and shown fully in \Aref{app:cond-ind}.
%$ \Cref{fig:conditioning-deriv}, 
%\begin{figure}
\[\small
\inferrule*[right=\rulereff{Split}]{
  \inferrule*[Right=\rulereff{Share}]{
    \inferrule*[Right=\rulereff{Par}]{
      \inferrule*{}{
        z\mapsto Z \vdash\triple{\sure{x\mapsto-}}{x \coloneqq z}{\sure{x\mapsto Z}}
      }
      \quad
      \inferrule*{}{
        z\mapsto Z \vdash\triple{\sure{y\mapsto-}}{y \coloneqq 1-z}{\sure{y\mapsto 1-Z}}
      }
    }{
      z\mapsto Z \vdash\triple{\sure{x\mapsto-\sep y\mapsto-}}{x \coloneqq z \parallel y \coloneqq 1-z}{\sure{x\mapsto Z\sep y\mapsto1-Z}}
    }
  }{
    \vdash\triple{\sure{z \mapsto Z\sep x\mapsto-\sep y\mapsto-}}{x \coloneqq z \parallel y \coloneqq 1-z}{\sure{z \mapsto Z\sep x\mapsto Z\sep y\mapsto1-Z}}
  }
}{
  \vdash\triple{\smashoperator{\bigoplus_{Z\sim\bern{\frac12}}} \sure{z \mapsto Z \sep x\mapsto-\sep y\mapsto-}}{x \coloneqq z \parallel y \coloneqq 1-z}{\smashoperator{\bigoplus_{Z\sim\bern{\frac12}}} \sure{z \mapsto Z\sep x\mapsto Z\sep y\mapsto1-Z}}
}
\]
%\caption{Sketch of the derivation for program (\ref{eq:conditioning}).}
%\label{fig:conditioning-deriv}
%\end{figure}
Ultimately, we conclude that $x$ and $y$ are independent inside the scope of the $\bigoplus_{Z\sim\bern{\frac12}}$---where they are deterministic---while still recording exactly how they are probabilistically correlated. This pattern arises frequently, \eg in \Cref{sec:shuffle}, where we prove the correctness of a concurrent shuffling algorithm in \pcol.

\subsection{Taming Nondeterminism with Precise Assertions}
\label{sec:ov-precise}

In the previous section, we saw a new form of compositional reasoning, but it was limited to scenarios where the shared state could be made deterministic via case analysis. In other words, the nondeterministic scheduling order could not affect the outcome of the program.
%\joe{It's not clear why it was limited in that way, because with what we've seen so far, to someone who knows CSL it's not at all obvious why you can't have a disjunction in the invariant and allow writing.}
The effects of scheduling become observable when threads \emph{mutate} shared state, because different interleavings can cause the shared variables to take on different values throughout the program execution.

In fact, treating the scheduler adversarially, shared state provides opportunities for the scheduler to introduce probabilistic correlations in unexpected ways. For example, in the following program, the scheduler can force $x$ and $y$ to be equal by first executing the sampling operation; then, with the value of $x$ fixed, choosing an order for the writes to $y$ so that $y \coloneqq 1-x$ is first, and $y \coloneqq x$ is last. 
\begin{equation}\label{eq:coinflip-game}
  \left( x \samp \bern{\tfrac12} \fatsemi y \coloneqq 0 \right) \;\; \mathlarger{\mathlarger\parallel} \;\;  y\coloneqq 1
\end{equation}
In that case, $x$ and $y$ are both distributed according to $\bern{\frac12}$, but are certainly not \emph{independently} distributed.
To ensure that the correlations between $x$ and $y$ do not invalidate the \ruleref{Par} rule in \pcol, we additionally require the postconditions of each thread to be \emph{precise} (\Cref{def:precise}), essentially meaning that they exactly determine the probability of each event. Any pure assertion $\sure P$ is precise---$P$ occurs with probability exactly $1$---as is $x\sim d$, since $d$ dictates the probability of $\sure{x \mapsto v}$ for all $v$.
Given that, $(x\sim \bern{\frac12})\sep \sure{y \in \{0, 1\}}$ is a valid---and precise---postcondition for (\ref{eq:coinflip-game}).

It may be surprising that $(x\sim \bern{\frac12})\sep \sure{y \in \{0, 1\}}$ holds; as we just saw, $x$ and $y$ are not necessarily independently distributed. However, \pcol uses a \emph{probability-space-as-a-resource} model of separation, first explored by \citet{li2023lilac}, to make separation more flexible. That is, we care only about independence of \emph{measurable events} and not \emph{samples}. In the above case, we need only measure the probability of $\sure{y\in\{0,1\}}$, which occurs with probability 1, and nothing finer such as $\sure{y\mapsto 0}$ and $\sure{y\mapsto 1}$. Independence is therefore trivial, \eg we get that $x \mapsto 1$ and $y \in \{0, 1\}$ with probability $\frac12\cdot1 = \frac12$.
While it is difficult to compositionally determine the exact set of possible joint distributions resulting from different scheduling behaviors, \pcol neatly abstracts away those semantic considerations using simple syntactic checks on the postcondition.

\subsection{Weak Separation and Case Analysis over Shared State}

Examples (\ref{eq:conditioning}) and (\ref{eq:coinflip-game}) illustrate how \pcol combines both the disjointness and independence interpretations of separation to obtain a probabilistic \hyperlink{rule:par1}{\textsc{Par}} rule. 
However, in some cases, the combination of disjointness and independence is too strong and does not hold, because the scheduler \emph{can} induce correlations between concurrent threads through shared state.
While we can sometimes \emph{coarsen} what is measurable to recover independence---as we did with $y$ in (\ref{eq:coinflip-game})---it is not always possible.

In response, \pcol uses a second form of separating conjunction, which we call \emph{weak separation}, written $\varphi \osep \psi$, which only requires that $\varphi$ and $\psi$ hold for disjoint state, and need not be probabilistically independent.
Weak separation allows us to still recover some of the benefits of separation logic for reasoning about disjointness, even when we cannot expect independence to hold.
To see an example, consider a thread running the following command, in which control flow depends on a variable $x$ that is part of shared state and may be written by another thread:
\[%begin{equation}
  C \quad \triangleq \quad x'\coloneqq x \fatsemi {\iftf{x'}{z \samp \bern{\tfrac12}}{z \coloneqq 1}}
\]%end{equation}
To analyze the program above, we need to do case analysis on the value of $x$ at the moment that it is read. Then, we conclude that regardless of $x$'s value, $z$ is distributed according to a Bernoulli distribution with parameter \emph{at least} $\frac12$, motivating the following triple.
\begin{equation}\label{eq:weak-triple}
  x\mapsto 0 \vee x\mapsto 1 \vdash\triple{\sure{z\mapsto- \sep x' \mapsto -}}{C}{ \exists X\ge \tfrac12.\ z \sim \bern{X}}
\end{equation}
Triple (\ref{eq:weak-triple}) is indeed valid on its own, but it is not compatible with the \hyperlink{rule:frame1}{\textsc{Frame}} rule that we saw in \Cref{sec:overview-csl-psl}. To see why, consider the extended program below in a case where $x$ is initially $0$. 
\[%begin{equation}
  (y \samp \bern{\tfrac12} \fatsemi C) \parallel (x \coloneqq 1)
\]%end{equation}
In the left thread, after the $y \samp \bern{\tfrac12}$, if we apply \hyperlink{rule:frame1}{\textsc{Frame}} with $\vartheta = y \sim \bern{\frac12}$ and use the triple for $C$ in (\ref{eq:weak-triple}), we would get the postcondition $(\exists X\ge \tfrac12.\ z\sim \bern{X}) \sep (y\sim\bern{\frac12})$. But that postcondition is not valid because the scheduler can force $y$ and $z$ to be correlated by using the value of $y$ to influence the value of $x$.\footnote{For the interested reader:
if the scheduler makes $x$ equal to $y$, then $x$ must also be uniformly distributed, and so $z \sim \bern{\frac12}$ with probability $\frac12$ and $z\mapsto 1$ with probability $\frac12$, which implies that $z \sim \bern{\frac34}$. But clearly, $(z \sim\bern{\frac34})\sep(y\sim\bern{\frac12})$ does not hold, since whenever $y = 0$, then $z=1$, and so the probability of $\sure{y\mapsto 0\sep z\mapsto 1}$ is $\frac12$, which is not equal to $\frac12\cdot\frac34$.}
As we show in \Cref{sec:structural}, case analysis on nondeterministic state causes triples to only be \emph{weakly} frame preserving, meaning that only a variant of the \ruleref{Frame} rule using weak separation $\varphi \osep \psi$ can be used.
Under the right conditions, strong frame preservation can be restored, as shown in \Cref{sec:ast-rules,sec:entropy-mixer,sec:von-neumann}.

%Note that the postcondition of (\ref{eq:weak-triple}) is a classic example of an assertion that is \emph{not precise}, since we only bound the probabilities that $z$ has value 0 or 1, we did not specify them exactly.
%\joe{But what follows after this does not make sense to the reader yet because we only described weak separation, not weak vs. strong triples yet. It also is rather technical, maybe we should just strike the remainder of this paragraph.} A weak triple with a precise postcondition can be converted into a strong triple, reenabling the \ruleref{Frame} rule with full independence guarantees. This ability proves crucial in a variety of settings (\eg \Cref{sec:ast-rules,sec:entropy-mixer,sec:von-neumann}) where only imprecise information is available at an intermediary stage. For example, when proving almost sure termination, it is common to not know precisely how likely the program is to terminate on any given iteration, but later to establish that the program eventually terminates with probability 1, recovering a precise postcondition.

\medskip 

\noindent Having now given an overview of \pcol's features, we begin the technical development in \Cref{sec:semantics} by outlining a denotational model for probabilistic concurrent programs. We then give a measure theoretic model of assertions---including the separating and outcome conjunctions--- in \Cref{sec:model}. In \Cref{sec:logic}, we define \pcol triples and provide a proof system. Four examples are shown in \Cref{sec:examples} before we conclude by discussing related work in \Cref{sec:related} and future directions in \Cref{sec:discussion}.

\section{A Probabilistic and Concurrent Programming Language}
\label{sec:semantics}

\begin{figure}
\begin{align*}
\mathsf{Cmd} \ni C &\Coloneqq \skp 
\mid C_1 \fatsemi C_2
\mid C_1 \parallel C_2 
\mid \iftf b{C_1}{C_2} 
\mid \whl{b}{C}
\mid a
\\
\act \ni a &\Coloneqq
  x \coloneqq e 
  \mid x \samp d(e)
%  \mid x \coloneqq \code{cas}(y, e_{\mathsf{old}}, e_{\mathsf{new}})
%  \mid \code{faa}(x, e)
%\\
%\mathsf{LHS} \ni l &\Coloneqq x \mid l[e]
\\
\mathsf{Dist} \ni d &\Coloneqq \bern{-} \mid \geo{-} \mid \unif{-}
\\
\test \ni b & \Coloneqq
  \tru \mid \fls \mid b_1 \land b_2 \mid b_1 \lor b_2 \mid \lnot b \mid e_1 \asymp e_2
\\
\mathsf{Exp} \ni e &\Coloneqq x \mid v \mid b \mid e[e'] \mid [e_1, \ldots, e_n] \mid e_1 + e_2 \mid e_1 \cdot e_2 \mid \cdots
\end{align*}
\caption{Syntax of a probabilistic concurrent language, where $x\in\mathsf{Var}$, $v \in \mathsf{Val}$, and $\mathord{\asymp} \in \{ =, \le, <, \ldots \}$.}
\label{fig:syntax}
\end{figure}

We begin by describing the syntax and semantics of a probabilistic concurrent programming language, shown in \Cref{fig:syntax}. Program commands $C\in\mathsf{Cmd}$ consist of no-ops ($\skp$), sequential composition ($C_1\fatsemi C_2$), parallel composition ($C_1\parallel C_2$), if statements, while loops, and actions $a\in\act$. 

Actions can perform probabilistic sampling operations $x \samp d(e)$, where $d\in\mathsf{Dist}$ is a discrete probability distribution with the expression $e$ as a parameter. We include three types of distributions---Bernoulli distributions $\bern p$, assigning probability $p$ to $1$ and probability $1-p$ to $0$; geometric distributions $\geo p$, assigning probability $(1-p)^n p$ to each $n \in\mathbb N$; and uniform distributions $\unif e$ where $e$ evaluates to a finite set or list of values $[v_1, \ldots, v_n]$, each having probability $1/n$.

Our restriction to discrete distributions was motivated by the applications that we are targeting, including synchronization protocols, which only require fair coin flips \cite{lehmann1981advantages,ben-or1983another}; cryptography, where keys are uniformly sampled fixed length bit-strings; and randomized sketching data structures, where hashes are modeled as uniform random samples over a finite set \cite{flajolet1985approximate}. Semantic domains for combining nondeterminism with continuous probability have been explored \cite{tix2009semantic,keimel2017mixed}, but our approach would need modifications to exploit that. See \Cref{sec:related} for a discussion.

The language also has deterministic assignments $x\coloneqq e$, where $e$ is an expression. Expressions consist of variables $x$, values $v$, tests $b$, list literals $[e_1, \ldots, e_n]$, list accesses $e[e']$ (where $e$ is a list and $e'$ is an index), and standard arithmetic operations.
Many more actions could be added to this semantics, including nondeterministic assignment and atomic concurrency primitives such as compare-and-swap, but we do not explore them in this paper.

We use the recently introduced \emph{Pomsets with Formulae} model \cite{zilberstein2025denotational}, which augments typical  denotational techniques for concurrency semantics \cite{gischer1988equational,pratt1986modeling} in order to properly capture probabilistic behavior. We use a denotational model because \pcol describes \emph{distributions} over program states, which differs from the usual \csl setting where assertions are predicates on a single state. Thus, we cannot adapt the \citet{vafeiadis2011concurrent} style operational soundness argument of quantifying over all finite executions. Instead, we must use domain-theoretic techniques to construct the full set of distributions that can occur after an infinite amount of time. While there are several Iris-based separation logics for randomized programs that use an operational proof, all of those logics have predicates over single program states, which they then lift to a probabilistic interpretation using either couplings (for relational proofs)~\citep{polaris, gregersen2024asynchronous} or using resources to describe \emph{one} property of the randomized program (\eg error bounds or expected costs)~\citep{haselwarter2024tachishigherorderseparationlogic, aguirre2024error}. In contrast, \pcol assertions allow for rich specifications over distributions of states.
%
%The addition of random sampling presents several challenges in defining a suitable semantics, as typical approaches do not properly account for the interactions between probabilistic actions and the nondeterministic scheduler. As such, this language must be interpreted in a domain that supports both randomization and nondeterminism, which we briefly describe in \Cref{sec:convex}. 
%We then define the semantics of atomic actions and tests in \Cref{sec:actions}.
%To account for parallel execution, we use an \emph{invariant sensitive} version of action semantics, where a portion of the state governed by a resource invariant can be nondeterministically changed before the action is run. This simulates the behavior of a parallel thread, which may interfere with the program execution by altering shared state.

\subsection{Preliminaries: Memories and the Convex Powerset}
\label{sec:convex}

Programs must be interpreted in a domain that supports both probabilistic and nondeterministic computation. Although none of our actions are explicitly nondeterministic, nondeterminism arises due to the interleaving of concurrent threads. The difficulty is that typical representations of probabilistic computation (distributions) do not compose well with nondeterminism (powersets) \cite{varacca_winskel_2006,zwart2019no}. We instead use the \emph{convex powerset} $\C$ in our denotational semantics, which we describe in this section.

\subsubsection*{Memories, Expressions, and Tests}
A memory $\sigma \in \mem S \triangleq S\to \mathsf{Val}$ is a mapping from a finite set of variables $S \subseteq_\fin \mathsf{Var}$ to values, where $\mathsf{Val}$ consist of integers, rationals, and lists. The disjoint union $\mathord{\uplus}\colon \mem S \to \mem T \to \mem{S\cup T}$ combines two memories as long as $S\cap T=\emptyset$, and a similar operation $A\sep B$ is defined on sets of memories $A\subseteq \mem S$ and $B\subseteq \mem T$.
\begin{mathpar}
  (\sigma\uplus\tau)(x) \triangleq \left\{
    \begin{array}{ll}
      \sigma(x) & \text{if}~ x\in S
      \\
      \tau(x) & \text{if}~ x\in T
    \end{array}
  \right.

  A\sep B \triangleq \{ \sigma \uplus \tau \mid \sigma\in A, \tau \in B \}
\end{mathpar}
The notation $A\sep B$ is reminiscent of the \emph{separating conjunction} \cite{bi}, and indeed we will use it in \Cref{sec:logic} to define the separating conjunction.
We define projections $\pi_S \colon \mem T \to \mem{S\cap T}$ as $\pi_S(\sigma)(x) \triangleq \sigma(x)$ if $x\in S$.
Expressions are interpreted in the usual way with $\de{e}_\expr \colon \mem S \to \mathsf{Val}$ as long as $\free(e) \subseteq S$, if not then $\de{e}_\expr(\sigma)$ is undefined. The same is true for tests and $\de{b}_\test\colon \mem S\to \mathbb B$ (where $\mathbb{B} = \{0,1\}$) is defined if $\free(b) \subseteq S$.

\subsubsection*{Discrete Probability Distributions}
A discrete probability distribution $\mu \in \D(X)$ over a countable set $X$ is a mapping from elements of $X$ to $[0,1]$ such that $\sum_{x\in X}\mu(x) = 1$. The support of a distribution is the set of elements to which it assigns nonzero probability $\supp(\mu) \triangleq \{ x \in X \mid \mu(x) \neq 0 \}$. The Dirac, or point-mass, distribution $\delta_x$ assigns probability 1 to $x$ and $0$ to everything else. The previously defined projections extend to distributions $\pi_S \colon \D(\mem T) \to \D(\mem{S\cap T})$ by marginalizing as follows $\pi_S(\mu)(\sigma) \triangleq {\sum_{\tau \in \mem{T \setminus S}}} \mu(\sigma\uplus\tau)$. Distributions $\mu,\nu \in \D(X \cup \{\bot\})$ are ordered as follows: $\mu \sqsubseteq_\D \nu$ iff $\mu(x) \le \nu(x)$ for all $x\in X$ and $\mu(\bot) \ge \nu(\bot)$. This makes $\tuple{\D(X \cup \{\bot\}), \sqsubseteq_\D}$ a pointed poset with bottom $\bot_\D = \delta_\bot$.

\subsubsection*{The Convex Powerset}
A convex powerset is a set of all the possible distributions of outcomes that could result from (nondeterministic) scheduling.
For a more complete explanation, refer to \citet{jifeng1997probabilistic} and \citet{zilberstein2025demonic}.
Distributions can be added and scaled pointwise: $(\mu + \nu)(x) = \mu(x) + \nu(x)$ and $(p \cdot \mu)(x) = p\cdot\mu(x)$. The convex combination of two distributions is defined as $\mu \oplus_p\nu = p\cdot\mu + (1-p)\cdot \nu$.
A set of distributions $S\subseteq \D(X \cup \{\bot\})$ is \emph{convex} if it is closed under convex combinations: $(\mu \oplus_p \nu) \in S$ for every $\mu,\nu\in S$ and $p\in[0,1]$.

Additional requirements ensure that the domain is a DCPO, and therefore suitable for representing iterated computations and fixed points.
A set $S$ is \emph{up-closed} if for all $\mu, \nu \in \D(X \cup \{\bot\})$, if $\mu \in S$ and $\mu \sqsubseteq_\D \nu$ then $\nu \in S$.
Finally, $S$ is \emph{Cauchy closed} if it is closed in the product of Euclidean topologies \cite{mciver2005abstraction}, \ie it is a finite union of closed regions of $X\cup\{\bot\}$-dimensional Euclidean space. The convex powerset is now defined as follows:
\[
  \C(X) \triangleq
  \{ S \subseteq \D(X \cup \{\bot\})
    \mid
    S ~\text{is nonempty, up-closed, convex, and Cauchy closed}
  \}
\]
We include $\bot$ to represent nontermination and undefined behavior such as accessing a variable that is not scope. Nonemptiness ensures that the semantics is not vacuous, since undefined behavior is represented by $\{ \delta_\bot\}$ rather than $\emptyset$.
Up-closure ensures that $\C$ is a partial order in the \citet{smyth1978power} powerdomain, \ie $S \sqsubseteq_\C T$ iff $\forall \nu \in T.\ \exists \mu\in S.\ \mu \led \nu$. In fact, due to up-closure, $S \lec T$ iff $S \supseteq T$, so suprema are given by set intersections. Cauchy closure ensures that the intersections of directed sets are nonempty, and therefore $\tuple{\C(X), \lec}$ is a pointed DCPO with bottom $\bot_\C = \D(X \cup \{\bot\})$.

Convexity ensures that $\C$ carries a \emph{monad} structure \cite{jacobs2008coalgebraic}, making the sequencing of actions compositional. More precisely, there is a unit $\eta \colon X \to \C(X)$ and Kleisli extension $(-)^\dagger \colon (X \to \C(Y)) \to \C(X) \to \C(Y)$, which obey the monad laws: $\eta^\dagger = \mathsf{id}$, $f^\dagger \circ \eta = f$, and $f^\dagger \circ g^\dagger = (f^\dagger \circ g)^\dagger$. These operations are defined as follows:
\begin{mathpar}
  \eta(x) \triangleq \{ \delta_x \}

  f^\dagger(S) \triangleq \Big\{
    \smashoperator[r]{\sum_{x\in\supp(\mu)}} \mu(x)\cdot \nu_x
    ~ \Big| ~
    \mu\in S, 
    \forall x \in \supp(\mu).\
    \nu_x \in f_\bot(x)
  \Big\}
%  
% f_\bot(x) = \left\{
% \begin{array}{ll}
%   f(x) & \text{if}~x\in X
%   \\
%   \bot_\C & \text{if} ~ x = \bot
% \end{array}\right.
\end{mathpar}
where $f_\bot(x) = f(x)$ for $x\in X$ and $f_\bot(\bot) = \bot_\C$.
As an overloading of notation, we will occasionally write $f^\dagger(\mu)$ to mean $f^\dagger(\{\mu\})$ for any $\mu \in \D(X)$.
Convex combinations in $\C$ are defined as $S \oplus_p T \triangleq \{ \mu \oplus_p \nu \mid \mu \in S, \nu\in T \}$, and  convex union is $S \nd T \triangleq \bigcup_{p \in [0,1]} S\oplus_p T$. For some finite index set $I = \{ i_1, \ldots, i_n \}$, we let $\bignd_{i\in I} S_i \triangleq S_{i_1} \nd \cdots \nd S_{i_n}$. Finally, we extend projections to convex sets as $\pi_S(T) \triangleq \{ \pi_S(\mu) \mid \mu \in T \}$ where we let $\sigma\uplus\bot = \bot$.

In the next section, we will see how `$\nd$' will be used to represent the choices of the \emph{scheduler} when interleaving concurrent threads. The fact that $S\nd T$ is represented as a set of convex combinations operationally corresponds to the idea that the scheduler can use randomness to choose between $S$ and $T$, rather than making the choice deterministically \cite[\S 6.5]{varacca2002powerdomain}.

\subsection{Actions and Invariants}
\label{sec:actions}

We can now use the convex powerset to give semantics to actions. The basic action evaluation is defined below $\de{-}_\act \colon \act\to \mem S \to \C(\mem S)$.
\begin{align*}
  \de{x \coloneqq e}_\act(\sigma) &\triangleq \left\{
    \begin{array}{ll}
      \eta(\sigma[x \coloneqq \de{e}_\expr(\sigma)]) & \text{if} ~ \free(e) \cup\{x\} \subseteq S
      \\
      \bot_\C & \text{otherwise}
    \end{array}
  \right.
\\
  \de{x \samp d(e)}_\act(\sigma) &\triangleq \left\{
    \begin{array}{ll}
       \left\{ \sum_{v\in \supp(\mu)} \mu(v) \cdot \delta_{\sigma[x \coloneqq v]} \right\} & \text{if} ~ \free(e) \cup\{x\} \subseteq S, \mu = d(\de{e}_\expr(\sigma))
      \\
      \bot_\C & \text{otherwise}
    \end{array}
  \right.
\end{align*}
As we alluded to in \Cref{sec:overview}, we will reason about shared state via \emph{invariants}---assertions about shared state that must be preserved by every atomic action. To model the ways in which shared state may be modified by other threads, we define an \emph{invariant sensitive semantics}, in which the scheduler may alter shared state before executing each atomic action.
This is based on \emph{semantic invariants}, finite sets of memories $\I \subseteq_\mathsf{fin} \mem T$ that represent the legal values of shared state.

We limit invariants to be finite sets in order to avoid issues arising from unbounded nondeterminism.
Stemming from the impossibility result of \citet{apt1986countable},
%semantic operators cannot be Scott continuous in domains that include both unbounded nondeterminism and the ability to recognize nontermination, which is problematic in that
the semantics of loops cannot be constructed in standard ways using least fixed points in the presence of unbounded nondeterminism. In $\C$ specifically, unbounded nondeterminism breaks Cauchy closure \cite[Appendix B.4.2]{mciver2005abstraction}. Finite invariants are sufficient for a wide variety of verification tasks, such as the examples in \Cref{sec:examples}. Looking forward, there are additional algorithms where shared state only takes on finitely many values such as the randomized Dining Philosophers \cite{lehmann1981advantages} and other synchronization protocols \cite{rabin1980n-process,rabin1982choice}.

The invariant-sensitive model is a family of semantic functions for actions, indexed by a semantic invariant: $\de{a}_\act^\I \colon \mem S \to \C(\mem S)$, where $\I \subseteq_\fin \mem T$ and $T\subseteq S$.
\[
  \de{a}^\I_\act \triangleq (\mathsf{check}^\I)^\dagger \circ \de{a}_\act^\dagger \circ (\mathsf{replace}^\I)^\dagger \circ \mathsf{check}^\I
\]
\[
  \mathsf{check}^\I(\sigma) \triangleq \left\{
    \begin{array}{ll}
      \eta(\sigma) & \text{if}~ \pi_T(\sigma) \in \I
      \\
      \bot_\C & \text{if}~ \pi_T(\sigma) \notin \I
    \end{array}
  \right.
  \qquad
  \mathsf{replace}^\I(\sigma) = \bignd_{\tau \in \I} \eta(\pi_{S\setminus T}(\sigma) \uplus \tau)
\]
In the invariant sensitive semantics, $\I$ is first checked to ensure that the current state satisfies the invariant. Next, a new valid state (or distribution thereof) is chosen to replace the current one, simulating a parallel thread which may alter the shared state at any step. The standard action semantics is then executed, followed by another check to ensure that the invariant still holds.
If the invariant is ever violated, then $\bot_\C$ is returned to indicate that the execution is faulty. Letting $\emp \in \mem\emptyset$ be the empty memory, we remark that $\de{a}_\act = \de{a}_\act^{\{\emp\}}$, meaning that invariant sensitive execution using the empty invariant is equal to normal execution. In \Aref{lem:act-mono}, we prove that the invariant sensitive semantics is monotonic---\ie $\de{a}_\act(\sigma) \subseteq \de{a}^\I_\act(\sigma)$---adding an invariant can only add behaviors, making it an over-approximation of the program's behavior.
So, safety properties about $\de{a}^\I_\act(\sigma)$ immediately apply to $\de{a}_\act(\sigma)$ too.

\subsection{Semantics of Randomized Concurrent Programs}
\label{sec:pomset}

To give semantics to commands, we use \emph{Partially Ordered Multisets (Pomsets) with Formulae} \cite{zilberstein2025denotational}, where a partial order represents the causality between actions in the program. We write $a_1 \to a_2$ to mean that the action $a_1$ must be scheduled before $a_2$. Pomsets with formulae are constructed using three composition operators, shown below, which mirror the program syntax.
\newcommand{\guardarrows}{
  \ar@<.05mm>@[NavyBlue]^{\Tarr}[ul]
  \ar@<-.05mm>@[NavyBlue][ul]
  \ar@<-.05mm>@[Maroon]_{\Farr}[ur]  
  \ar@<.05mm>@[Maroon][ur]
}
\begin{mathpar}
\de{a_1 \fatsemi a_2} = 
\vcenter{\vbox{\xymatrix@R=6pt@C=3pt{
  a_2 \\ a_1 \ar[u]
}}}

\de{a_1 \parallel a_2} = 
\vcenter{\vbox{\xymatrix@R=6pt@C=3pt{
  a_1 && a_2 \\& \fork \ar[ul]\ar[ur]
}}}

\de{\iftf b{a_1}{a_2}} =
\vcenter{\vbox{\xymatrix@R=6pt@C=3pt{
  a_1 && a_2
  \\
  & b \guardarrows
}}}
\end{mathpar}
From left to right, the sequential composition $a_1\fatsemi a_2$ results in a totally ordered structure, where $a_1$ must occur before $a_2$. In a parallel composition $a_1 \parallel a_2$, the actions $a_1$ and $a_2$ are not ordered with respect to each other, so they can be scheduled in any order. Finally, if statements result in a guarded branch, where the two successors of the test $b$ both must be scheduled after $b$, but also will only be scheduled if the outcome of the test matches the label on the arrow.% ($\Tarr$ for true and $\Farr$ for false).

For the purposes of our program logic, we are interested in the \emph{linearized} version of the model, $\lin(\de-) \colon \mathsf{Cmd} \to \mem S \to \C(\mem S)$, which maps input memories to convex sets of output memories, representing the set of possible distributions that can arise due to different interleavings of the parallel threads chosen by the scheduler.
Linearization is defined in terms of the semantics for actions and tests from \Cref{sec:actions}. The semantics of actions $\de{-}^\I_\act$ is indexed by an invariant $\I$, and so linearization $\lin^\I$ is also indexed by an invariant, indicating which semantic function to use for actions.
We provide the definition of $\lin$---due to \citet{zilberstein2025denotational}---in \Aref{app:pomset}.
We omit the indexing invariant $\I$ and just write $\lin(\de-)$ when $\I = \{\emp\}$.

When the invariant $\I$ is $\{\emp\}$ in $\lin$, the scheduler does not simulate any mutations performed by other threads, since all actions are evaluated according to $\de{a}_\act^{\{\emp\}} = \de{a}_\act$, so $\lin(\de{C})(\sigma)$ can be viewed as the \emph{true} semantics of the program.
Similar to action evaluation, linearization is also monotonic with respect to invariants (\Cref{lem:inv-mono}),
meaning that
$
%  \pi_U(\sigma) \in \J \implies
  \lin(\de{C})(\sigma)
  \subseteq
  \lin^\I(\de{C})(\sigma)
$.
This guarantees that adding an invariant will only add new behaviors, so safety properties about $\lin^\I(\de{C})(\sigma)$ will automatically carry over to $\lin(\de{C})(\sigma)$. %, which is the basis for the soundness of the \ruleref{Share} rule.

The linearized \emph{state transformer} is ideal for modeling a program logic, where programs are specified in terms of preconditions and postconditions. As shown by \citet[Lemma 5.2]{zilberstein2025denotational}, linearization of non-parallel programming constructs is well-behaved; it is compositional with respect to sequencing, if statements, and while loops, as shown by the following equational rules:
%\begin{align*}
%  \lin^\I(\de{C_1 \fatsemi C_2})(\sigma) &= \lin^\I(\de{C_2})^\dagger(\lin^\I(\de{C_1})(\sigma))
%  \\
%  \lin^\I(\de{\iftf b{C_1}{C_2}})(\sigma) &= \left\{
%    \begin{array}{ll}
%      \lin^\I(\de{C_1})(\sigma) & \text{if}~ \de{b}_\test(\sigma) = \tru
%      \\
%      \lin^\I(\de{C_2})(\sigma) & \text{if}~ \de{b}_\test(\sigma) = \fls
%    \end{array}
%  \right.
%  \\
%  \lin^\I(\de{\forloop nC})(\sigma) &= \lin^\I(\de{C})^n(\sigma)
%\end{align*}
\begin{mathpar}
  \lin^\I(\de{\skp}) = \eta
  
  \lin^\I(\de{a}) = \de{a}_\act^\I

  \lin^\I(\de{C_1 \fatsemi C_2}) = \lin^\I(\de{C_2})^\dagger \circ \lin^\I(\de{C_1})

  \lin^\I(\de{\iftf b{C_1}{C_2}})(\sigma) = \left\{
    \begin{array}{ll}
      \lin^\I(\de{C_1})(\sigma) & \text{if}~ \de{b}_\test(\sigma) = \tru
      \\
      \lin^\I(\de{C_2})(\sigma) & \text{if}~ \de{b}_\test(\sigma) = \fls
    \end{array}
  \right.

  \lin^\I(\de{\whl bC}) = \mathsf{lfp}\left(\Psi_{\tuple{b, C, \I}}\right)
  \quad
  \Psi_{\tuple{b, C, \I}}(f)(\tau) = \left\{
  \arraycolsep=2pt
    \begin{array}{ll}
      f^\dagger(\lin^\I(\de{C})(\tau)) & \text{if}~ \de{b}_\test(\tau) = \tru
      \\
      \eta(\tau) & \text{if}~ \de{b}_\test(\tau) = \fls
    \end{array}
  \right.
\end{mathpar}
In fact, the parallel-free fragment of the linearized model is equivalent to the model of Demonic Outcome Logic \cite[Theorem 5.3]{zilberstein2025denotational}, and so some of the metatheory for standard commands carries over from \dol to \pcol.
However, there is no straightforward compositional property for parallel programs, since the input-output behavior of two threads can completely change if they are run in parallel. 
Building on the insights of Concurrent Separation Logic \cite{csl} and Probabilistic Separation Logics \cite{psl,bao2025bluebell,li2023lilac},  we develop compositional reasoning techniques for parallel programs in \Cref{sec:model,sec:logic}.

\section{The Model of Probabilistic Assertions}
\label{sec:model}

Preconditions and postconditions in Probabilistic Concurrent Outcome Logic (\pcol) are inspired by both Demonic Outcome Logic \cite{zilberstein2025demonic} and also probabilistic separation logics \cite{psl,li2023lilac,bao2025bluebell,bao2021bunched,bao2022separation}. We begin by giving the syntax and semantics for basic assertions about memories in \Cref{sec:basic-assertions}.
Next, we discuss background on measure theory and probability spaces in \Cref{sec:measure}.
Like Lilac \cite{li2023lilac} and Bluebell \cite{bao2025bluebell}, the model of resources uses probability spaces that only assign probabilities to certain \emph{measurable} sets of memories. % , which gives finer control over how resources interact.
%\joe{Of course, in the measure-theoretic case, a distribution is also over a sigma algebra, so this terminology is maybe a bit confusing to an expert. Maybe the 
%Unlike distributions, which assign probability to every memory, probability spaces only assign mass to measurable events, which are sets of memories in this case. As such, these logics are able to express that particular \emph{events} are independent, rather than the entire distributions resulting from the program execution. For example, in the program $x \samp \bern{\frac12} \fatsemi y \coloneqq x$, clearly $x$ and $y$ are correlated, but the assertion $\left(x\sim\bern{\frac12}\right) \sep \sure{y \in \{0,1\}}$ is still valid, since the event $y \in \{0,1\}$ occurs with probability 1, and we cannot measure anything finer.
Based on this, we define probabilistic assertions in \Cref{sec:prob-assert}.

\subsection{Pure Assertions}
\label{sec:basic-assertions}

We begin by describing pure (non-probabilistic) assertions, which are inspired by standard separation logic \cite{localreasoning,sl}, but where memories range over variables rather than heap cells, as we discussed in \Cref{sec:actions}. The syntax for these assertions are shown below.
\begin{align*}
  P &\Coloneqq
    \tru \mid \fls\mid P\land Q \mid P\lor Q\mid P \sep Q \mid \exists X.\ P \mid e \mapsto E \mid E_1 \asymp E_2
    \qquad (\mathord{\asymp} \in \{ =, \le, <, \in, \ldots \})
  \\
  E &\Coloneqq X \mid v \mid E_1 + E_2 \mid E_1 \cdot E_2 \mid \cdots
\end{align*}
In addition to expressions and variables from \Cref{sec:semantics}, assertions also depend on logical variables $X,Y,Z\in\mathsf{LVar}$, which cannot be modified by programs. Logical expressions $E \in \mathsf{LExp}$ mirror standard ones, but operate over logical variables $X\in\mathsf{LVar}$ rather than $x\in\mathsf{Var}$. Logical expression evaluation under a context $\Gamma \colon \mathsf{LVar} \to \mathsf{Val}$ is written $\de{E}_{\mathsf{LExp}}(\Gamma)$ and is defined in a standard way.

Pure assertions are modelled by both a context $\Gamma$, and a memory $\sigma\in\mem S$, the satisfaction relation is shown in \Cref{fig:pure-assertions}. The meaning of $\tru$, $\fls$, conjunction, and disjunction are standard. The separating conjunction $P\sep Q$ means that the memory $\sigma \in \mem S$ can be divided into two smaller memories $\sigma_1 \in \mem{S_1}$ and $\sigma_2\in\mem{S_2}$ to satisfy $P$ and $Q$ individually. By the definition of $\uplus$, $S_1$ and $S_2$ must be disjoint. 
Our logic is an \emph{intuitionistic} \cite{simon} or \emph{affine} interpretation of separation logic, meaning that information about variables can be discarded; if $\Gamma,\sigma\vDash P$, then $P$ need not describe the entire memory $\sigma$. As such, we only require that $\sigma_1 \uplus \sigma_2 \sqsubseteq \sigma$, which we define as $\sigma \sqsubseteq \tau$ iff $\sigma \uplus \sigma' = \tau$ for some $\sigma'$, so $\tau$ could contain more variables than $\sigma$.

We also include a points-to predicate $e\mapsto E$, although it has a slightly different meaning than points-to predicates in the heap model. Here, $e$ does not describe a pointer, but can rather be any concrete expression, and $e\mapsto E$ simply means that the $e$ evaluates to the same value under $\sigma$ as $E$ does under $\Gamma$, allowing us to connect the concrete and logical state. Finally, $E_1\asymp E_2$ allows us to make assertions about logical state, where $\mathord{\asymp} \in \{ =, \le, \in, \cdots \}$ ranges over similar comparators to the ones in \Cref{sec:semantics}.
We define the following notation to obtain the set of all memories $\sigma\in \mem S$ that satisfy an assertion $P$, we omit the superscript when we wish to minimize $S$, so that the memories contain only the free variables of $P$:
\[
  \sem{P}_\Gamma^S \triangleq \{
    \sigma \in \mem S \mid \Gamma, \sigma\vDash P
  \}
  \qquad\qquad
  \sem{P}_\Gamma \triangleq \sem{P}_\Gamma^{\free(P)}
\]
Finally, we provide syntactic sugar for restricting the domain of existential quantifiers, asserting membership in a set, and asserting that the resources of $e$ are owned by the current thread.
\begin{mathpar}
    \exists X\in E.\ P \triangleq \exists X.\ P \sep X \in E

    {e\in E} \triangleq \exists X\in E.\ e \mapsto X

    \own(e_1, \ldots, e_n) \triangleq \bigsep_{i=1}^n \exists X_i.\ e_i \mapsto X_i
\end{mathpar}
%\joe{But we don't define $X \in E$ in the syntax or semantics of the assertions either, I think we would need to add that.}
%\noam{We have $E_1 \asymp E_2$, but left the full set of operators unspecified, so I was just assuming that $\asymp = \in$}

\begin{figure}
\[
\begin{array}{lll}
  \Gamma, \sigma \vDash \tru & \multicolumn{2}{l}{\text{always}} \\
  \Gamma, \sigma \vDash \fls & \multicolumn{2}{l}{\text{never}} \\
  \Gamma,\sigma\vDash P\land Q & \text{iff} & \Gamma,\sigma\vDash P\quad\text{and}\quad \Gamma,\sigma\vDash Q \\
  \Gamma,\sigma\vDash P\lor Q & \text{iff} & \Gamma,\sigma\vDash P\quad\text{or}\quad \Gamma,\sigma\vDash Q \\
  \Gamma, \sigma \vDash P\sep Q & \text{iff} &
      \exists \sigma_1,\sigma_2.\quad
      \sigma_1 \uplus \sigma_2 \sqsubseteq \sigma
      \quad\text{and}\quad \Gamma,\sigma_1\vDash P
      \quad\text{and}\quad \Gamma,\sigma_2\vDash Q
    \\
  \Gamma, \sigma \vDash \exists X.\ P & \text{iff} & \Gamma[X \coloneqq v], \sigma \vDash P \ \text{for some}\ v\in \mathsf{Val} \\
  \Gamma, \sigma \vDash e \mapsto E & \text{iff} &
      \de{e}_\expr(\sigma) = \de{E}_{\mathsf{LExp}}(\Gamma)
    \\
  \Gamma, \sigma \vDash E_1 \asymp E_2 & \text{iff} &
      \de{E_1}(\Gamma) \asymp \de{E_2}(\Gamma)
\end{array}
\]
\caption{Satisfaction relation for pure assertions.}
\label{fig:pure-assertions}
\end{figure}

\subsection{Measure Theory and Probability Spaces}
\label{sec:measure}

We now introduce basic definitions from measure theory needed to define probabilistic separation. For a more thorough background, refer to \citet{royden1968real} or \citet{fremlin2001measure}.
A probability space $\P=\langle\Omega, \mathcal F, \mu\rangle$ consists of a sample space $\Omega$, an event space $\F$, and a probability measure $\mu$. For our purposes, the sample space $\Omega\subseteq \mem S$ will consist of memories over a particular set of variables $S$. The event space $\F \subseteq\bb{2}^\Omega$ gives the events---\ie sets of memories---which are measurable. It must be a $\sigma$-algebra, meaning that it contains $\emptyset$ and $\Omega$, and it is closed under complementation and countable unions and intersections. The probability measure $\mu \colon \F\to [0,1]$ assigns probabilities to the events in $\F$, and must obey $\mu(\emptyset) = 0$, $\mu(\Omega) = 1$, and countable additivity: $\mu( \biguplus_{i\in I}A_i) = \sum_{i\in I} \mu(A_i)$ where $I$ is a countable index set and all the $A_i$ sets are pairwise disjoint. For a probability space $\P$, we use $\Omega_\P$, $\F_\P$, and $\mu_\P$ to refer to its respective parts.

We require probability spaces to be \emph{complete}, meaning that they contain all events of measure zero. More formally, $\P$ is complete if for any $A \in \F_\P$ such that $\mu_\P(A) = 0$, then $B\in\F_\P$ for all $B\subseteq A$ \cite{royden1968real}. We will often also require the sample space to be the full set of memories $\mem S$ for some $S$. A probability space $\P$ with $\Omega_\P \subseteq \mem S$ can be \emph{extended} as follows: $\comp(\P) \triangleq \tuple{\mem S, \F, \mu}$ where $\F \triangleq \{ A \subseteq \mem{S} \mid A \cap \Omega_\P \in \F_\P \}$ and $\mu(A) \triangleq \mu_\P(A \cap \Omega_\P)$.

%\joe{The usual definition of the completion operation that I'm familiar with gives you the smallest complete probability space that contains all of the original events. But here it seems to be giving you the full discrete sigma algebra, since I can take $A$ to be $\emptyset$, no? I think in the standard definition $B$ would have to be a subset of a zero measure set from $\mathcal{F}_p$}
 
We now define a preorder on probability spaces. As is typical in intuitionistic logic, this preorder $\P \preceq \Q$ will indicate when $\Q$ contains more information than $\P$. The information can be gained across two dimensions: by expanding the memory footprint, or by making the event space more granular. Formally, for $\P$ and $\Q$ such that $\Omega_\P \subseteq \mem S$, we define $\P \preceq \Q$ as follows:
\begin{align*}
  \P \preceq \Q \qquad\text{iff}&\quad
  \Omega_\P \subseteq \pi_S(\Omega_\Q)
  \\
  \text{and}&\quad
  \F_\P \subseteq \{ \pi_S(A) \mid A \in \F_\Q \}
  \\\text{and}&\quad
  \forall A \in \F_\P.~ \mu_\P(A) = \mu_\Q\left({\bigcup \{B\in \F_\Q\mid \pi_S(B) = A \}}\right)
%  \;\;\text{and}\;\;
%  \mu_2(\Omega_2 \setminus \Omega_1) = 0
\end{align*}
So, $\P\preceq\Q$ iff $\P$ contains smaller sample and event spaces, but $\P$ and $\Q$ agree on the probability of events whose projections are measurable in $\P$. Any proper distribution $\mu \in \D(\mem U)$, can be used as a probability space where $\Omega_\mu = \mem U$, $\F_\mu = \bb{2}^{\mem U}$ is the greatest $\sigma$-algebra on $\mem U$, and $\mu_\mu(A) = \sum_{\sigma\in A} \mu(\sigma)$.
%Similarly, for any set of states $S \subseteq \mem U$, we define $\trivP{S}$ to be the trivial probability where $S$ has probability 1 and is the smallest event with nonzero measure. The components are $\Omega \triangleq \mem U$, $\F \triangleq \sigma(\{\emptyset, S, \mem U\})$, and $\mu(A) = 1$ if $S \subseteq A$ and 0 otherwise.
Projections of probability spaces are defined as $\pi_U(\P) = \tuple{\Omega,\F,\mu}$, where $\Omega = \pi_U(\Omega_\P)$, $\F = \{ \pi_U(A) \mid A \in \F_\P \}$, and $\mu(A) = \mu_\P(A \sep \pi_{\mathsf{Var}\setminus U}(\Omega_\P))$.

We define two more operations on probability spaces, which will help us to give semantics to the separating conjunction and outcome conjunction in \Cref{sec:prob-assert}. The first operation is the product space $\P\otimes \Q$, which is defined when $\Omega_\P \subseteq \mem S$, $\Omega_\Q\subseteq\mem T$, and $S\cap T=\emptyset$. The sample space $\Omega_{\P\otimes\Q} = \Omega_\P\sep\Omega_\Q$ is the set of all joined memories in the two spaces, the event space $\F_{\P\otimes\Q}$ is the smallest $\sigma$-algebra containing $\{ A \sep B \mid A \in \F_\P, B\in \F_\Q \}$, and the measure has the property that $\mu_{\P\otimes\Q}(A \sep B) = \mu_\P(A)\cdot \mu_\Q(B)$ for any $A\in \F_\P$ and $B\in\F_\Q$. The full construction uses Carath\'{e}odory's method, and is given in Chapter 25 of \citet{fremlin2001measure}.

Note that this definition is more similar to the initial formulation of \psl \cite{psl} (albeit, in a probability space), rather than Lilac and Bluebell, which rely on a theorem stating that independent products are unique \cite[Lemma 2.3]{li2023lilac}.
We use the explicit product construction in order to guarantee that each variable can only occur on one side of the $\sep$, making mutation rules simpler. Lilac does not allow mutable state, and so mutation is not a factor. On the other hand, Bluebell handles mutation by explicitly tracking permissions, but we found the product construction to be simpler to use than the permission approach.

The next operation is a \emph{direct sum} for combining disjoint probability spaces \cite[214L]{fremlin2001measure}. More precisely, for some countable index set $I$, discrete distribution $\nu \in \D(I)$, and probability spaces $\P_i = \tuple{\Omega_i, \F_i, \mu_i}$ such that the $\Omega_i$ are pairwise disjoint, we define the direct sum as:
\begin{mathpar}
  \bigoplus_{i \sim \nu} \P_i \triangleq \tuple{\textstyle\biguplus\limits_{i\in I} \Omega_i, \F,\mu}

  \F \triangleq \{ A \subseteq \Omega \mid \forall i\in I.\ A\cap \Omega_i \in \F_i \}

  \mu(A) \triangleq \smashoperator{\sum_{i\in I}} \nu(i)\cdotp \mu_i(A \cap \Omega_i)
\end{mathpar}
The sample space is the union of all the individual sample spaces, the measurable events are those events whose projections into each $\Omega_i$ are measurable according to $\F_i$, and the probability measure is given by a convex sum. The direct sum will be used to give semantics to our outcome conjunction.  Finally, we remark that independent products distribute over direct sums (\Aref{lem:otimes-oplus-dist}):
\[\textstyle
  \left(\bigoplus_{i\sim\nu} \P_i \right) \otimes\Q
  =
  \bigoplus_{i\sim \nu} (\P_i \otimes \Q)
\]

\subsection{Probabilistic Assertions}
\label{sec:prob-assert}

We now define probabilistic assertions, which will serve as pre- and postconditions in \pcol triples. The syntax is shown below and the semantics is in \Cref{fig:prob-sem}.
\[
  \varphi \Coloneqq
    \top \mid \bot
    \mid \varphi\land\psi
    \mid \varphi\lor\psi
    \mid \exists X.\ \varphi
%    \mid\varphi\nd\psi
%    \mid \varphi\oplus_p \psi
    \mid  \smashoperator{\bigoplus_{X\sim d( E)}} \varphi
    \mid \bignd_{X\in E} \varphi
    \mid \varphi\ast_m\psi
%    \mid \varphi \triangleright \psi
%    \mid \smashoperator[r]{\cond_{X \gets \mu}} \varphi
    \mid \sure P
%    \mid \iftf E\varphi\psi
\qquad (m\in\{\st,\wk\})
\]
The semantics for probabilistic assertions is based on a context $\Gamma\colon \mathsf{LVar}\to\mathsf{Val}$, and a complete probability space $\P = \tuple{\mem S, \F_\P, \mu_\P}$. The $\top$, $\bot$, conjunction, disjunction, and existential quantification assertions have the usual semantics.

\begin{figure}\small
\[\def\arraystretch{1.5}
\begin{array}{lll}
\Gamma,\P \vDash\top & \multicolumn{2}{l}{\text{always}}
\\
\Gamma,\P \vDash\bot & \multicolumn{2}{l}{\text{never}}
\\
\Gamma,\P \vDash \varphi\land\psi &\text{iff}&
    \Gamma,\P\vDash \varphi
    \quad\text{and}\quad
    \Gamma,\P\vDash \psi
\\
\Gamma,\P \vDash \varphi\lor\psi &\text{iff}&
    \Gamma,\P\vDash \varphi
    \quad\text{or}\quad
    \Gamma,\P\vDash \psi
\\
\Gamma,\P \vDash \exists X.\ \varphi &\text{iff}&
    \Gamma[X \coloneqq v],\P\vDash \varphi
    \quad\text{for some}\quad v\in\mathsf{Val}
\\
\Gamma, \P \vDash \bigoplus_{X\sim d( E)} \varphi
  &\text{iff}&  
    \forall v.~ \Gamma[X \coloneqq v],\comp(\P_v) \vDash \varphi
    \quad\text{and}\quad
    \bigoplus_{v \sim\mu} \P_v \preceq \P
    \quad\text{for some}\quad
    (\P_v)_{v\in\supp(\mu)}
    \\&& \quad\text{where}\quad
    \mu = d(\de{E}_\mathsf{LExp}(\Gamma))
\\
\Gamma, \P \vDash \bignd_{X\in E} \varphi
  &\text{iff}&  \Gamma,\P \vDash \bigoplus_{X\sim\mu} \varphi \quad\text{for some}\quad \mu \in \D(\de{E}_\mathsf{LExp}(\Gamma))
\\
\Gamma, \P \vDash \varphi \ast_m \psi
  &\text{iff}&
  \P' \preceq \P
  \;\;\text{and}\;\;
  \Gamma, \P_1 \vDash \varphi
   \;\;\text{and}\;\;
  \Gamma, \P_2 \vDash \psi
  \;\;\text{for some}\;\; \P_1,\P_2, \text{and}\ \P' \in \P_1 \diamond_m \P_2
%\\
%\Gamma, \P \vDash \varphi \osep \psi
%  &\text{iff}&  \exists U,V.\
%  \quad  U\cap V = \emptyset
%  \quad\text{and}\quad
%  \Gamma, \pi_U(\P) \vDash \varphi
%   \quad\text{and}\quad
%  \Gamma, \pi_V(\P) \vDash \psi
\\
\Gamma, \P \vDash \sure P
  &\text{iff}& 
  \sem{P}_\Gamma^S \in \F_\P
  \quad\text{and}\quad
  \mu_\P\left(\sem{P}_\Gamma^S \right) = 1
%\\
%\multicolumn{3}{l}{
%\Gamma, \P \vDash \iftf E{\varphi}{\psi}
%  \quad\text{iff} \quad
%  \left\{
%    \begin{array}{ll}
%      \Gamma, \P \vDash\varphi & \text{if}\ \de{E}_\mathsf{LExp}(\Gamma) = \tru
%      \\
%      \Gamma, \P \vDash\psi & \text{if}\ \de{E}_\mathsf{LExp}(\Gamma) = \fls
%    \end{array}
%  \right.}
\end{array}
\]
\caption{The satisfaction relation, where $\Gamma \colon \mathsf{LVar} \to\mathsf{Var}$ is a logical context and $\P = \tuple{\mem S, \F_\P, \mu_\P}$ is a complete probability space. All the existentially quantified probability spaces are also complete.}
\label{fig:prob-sem}
\end{figure}

Next, we have two kinds of \emph{outcome conjunctions}, adapted from Demonic OL (\dol) \cite{zilberstein2025demonic}, but with a new measure-theoretic semantics based on direct sums.
The standard outcome conjunction $\bigoplus_{X\sim d( E)} \varphi$ allocates a new logical variable $X$, which is distributed according to $\mu = d(\de{ E}_\mathsf{LExp}(\Gamma))$, and can be referenced in $\varphi$. The probability space $\P$ must be a refinement of the direct sum of $(\P_v)_{v\in\supp(\mu)}$. For every $v$, we then also require that $\Gamma[X\coloneqq v], \comp(\P_v) \vDash\varphi$, so $\varphi$ holds in the sub-probability space $\P_v$ with the value of $X$ in $\Gamma$ updated accordingly.
Essentially, the outcome conjunction splits the sample space $\mem S$ according to the support of $d(E)$.

The nondeterministic outcome conjunction $\bignd_{X\in E}$ is similar, but here only the support of the distribution is specified (as $E$). This connective is used when disjunctions or existential quantification would be used in a purely nondeterministic logic. For example, as a result of running the concurrent program $x \coloneqq 1 \parallel x \coloneqq 2$, it is not correct to say that $\sure{x\mapsto 1} \vee \sure{x \mapsto 2}$ since the probabilistic scheduler could choose to make $x$ equal to 1 with some probability $0<p<1$. On the other hand $\bignd_{X \in \{1,2\}} \sure{x\mapsto X}$ means that $x$ takes on value 1 with some (existentially quantified) probability, matching the convex powerset interpretation of nondeterminism.

%The outcome conjunction is similar to the conditioning modalities of Lilac and Bluebell, which use the notation $\textsf{\bf C}$ rather than $\bigoplus$. The key difference is that Lilac and Bluebell do not use a direct sum, but instead require that each $\varphi$ holds in the same $\sigma$-algebra for each $v\in\supp(\mu)$. The upside of that approach is that disjointness of the sample space does not need to be established, making some of the entailment rules simpler. The downside is that it makes case analysis over the outcomes of the conditioning operation more difficult. We will discuss this further in \Cref{sec:structural}.

Next, we have two variations of the separating conjunction, parameterized by a mode $m\in\{\st, \wk\}$. Strong separation (mode $\st$) is the interpretation of separation that requires \emph{probabilistic independence} and separation of variables, whereas weak separation (mode $\wk$) does not require independence, and only separates the variables.
As we explained in \Cref{sec:overview}, weak separation is needed when case analysis over nondeterministic shared state leaves us in a scenario where strong frame preservation does not hold.
Semantically, the difference is captured by the combinator operation $\diamond_m$, with strong separation using an independent product and weak separation simply requiring that the marginal probability spaces are correct.
\begin{align*}
  \P_1 \diamond_\st \P_2 &\triangleq \{ \P_1 \otimes \P_2 \}
  &
  \P_1 \diamond_\wk \P_2 &\triangleq \{ \P \mid \P_1 = \pi_U(\P), \P_2 = \pi_V(\P) \}  
\end{align*}
where $\Omega_{\P_1} = \mem{U}$ and $\Omega_{\P_2} = \mem V$ and $U\cap V = \emptyset$.
Clearly, $\P_1 \diamond_\st\P_2 \subseteq \P_1\diamond_\wk\P_2$, which immediately gives us that $\varphi \sep_\st \psi \Rightarrow \varphi \osep \psi$. Strong separation will be used more commonly, so we will drop the subscript there and write $\sep$ to mean $\sep_\st$.

%Next, we have the separating conjunction $\varphi\sep \psi$, which uses an independent product to ensure that the probability spaces satisfying $\varphi$ and $\psi$ are independent. The fact that $\P_1\otimes\P_2 \preceq \P$ means that information can be lost so that the events described by $\varphi$ and $\psi$ are independent, even if some correlation exists in $\P$.

Finally, the almost sure assertion $\sure P$ states that the pure assertion $P$, as described in \Cref{sec:basic-assertions}, occurs with probability 1. We also define syntactic sugar below for a binary outcome conjunction $\oplus_E$, a bounded binary outcome conjunction $\oplus_{\ge E}$, and expressions distributed according to some distribution $e \sim d(E)$.
\[\small
%    \iftf E\varphi\psi \triangleq (\sure{E = \tru} \land \varphi) \vee (\sure{E = \fls} \land \psi)
%    \quad
    \varphi \oplus_E \psi \triangleq {\bigoplus_{X \sim \bern E}}
    %\iftf X{\varphi}{\psi}
    (\sure{X = 1} \sep \varphi) \vee (\sure{X = 0} \sep \psi)
    \qquad\quad
    e\sim d(E) \triangleq {\bigoplus_{X\sim d(E)}} \sure{e \mapsto X}
\]
\[
  \varphi \oplus_{\ge E} \psi \triangleq \exists X.\ \sure{X \ge E} \sep (\varphi \oplus_X \psi)
\]

\subsection{Convex and Precise Assertions and Entailment Laws}

As we mentioned in \Cref{sec:overview}, the parallel composition rule of \pcol requires the postcondition from each thread to be \emph{precise}, so that any correlations introduced via concurrent scheduling are not measurable. We now define precision formally in terms of probability spaces.

\begin{definition}[Precision]\label{def:precise}
An assertion $\varphi$ is \emph{precise} if for any $\Gamma$ under which $\varphi$ is satisfiable there is a unique smallest probability space $\P$ such that $\Gamma,\P \vDash \varphi$ and if $\Gamma,\P' \vDash \varphi$, then $\P \preceq \P'$.
We write $\precise{\varphi_1, \ldots, \varphi_n}$ to mean $\precise{\varphi_1} \land \cdots \land \precise{\varphi_n}$.
\end{definition}
Below, we give a few rules to determine that assertions are precise.
\begin{mathpar}
\inferrule{\;}{\precise{\sure P}}

\inferrule{
  \precise{\varphi,\psi}
}{
  \precise{\varphi\sep\psi}
}

\inferrule{
  \precise\varphi
  \\
  \varphi\Rightarrow\sure{e \mapsto X}
}{
  \precise{\textstyle\bigoplus_{X\sim d(E)}\varphi}
}
\end{mathpar}
Almost sure assertions are always precise, since the smallest model is the one where $\sem{P}_\Gamma$ occurs with probability 1, and is the smallest measurable set with nonzero probability. Separating conjunctions are precise if their subcomponents are, which follows from monotonicity of the independent product (\Aref{lem:otimes-mono}). Outcome conjunctions are precise if the inner assertion is precise, and implies that $\sure{e\mapsto X}$ for some program expression $e$, which witnesses how to partition the sample space for the direct sum.
Without this partitioning side condition, the result is not necessarily precise.
For example, $\sure{x \mapsto 1} \oplus_\frac12 \sure{x \in \{0,1\}}$ is not precise, since it is not possible to determine the probability of the event $x=1$, despite it being measurable in one of the sub-probability spaces.
However, $\sure{x \mapsto 1 \sep y\mapsto0} \oplus_\frac12 \sure{x \in \{0,1\} \sep y\mapsto 1}$ is precise, since $y$ witnesses the partition.

Weak separating conjunctions $\osep$ and nondeterministic outcome conjunctions $\bignd$ are \emph{not} precise, as there are generally many different minimal probability spaces that satisfy them, assigning different probabilities to each event. However, those assertions do obey \emph{convexity}, a weaker condition, which intuitively means that $\varphi \oplus_p\varphi \Rightarrow \varphi$. We give the formal definition below.

\begin{definition}[Convex Assertions]
$\convex\varphi$ iff for all $\Gamma$ under which $\varphi$ is satisfiable, there exist $\Omega$, $\F$, and a convex set $S$ of probability measures on $\F$ such that:
\[
  \forall \P. \quad
 \Gamma,\P\vDash\varphi
 \qquad\text{iff}\qquad
 \exists \mu\in S.\quad
 \tuple{\Omega, \F, \mu} \preceq \P
\]
\end{definition}

Clearly any precise assertion is convex, since the set of measures $S$ is just a singleton in that case. So, in addition to analogues of the precision rules above, we also have the following:
\begin{mathpar}
\inferrule{
  \convex{\varphi,\psi}
}{
  \convex{\varphi\osep\psi}
}

\inferrule{
    \convex{\varphi,\psi}
    \\
    \substack{\varphi\Rightarrow\sure{e\mapsto 1} \\ \psi\Rightarrow \sure{e\mapsto 0}}
  }{
    \convex{\varphi \oplus_{\ge p}\psi}
  }
  
\inferrule{
  \convex{\varphi}
  \\
  \varphi\Rightarrow\sure{e\mapsto X}
}{
  \convex{\textstyle\bignd_{X\in E}\varphi}
}
\end{mathpar}
%The assertion language of \dol was restricted such that all assertions are convex.
Precision and convexity are useful for formulating entailment laws, which we provide in \Cref{fig:entailment}.
The first row consists of rules for the interaction between separating conjunctions and pure assertions. Weakening can be performed underneath both pure assertions and separating conjunctions. In addition, strong separation implies weak separation, and weak and strong separation are equivalent when one of the conjuncts is a pure assertion (since independence is trivial in that case).

\begin{figure}
\small
\begin{mathpar}
  \inferrule{P \vdash Q}{\sure P\vdash \sure{Q}}

  \inferrule{
    \varphi \vdash \varphi'
    \\
    \psi\vdash\psi'
  }{
    \varphi\sep_m\psi \vdash\varphi'\sep_m\psi'
  }

  \varphi\sep\psi \vdash\varphi\osep\psi

  \sure{P\sep Q} \dashv\vdash \sure P \sep_m \sure Q
  
  \varphi \sep \sure P \dashv\vdash \varphi\osep \sure{P}

\textstyle\bigoplus_{X\sim d(E)}\varphi\vdash\bignd_{X\in \supp(d(E))} \varphi
%
%\bigoplus_{X\sim \unif{E}}\varphi\vdash\bignd_{X\in E} \varphi

\varphi[E/X] \dashv\vdash\bignd_{X \in \{E\}}\varphi

\sure{E \subseteq E'} \sep \bignd_{X \in E}\varphi \vdash\bignd_{X\in E'}\varphi

  \inferrule{\varphi \vdash \psi}{\textstyle
    \bigoplus_{X\sim d( E)} \varphi \vdash \bigoplus_{X\sim d( E)}\psi
   }
   \qquad
\inferrule{
  Y \notin\mathsf{fv}(\varphi)
}{\textstyle
  \bigoplus_{X\sim d(E)} \varphi \vdash \bigoplus_{Y\sim d(E)} \varphi[Y/X]
}
\qquad
\inferrule{X\notin\mathsf{fv}(\psi)}{\textstyle
  (\bigoplus_{X\sim d(E)} \varphi) \sep \psi \vdash \bigoplus_{X\sim d(E)} (\varphi\sep \psi)
}

\inferrule{
  X\notin\mathsf{fv}(\psi)
  \\
  \precise\psi
}{\textstyle
  \bigoplus_{X\sim d(E)} (\varphi\sep \psi) \vdash (\bigoplus_{X\sim d(E)} \varphi) \sep \psi
} 
\quad
\inferrule{
  X\notin\mathsf{fv}(\psi)
  \\
  \convex\psi
}{\textstyle
  \bigoplus_{X\sim d(E)} (\varphi\osep \psi) \vdash (\bigoplus_{X\sim d(E)} \varphi) \osep \psi
} 
\quad
\inferrule{
  X\notin\mathsf{fv}(\varphi)
  \quad
  \mathsf{convex}(\varphi)
}{\textstyle
  \bigoplus _{X \sim d(E)} \varphi \vdash \varphi
}

\end{mathpar}
\caption{Selected entailment laws, where $\varphi[E/X]$ denotes a syntactic substitution of $E$ for $X$ in $\varphi$. Recall that $m\in \{\st, \wk\}$ and when $m$ is omitted, $\sep = \sep_\st$.}
\label{fig:entailment}
\end{figure}

%We have two rules pertaining to almost sure assertions; weakening can be performed inside of an almost sure assertion\footnote{This rule relies on the fact that probability spaces are \emph{complete}, so that if $A \in \F_\P$ and $\mu_\P(A) = 1$, then $B \in \F_\P$ for all $B \supseteq A$, allowing us to eschew the concepts of \emph{almost-sure equality} \cite[Fig. 6]{li2023lilac} and \emph{almost measurability} \cite[Def. 4.7]{bao2025bluebell}.}, and separating conjunctions can be moved in and out of almost sure assertions.

In the second row, we give some rules pertaining to $\bignd$. An outcome conjunction $\bigoplus_{X\sim d(E)}$ can be weakened to a $\bignd$ over $\supp(d(E))$, where $\supp(\bern{E}) \triangleq \{0, 1\}$ and $\supp(\unif{E}) \triangleq E$. A $\bignd_{X\in \{E\}}\varphi$ over a singleton set is the same as substituting $E$ for $X$ in $\varphi$. Finally, the bounds of a $\bignd$ can always be expanded, similar to how $P \Rightarrow P\vee Q$ in classical logic.

The final two rows pertain to outcome conjunctions, and each rule has a corresponding one (not shown) with $\bignd$ instead of $\bigoplus$.
Weakening can be performed inside of an outcome conjunction, and bound variables can be $\alpha$-renamed as long as the new variable name is fresh. As in Bluebell, the (strong) separating conjunction distributes over the outcome conjunction, so that assertions can be moved inside of an outcome conjunction, but this rule is invalid if $\sep$ is replaced by $\osep$, due to the possibility of correlations between $\psi$ and $d(E)$.

Factoring assertions out of an outcome conjunction is only supported in Bluebell for almost-sure assertions $\sure P$, whereas in \pcol, it can be performed for any \emph{precise} assertion, due to our semantics based on a direct sum.
%\joe{What do you mean by ``backward direction'' -- the reverse entailment? Or reading the rule in ``backward reasoning'' form. If it's the former, then I'm confused, since it looks like this direction is about factoring out. If you mean the latter, then I would just delete the prefatory phrase about backward direction.}
We saw at the end of \Cref{sec:measure} that independent products (which model separating conjunctions) distribute over direct sums (which model outcome conjunctions), however the corresponding entailment $\bigoplus_{X\sim d( E)} (\varphi\sep \psi) \vdash(\bigoplus_{X\sim d( E)} \varphi)\sep \psi$ requires that $\psi$ is satisfied by the same model in each case, which can be guaranteed by forcing $\psi$ to be precise. In fact, this exact scenario also arose in the \textsc{RCond} and \textsc{RCase} rules of \psl, where an analogous concept called \emph{supported} was used to ensure soundness \cite{psl}. Replacing strong separation with weak separation, we only need $\psi$ to be convex---not precise---to factor it out of the outcome conjunction, since independence is not implied. Finally, outcome conjunctions over convex assertions that do not depend on the bound variable $X$ can be collapsed.

\section{Probabilistic Concurrent Outcome Logic}
\label{sec:logic}

Probabilistic Concurrent Outcome Logic (\pcol) specifications are given as triples of the form $I\vDash_m\triple\varphi{C}\psi$, where $\varphi$ and $\psi$ are probabilistic assertions (\Cref{sec:prob-assert}), $C\in\mathsf{Cmd}$ (\Cref{fig:syntax}), $I$ is a basic assertion, and $m\in \{\st,\wk\}$. Roughly speaking, the meaning of these triples is that if the states are initially distributed according to $\varphi$, then any invariant sensitive execution of $C$ with invariant $I$ will satisfy $\psi$. The mode $m$ dictates what kind of \emph{frame preservation} property the triple has.

Recall from \Cref{sec:pomset} that invariant sensitive execution requires the invariant states to be drawn from a finite set. For this reason, $I$ must be a \emph{finitary} basic assertion; formally, $\mathsf{finitary}(I)$ iff $\sem{I}_\Gamma$ is a finite set for any context $\Gamma$.
%We will discuss the implications of this restriction further in \Cref{sec:discussion}.
The formal validity definition of \pcol triples is below.

%\begin{definition}[\pcol Triples]\label{def:triple}
%For any finitary invariant $I$, assertions $\varphi$ and $\psi$, and program $C$ such that $\free_\act(C) \subseteq U \cup V$, $\free_\test(C) \subseteq U$, and $\free(I) \subseteq V$, the \pcol triple $I\vDash_m\triple\varphi{C}\psi$ is valid iff for all $\Gamma\colon\mathsf{LVar}\to\mathsf{Val}$, $\mu\in\D(\mem{U \cup V})$, and probability spaces $\P$, $\P_F$, and $\P' \in (\P\otimes\trivP{\sem{I}_\Gamma})\diamond_m \P_F$ such that $\P' \preceq \mu$ and $\Gamma,\P\vDash\varphi$, then:
%\[
%  \forall \nu \in \lin^{\sem{I}_\Gamma}\left(\de{C}\right)^\dagger(\mu).\quad
%  \exists \Q,\Q' \in (\Q \otimes \trivP{\sem{I}_\Gamma}) \diamond_m \P_F.\quad
%    \Q' \preceq \nu
%    \quad\text{and}\quad
%    \Gamma,\Q\vDash \psi
%\]
%\end{definition}
%
%\noam{OR}

\begin{definition}[\pcol Triples]\label{def:triple}
%For any finitary invariant $I$, assertions $\varphi$ and $\psi$, and program $C$ such that $\free_\act(C) \subseteq U \cup V$, $\free_\test(C) \subseteq U$, and $\free(I) \subseteq V$,
The \pcol triple $I\vDash_m\triple\varphi{C}\psi$ is valid iff for all $\Gamma\colon\mathsf{LVar}\to\mathsf{Val}$, $\mu$,
%\in\D(\mem{U \cup V})$
and probability spaces $\P$, $\P_F$, and $\P' \in \P\diamond_m \P_F$ such that $\P' \preceq \mu$ and $\Gamma,\P\vDash\varphi\sep\sure{I}$, then:
\[
  \forall \nu \in \lin^{\sem{I}_\Gamma}\left(\de{C}\right)^\dagger(\mu).\quad
  \exists \Q.\ \exists \Q' \in \Q \diamond_m \P_F.\quad
    \Q' \preceq \nu
    \quad\text{and}\quad
    \Gamma,\Q\vDash \psi\sep\sure{I}
\]
\end{definition}
As in many separation logics, frame preservation is built into the semantics of the triples \cite{birkedal2007relational,iris}; in addition to quantifying over a probability space $\P$ to satisfy $\varphi$, we also quantify over a probability space $\P_F$, which describes unused resources and is preserved by the program execution. As with the separating conjunction, we will omit the $m$ when $m=\st$. In addition, since we defined probability spaces to operate over memories $\mem S$, without $\bot$, our triples are \emph{fault avoiding}, which is also a standard choice for separation logics \cite{yang2002semantic}. That is, if $I\vDash\triple\varphi{C}\psi$ is valid, then we know that $C$ will not encounter a memory fault starting from a distribution satisfying $\varphi$.

These triples also imply almost sure termination, or \emph{total correctness}.
We chose to pursue total correctness, as it aligns with probabilistic liveness properties that we are interested in (\eg see \Cref{sec:von-neumann}). In the probabilistic context, there is no single natural notion of partial correctness, and nontermination breaks parallel composition; composing a thread with a nonterminating thread alters the behavior of the first thread even without shared state.

%Note that in the initial distribution $\mu$, the variables in $\P$ and $\P_F$ may be correlated, but the fact that $\P\otimes\P_F\preceq\mu$ means that the correlation will not be measurable.

In the remainder of this section, we will present inference rules for deriving \pcol triples. We write $I\vdash_m\triple\varphi{C}\psi$ to mean that a triple is derivable using these rules. All of the rules are sound with respect to \Cref{def:triple}.

\begin{restatable}[Soundness]{theorem}{soundnessthm}
For all of the rules in \Cref{fig:cmd-rules,fig:csl-rules,fig:structural-rules,fig:while-rule}, if $I\vdash_m\triple\varphi{C}\psi$ then $I\vDash_m\triple\varphi{C}\psi$.
\end{restatable}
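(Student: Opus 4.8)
The plan is to prove soundness by rule induction over the derivation of $I\vdash_m\triple\varphi{C}\psi$, establishing that each inference rule preserves validity in the sense of \Cref{def:triple}. Since the theorem asserts soundness of every rule across all four figures, the proof naturally decomposes into one case per rule, and the bulk of the work is showing that each rule's conclusion holds whenever its premises (as valid triples) do. I would organize these cases into groups: the command rules for atomic actions and sequencing, the \textsc{Csl}-style structural rules (\ruleref{Par}, \ruleref{Frame}, \ruleref{Atom}, \ruleref{Share}), the purely logical/structural rules (consequence, \ruleref{Split}, disjunction, existential elimination), and finally the \ruleref{While} rule requiring a fixed-point argument.

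First I would dispatch the easy cases. The rules for $\skp$ and for atomic actions $a$ reduce to unfolding the equational characterization of $\lin^\I$ from \Cref{sec:pomset} together with the invariant-sensitive action semantics $\de{a}^\I_\act$ from \Cref{sec:actions}; here the key lemma is that $\mathsf{check}^\I$ and $\mathsf{replace}^\I$ interact correctly with the invariant $\sure I$ carried in the triple, so that opening and closing the invariant in \ruleref{Atom} matches the check/replace wrapping in the semantics. The consequence rule follows directly from the entailment laws of \Cref{fig:entailment} composed with the $\preceq$-ordering built into the triple definition. The \ruleref{Split} rule, which was flagged as admissible and central, should follow from the direct-sum semantics of the outcome conjunction together with the fact that independent products distribute over direct sums (the displayed identity at the end of \Cref{sec:measure}): universally quantifying $X$ in the premise lets us reassemble the per-value probability spaces $\P_v$ into $\bigoplus_{v\sim\mu}\P_v$ in both pre- and postcondition. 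Sequencing relies on the compositionality $\lin^\I(\de{C_1\fatsemi C_2}) = \lin^\I(\de{C_2})^\dagger\circ\lin^\I(\de{C_1})$ and the monad laws, threading the intermediate probability space through the Kleisli composition.

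The main obstacle will be the \ruleref{Par} rule, which is exactly where the novel metatheory lives. The challenge is that the frame $\P_F$ and the two threads' resources must be recombined correctly: I must show that if each thread independently preserves its own (precise) postcondition under the shared invariant $I$, then their parallel interleaving---modeled via the convex-union operator $\nd$ in the linearized semantics---yields a distribution refining $\psi_1\sep\psi_2\sep\sure I$. The crux is proving that the scheduler's nondeterminism, represented as convex combinations of interleavings, cannot introduce measurable correlation between the two threads' local states: this is precisely where \emph{precision} of the postconditions is essential, since precision pins down a unique smallest probability space and thus forces the joint space to be the independent product $\P_1\otimes\P_2$ rather than some correlated coupling. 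I expect this argument to require a careful interchange between the $\dagger$-extension over the convex powerset and the independent-product construction, and to lean on monotonicity of $\otimes$ (\Aref{lem:otimes-mono}) and the invariant-monotonicity lemmas (\Aref{lem:act-mono}, \Aref{lem:inv-mono}). The \ruleref{Frame} rule is a special case of this correlation analysis, distinguishing the strong ($m=\st$) and weak ($m=\wk$) modes via the two clauses of $\diamond_m$.

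Finally, the \ruleref{While} rule demands a separate fixed-point argument, since $\lin^\I(\de{\whl bC}) = \mathsf{lfp}(\Psi_{\tuple{b,C,\I}})$. Because \pcol triples encode total correctness (almost sure termination), I cannot use a plain invariant; instead I would establish the loop specification by transfinite/continuity reasoning on the least fixed point in the pointed DCPO $\tuple{\C(\mem S),\lec}$, showing that the loop invariant assertion is preserved by each application of $\Psi$ and that the termination measure (e.g.\ a ranking or a probabilistic variant witnessing almost-sure termination) forces the $\bot$-mass to vanish in the limit. The delicate point is that Scott-continuity of the relevant operators, together with Cauchy closure guaranteeing nonempty directed intersections, must be invoked to pass the postcondition through the supremum defining the fixed point. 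Once each rule is verified, the theorem follows by the induction principle on derivations.
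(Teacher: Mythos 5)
Your top-level skeleton---rule induction over derivations, precision as the crux of \ruleref{Par}, fixed-point reasoning for loops---matches the paper's proof, but in the two hardest cases your plan substitutes the statement of what must be proven for the proof itself, and in the \ruleref{Par} case it rests on a claim that is false as stated. Precision does \emph{not} ``force the joint space to be the independent product $\P_1\otimes\P_2$ rather than some correlated coupling'': as the paper stresses (\Cref{sec:ov-precise}), the scheduler genuinely can correlate the threads' states; what precision buys is only that these correlations are invisible in the coarse event spaces of the unique minimal models $\Q_1,\Q_2$, i.e.\ $\nu(B_1\sep B_2)=\mu_{\Q_1}(B_1)\cdot\mu_{\Q_2}(B_2)$ for measurable rectangles $B_1\in\F_{\Q_1}$, $B_2\in\F_{\Q_2}$. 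Establishing even this weaker identity is where the new metatheory lives, and it is not a ``careful interchange between the $\dagger$-extension and the independent product'': because the threads need not terminate in bounded time, the product identity holds only in the limit. The paper's mechanism is the lower-bound functional $\minProb$ on convex sets, its Scott continuity (which needs Cauchy closure), an exact product decomposition of $\minProb$ for linearizations of \emph{finite} parallel pomsets proved by induction on the unscheduled nodes, and finally a passage to the supremum over finite approximations $\ll_1$---delicate because $\parallel$ is not Scott-continuous, only pseudo-continuous with respect to $\ll_1$. The lemmas you cite (monotonicity of $\otimes$, invariant monotonicity of actions and linearization) are the workhorses of \ruleref{Share}, \ruleref{Atom}, and the precision rules, not of \ruleref{Par}, and cannot carry this argument. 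Relatedly, \ruleref{Frame} is not ``a special case of the correlation analysis'': frame preservation is built into \Cref{def:triple} via the quantification over $\P_F$, so its proof is a re-association of $\diamond_m$; it is \ruleref{Strengthen} that secretly invokes the parallel lemma, by running $C$ in parallel with $\skp$ against the frame.

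For \ruleref{BoundedRank} your sketch stops just short of the step that makes the rule true. The rank premise only yields that each iteration strictly decreases the rank with probability at least $p$, hence that from \emph{any} state the loop terminates with probability at least $p^{h-\ell}>0$; no amount of Scott continuity or Cauchy closure upgrades a positive lower bound to probability $1$. The paper closes this gap with the zero-one law of \citet{mciver2005abstraction}: a uniform positive lower bound on the termination probability from every reachable state forces almost-sure termination. The soundness proof then has two separable components that your plan conflates: (i) a partial-correctness argument via ``invariant pairs,'' showing by induction on finite unrollings of the loop functional, then by continuity of $\minProb$, that all terminating mass lies in the minimal model of the precise exit assertion $\varphi[\ell/R]$; and (ii) the almost-sure-termination argument just described. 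Only their combination, again exploiting precision to pin the limiting measure of each event, gives the conclusion; as written, ``the termination measure forces the $\bot$-mass to vanish in the limit'' assumes exactly what must be shown.
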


\subsection{Rules for Sequential Commands}

The rules for sequential commands are given in \Cref{fig:cmd-rules}. Although the rules appear like the standard ones for Hoare-like logics \cite{hoarelogic} and separation logic \cite{sl,localreasoning}, they rely on the properties of linearization shown in \Cref{sec:pomset}.
In \ruleref{Skip}, the precondition is preserved by a no-op, and \ruleref{Seq} is the standard rule for sequential composition.

The rules for if statements are split into two cases, for when the precondition implies that the true or false branch will be taken, respectively, similar to standard Outcome Logic \cite{zilberstein2025outcome,zilberstein2025demonic}. These rules can be combined into a single rule for analyzing both branches using the various split rules, which we will introduce in \Cref{sec:structural}.

\begin{figure}
\small
\begin{mathpar}
  \ruledef{Skip}{\;}{
    I\vdash_m\triple\varphi\skp\varphi
  }

  \ruledef{Seq}{
    I\vdash_m\triple{\varphi}{C_1}{\vartheta}
    \quad
    I\vdash_m\triple{\vartheta}{C_2}{\psi}
  }{
    I\vdash_m\triple\varphi{C_1\fatsemi C_2}\psi
  }
%  \quad
%  \ruledef{For}{
%    \forall 0\le k\le n-1.\quad
%    I\vdash\triple{\varphi_k}C{\varphi_{k+1}}
%  }{
%    I\vdash\triple{\varphi_0}{\forloop nC}{\varphi_n}
%  }
\\
  \ruledef{IfT}{
    \varphi \Rightarrow \sure{b\mapsto \tru}
    \\
    I\vdash_m\triple\varphi{C_1}\psi
  }{
    I\vdash_m\triple\varphi{\iftf b{C_1}{C_2}}{\psi}
  }

  \ruledef{IfF}{
    \varphi \Rightarrow \sure{b\mapsto \fls}
    \\
    I\vdash_m\triple\varphi{C_2}\psi
  }{
    I\vdash_m\triple\varphi{\iftf b{C_1}{C_2}}{\psi}
  }

%  \ruledef{Assign}{(\varphi \sep \sure{x\mapsto E}) \Rightarrow \sure{e \mapsto E'}}{
%    I\vdash_m\triple{
%      \varphi\sep \sure{x \mapsto E}
%    }{
%      x \coloneqq e
%    }{
%      \varphi\sep \sure{x \mapsto E'}
%    }
%  }
  \ruledef{Assign}{
    \varphi \Rightarrow \sure{e\mapsto E} \land (\psi \sep \sure{\own(x)})
  }{
    I\vdash_m\triple{\varphi}{x \coloneqq e}{\psi\sep \sure{x \mapsto E}}
  }

  \ruledef{Samp}{
    \varphi \Rightarrow \sure{ e \mapsto E} \land (\psi\sep\sure{\own(x)})
  }{
    I\vdash_m\triple{\varphi}{x\samp d( e)}{\psi \sep (x \sim d( E)}
  }
%\medskip
%\[
%\ruledef{While}{
%  \varphi \sep \sure{b \mapsto \tru} \Rightarrow \ell \le r\le h
%  \\
%  I\vdash \triple{\varphi \sep \sure{b\mapsto \tru \sep r \mapsto N}}C{\varphi \sep (\sure{r< N} \oplus_p \sure{r \ge N})}
%}{
%  I \vdash\triple{\varphi \sep \own(b)}{\whl bC}{\varphi \sep \sure{b\mapsto \fls}}
%}
\end{mathpar}

\caption{Rules for Sequential Commands}
\label{fig:cmd-rules}
\end{figure}

Finally, we give rules for atomic actions. \ruleref{Assign} requires the precondition to determine that the program expression $e$ evaluates to the logical expression $E$, and that the variable $x$, which is being assigned, is owned by the current thread and is disjoint from the assertion $\psi$. This structure gives the flexibility to apply the rule both when $\varphi = \psi \sep \sure{x \mapsto E'}$ and $\varphi = \psi \sep\sure{\own(x)}$. \ruleref{Samp} has a similar requirement, but ultimately concludes that $x$ is distributed according to $d(E)$ rather than having a deterministic value.

\subsection{Concurrent Separation Logic Rules}
\label{sec:concur-rules}

Next, in \Cref{fig:csl-rules}, we have a variety of rules inspired by Concurrent Separation Logic (\csl) \cite{csl,brookes2004semantics,vafeiadis2011concurrent}.
First is the \ruleref{Par} rule for parallel composition. Although \ruleref{Par} looks like the analogous rule from \csl---aside from the condition about precision---the soundness of the rule is substantially more complicated due to the probabilistic interpretation of the separating conjunction. It is not hard to imagine situations where the scheduler can introduce correlation between variables. For example, in the following program (which we previously saw in \Cref{sec:ov-precise}), the scheduler could choose to schedule the $y \coloneqq 1$ action \emph{after} the sampling operation is resolved, meaning that it could make $x=y$ with probability 1, a clear correlation.
\[
  x \samp \bern{\tfrac12} \fatsemi y \coloneqq 0 \quad\mathlarger{\mathlarger\parallel}\quad y \coloneqq 1
\]
As such, the outcomes of the two threads will not be independent after being run concurrently, but rather only \emph{observably} independent in some restricted event space. By requiring the postconditions of each thread to be precise, we know that the probability of each measurable event must be specified exactly, so that the nondeterministic behavior of the scheduler will not be measurable (\Areft{lem:par-ind}{lem:par}). In the case of the program above, the strongest precise assertion about $y$ is $\sure{ y \in \{0,1\}}$, that $y$ is always either 0 or 1, and $\left(x \sim \bern{\frac12}\right) \sep \sure{ y \in \{0,1\}}$ is a valid postcondition for the program, since almost sure assertions $\sure P$ are trivially independent from all other assertions.

More formally, the soundness proof uses the fact that $\psi_1$ and $\psi_2$ are precise to obtain unique minimal probability spaces $\Q_1$ and $\Q_2$ satisfying them. We then show that for any distribution $\nu$ resulting from running $C_1\parallel C_2$, and for any events $B_1\in\F_{\Q_1}$ and $B_2 \in\F_{\Q_2}$, it must be the case that $\nu(B_1 \sep B_2) = \mu_{\Q_1}(B_1)\cdot\mu_{\Q_2}(B_2)$. Since $C_1$ and $C_2$ may not terminate in a bounded amount of time, this probability only converges to the desired product in the limit.

\begin{figure}
\small
\begin{mathpar}
 \ruledef{Par}{
    I\vdash\triple{\varphi_1}{C_1}{\psi_1}
    \\
    I\vdash\triple{\varphi_2}{C_2}{\psi_2}
    \\
    \precise{\psi_1,\psi_2}
}{
    I\vdash\triple{\varphi_1\sep\varphi_2}{C_1 \parallel C_2}{\psi_1 \sep\psi_2}
}
\\
\ruledef{Atom}{
  J \vdash_m \triple{\varphi \sep \sure I}a{\psi\sep \sure I}
}{
  {I \sep J}\vdash_m\triple\varphi{a}\psi
}

\ruledef{Share}{
  I\sep J \vdash_m \triple{\varphi}C\psi
  \quad
  \mathsf{finitary}(I)
}{
  J \vdash_m \triple{\varphi\sep \sure I}C{\psi\sep \sure I}
}

\ruledef{Frame}{
  I\vdash_m\triple{\varphi}C{\psi}
%  \\
%  \mathsf{fv}(\psi) \cap \mathsf{mod}(C) = \emptyset
}{
  I\vdash_m\triple{\varphi\ast_m \vartheta}C{\psi\ast_m\vartheta}
}

\ruledef{Weaken}{
  I\vdash\triple{\varphi}C\psi
}{
  I\vdash_\wk\triple{\varphi}C\psi
}

\ruledef{Strengthen}{
  I\vdash_\wk\triple{\varphi}C\psi
  \;\;
  \precise{\psi}
}{
  I\vdash\triple{\varphi}C\psi
}
\end{mathpar}
\caption{Concurrent Separation Logic Rules}
\label{fig:csl-rules}
\end{figure}

The next two rules are for interacting with invariants. The \ruleref{Atom} rule opens the invariant by moving it into the triple as an almost sure assertion, as long as the program is a single atomic action $a$. The fact that that the program executes atomically, and that $I$ is true before and after execution, means that $I$ is true at every step.
Next, the \ruleref{Share} rule allows a finitary almost sure assertion $I$ to be moved into the invariant. The soundness of this rule relies on the invariant monotonicity property that we discussed in \Cref{sec:pomset}. 
\begin{restatable}[Invariant Monotonicity]{lemma}{invMono}\label{lem:inv-mono}
For any $U,V,W \subseteq \mathsf{Var}$ and $\sigma \in \mem{W}$ such that $U\cap V = \emptyset$, $\I \subseteq \mem U$, $\J \subseteq \mem V$, and $U\cup V \subseteq W$:
\[
  \lin^{\I \sep \J}(\Alpha)(\sigma) \lec \lin^\I(\Alpha)(\sigma)
\]
\end{restatable}
Recall that $\lec$ is equivalent to $\supseteq$, so invariant monotonicity states that expanding the invariant (via $\sep$) can only add new behaviors to the set of outcomes. \Cref{lem:inv-mono} follows from the more general monotonicity property of linearization \cite{zilberstein2025denotational}.

The \ruleref{Frame} rule allows a local specification to be lifted into a larger memory footprint \cite{yang2002semantic}. The type of separation used depends on the mode $m$ of the triple. If $m=\st$, then the frame $\vartheta$ not only represents a disjoint set of physical \emph{resources}, but also that those resources are distributed independently from the information about the present program. A strong triple can always be weakened to a weak triple using the \ruleref{Weaken} rule. A weak triple can be strengthened---via \ruleref{Strengthen}---as long as the postcondition is precise. Just as with the \ruleref{Par} rule, precision here ensures that the scheduler cannot force any correlation between the postcondition and the frame.

\subsection{Structural and Outcome Splitting Rules}
\label{sec:structural}

Additional structural rules are given in \Cref{fig:structural-rules}. 
The first four splitting rules enable pointwise reasoning over outcome conjunctions, similar to those of Demonic Outcome Logic \cite{zilberstein2025demonic}. All these rules require that the logical variable $X$, bound by the outcome conjunction, does not appear free in the invariant $I$, since $X$ is unbound in the premise of the rule. If $X$ is free in $I$, then the rule can be applied after $\alpha$-renaming $X$ in the outcome conjunction (see \Cref{fig:entailment}).

The first rule, \ruleref{Split1}, requires that $\psi$ dictates the partition of the probability spaces in order to construct a final direct sum. This is done in a similar fashion to rules for establishing precision that we saw in \Cref{sec:prob-assert}---by requiring that $\psi\Rightarrow\sure{e\mapsto X}$ for some expression $e$. Since $X$ takes on distinct values in each case of the direct sum, then $e\mapsto X$ witnesses that the sample space can be partitioned. If $\psi$ does not witness a partition, then the \ruleref{Split2} rule can instead be used, which requires $\psi$ to be convex and not dependent on $X$.

\begin{figure}
\small
\begin{mathpar}
\ruledef{Split1}{
  I\vdash_m\triple{\varphi}C{\psi}
  \quad
  \psi \Rightarrow \sure{e \mapsto X}
  \quad
  X\notin\mathsf{fv}(I)
}{
  I\vdash_m\triple{\smashoperator{\bigoplus_{X\sim d( E)}}\varphi}C{\smashoperator{\bigoplus_{X\sim d( E)}}\psi}
}

\ruledef{NSplit1}{
  I\vdash_m\triple{\varphi}C{\psi}
  \quad
  \psi \Rightarrow \sure{e \mapsto X}
%  \precise{\textstyle\bigoplus_{X\sim\nu}\psi}
  \quad
  X\notin\mathsf{fv}(I)
%  \psi \Rightarrow \sure{e = X}
}{
  I\vdash_m\triple{\smashoperator{\bignd_{X\in E}}\varphi}C{\smashoperator{\bignd_{X \in E}}\psi}
}

\ruledef{Split2}{
  I\vdash_m\triple{\varphi}C{\psi}
  \quad
  \convex{\psi}
  \quad
  X\notin\mathsf{fv}(I, \psi)
}{
  I\vdash_m\triple{\smashoperator{\bigoplus_{X\sim d( E)}}\varphi}C{\psi}
}

\ruledef{NSplit2}{
  I\vdash_m\triple{\varphi}C{\psi}
  \quad
  \convex{\psi}
  \quad
  X\notin\mathsf{fv}(I, \psi)
}{
  I\vdash_m\triple{\smashoperator{\bignd_{X\in E}}\varphi}C{\psi}
}
  
%\ruledef{Exists}{
%  I\vdash_\wk\triple{\sure{P}}C{\psi}
%  \quad
%  \convex{\psi}
%  \quad
%  X\notin\mathsf{fv}(\psi,I)
%}{
%  I\vdash_{\wk}\triple{\sure{\exists X.\ P}}C{\psi}
%}
\ruledef{Exists}{
  I\vdash_\wk\triple{\textstyle\bignd_{X\in E}\sure{P}}C{\psi}
  \quad
  \sure{P}\Rightarrow\sure{e\mapsto X}
}{
  I\vdash_{\wk}\triple{\sure{\exists X\in E.\ P}}C{\psi}
}

\ruledef{Consequence}{
    \varphi'\Rightarrow\varphi
    \quad
    I\vdash_m\triple\varphi{C}\psi
    \quad
    \psi\Rightarrow\psi'
  }{
    I\vdash_m\triple{\varphi'}C{\psi'}
  }
\end{mathpar}
\caption{Structural and Outcome Splitting Rules}
\label{fig:structural-rules}
\end{figure}

We now demonstrate these two modes of use. Below, $x$ is initially distributed according to some distribution $d(E)$, and then it is incremented. The sample space is still partitioned after the increment, which is witnessed by $\sure{x - 1 \mapsto X}$, so \ruleref{Split1} can be used.
\begin{mathpar}%\footnotesize
\inferrule*[right=\rulereff{Split1}]{
  \vdash\triple{\sure{x\mapsto X}}{x\coloneqq x+1}{\sure{x\mapsto X+1}}
  \\
  \sure{x\mapsto X+1} \Rightarrow \sure{x-1\mapsto X}
}{
  \vdash\triple{\smashoperator{\bigoplus_{X\sim d(E)}}\sure{x\mapsto X}}{x\coloneqq x+1}{\smashoperator{\bigoplus_{X\sim d(E)}}\sure{x\mapsto X+1}}
}
\end{mathpar}
On the other hand, the next program samples $y$, and the resulting sample space is not partitioned according to $X$. Instead, \ruleref{Split2} is used since the postcondition is convex.
%\joe{Literally speaking, in the triple above the line, don't you need to know that $X$ is in the support of Ber(1/2) to derive that?}
\begin{mathpar}
\inferrule*[right=\rulereff{Split2}]{
  \vdash\triple{\sure{x\mapsto X\sep X\in\{0,1\}}}{y\samp \bern{{x}/2}}{\exists Y.\ \sure{Y\le 1/2} \sep y\sim\bern{Y}}
%  \vdash\triple{\sure{x\mapsto X}}{x\samp \bern{p}}{x \sim \bern{p}}
}{
  \vdash\triple{\smashoperator{\bigoplus_{X\sim \bern{1/2}}}\sure{x\mapsto X \sep X\in\{0,1\}}}{y\samp \bern{{x}/2}}{\exists Y.\ \sure{Y\le 1/2} \sep y\sim\bern{Y}}
}
\end{mathpar}
\ruleref{NSplit1} and \ruleref{NSplit2} are the nondeterministic analogues of the two aforementioned rules. They operate in exactly the same way when the distribution over $X$ is not known.

The \ruleref{Exists} rule allows for a more complex form of case analysis, where the logical variable being scrutinized, $X$, is bound by an existential quantifier inside of a pure assertion. Since the precondition is a pure assertion, it may be satisfied by a probability space that cannot measure the probability that $X$ takes on each value of $E$. Nonetheless, \ruleref{Exists} allows us to use a stronger precondition where $X$ is instead bound by a $\bignd$, allowing us to do case analysis using \ruleref{NSplit1} or \ruleref{NSplit2}. The tradeoff is that this form of reasoning is incompatible with strong frame preservation because turning a pure assertion into a nondeterministic outcome conjunction has a bad interaction with the strong separating conjunction, as shown below.
{\small\begin{align*}
%  \bigoplus_{X\sim \bern{p}} \sure{x \mapsto X \sep y\mapsto X}
%  &\Rightarrow
  \sure{ y \in \{0, 1\} } \sep (x \sim\bern{p})
%  \\
  \centernot\Rightarrow
  \Big(\bignd_{Y\in\{0,1\}} \sure{y\mapsto Y}\Big) \sep (x \sim\bern{p})
%  \\
  \Rightarrow
  \bignd_{Y\in\{0,1\}} \Big( \sure{y\mapsto Y} \sep (x \sim\bern{p}) \Big)
\end{align*}}%
On the left hand side, it is possible that $x$ and $y$ are always equal. However, the first implication---which is unsound---implies that information about $x$'s distribution can be distributed into every outcome of $y$, \ie $x$ is distributed according to $\bern{p}$ \emph{for every} value of $y$. So, \ruleref{Exists} is compatible only with weak frame preservation, but it nevertheless provides an important capability to do case analysis over shared state, which we will see more concretely in \Cref{sec:ast-rules,sec:entropy-mixer,sec:von-neumann}. As long as the postcondition becomes precise at some later point, the \ruleref{Strengthen} rule can be used to regain strong frame preservation.
Finally, the rule of \ruleref{Consequence} allows pre- and postconditions to be manipulated in the standard way. The invariant can be neither strengthened nor weakened, since doing so would break the assumptions of other threads.

\subsection{Loops and Almost Sure Termination}
\label{sec:ast-rules}

The final proof rule is for analyzing while loops, and proving that they \emph{almost surely} terminate---that is, they terminate with probability 1. The \ruleref{BoundedRank} rule, shown in \Cref{fig:while-rule}, is based on rules for sequential programs due to \citet{mciver2005abstraction} and \citet{zilberstein2025demonic}.
It revolves around a loop invariant $\varphi$ and a rank. The rank $R$ must be integer-valued and bounded between $\ell$ and $h$, \ie $\varphi\Rightarrow\sure{\ell\le R\le h}$. As long as $R > \ell$, the loop continues to iterate ($\varphi\sep\sure{R>\ell}\Rightarrow\sure{b\mapsto\tru}$) and once $R$ reaches $\ell$, the loop must terminate ($\varphi\sep\sure{R=\ell} \Rightarrow\sure{b\mapsto\fls}$).

The premise of \ruleref{BoundedRank} guarantees that the rank strictly decreases each iteration with probability at least $p > 0$. Since the rank is bounded between $\ell$ and $h$, this means that from any start state, the loop is guaranteed to terminate with probability at least $p^{h-\ell} > 0$, which allows us to conclude that the program must almost surely terminate by the Zero-One law of \citet[Lemma 2.6.1]{mciver2005abstraction}.
Finally, since the terminating outcome $\varphi[\ell/R]$ is precise, there must be a unique minimal probability space $\Q$ that satisfies it. Similar to the soundness proof of the \ruleref{Par} rule, we complete the soundness proof of the \ruleref{BoundedRank} rule by showing that for all $B \in \F_\Q$ and all $\nu$ resulting from finite approximations of the loop, $\nu(B)$ converges to $\mu_\Q(B)$.

\begin{figure}
\small
\begin{mathpar}
\ruledef{BoundedRank}{
  \substack{\varphi \Rightarrow \sure{\ell \le R \le h} \\ \varphi\sep\sure{R=\ell}\Rightarrow\sure{b\mapsto\fls} \\ \varphi\sep\sure{R>\ell}\Rightarrow\sure{b\mapsto\tru}}
  \quad
  I\vdash_m\triple{\varphi \sep \sure{R = N > \ell}}{C}{\textstyle (\bignd_{R = \ell}^{N-1}\varphi) \oplus_{\ge p} (\bignd_{R = N}^h \varphi)}
%  \quad
%  \convex{\varphi[\ell/R]}
  \quad
  \substack{
    0 < p \le 1 \\
    N \notin\free(\varphi) \\
    \precise{\varphi[\ell/R]}
  }
}{
  I\vdash_m\triple{\textstyle\bignd_{R=\ell}^h\varphi}{\whl bC}{\varphi[\ell/R]}
}
\end{mathpar}
\caption{The {\textrm{\textsc{BoundedRank}}} rule for almost sure termination.
%, where
%(1) $\varphi \Rightarrow \sure{\ell \le R \le h}$;
%(2) $\varphi\sep\sure{R=\ell}\Rightarrow\sure{b\mapsto\fls}$;
%(3) $\varphi\sep\sure{R>\ell}\Rightarrow\sure{b\mapsto\tru}$;
%(4) $\precise{\varphi[\ell/R]}$;
%(5) $N \notin\free(\varphi)$; and
%(6) $0 < p \le 1$.
%\joe{I would display these as premises of the rule... I'm sure it will be tough not to make it look cluttered, but I stared at the rule for several moments thinking something was wrong because I didn't see these side conditions.}
}
\label{fig:while-rule}
\end{figure}

As an example of how to apply this rule, consider the example program below. The program implements a sort of random walk where $x$ moves towards the origin with probability $\frac12$, otherwise it is updated to $y$, which may be altered by a parallel thread.
\[
%\begin{array}{l}
%0 \le y \le 5 \vdash\triple{\sure{0 \le x \le 5}}{
\code{RandWalk} \quad=\quad \whl{x > 0}{\;\;
\underbrace{
 b \samp \textstyle\bern{\frac12}\fatsemi
 \iftf{b}{
  x \coloneqq x-1
}{
  x \coloneqq y
}}_{C_\mathsf{body}}}
%}{\sure{x = 0}}
%\end{array}
\]
As long as the value of $y$ is bounded, this loop almost surely terminates, which we can prove using \ruleref{BoundedRank}, subject to the resource invariant $I = y \in \{0, \ldots, 5\}$.
We will sketch the proof of almost sure termination here, the full derivation is shown in \Aref{app:ast-example}.
The loop invariant is $\varphi\triangleq{\sure{x \mapsto R \sep 0\le R\le 5 \sep \own(b)}}$, essentially just stating that $x$ is between 0 and 5. The rank $R$ is the value of $x$, and so $R$ decreases each iteration with probability at least $\frac12$. However, $x$ may also be updated to $y$, which is nondeterministic, requiring a use of the \ruleref{Exists} rule to conclude that $\bignd_{R=1}^5\sure{x \mapsto R}$ after the command $x \coloneqq y$. Clearly $\bignd_{R=1}^5\sure{x \mapsto R}$ is not precise, so the resulting triple is not strongly frame preserving. Ultimately, we get the following weak triple for the loop body, which states that the rank strictly decreases with probability at least $\frac12$.
\begin{equation}\label{eq:ast-body}\small
  y\in\{0, \ldots, 5\} \vdash_\wk\triple{\sure{x \mapsto N}}{
    C_\mathsf{body}}{\textstyle
      \big( \bignd_{R=0}^{N-1}\sure{x \mapsto R} \big) \oplus_{\ge\frac12} \big( \bignd_{R=N}^{5}\sure{x \mapsto R} \big)
    }
\end{equation}
At this stage, the weak triple signifies that we do not know exactly how likely the program is to terminate; we can only \emph{bound} the probability. Indeed, the scheduler can influence the likelihood that $x$ takes on particular values, and can force $x$ to be correlated with other state. For example, suppose we wanted to apply the \ruleref{Frame} rule with $z \sim \unif{\{0,\ldots, 5\}}$. The scheduler could choose to always make $y$ and $z$ equal, in which case $z \sim \unif{\{0,\ldots, 5\}}$ would not be independent of the postcondition of (\ref{eq:ast-body}).
However, the point of \ruleref{BoundedRank} is that the scheduler \emph{cannot} affect the probability of eventual termination. As such, the postcondition at the end of the execution is $\varphi[0/R]$, which is equivalent to the precise assertion $\sure{x\mapsto 0}$, allowing us to apply \ruleref{Strengthen} to get the following strong triple for the entire program.
\[
  y \in \{0, \ldots, 5\} \vdash \triple{\sure{x \in \{0, \ldots, 5\}}}{\code{RandWalk}}{\sure{x\mapsto 0}}
\]
One remarkable aspect of \pcol is that the triple above also implies that $\code{RandWalk}$ almost surely terminates when run in parallel with any other almost surely terminating program that obeys the resource invariant $I$. We get this property for free from the \ruleref{Par} rule, which guarantees that the probability of any infinite interleaving of both programs converges to 0.

This holds without any fairness assumption; our semantics guarantees that each thread almost surely terminates regardless of interference from any other threads, therefore each enabled action is almost surely scheduled within a finite amount of time. Of course, there are many programs that only almost surely terminate subject to a fair scheduler, including probabilistic consensus and synchronization protocols \cite{ben-or1983another,lehmann1981advantages,rabin1980n-process}. As we discuss in \Cref{sec:discussion}, we plan to augment \pcol with capabilities to reason about fair termination in the future.

\section{Examples}
\label{sec:examples}

In this section, we present four examples to demonstrate how the proof rules of \pcol come together into more complex derivations. Proofs are sketched here, and shown fully in \Aref{app:examples}.

\subsection{Entropy Mixer}
\label{sec:entropy-mixer}

There are many scenarios where several potential sources of entropy or randomness are available, which must be mixed together with the guarantee that if at least one of the sources of entropy is high quality, then the output will be at least that good. A simplified example of a such scenario is modeled in the following program, where $x_2$ is a reliable source of entropy, but $x_1$ is unreliable, because it is derived from $y$ in a way that can be controlled adversarially by the scheduler. Despite that, $z$, which is derived from $x_1$ and $x_2$ is a high quality source of randomness.
\[
  \code{EntropyMixer} \quad\triangleq\quad
  y \coloneqq 0 \fatsemi \left(\;\; x_1 \coloneqq y \fatsemi x_2 \samp \bern{\tfrac12} \fatsemi z \coloneqq \xor(x_1, x_2)  \;\;\mathlarger{\mathlarger\parallel}\;\;
  y \coloneqq 1 \;\;\right)
\]
We will analyze this program using the invariant $I = (y \in \{0,1\})$, and conclude in the end that $z \sim \bern{\frac12}$. It is easy to see that the second thread satisfies the invariant, so we will focus on the first thread.
We first show how information about $y$ can be extracted from the invariant in order to give a specification for the assignment to $x_1$. The use of the \ruleref{Exists} rule results in a weak triple.
\[\footnotesize
\inferrule*[right=\rulereff{Atom}]{
  \inferrule*[Right={\rulereff{Exists}}]{
    \inferrule*[Right=\rulereff{Consequence}]{
      \inferrule*[Right=\rulereff{NSplit1}]{
        \inferrule*[Right=\rulereff{Assign}]{\;}{
          \vdash_\wk\triple{\sure{y\mapsto Y \sep \own(x_1)}}{x_1 \coloneqq y}{\sure{x_1 \mapsto Y \sep y \mapsto Y }}
        } % ASSIGN
      }{
        \vdash_\wk\triple{\textstyle\bignd_{Y\in\{0,1\}} \sure{y\mapsto Y \sep {\own(x_1)}}}{x_1 \coloneqq y}{\textstyle\bignd_{Y\in\{0,1\}} \sure{x_1 \mapsto Y \sep y \mapsto Y}}
      } % NSPLIT1
    }{
      \vdash_\wk\triple{\textstyle\bignd_{Y\in\{0,1\}} \sure{y\mapsto Y \sep {\own(x_1)}}}{x_1 \coloneqq y}{(\textstyle\bignd_{Y\in\{0,1\}} \sure{x_1 \mapsto Y}) \sep \sure{y \in \{0,1\}}}
    } % CONSEQUENCE
  }{
    \vdash_\wk\triple{\sure{\own(x_1)} \sep \sure{y \in \{0,1\}}}{x_1 \coloneqq y}{(\textstyle\bignd_{Y\in\{0,1\}} \sure{x_1 \mapsto Y}) \sep \sure{y \in \{0,1\}}}
  } % EXISTS
}{
  y\in\{0,1\}\vdash_\wk\triple{\sure{\own(x_1)}}{x_1 \coloneqq y}{\textstyle\bignd_{Y\in\{0,1\}} \sure{x_1 \mapsto Y}}
} % ATOM
\]
These derivations are best read moving up from the lowermost precondition and then down from the topmost postcondition.
First, \ruleref{Atom} is applied to open the invariant. Next, we use \ruleref{Exists} and \ruleref{NSplit1} to gain access to the value of $y$, so that we can apply the \ruleref{Assign} rule. After the assignment, we use \ruleref{Consequence} to weaken the information about $y$ and move it outside the scope of the $\bignd$ so that we can close the invariant. Now, we move on to the derivation for the remainder of the thread.
\[\footnotesize
%\inferrule*[right=\rulereff{Strengthen}]{
%  \inferrule*[Right=\rulereff{Seq}]{
%    (??)
%    \quad
    \inferrule*[right=\rulereff{NSplit2}]{
      \inferrule*[Right=\rulereff{Consequence}]{
        \inferrule*{\vdots}{
          I\vdash_\wk\triple{ \sure{x_1 \mapsto Y \sep \own(x_2, z)}}{x_2 \samp \bern{1/2} \fatsemi z \coloneqq \xor(x_1, x_2)}{\textstyle\bigoplus_{X \sim \bern{1/2}} \sure{z\mapsto \xor(Y, X)}}
        }
      }{
        I\vdash_\wk\triple{ \sure{x_1 \mapsto Y \sep \own(x_2, z)}}{x_2 \samp \bern{1/2} \fatsemi z \coloneqq \xor(x_1, x_2)}{z \sim \bern{1/2}}
      } % CONSEQ
    }{
      I\vdash_\wk\triple{\textstyle\bignd_{Y\in\{0,1\}} \sure{x_1 \mapsto Y \sep \own(x_2, z)}}{x_2 \samp \bern{1/2} \fatsemi z \coloneqq \xor(x_1, x_2)}{z \sim \bern{1/2}}
    } % NSPLIT 2
%  }{
%    I\vdash_\wk\triple{\sure{\own(x_1, x_2, z)}}{x_1 \coloneqq y \fatsemi x_2 \samp \bern{\tfrac12} \fatsemi z \coloneqq \xor(x_1, x_2)}{z \sim \bern{\tfrac12}}
%  } % SEQ
%}{
%  I\vdash\triple{\sure{\own(x_1, x_2, z)}}{x_1 \coloneqq y \fatsemi x_2 \samp \bern{\tfrac12} \fatsemi z \coloneqq \xor(x_1, x_2)}{z \sim \bern{\tfrac12}}
%} % STRENGTHEN
\]
First, we must again do case analysis on the logical variable $Y$, but this time we use \ruleref{NSplit2}, which requires the postcondition to be convex and not depend on $Y$. In the premise of \ruleref{NSplit2}, we we show that $z$ is distributed properly regardless of the value of $Y$. For any fixed $Y$, after executing the writes to $x_2$ and $z$, we know that $\bigoplus_{X\sim\bern{1/2}} \sure{z \mapsto \xor(Y,X)}$ (the proof is quite mechanical, and shown in \Aref{app:entropy-mixer}).
Since $Y$ is constant, then $\xor(Y,X)$ is a bijection from $\{0,1\}$ to $\{0,1\}$, and we can therefore use the rule of \ruleref{Consequence} to conclude that $z$ is uniformly distributed. Since that postcondition is precise, we can strengthen the triple. After combining the two threads with \ruleref{Par}, we get the following specification for the whole program.
%\[\small
%  \smashoperator{\bigoplus_{X \sim\bern{\frac12}}} (\sure{z\mapsto\xor(X,Y)} \sep \sure{y\in\{0,1\}})
%  \vdash
%  (\smashoperator{\bigoplus_{X \sim\bern{\frac12}}} \sure{z\mapsto\xor(X,Y)}) \sep \sure{y\in\{0,1\}}
%  \vdash
%  z\sim\bern{\frac12} \sep \sure{y\in\{0,1\}}
%\]
\[
  \vdash\triple{\sure{\own(x_1, x_2, y, z)}}{\code{EntropyMixer}}{z \sim \bern{\tfrac12}}
\]

\subsection{Concurrent Shuffle}
\label{sec:shuffle}
\newcommand{\app}{\mathbin{+\!+}}

\begin{figure}
\[
\begin{array}{ll}
\begin{array}{l}
\code{ConcurrentShuffle}:\\
\quad a_1 \coloneqq [] \fatsemi  a_2 \coloneqq [] \fatsemi i\coloneqq 0 \fse \\
\quad \whl{i < \code{len}(a)}{} ( \\
\qquad b \samp \bern{\frac12} \fse \\
\qquad \iftf b{a_1 \coloneqq a_1 \app [a[i]]}{a_2 \coloneqq a_2 \app [a[i]]} \fse \\
\qquad i\coloneqq i+1 \\
\quad) \fse \\
\quad \code{shuffle}_1 \parallel \code{shuffle}_2 \fse \\
\quad a \coloneqq a_1 \app a_2
\end{array}
&
\begin{array}{l}
\code{shuffle}_k: \\
\quad i_k \coloneqq \code{len}(a_k) \fse \\
\quad \whl{i_k > 1}{} \\
\qquad j_k \samp \unif{[0, \ldots, i_k]} \fse \\
\qquad a_k \coloneqq \code{swap}(a_k, i_k, j_k) \fse \\
\qquad i_k \coloneqq i_k -1
\\\; \\\; \\\;
\end{array}
\end{array}
\]
\caption{A concurrent shuffling algorithm, where $\ell_1\app\ell_2$ concatenates two lists and $\code{swap}(\ell, i, j)$ returns $\ell$ with the elements $\ell[i]$ and $\ell[j]$ swapped.}
\label{fig:con-shuff}
\end{figure}

\citet{bacher2015mergeshuffle} showed that shuffling algorithms can be made up to seven times faster through parallelization. They introduced a divide-and-conquer algorithm in which sub-arrays are shuffled concurrently and then merged. In this example, we prove the correctness of a simple concurrent shuffle algorithm using \pcol. The program is shown in \Cref{fig:con-shuff}.
First, the elements in the array are randomly assigned to two buckets, $a_1$ and $a_2$, the buckets are then shuffled in parallel using the standard \citet{fisher1938statistical} algorithm, and then the two shuffled sub-lists are concatenated together. For some list $\ell$, let $\Pi(\ell)$ be the set of all permutations of $\ell$. For the purposes of this example, we will presume that all lists do not contain duplicate values. The specification of the Fisher-Yates shuffle is shown below (and proven in \Aref{app:shuffle}). That is, if a list $A$ is stored in $a_k$, then $a_k$ will hold a uniformly chosen permutation of that list after execution of $\code{shuffle}_k$.
\[
  \vdash\triple{\sure{a_k \mapsto A \sep \own(i_k, j_k)}}{\code{shuffle}_k}{a_k \sim \unif{\Pi(A)}}
\]
For some list $\ell$ and bit-string $x$, let $\ell[x]$ be the list obtained by filtering $\ell$ to only contain the indices $i$ such that $x[i] = 1$, \eg $[1, 2, 3, 4][ 1001 ] = [1, 4]$.
After the execution of the bucketing loop, we get the  postcondition
$
  \bigoplus_{X \sim \unif{\{0,1\}^{\code{len}(A)}}} \sure{a_1 \mapsto A[X] \sep a_2 \mapsto A[\neg X] }
$, where $X$ is a uniformly chosen bit-string that dictates into which bucket each element of $A$ is placed, and $\neg X$ is bitwise logical negation. Next, to analyze the concurrent calls to $\code{shuffle}_k$, we first use \ruleref{Split1} so that we can separate $a_1$ and $a_2$. Using \ruleref{Par} with the triple for $\code{shuffle}_k$, we have that $a_1$ and $a_2$ are uniformly and independently distributed permutations of the initial lists after the concurrent shuffles.
\[\footnotesize
  \inferrule*[right=\rulereff{Par}]{
      \vdash\triple{\sure{a_1 \mapsto A[X]}}{\code{shuffle}_1}{a_1 \sim \unif{\Pi(A[X])}}
      \quad
      \vdash\triple{\sure{a_2 \mapsto A[\neg X]}}{\code{shuffle}_2}{a_2 \sim \unif{\Pi(A[\neg X])}}
  }{
    \vdash\triple{\sure{a_1 \mapsto A[X] \sep a_2 \mapsto A[\neg X]}}{\code{shuffle}_1 \parallel \code{shuffle}_2}{(a_1 \sim \unif{\Pi(A[X])}) \sep (a_2 \sim \unif{\Pi(A[\neg X]})}
  }
\]
Reapplying the outcome conjunction over $X$, we can rewrite the postcondition to be:
\[\textstyle
  \bigoplus_{X \sim \unif{\{0,1\}^{\code{len}(A)}}} \bigoplus_{A_1 \sim \unif{\Pi(A[X])}} \bigoplus_{A_2\sim\unif{\Pi(A[\neg X])}} \sure{a_1 \mapsto A_1 \sep a_2 \mapsto A_2}
\]
Using \ruleref{Split1} three times and \ruleref{Assign}, we get the following postcondition after the assignment to $a$.
\[\textstyle
  \bigoplus_{X \sim \unif{\{0,1\}^{\code{len}(A)}}} \bigoplus_{A_1 \sim \unif{\Pi(A[X])}} \bigoplus_{A_2\sim\unif{\Pi(A[\neg X])}} \sure{a \mapsto A_1 \app A_2}
\]
We now prove the assertion above implies that $a$ is a uniform permutation. Take any $\ell \in \Pi(A)$ and let $n = \code{len}(A)$. For every $0\le k\le n$, there is exactly one split $X$ such that $X$ assigns the elements $\ell[0], \ldots, \ell[k-1]$ to $a_1$ and the rest to $a_2$. This split occurs with probability $\frac1{2^n}$.
%\joe{This is super fussy and pedantic, but this point (and some of the following steps) are only true if the elements of $\ell$ are all distinct. But even if they're not distinct, we could think of them as being made distinct by pairing each element with its position in the original A, so maybe we shouldn't even get into this point. Or we could just say somewhere for simplicity at the outset that we assume that elements of $A$ are distinct.}\noam{Yes, good point. I make a remark about it at the beginning.}
Further, given that this correct split has occurred, $a_1$ and $a_2$ are shuffled so that $a_1 \app a_2 = \ell$ with probability $\frac{1}{k!}\frac{1}{(n-k)!}$. Thus the probability of getting the permutation $\ell$ is $\sum_{k=0}^n \frac1{2^n}\frac{1}{k!}\frac{1}{(n-k)!} = \frac1{n!}$.
Since there are exactly $|\Pi(A)| = n!$ permutations, this means that these permutations are produced uniformly, therefore:
\[
  \vdash\triple{\sure{a\mapsto A \sep \own(\cdots)}}{\code{ConcurrentShuffle}}{a \sim \unif{\Pi(A)}}
  % a_1,a_2,i,b,i_1,i_2,j_1,j_2
\]

\subsection{Private Information Retrieval}
\label{sec:pir}

\begin{figure}
\[
\begin{array}{l}
  \code{PrivFetch}: \\
  \quad q_1 \samp \unif{ \{0, 1\}^n} \fse \\
  \quad q_2 \coloneqq \xor(q_1, x) \fse \\
  \quad \code{fetch}_1 \parallel \code{fetch}_2 \fse \\
  \quad r \coloneqq \xor(r_1, r_2)
  \\\;
\end{array}
\qquad\qquad
\begin{array}{l}
\code{fetch}_k: \\
  \quad i_k \coloneqq 0 \fatsemi
  r_k \coloneqq 0 \fse \\
  \quad \whl{i_k < \code{len}(q_k)}{\;} \\
  \qquad \ift{q_k[i_k] = 1}{\;} \\
  \qquad\quad r_k \coloneqq \xor(r_k, d[i_k]) \fse \\
  \qquad i_k\coloneqq i_k+1
\end{array}
\]
\caption{A concurrent private information retrieval protocol.}
\label{fig:pir}
\end{figure}

Private information retrieval allows a user to fetch data without the database operator learning what data was requested \cite{chor1998private}.
A simple form of private information retrieval is modeled in the program shown in \Cref{fig:pir}. The $\code{fetch}_k$ programs process a bit string query $q_k$, indicating which entries of the database $d$ to return. Those entries are then bitwise xor'ed together.
Private retrieval is implemented in \code{PrivFetch}. The input $x$ is a \emph{one-hot} bit string $\code{onehot}(K)$, with a 1 in position $K$---indicating the index of the data to retrieve---and zeros everywhere else. Two queries are then made concurrently. The first one uses a randomly chosen bit string, and the second uses the same random string xor'ed with $\code{onehot}(K)$. The final data is an xor of the two responses, which reveals the data at position $K$.
\citet{psl} proved a similar example in \psl, but their version was sequential; both fetches happened within a single for loop. Our version better models a distributed system where the computation does not occur in lockstep. We first present a specification for $\code{fetch}_k$ (proven in \Aref{app:pir}), which states that $r$ is an xor of data entries $i$, such that $q_k[i] = 1$, subject to the invariant that $d\mapsto D$. %\joe{Is the variable $k$ being used for multiple distinct purposes here? It's used as a subscript on the fetch, but also this sentence is using $k$ for the xor?}\noam{changed one to $i$}
\[
  d\mapsto D \vdash \triple{
    \sure{q_k\mapsto Q} \sep \sure{\own(i_k, r_k)}
  }{
    \code{fetch}_k
  }{
    \sure{r_k \mapsto \xor_{0 \le i < n : Q[i] = 1} D[i]}
  }
\]
We now sketch the derivation of the main procedure, the full proof is shown in \Aref{app:pir}. After sampling into $q_1$, we use \ruleref{Split2} to make the outcome of that query deterministic. After assigning $q_2$, we get $\sure{q_1 \mapsto Q} \sep \sure{q_2 \mapsto \xor(Q, \code{onehot}(K))}$. We can then apply the \ruleref{Par} rule to analyze the concurrent fetches to get the following postcondition:
\[
%\inferrule*[right=\ruleref{Par}]{
%  xx
%}{
%  \triple{\sure{q_1 \mapsto Q} \sep \sure{q_2 \mapsto \xor(Q, X)}}{
%    \code{fetch}(..d) \parallel \code{fetch}(..)
%  }
  {\sure{r_1 \mapsto \xor_{0 \le i < n : Q[i] = 1} D[i]} \sep \sure{r_2 \mapsto \xor_{0 \le i < n : \xor(Q, \code{onehot}(K))[i] = 1} D[i]}}
%}
\]
The value of $r$ is the xor of $r_1$ and $r_2$, which differ only at index $K$, therefore we can conclude that $\sure{r\mapsto D[K]}$. Since this assertion is convex and does not depend on $Q$, we meet the side conditions of the $\ruleref{Cond2}$ rule, and therefore the final postcondition is simply $\sure{r\mapsto D[K]}$.
\[
\vdash\triple{\sure{ x \mapsto \code{onehot}(K) \sep d\mapsto D \sep \own(q_1,q_1,r_1,r_2,r,i_1,i_2)}}{
    \code{PrivFetch}
  }{\sure{r \mapsto D[K]}}
\]

\subsection{The von Neumann Trick}
\label{sec:von-neumann}

The \citet{neumann1951various} trick is a protocol for simulating a fair coin given a coin with unknown bias. The biased coin is flipped twice, if the outcome is 1-0 or 0-1, then the output is the first coin flip, otherwise the experiment is repeated. Below, we have a variant of the von Neumann trick where the coin's bias is stored in shared memory, and its value is not remembered across rounds.
\[
\code{vonNeumann} \quad\triangleq\quad
%\begin{array}{l}
x \coloneqq 0 \fatsemi y \coloneqq 0 \fse 
\whl{x=y}{\;\;
p' \coloneqq p \fatsemi
x \samp\bern{p'} \fatsemi
y \samp\bern{p'}}
%\end{array}
\]
Every round, a concurrent thread could change the bias, altering the probability of terminating in that round. Despite this, we show that the program almost surely terminates, and that $x$ is distributed according to a fair coin, subject to the invariant $I = ( p \in [\varepsilon, 1-\varepsilon]_\Delta)$ where $0 <\varepsilon \le \frac12$ is an arbitrarily small probability and $\Delta$ is a nonzero step size, making the interval finite (if $p$ could be 0 or 1, then the program may not terminate). A variant of this example appeared in \citet{zilberstein2025demonic}, where the bias was explicitly altered by an adversary. This concurrent version of the program introduces new challenges from a program analysis perspective.
The bulk of the derivation involves analyzing the while loop. To do so, we need a loop invariant, which is shown below.
\[
  \varphi \triangleq \varphi_0 \vee \varphi_1
   \quad
   \varphi_0\triangleq \smashoperator{\bigoplus_{X \sim \bern{1/2}}} \sure{x \mapsto X \sep y \mapsto \lnot X \sep R=0}
   \quad
    \varphi_1\triangleq\sure{x=y\mapsto \tru \sep R=1 \sep \own(p')}
\]
The rank $R$ is either 0 or 1. When $R=0$, $x \neq y$ and $x$ is uniformly distributed, so the loop terminates. When $R=1$, $x=y$, so the loop keeps iterating. Each iteration, the loop terminates with probability at least $2\varepsilon(1-\varepsilon)$. Due to the reliance on shared state, this probability is not exact, but only a bound.
The structure of the proof is similar to that of \Cref{sec:entropy-mixer}. First, we use \ruleref{Atom} and \ruleref{NSplit1} to open the invariant and conclude that $p'$ holds some probability in $[\varepsilon, 1-\varepsilon]_\Delta$. Since the postcondition is imprecise and based on shared state, we are only able to obtain a weak triple at this point.
\[
  I\vdash_\wk\triple{\sure{\own(p')}}{p' \coloneqq p}{\textstyle\bignd_{X \in [\varepsilon,1-\varepsilon]_\Delta} \sure{p' \mapsto X}}
\]
We then sequence this with the two sampling operations, shown below:
\[\footnotesize
\hspace{-2em}
\inferrule*[right=\rulereff{NSplit2}]{
  \inferrule*[Right=\rulereff{Consequence}]{
    \inferrule*{\vdots}{
      I\vdash_\wk\triple{\sure{p'\mapsto X \sep \own(x, y)}}{x \samp \bern{p'}\fatsemi y\samp\bern{p'}}{\sure{p'\mapsto X} \sep x\sim\bern{X}\sep y\sim\bern{X}}
    }
  }{
    I\vdash_\wk\triple{\sure{p'\mapsto X \sep \own(x, y)}}{x \samp \bern{p'}\fatsemi y\samp\bern{p'}}{\varphi_0 \oplus_{\ge 2\varepsilon(1-\varepsilon)} \varphi_1}
  }
}{
  I\vdash_\wk\triple{\textstyle\bignd_{X \in [\varepsilon, 1-\varepsilon]_\Delta} \sure{p'\mapsto X \sep \own(x, y)}}{x \samp \bern{p'}\fatsemi y\samp\bern{p'}}{\varphi_0 \oplus_{\ge 2\varepsilon(1-\varepsilon)} \varphi_1}
}
\]
We first use \ruleref{NSplit2} to show that for any fixed probability $X$, the loop terminates with probability at least $2\varepsilon(1-\varepsilon)$. This is done using straightforward combinatorial reasoning. Given that $p'\mapsto X$, we know that $x$ and $y$ will be independently and identically distributed according to $\bern{X}$. That means that $\sure{x\mapsto 1 \sep y\mapsto 0}$ and $\sure{x\mapsto 0 \sep y\mapsto 1}$ both occur with probability $X(1-X)$, so $\varphi_0$ occurs with probability $2X(1-X)$, and otherwise $x=y$, so $\varphi_1$ holds. Since we know that $X \in [\varepsilon, 1-\varepsilon]$, then clearly $2\varepsilon(1-\varepsilon) \le 2X(1-X)$, so the consequence above is valid.
After applying \ruleref{BoundedRank}, we get the postcondition $\varphi_0 \Rightarrow (x \sim \bern{\frac12})$. Since this postcondition is precise, we can use \ruleref{Strengthen} to get the following strong triple for the whole program, indicating both that the program almost surely terminates, and that $x$ is distributed like a fair coin. In fact, it also terminates when composed in parallel with other terminating threads that alter $p$ in arbitrary ways!
\[
  p \in [\varepsilon, 1-\varepsilon]_\Delta \vdash\triple{\sure{\own(x, y, p')}}{\code{vonNeumann}}{x \sim \bern{\tfrac12}}
\]

\section{Related Work}
\label{sec:related}

\subsubsection*{Logics for Probabilistic Concurrency}
Polaris is a relational separation logic built on Iris \cite{iris} for reasoning about concurrent probabilistic programs \cite{tassarotti2018verifying,polaris}. 
Compared to Polaris, \pcol has two main advantages: it supports unbounded looping and it supports direct probabilistic reasoning about the distribution of outcomes. Polaris cannot handle the von Neumann trick, which involves unbounded looping. Because it is a relational logic, Polaris works by relating a randomized concurrent program to a specification program, which is randomized and nondeterministic (but not concurrent). This requires the random choices between the two programs to be coupled in lockstep.

In our concurrent shuffling example, we prove with \pcol that a sequence of many random choices results in a completely different uniform distribution of outcomes. It would be impossible to recreate this proof in Polaris, since the natural specification program for a randomized shuffle would only make a single random choice (a uniform sample over permutations), whereas the actual program makes a large sequence of random choices.
It might be possible to do an alternate relational proof in Polaris by instead picking a specification program that makes a similar sequence of random choices, but then one would be left with the challenge of proving that such a specification program actually generated a uniform shuffle, for which Polaris cannot help.

In \pcol, we combine the two steps, avoiding the need to write down a specification program by building quantitative reasoning tools into the logic itself. However, the approaches do have similarities---our invariant sensitive semantics implicitly captures the behavior of the specification program by converting parallel manipulation of shared state into nondeterminism.
This is slightly easier, since the user must only write down an invariant rather than an entire specification program.
Then, rather than externally analyzing the specification program, we directly prove a quantitative postcondition. While \pcol does not support all the capabilities of Polaris (\eg ghost state, higher order state, etc.), proofs are carried out in fewer steps using a single, self-contained logic.
In the future, it may be fruitful to combine the two approaches. Indeed, \citet{bao2025bluebell} showed the advantages of supporting both relational and unary reasoning in a single probabilistic logic.

\citet{fesefeldt2022towards} pursued an alternative technique for reasoning about probabilistic concurrent programs, based on a \emph{quantitative} interpretation of separation logic \cite{qsl}. 
Their logic can be used to lower bound the probability of a single outcome, making all but the private information retrieval example in this paper out of reach. While \citeauthor{fesefeldt2022towards} do support unbounded looping, the probability of nontermination is always added to the final expected values, and so it cannot be used to prove almost sure termination.

\citet{fan2025program} developed another logic based on Probabilistic Rely-Guarantee \cite{mciver2016probabilistic}. While it supports outcome splitting, the program must be re-instrumented to explicitly declare where splitting occurs, whereas in \pcol splitting is purely logical and can be used anywhere. \citeauthor{fan2025program} require postconditions after splitting to be convex;  they have a rule similar to \ruleref{Split2}, but not \ruleref{Split1}. In addition, their parallel composition rules do not give independence guarantees, and accordingly can only make conclusions about the local distributions of all threads, and not the \emph{joint} distributions of all the variables in global memory, and programs must be proven to almost surely terminate externally to the logic.
Their logic is based on an oblivious adversary model---a weaker model than our unrestricted adversary. More programs are verifiable in this model, so it is often preferred, however the logic of \citeauthor{fan2025program} can only reason about obliviousness in limited ways, \ie by treating coin flip actions as atomic, so many programs are still out of reach.

\subsubsection*{Probabilistic Separation Logics}
Capturing probabilistic independence in separation logic was first explored by \citet{psl}, however the resulting Probabilistic Separation Logic (\psl) was limited in its ability to reason about control flow, and the frame rule had stringent side conditions. \DIBI later extended the \psl model to include conditioning \cite{bao2021bunched}. Lilac built on the two aforementioned logics and used conditioning to improve on \psl's handling of control flow, although without mutable state \cite{li2023lilac}. Lilac also added support for continuous distributions, and reformulated the notion of separation using probability spaces, making it more expressive. 

Lilac's lack of mutable state means that information about variables can be duplicated; for example, $x \sim \bern{\frac12} \sep \sure{ x \in \{0, 1\}}$ is satisfiable. 
Bluebell uses a similar model to Lilac, with the ability to reason about mutable state, which requires the logic to track permissions too \cite{bao2025bluebell}. We chose to use a more restrictive form of separation where information about variables cannot be shared, as it simplified the logical rules by eliminating the need for tracking permissions.

%We build on these logics, bringing independence and conditioning into a concurrent setting for the first time, with a modified measure theoretic semantics to support compositional conditioning rules (\ruleref{Cond1} and \ruleref{Cond2}).

As we mentioned in \Cref{sec:overview}, the direct sum semantics of $\bigoplus$ differs from conditioning modalities $\mathsf{C}_{X \sim d}\varphi$, which are based on disintegration. However, we believe that in the future the two modalities can work together to do discrete case analysis over continuous programs. For example, consider the following program, which samples from a continuous distribution and then branches on the result.
\[
  x \samp \unif{[0,1]}\fatsemi \iftf{x \le \tfrac12}{C_1}{C_2}
\]
Lilac's rule for analyzing if statements is incompatible with mutable state, so we would instead need to use \ruleref{Split1} or \ruleref{Split2} to do case analysis on the guard $x\le\frac12$ in order to derive a specification for this program. However, those rules require an outcome conjunction, which cannot express the fact that $x$ is distributed according to a continuous distribution. Instead, we can partition the continuous distribution into two parts, joined with an outcome conjunction, which would then allow us to analyze the two branches of the if statement.
\[
  \mathsf{C}_{X \sim\unif{[0,1]}} \sure{x\mapsto X}
  \quad\implies\quad
  \left(\mathsf{C}_{X \sim\unif{[0,1/2]}} \sure{x\mapsto X}\right) \oplus_{\frac12} \left( \mathsf{C}_{X \sim\unif{(1/2, 1]}} \sure{x\mapsto X}\right)
\]
Although the direct sum's partitioning of the sample space imposes some limitations on the splitting rules, the benefit is that they are fully compositional; the use of splitting does not impede the application of other rules later in the derivation. This is in contrast with Bluebell's \textsc{c-wp-swap} \cite[\S 5.1]{bao2025bluebell}, which requires ownership over all program variables (denoted $\mathsf{own}_{\mathbb X}$), thereby precluding most later applications of the frame rule. Although \citet{bao2025bluebell} show fruitful uses of \textsc{c-wp-swap}, the restriction is not acceptable for a concurrency logic, since it would preclude use of the \ruleref{Par} rule.

This is similar to \ruleref{Exists} disabling strong frame preservation, but \ruleref{Exists} is used in specific scenarios for case analysis on nondeterministic shared state, which only arises in the concurrency setting.
In any case, strong frame preservation can always be reenabled if the postcondition is precise.
Non-frame-preserving operations arise in non-probabilistic separation logics too \cite{vindum2025nextgen,spies2025destabilizing}.
The idea of having two types of probabilistic separation was also explored in \textsf{LINA}, where weak separation corresponds to \emph{negative dependence} \cite{bao2022separation}.

The notion of \emph{precision} has been previously studied in separation logics, in part to explain when the separating conjunction distributes over other logical connectives, such as the regular conjunction \cite{semanticsep,vafeiadis2011concurrent}.
In \psl, precision (under the name \emph{supported}) was used to ensure that the guard of an if statement remains independent of the postcondition of the two branches. Conditioning \`a la Lilac and Bluebell provides a more flexible way to reason about control flow without forcing the guard to be independent of the states in the two branches.

Another category of probabilistic separation logics build on Iris \cite{iris}, from which they inherit expressive features, including ghost state and impredicative invariant reasoning.
In these logics, separating conjunctions have the usual meaning from \csl, and do not capture probabilistic independence.
\citet{lohse2024irisexpectedcostanalysis} and \citet{haselwarter2024tachishigherorderseparationlogic} develop logics for proving bounds on the expected runtime of a randomized program.
\citet{aguirre2024error} apply a similar approach for upper bounding the probability that a postcondition will fail to hold in sequential programs, and \citet{li2025modular} extended this work to the concurrent setting.
Additional logics have also been developed for relational reasoning and refinement \cite{gregersen2024almost-sure,ghaselwarter2024approximate,gregersen2024asynchronous}.
The tradeoff is that they focus on a narrow property about programs' probabilistic behaviors, \eg only capturing a bound on an expected cost or probability of a single event.
Outcome Separation Logic uses a more primitive form of heap separation, but is backed by a denotational model that supports specifications about the distribution of outcomes \cite{zilberstein2024outcome}.

%\subsubsection*{Semantics for Probability, Nondeterminism, and Concurrency}
%There is a rich body of research on semantics that combine probabilistic and nondeterministic computation. In this paper, we build on the convex powerset approach \cite{morgan1996refinement}, which has been used to define denotational semantics \cite{jifeng1997probabilistic} and has well understood equational \cite{mislove2000nondeterminism,mislove2004axioms} and domain theoretic properties \cite{tix2009semantic,tix2000convex,tix1999continuous,keimel2017mixed}. We refer to \citet[\S7]{zilberstein2025demonic} for an overview of alternative other approaches.
%
%The convex powerset was used as a semantic basis for expectation calculi, which bound the expected values of random variables in probabilistic and nondeterministic programs \cite{wpe,mciver2005abstraction,kaminski}. More recently, Demonic Outcome Logic (\dol) used the convex powerset to develop a logic for reasoning about these programs in terms of their distributions of outcomes, adding more expressive power \cite{zilberstein2025demonic}.
%However, the convex powerset had not previously intersected with semantics of \emph{concurrent} programs, which uses partially ordered multisets (pomsets) to record the causality between actions \cite{gischer1988equational,pratt1986modeling}. As discussed in \Cref{sec:semantics}, no prior pomset semantics is suitable for use with the convex powerset, as they implicitly reorder 
%probabilistic and nondeterministic choices.

\section{Conclusion}
\label{sec:discussion}

This paper brings together ideas from concurrent, probabilistic separation logics, and Demonic Outcome Logic in developing \pcol, a new expressive logic for analysis of probabilistic concurrent programs. Although \pcol represents a significant step in reasoning for randomized concurrent programs, more work remains to be done. Fine-grained concurrency analysis is notoriously complex, and we plan to augment \pcol in the following ways to support more expressive verification.

\subsubsection*{Fair Termination and Synchronization.}
Our \ruleref{BoundedRank} rule makes no assumptions about fairness (meaning that no thread can be indefinitely starved); indeed it applies only to programs for which the probability of eventual termination does not depend on the scheduler. 
However, many probabilistic synchronization protocols \cite{hart1983termination} such as the Dining Philosophers problem \cite{lehmann1981advantages} only terminate under a fair scheduler. We would like to extend \pcol to reason about these programs, but it will present significant challenges. Fairness is not a compositional property, so many of the properties of $\lin(\de{-})$ that we rely on for soundness would not hold.

%\subsubsection*{Probabilistic Invariants.}
%In this paper, we allow invariants that are \emph{almost sure} statements about shared variables. This helps in establishing soundness of the \ruleref{Par} rule, since almost sure assertions are always probabilistically independent from any other assertions. However, going forward we would like to allow invariants to specify probabilistic properties. This would allow us to verify probabilistic approximate counters \cite{morris1978counting,flajolet1985approximate}, where the invariant must specify that the \emph{expected value} of the approximate count is the true count.

\subsubsection*{Dynamic Allocation and Resource Algebras.}
As with \psl and all logics that build on it, our resource model uses variables rather than pointers.
However, most concurrent programs use pointers.
Modern \csl implementations such as Iris \cite{iris,iris1} use resource algebras, so that additional types of physical and logical state can be added to govern the ways in which concurrent threads modify shared resources. Bluebell already includes permissions, which help to duplicate knowledge about read-only variables \cite{bao2025bluebell}, however many other resources are used in practice and it is not yet understood how those resources interact with the independence model of separation. We plan to augment the model of \pcol to support Iris style resource algebras.

\subsubsection*{Mechanization}
As we saw in \Cref{sec:examples} and \Aref{app:examples}, \pcol derivations are quite involved---even for small programs---due to the handling of invariants, conditioning, and case analysis on shared state. Verification of larger programs would be infeasible with pen-and-paper proofs, therefore we plan to mechanize \pcol in the Lean proof assistant \cite{moura2015lean}. Lean is the ideal choice because much the underlying probability theory, domain theory, and topology needed to support \pcol are already formalized in mathlib \cite{community2020lean}, although more work would still be needed to formalize all the foundational theories of \pcol, such as the convex powerdomain.

\section*{Acknowledgements}

This work was supported by the National Science Foundation under awards 2504142 and 2504143 and ARIA’s Safeguarded AI programme.

 \bibliography{refs}
  
\ifx\extended\undefined\else
% !TEX root =  concurrency.tex
\allowdisplaybreaks
\appendix
\clearpage

\ifx\apponly\undefined\else
\setcounter{page}{1}
\fi

{\noindent \huge\bfseries\sffamily Appendix}

\footnotesize

\section{Definition of Program Semantics}
\label{app:pomset}

In this section, we give details on the \emph{Pomsets with Formulae} semantic model due to \citet{zilberstein2025denotational}.

\subsection{Pomsets with Formulae}

Whereas standard pomsets (partially ordered multisets) use a partial order to record the causality between atomic actions in the program (elements of a multiset) \cite{gischer1988equational,pratt1986modeling}, pomsets with formulae add a Boolean formula to each node, which records which tests must succeed or fail to reach that point in the execution. This is necessary to capture probabilistic concurrency, since tests may not always pass or fail, but rather they may only have some probability of passing, and so both paths must be represented in a single semantic structure.
We briefly introduce pomsets with formulae here, refer to \citet{zilberstein2025denotational} for a more complete treatment.

Let $\lab \triangleq \act \cup \test \cup \{ \bullet, \bot \}$, which can be either an action or test (from \Cref{fig:syntax}), a no-op ($\bullet$), or undetermined ($\bot$). Although $\bot$ labels will never appear in the denotation of a \emph{complete} program, they are used to indicate that a finite structure is an approximation of an infinite one, \ie in the semantics of while loops. Also, let $\Nodes$ be a countable universe of identifiers and $\Form$ be the set of Boolean formulae over $\Nodes$. For some $\psi \in \Form$, we write $\sat(\psi)$ to indicate that $\psi$ is satisfiable, and $\free(\psi) \subseteq \Nodes$ is the set of variables referenced in $\psi$. We now define the underlying structure.

\begin{definition}[Labelled Partial Order with Formulae (LPOFs)]\label{def:lpof}
  An LPOF is a 4-tuple $\tuple{N, <, \lambda, \varphi} \in \lpo$ where:
  \begin{enumerate}
 \item  $N \subseteq \Nodes$ is a countable set of nodes; 
\item $\tuple{N,\mathord{<}}$ is a strict poset with a single minimal element such that finitely many nodes appear at every finite distance from the root;
\item $\lambda \colon N \to \lab$ is a labelling function such that $x$ has no successors whenever $\lambda(x) = \bot$;
\item $\varphi \colon N \to \Form$ is a formula function such that: $\varphi(y) \Rightarrow \varphi(x)$, for all $x < y$ and for all $x\in N$, $\sat(\varphi(x))$ and $y < x$ for all $y \in \free(\varphi(x))$.
\end{enumerate}
For some $\alpha \in \lpo$, we will often use $N_\alpha$, $<_\alpha$, $\lambda_\alpha$, and $\varphi_\alpha$ to refer to its parts.
\end{definition}

The first three components are standard in pomset semantics. The order denotes \emph{causality}, so $x < y$ means that the action $\lambda(x)$ must be scheduled before $\lambda(y)$. The order is partial (not total) because actions that occur in parallel are not related. Formulae are a new addition to this structure, allowing us to encode guarded branching in the structure too. For any node $x\in N$, $\varphi(x)$ is a satisfiable formula comprised of nodes appearing earlier in the trace, since the corresponding tests must be resolved before executing later actions. In addition, formulae can only become stronger as the trace goes on, since dependencies on more and more tests are accumulated over time.

Two LPOFs are isomorphic, denoted $\alpha \equiv \beta$ iff there is a bijection $f \colon N_\alpha \to N_\beta$ such that $x <_\alpha y$ iff $f(x) <_\beta f(y)$, $\lambda_\alpha = \lambda_\beta \circ f$, and $\varphi_\alpha = f^{-1} \circ \varphi_\beta \circ f$, where $f^{-1}(\psi)$ renames the variables of $\psi$ in the obvious way. Given that, a pomset with formulae is an equivalence class of LPOFs.

\begin{definition}[Pomsets with Formulae]
We denote the isomorphism class of $\alpha$ as $[\alpha] \triangleq \{ \beta \in \lpo \mid \alpha\equiv\beta \}$.
A pomset with formulae (or, simply, a pomset) $\Alpha \in \pom$ is an isomorphism class of LPOFs.
\[
  \pom \triangleq \{ [\alpha] \mid \alpha \in \lpo \}
\]
\end{definition}
Pomsets with formulae have many nice domain-theoretic properties. For example, they have a DCPO (directed complete partial order) structure, meaning that Scott continuous operations over them have least fixed points. Roughly speaking $\Alpha \lepom \Beta$ iff $\Beta$ is obtained by replacing nodes labelled $\bot$ in $\Alpha$ with a new, larger structure. We will see a concrete example of this shortly when we discuss the semantics of loops. Infinite pomsets with formulae can also be represented as the supremum of their finite approximation, and monotone operations on those finite approximations can be extended to continuous operations on infinite pomsets. This is useful, since it is not possible to define operations inductively on infinite structures, as we will see in \Cref{sec:linearization}.

\begin{figure}
\begin{align*}
  \de{\skp} &\triangleq \singleton\bullet
  \\
  \de{
    {C_1 \fatsemi C_2}
  } & \triangleq \de{C_1} \fatsemi \de{C_2}
  \\
  \de{C_1 \parallel C_2} & \triangleq \de{C_1} \parallel \de{C_2}
  \\
  \de{\iftf b{C_1}{C_2}} &\triangleq \textsf{guard}(b, \de{C_1}, \de{C_2})
  \\
  \de{\whl bC} &\triangleq \mathsf{lfp}\left(\Phi_{\tuple{b, C}}\right)
  \quad\text{where}\   \Phi_{\tuple{b, C}}(\Alpha) \triangleq \guard\left(b, \de{C}\fatsemi \Alpha, \singleton{\bullet}\right)
  \\
  \de{a} &\triangleq \singleton{a}
\end{align*}
\caption{Trace semantics for commands $\de{-} \colon \mathsf{Cmd} \to \pom$.}
\label{fig:trace-semantics}
\end{figure}

The trace semantics $\de{-} \colon \mathsf{Cmd} \to \pom$, shown in \Cref{fig:trace-semantics}, interprets every command as a pomset with formulae. We use several operators to construct pomsets. First, $\singleton- \colon \lab \to \pom$ constructs a singleton pomset with the given label. This is used for the semantics of $\skp$ and actions $a \in \act$. Next, we have three combinators for combining pomsets, which we demonstrate pictorially below. The first is sequential composition $\Alpha\fatsemi\Beta$, which constructs a pomset where all the actions from $\Alpha$ occur before those in $\Beta$. In the diagram, $a_1\to a_2$ indicates causality ($a_1$ occurs before $a_2)$

 The second combinator is guarded choice $\guard(b, \Alpha, \Beta)$, where the test $b$ becomes the new root and all formulae of the nodes in $\Alpha$ and $\Beta$ are updated to indicate that the new root must pass or fail, respectively, as indicated by the arrows labelled $\mathsf{T}$ and $\mathsf{F}$. Finally, parallel composition $\Alpha \parallel \Beta$ joins the two pomsets with a new no-op root, so that the actions of $\Alpha$ and $\Beta$ can be interleaved without restriction.
\begin{mathpar}
\singleton{a_1} \fatsemi \singleton{a_2} = 
\vcenter{\vbox{\xymatrix@R=6pt@C=3pt{
  a_2 \\ a_1 \ar[u]
}}}

\guard(b, \singleton{a_1}, \singleton{a_2}) = 
\vcenter{\vbox{\xymatrix@R=6pt@C=3pt{
  a_1 && a_2
  \\
  & b \guardarrows
}}}

\singleton{a_1} \parallel \singleton{a_2} = 
\vcenter{\vbox{\xymatrix@R=6pt@C=3pt{
  a_1 && a_2 \\& \fork \ar[ul]\ar[ur]
}}}
\end{mathpar}
The semantics of loops is given by a least fixed point of the characteristic function $\Phi_{\tuple{b, C}}$, which is Scott continuous since it is defined using $\guard$ and $\fatsemi$, both of which are Scott continuous \cite{zilberstein2025denotational}.
 This means that the least fixed point is equal to the supremum of the finite unrollings of $\Phi_{\tuple{b, C}}$, as shown in the following diagram.
\[
\arraycolsep=.25em
\begin{array}{cccccccc}
\bot
&\lepom&
\vcenter{\vbox{\xymatrix@R=3pt@C=4pt{
  \bot \\
  a \ar[u] && \fork
 \\
 & b \guardarrows
}}}
&\lepom&
\vcenter{\vbox{\xymatrix@R=3pt@C=4pt{
  \bot \\
  a \ar[u] && \fork
 \\
 & b \guardarrows
 \\
 &a \ar[u] && \fork
 \\
 && b \guardarrows
}}}
& \lepom\cdots\lepom&
\vcenter{\vbox{\xymatrix@R=3pt@C=4pt{
  \vdots \\
  a \ar[u] && \fork
 \\
 & b \guardarrows
 \\
 &a \ar[u] && \fork
 \\
 && b \guardarrows
}}}
\\
\Phi_{\tuple{a, b}}^0(\singleton\bot)
&&
\Phi_{\tuple{a, b}}^1(\singleton\bot)
&&
\Phi_{\tuple{a, b}}^2(\singleton\bot)
&&
\sup_{n\in\mathbb N}\Phi_{\tuple{a, b}}^n(\singleton\bot)
\end{array}
\]
In each unrolling, $\bot$ appears further and further from the root until, in the supremum, it is pushed infinitely far away. That infinite spine represents the execution in which the guard $b$ never becomes false. In the context of probabilistic programs, it is important to include the infinite spine, as the probability of continuing to execute may only converge to 0 in the limit.

\subsection{Linearization}
\label{sec:linearization}

The benefit of the pomset model is that it is fully compositional; the parallel composition of two commands is interpreted as a straightforward combination of their syntactic structures. However, for the purposes of designing a program logic, we need a \emph{state transformer} model, which maps each input state to a convex set of distributions over output states. For this purpose, \citet{zilberstein2025denotational} also define linearization $\lin \colon \pom \to \mem{S} \to \C(\mem{S})$. The definition is repeated below.
\begin{align*}
  \nextt(\alpha, \psi, S) &= \left\{ x\in N_\alpha  \setminus S\mid \downarrow{x} \subseteq S, \psi \Rightarrow \varphi_\alpha(x) \right\}
  \\
  \nextt^\star(\alpha, \psi, S) &= \left\{ x\in N_\alpha  \setminus S \mid  \sat(\psi \land \varphi_\alpha(x)) \right\}
  \\
  \linnode^\I(\alpha, \psi, S, x)(\sigma) &= \left\{
    \begin{array}{ll}
      \linlpo^\I(\alpha, \psi, S\cup\{x\})^\dagger\left( \de{\lambda_\alpha(x)}^\I_\act(\sigma) \right) & \text{if} ~ \lambda_\alpha(x) \in \act
      \\
      \linlpo^\I(\alpha, \psi \land \sem{x = \de{\lambda_\alpha(x)}_\test(\sigma)}, S\cup\{x\})(\sigma) & \text{if} ~ \lambda_\alpha(x) \in \test
      \\
      \bot_\C & \text{if}~ \lambda_\alpha(x) = \bot
      \\
      \linlpo^\I(\alpha, \psi, S\cup\{x\})(\sigma) & \text{if} ~ \lambda_\alpha(x) =\fork
    \end{array}
  \right.
  \\
  \linlpo^\I(\alpha, \psi, S)(\sigma) &= \left\{
    \begin{array}{ll}
      \eta(\sigma) & \text{if}~ \nextt(\alpha, \psi, S) = \emptyset \\
      \bignd_{x \in \nextt(\alpha, \psi, S)} \linnode^\I(\alpha, \psi, S, x)(\sigma) & \text{if}~ \nextt(\alpha, \psi, S) \neq \emptyset 
    \end{array}
  \right.
  \\
  \linfin^\I([\alpha]) &= \linlpo^\I(\alpha, \tru, \emptyset)
  \\
  \lin^\I(\Alpha) &= \sup_{\Alpha'\ll\Alpha} \linfin^\I(\Alpha')
\end{align*}
The function $\nextt(\alpha, \psi, S)$ gives the set of nodes that are ready to be scheduled, given a finite LPOF $\alpha$, a path condition $\psi$, and the set of nodes $S$ that have already been processed. Linearizing a node  $\linnode$ does case analysis on the label of the node to decide how to proceed. If the node is an action, then the action is evaluated and composed with the linearization of the remainder of the LPOF. If the node is a test, then the result of the test is added to the path condition. If the node is $\bot$, then the execution is halted. If it is a no-op, then the execution simply continues.

To linearize an entire finite LPOF, we simply perform a nondeterministic choice over all the next nodes, linearize those nodes, and then proceed processing the rest of the structure recursively. To make the recursion well-founded, the structure must be finite, but we extend to the infinite case later. To linearize a finite pomset $\Alpha$, we simply linearize any representative LPOF $\alpha \in \Alpha$. Finally, to linearize an infinite pomset, we take the supremum over the linearization of all the finite approximations, where $\Alpha' \ll \Alpha$ indicates that $\Alpha'$ is a finite approximation of $\Alpha$.

Since most operations are continuous, we get, \eg that the supremum of finite approximations of a sequential composition is equal to the sequential composition over the supremum of finite approximations:
\[
  \sup_{\Alpha'\ll\Alpha}\sup_{\Beta'\ll\Beta} \Alpha' \fatsemi \Beta'
  = 
  \left(\sup_{\Alpha'\ll\Alpha} \Alpha'\right) \fatsemi  \left( \sup_{\Beta'\ll\Beta} \Beta'\right)
  =
  \Alpha \fatsemi \Beta
\]
The exception to this is for parallel composition, which is not continuous for technical reasons. However, if we define $\ll_1$ as follows:
\[
\GGamma \ll_1 \GGamma'
\qquad \text{iff} \qquad
\GGamma \ll \GGamma'
\quad\text{and}\quad
\mathsf{root}(\GGamma) = \mathsf{root}(\GGamma') = \fork 
\quad\text{or}\quad
\mathsf{root}(\GGamma') \neq \fork
\]
where $\mathsf{root}(\GGamma)$ is the label of the root node of the pomset $\GGamma$, then we get the following weaker pseudo-continuity property:
\[
  \sup_{\Alpha'\ll_1\Alpha}\sup_{\Beta'\ll_1\Beta} \Alpha' \parallel \Beta'
  = 
  \left(\sup_{\Alpha'\ll_1\Alpha} \Alpha'\right) \parallel  \left( \sup_{\Beta'\ll_1\Beta} \Beta'\right)
  =
  \Alpha \parallel \Beta
\]

\section{General Lemmas}
 
\subsection{Measure Theory Lemmas}

We begin by proving some general properties for characterizing discrete product spaces. By Lemma C.1 and C.2 of \citet{bao2025bluebell}, for any discrete probability space $\P$, there exists a countable partition $\{ A_i \mid i\in I \}$ of $\Omega_\P$ such that $\F_\P = \{ \biguplus_{i \in I'} A_i \mid I' \subseteq I \}$. Let $\ev(\P)$ denote this partition. It follows that for every $A \in \F_\P$, $\mu_\P(A) = \sum_{B \in S} \mu_\P(B)$ for some $S \subseteq \ev(\P)$.

\begin{lemma}\label{lem:prod-event-space}
For any probability spaces $\P_1$ and $\P_2$
\[
  \F_{\P_1\otimes\P_2} =
  \left\{ \biguplus_{(A, B) \in S} A\sep B ~\middle|~ S \subseteq \ev(\P_1)\times \ev(\P_2) \right\}
\]
\end{lemma}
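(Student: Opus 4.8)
The plan is to identify the right-hand collection, which I write as
\[
  \mathcal G \triangleq \left\{ \biguplus_{(A,B)\in S} A\sep B \;\middle|\; S\subseteq \ev(\P_1)\times\ev(\P_2)\right\},
\]
with the $\sigma$-algebra generated by the rectangles $\{A\sep B\mid A\in\F_{\P_1},\,B\in\F_{\P_2}\}$, which is by definition $\F_{\P_1\otimes\P_2}$. The structural fact that drives everything is that the \emph{atom-rectangles} $\{A\sep B\mid (A,B)\in\ev(\P_1)\times\ev(\P_2)\}$ form a \emph{countable partition} of the product sample space $\Omega_{\P_1}\sep\Omega_{\P_2}$. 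First I would verify this: since $\P_1\otimes\P_2$ is only defined when $\Omega_{\P_1}\subseteq\mem S$, $\Omega_{\P_2}\subseteq\mem T$, and $S\cap T=\emptyset$, every $\rho\in\Omega_{\P_1}\sep\Omega_{\P_2}$ decomposes \emph{uniquely} as $\pi_S(\rho)\uplus\pi_T(\rho)$ with $\pi_S(\rho)\in\Omega_{\P_1}$ and $\pi_T(\rho)\in\Omega_{\P_2}$. As $\ev(\P_1)$ and $\ev(\P_2)$ partition their respective sample spaces, $\rho$ lies in exactly one block $A\sep B$; countability is immediate since $\ev(\P_1)$ and $\ev(\P_2)$ are countable.

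With the partition in hand, I would appeal to the standard fact that the $\sigma$-algebra generated by a countable partition $\{P_k\}_{k\in K}$ of $\Omega$ is exactly $\{\biguplus_{k\in K'}P_k\mid K'\subseteq K\}$: this collection contains $\emptyset$ and $\Omega$, is closed under complementation (the complement of a union over $K'$ is the union over $K\setminus K'$) and under countable unions (take the union of the index sets, using countability of $K$), and every $\sigma$-algebra containing the blocks must contain all such unions. Applying this to the atom-rectangle partition shows that $\mathcal G$ is precisely the $\sigma$-algebra generated by the atom-rectangles.

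It then remains to check that the atom-rectangles and the full rectangles generate the same $\sigma$-algebra. For the forward inclusion I would take an arbitrary rectangle $A\sep B$ with $A\in\F_{\P_1}$, $B\in\F_{\P_2}$; writing $A=\biguplus_{i\in I'}A_i$ and $B=\biguplus_{j\in J'}B_j$ as disjoint unions of blocks from $\ev(\P_1)$ and $\ev(\P_2)$, the distributivity of $\sep$ over disjoint union gives $A\sep B=\biguplus_{(i,j)\in I'\times J'}A_i\sep B_j$, a disjoint union of atom-rectangles (disjointness again using the unique decomposition of memories), hence an element of $\mathcal G$. Thus $\mathcal G$ contains every rectangle, so $\F_{\P_1\otimes\P_2}\subseteq\mathcal G$. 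Conversely, each atom-rectangle $A\sep B$ is itself a rectangle with $A\in\ev(\P_1)\subseteq\F_{\P_1}$ and $B\in\ev(\P_2)\subseteq\F_{\P_2}$, so it lies in $\F_{\P_1\otimes\P_2}$; since that event space is a $\sigma$-algebra and the partition is countable, every disjoint union of atom-rectangles lies in it as well, giving $\mathcal G\subseteq\F_{\P_1\otimes\P_2}$.

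The main obstacle I anticipate is not any single inclusion but the two routine-looking set-theoretic facts that underpin them: that $\sep$ genuinely distributes over disjoint unions and preserves disjointness, and that the decomposition $\rho=\pi_S(\rho)\uplus\pi_T(\rho)$ is unique. Both rest on the disjointness of the variable supports $S$ and $T$, so I would prove the uniqueness of decomposition first and reuse it throughout. One further point worth stating explicitly is that the Carathéodory construction referenced in \Cref{sec:measure} yields the \emph{same} event space as the generated $\sigma$-algebra, since the smallest $\sigma$-algebra over a given family is unique; so the characterization of $\F_{\P_1\otimes\P_2}$ as the $\sigma$-algebra generated by the rectangles is exactly what the argument above manipulates.
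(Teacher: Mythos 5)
Your proof is correct, and it differs from the paper's in a meaningful way. The paper proves only the reverse inclusion $\mathcal{G} \subseteq \F_{\P_1\otimes\P_2}$ directly---exactly as you do, noting that each atom-rectangle is itself a rectangle and invoking closure under countable unions---but it discharges the forward inclusion entirely by citing Lemma C.5 of \citet{bao2025bluebell}. You instead prove the forward inclusion from first principles: you observe that the atom-rectangles form a countable partition of $\Omega_{\P_1}\sep\Omega_{\P_2}$ (via the unique decomposition $\rho = \pi_S(\rho)\uplus\pi_T(\rho)$, which holds because the variable supports $S$ and $T$ are disjoint), conclude that $\mathcal{G}$ is itself a $\sigma$-algebra (namely the one generated by a countable partition), and then place every rectangle $A\sep B$ with $A\in\F_{\P_1}$, $B\in\F_{\P_2}$ inside $\mathcal{G}$ by decomposing $A$ and $B$ into atoms---using the characterization $\F_{\P} = \{\biguplus_{i\in I'}A_i \mid I'\subseteq I\}$ stated just before the lemma---and distributing $\sep$ over the disjoint unions. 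Minimality of the generated $\sigma$-algebra then gives $\F_{\P_1\otimes\P_2}\subseteq\mathcal{G}$. What each approach buys: the paper's proof is shorter, but rests on an external lemma proved in Bluebell's setting, which the reader must check transfers to the memory-based product used here; yours is self-contained, and the structural facts you isolate (atoms of the product are products of atoms; unique decomposition over disjoint variable sets) are exactly what the paper reuses implicitly later, e.g.\ in the proof that independent products distribute over direct sums (\Cref{lem:otimes-oplus-dist}). One small remark: your closing paragraph about Carath\'{e}odory is unnecessary hedging---the paper explicitly \emph{defines} $\F_{\P_1\otimes\P_2}$ as the smallest $\sigma$-algebra containing the rectangles, with Carath\'{e}odory's method invoked only to construct the measure, so there is nothing to reconcile.
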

\begin{proof}
The forward inclusion follows immediately from Lemma C.5 of \citet{bao2025bluebell}, so it suffices to only show the reverse inclusion. Take any $S \subseteq \ev(\P_1)\times\ev(\P_2)$.
For each $(A,B)\in S \subseteq \ev(\P_1)\times\ev(\P_2)$, clearly $A \in \ev(\P_1)$ and $B\in\ev(\P_2)$, so $A\sep B \in \F_{\P_1\otimes\P_2}$ since by definition $\F_{\P_1\otimes\P_2}$ is the smallest sigma algebra containing $\{ A\sep B \mid A\in \F_{\P_1}, B\in\F_{\P_2} \}$. Since $\F_{\P_1\otimes\P_2}$ is closed under countable unions, then $\biguplus_{(A,B) \in S} A\sep B \in \F_{\P_1\otimes\P_2}$.
\end{proof}

\begin{lemma}\label{lem:prod-equiv}
$\Q = \P_1 \otimes \P_2$ iff:
\begin{enumerate}
\item $\Omega_\Q = \Omega_{\P_1} \sep \Omega_{\P_2}$
\item $\F_\Q = \{ \biguplus_{(A, B) \in S} A\sep B \mid S \subseteq \ev(\P_1)\times \ev(\P_2) \}$
\item $\mu_\Q(A \sep B) = \mu_{\P_1}(A) \cdot \mu_{\P_2}(B)$ for all $A \in \ev(\P_1)$ and $B\in\ev(\P_2)$
\end{enumerate}
\end{lemma}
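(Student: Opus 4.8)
The plan is to prove both implications by unfolding the definition of the product space $\P_1\otimes\P_2$ and leaning on the explicit description of its event space from \Cref{lem:prod-event-space}. Throughout I work under the standing assumption that $\P_1\otimes\P_2$ is defined, i.e.\ the underlying variable sets are disjoint, so that $\sep$ behaves as a genuine join on memories and the rectangles $A\sep B$ are well formed.

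For the forward direction, suppose $\Q = \P_1\otimes\P_2$. Condition~(1) is immediate, since by definition $\Omega_{\P_1\otimes\P_2} = \Omega_{\P_1}\sep\Omega_{\P_2}$. Condition~(2) is exactly the content of \Cref{lem:prod-event-space}. For condition~(3), observe that every atom $A\in\ev(\P_1)$ lies in $\F_{\P_1}$ (a single block of a partition is measurable), and likewise $B\in\ev(\P_2)\subseteq\F_{\P_2}$; hence the defining property of the product measure, $\mu_{\P_1\otimes\P_2}(A\sep B) = \mu_{\P_1}(A)\cdot\mu_{\P_2}(B)$ for measurable rectangles, specializes immediately to the atoms.

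For the reverse direction, suppose $\Q$ satisfies (1)--(3); I will show that $\Q$ and $\P_1\otimes\P_2$ agree on all three components. The sample spaces coincide by (1), and the event spaces coincide by (2) together with \Cref{lem:prod-event-space}, so in particular $\F_\Q = \F_{\P_1\otimes\P_2}$. It remains to show the measures agree on this common $\sigma$-algebra. The key observation is that, since $\ev(\P_1)$ and $\ev(\P_2)$ are partitions of $\Omega_{\P_1}$ and $\Omega_{\P_2}$, the family $\{A\sep B \mid A\in\ev(\P_1),\, B\in\ev(\P_2)\}$ is a countable partition of $\Omega_\Q$ into nonempty pairwise-disjoint atoms. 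By (2), every $C\in\F_\Q$ is a disjoint union $C = \biguplus_{(A,B)\in S} A\sep B$ of such atoms, so countable additivity gives
\[
  \mu_\Q(C) \;=\; \sum_{(A,B)\in S} \mu_\Q(A\sep B) \;=\; \sum_{(A,B)\in S} \mu_{\P_1}(A)\cdot\mu_{\P_2}(B),
\]
using (3) on each atom. The identical computation applied to $\mu_{\P_1\otimes\P_2}$---which satisfies (3) by the forward direction---yields the same value, so $\mu_\Q(C) = \mu_{\P_1\otimes\P_2}(C)$ for all $C$.

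The one step requiring care is the measure-agreement argument in the reverse direction: I must confirm that the rectangles over partition blocks are genuinely pairwise disjoint (which follows from blocks of a partition being pairwise disjoint, so that distinct pairs $(A,B)$ yield disjoint $A\sep B$), making countable additivity applicable, and that the resulting value is independent of the representing set $S$ (it is, since the atoms are disjoint and nonempty, so the representation is in fact unique). Given the discreteness of the spaces this is routine, and the lemma ultimately amounts to repackaging \Cref{lem:prod-event-space} together with countable additivity.
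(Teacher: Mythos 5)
Your proof is correct and follows essentially the same route as the paper's: the forward direction is read off from the definition of $\otimes$ together with \Cref{lem:prod-event-space}, and the reverse direction decomposes an arbitrary event into the atomic rectangles $A \sep B$ with $(A,B) \in \ev(\P_1)\times\ev(\P_2)$ and applies countable additivity plus condition (3) on each atom. The extra care you take about disjointness and uniqueness of the atomic representation is fine but not needed beyond what the paper already implicitly uses.
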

\begin{proof}
The forward direction follows immediately from the definition of product spaces and \Cref{lem:prod-event-space}. We now show the reverse direction. By (1) and the definition of product spaces, $\Omega_\Q = \Omega_{\P_1}\sep \Omega_{\P_2} = \Omega_{\P_1 \otimes \P_2}$. By (2) and \Cref{lem:prod-event-space}, $\F_\Q = \F_{\P_1 \otimes \P_2}$. Now, take any event $A \in \F_\Q = \F_{\P_1 \otimes \P_2}$. We already know that $A = \biguplus_{(A_1, A_2) \in S}A_1\sep A_2$ for some $S \subseteq \ev(\P_1)\times\ev(\P_2)$, so:
\begin{align*}
  \mu_\Q(A)
  &= \mu_\Q\left( \biguplus_{(A_1, A_2) \in S}A_1\sep A_2 \right)
  \\
  &= \sum_{(A_1, A_2) \in S} \mu_\Q(A_1 \sep A_2)
  \\
  &= \sum_{(A_1, A_2) \in S} \mu_{\P_1}(A_1)\cdot \mu_{\P_2}(A_2)
  \\
  &= \sum_{(A_1, A_2) \in S} \mu_{\P_1\otimes\P_2}(A_1\sep A_2)
  \\
  &= \mu_{\P_1\otimes\P_2}\left( \biguplus_{(A_1, A_2) \in S} A_1\sep A_2 \right)
  \ =\ \mu_{\P_1\otimes\P_2}(A)
\end{align*}

\end{proof}
 
\begin{lemma}\label{lem:otimes-oplus-dist}
If both sides of the following equality are well-defined, then:
\[
  \left(\bigoplus_{i\sim \nu} \P_i\right) \otimes \Q = \bigoplus_{i\sim\nu} (\P_i\otimes \Q)
\]
\end{lemma}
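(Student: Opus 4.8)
The plan is to prove the equality by invoking the product-space characterization of \Cref{lem:prod-equiv}. Write $\P \triangleq \bigoplus_{i\sim\nu}\P_i$ and $\Q' \triangleq \bigoplus_{i\sim\nu}(\P_i\otimes\Q)$; the goal is to show $\Q' = \P\otimes\Q$, which by \Cref{lem:prod-equiv} reduces to checking three conditions: (1) $\Omega_{\Q'} = \Omega_\P\sep\Omega_\Q$; (2) $\F_{\Q'}$ has the product form of countable unions of $A\sep B$ with $(A,B)\in\ev(\P)\times\ev(\Q)$; and (3) $\mu_{\Q'}(A\sep B) = \mu_\P(A)\cdot\mu_\Q(B)$ for all atoms $A\in\ev(\P)$ and $B\in\ev(\Q)$. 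Throughout I rely on the well-definedness of both sides, which in particular guarantees that the $\Omega_{\P_i}$ are pairwise disjoint and that the variables of $\Q$ are disjoint from those of each $\P_i$, so that every product $\P_i\otimes\Q$, both direct sums, and the product $\P\otimes\Q$ are legal.

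The key structural observation, which I would establish first, is that the atoms of a direct sum are the union of the atoms of its summands: since the measurable sets of $\bigoplus_{i\sim\nu}\P_i$ are exactly those $A$ with $A\cap\Omega_i\in\F_i$ for every $i$, and the $\Omega_i$ are pairwise disjoint, we get $\ev(\P) = \bigcup_{i\in I}\ev(\P_i)$ (a disjoint union, each atom lying in a unique $\Omega_j$). Condition (1) is then a direct calculation distributing $\sep$ over the disjoint union: $\Omega_\P\sep\Omega_\Q = (\biguplus_i\Omega_{\P_i})\sep\Omega_\Q = \biguplus_i(\Omega_{\P_i}\sep\Omega_\Q) = \Omega_{\Q'}$. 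For condition (2), combining this atom fact with \Cref{lem:prod-event-space} gives $\ev(\Q') = \bigcup_i\ev(\P_i\otimes\Q) = \{A\sep B \mid A\in\ev(\P_i)\text{ for some }i,\ B\in\ev(\Q)\} = \{A\sep B \mid A\in\ev(\P),\ B\in\ev(\Q)\}$, and closing under countable unions reproduces exactly the product event space demanded by \Cref{lem:prod-equiv}.

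The heart of the argument is condition (3). Fix atoms $A\in\ev(\P)$ and $B\in\ev(\Q)$; by the atom fact, $A\in\ev(\P_j)$ for a unique $j$, so $A\subseteq\Omega_j$ and $A\cap\Omega_i=\emptyset$ for $i\neq j$. Unfolding the direct-sum measure, $\mu_{\Q'}(A\sep B) = \sum_{i}\nu(i)\,\mu_{\P_i\otimes\Q}\big((A\sep B)\cap(\Omega_{\P_i}\sep\Omega_\Q)\big)$. I would then show that $(A\sep B)\cap(\Omega_{\P_i}\sep\Omega_\Q)$ equals $A\sep B$ when $i=j$ and is empty otherwise, using that any $\sigma\uplus\tau$ with $\sigma\in A\subseteq\Omega_j$ lies in $\Omega_{\P_i}\sep\Omega_\Q$ only if $\sigma\in\Omega_{\P_i}$, which forces $i=j$. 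Hence only the $i=j$ term survives, and applying condition (3) of \Cref{lem:prod-equiv} to the product $\P_j\otimes\Q$ yields $\mu_{\Q'}(A\sep B) = \nu(j)\,\mu_{\P_j}(A)\cdot\mu_\Q(B)$. Finally, the direct-sum measure on $\P$ collapses in the same way, $\mu_\P(A) = \sum_i\nu(i)\,\mu_{\P_i}(A\cap\Omega_i) = \nu(j)\,\mu_{\P_j}(A)$, so $\mu_\P(A)\cdot\mu_\Q(B)$ matches $\mu_{\Q'}(A\sep B)$, establishing (3) and hence the equality.

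I expect the main obstacle to be the careful atom-bookkeeping: justifying $\ev(\P)=\bigcup_i\ev(\P_i)$ from the Bluebell partition lemmas, and then tracking in condition (3) that each product atom $A\sep B$ is charged to the single summand $\P_j\otimes\Q$ determined by $A$. Once the single-component collapse of both $\mu_{\Q'}$ and $\mu_\P$ is in hand, matching the surviving terms is immediate, and \Cref{lem:prod-equiv} closes the proof.
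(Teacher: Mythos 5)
Your proposal is correct and follows essentially the same route as the paper's proof: both reduce the equality to the characterization in \Cref{lem:prod-equiv}, establish the sample-space identity by distributing $\sep$ over the disjoint union, identify the atoms of the direct sum as $\biguplus_{i}\ev(\P_i)$ to equate the $\sigma$-algebras via \Cref{lem:prod-event-space}, and match the measures using the direct-sum formula. The only cosmetic difference is that you check the measure condition on atoms $A\in\ev(\P)$, $B\in\ev(\Q)$ (so the sum over $i$ collapses to the single summand containing $A$), whereas the paper runs the same computation for arbitrary measurable $A$ and keeps the full sum; the two calculations are interchangeable.
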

\begin{proof}
Let $\P_i = \tuple{ \Omega_i, \F_i, \mu_i}$ for each $i\in I = \supp(\nu)$. If the above are well defined, then there must be $S$ and $T$ such that $\Omega_i \subseteq \mem S$ and $\Omega_\Q \subseteq \mem T$.
We first show that both probability spaces have the same sample space, since:
\begin{align*}
  \Omega_{(\bigoplus_{i\sim \nu} \P_i) \otimes \Q}
  & = ( \smashoperator{\bigcup_{i\in\supp(\nu)}} \Omega_i ) \sep \Omega_\Q
  \\
  &= \left\{ \sigma \uplus \sigma' ~\middle|~ \sigma \in \smashoperator{\bigcup_{i\in I}} \Omega_iv, \sigma' \in \Omega_\Q \right\}
  \\
  &= \smashoperator{\bigcup_{i\in I}} \{ \sigma \uplus \sigma' \mid \sigma \in \Omega_i, \sigma'\in \Omega_\Q \}
  \\
  &= \smashoperator{\bigcup_{i\in I}} \Omega_i\sep\Omega_\Q
  \ = \ \Omega_{\bigoplus_{i\sim\nu}(\P_i \otimes \Q)}
\end{align*}
To show that the $\sigma$-algebras are the same, it will suffice to show that they are generated from the same disjoint partitions \cite[Lemma C.1]{bao2025bluebell}. We start by using \Cref{lem:prod-event-space} to conclude that:
\begin{align*}
  \ev\left(\left(\bigoplus_{i\sim \nu} \P_i\right) \otimes \Q\right)
  &= \ev\left(\bigoplus_{i\sim\nu} \P_i\right) \times\ev(\Q)
  \intertext{Since all the $\P_i$ have disjoint sample spaces, their partitions must also be disjoint:}
  &= \left(\biguplus_{i\in I} \ev(\P_i)\right) \times\ev(\Q)
  \\
  &= \biguplus_{i\in I} \ev(\P_i) \times\ev(\Q)
  \\
  &= \biguplus_{i\in I} \ev(\P_i \otimes \Q)
  = \ev\left( \bigoplus_{i\sim\nu} \P_i \otimes \Q \right)
\end{align*}
Finally, by \Cref{lem:prod-equiv}, it suffices to show that the product measures agree on events of the form $A\sep B$, where $A \in \F_{\bigoplus_{i\sim\nu}}\P_i$ and $B\in\F_\Q$:
\begin{align*}
  \mu_{(\bigoplus_{i\sim \nu} \P_i) \otimes \Q}(A \sep B)
  &= \mu_{\bigoplus_{i\sim \nu} \P_i}(A) \cdot \mu_\Q(B)
  \\
  &= \left(\sum_{i\in I} \nu(i)\cdot \mu_i(A \cap \Omega_i) \right) \cdot \mu_\Q(B)
  \\
  &= \sum_{i\in I} \nu(i)\cdot \mu_i(A \cap \Omega_i) \cdot \mu_\Q(B)
  \\
  &= \sum_{i\in I} \nu(i)\cdot \mu_{\P_i\otimes\Q}((A \cap \Omega_i)\sep B)
  \\
  &= \sum_{i\in I} \nu(i)\cdot \mu_{\P_i\otimes\Q}((A \sep B)  \cap (\Omega_i \sep \Omega_\Q))
  \\
  &= \mu_{\bigoplus_{i\sim\nu}\P_i\otimes\Q}(A \sep B)
\end{align*}
\end{proof}

\begin{lemma}[Monotonicity of $\otimes$]\label{lem:otimes-mono}
If $\P \preceq \P'$ and $\Q \preceq \Q'$, then $\P\otimes \Q \preceq \P'\otimes\Q'$.
\end{lemma}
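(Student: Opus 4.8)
The plan is to verify directly the three defining clauses of $\P\otimes\Q\preceq\P'\otimes\Q'$, exploiting the explicit description of product $\sigma$-algebras from \Cref{lem:prod-event-space,lem:prod-equiv}. Write $\Omega_\P\subseteq\mem S$, $\Omega_\Q\subseteq\mem T$, $\Omega_{\P'}\subseteq\mem{S'}$, $\Omega_{\Q'}\subseteq\mem{T'}$; well-definedness of the two products gives $S\cap T=\emptyset=S'\cap T'$, and after trimming the declared footprints to the variables actually constrained, $\P\preceq\P'$ and $\Q\preceq\Q'$ force $S\subseteq S'$ and $T\subseteq T'$, so also $S\cap T'=T\cap S'=\emptyset$. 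This bookkeeping is what makes projection split across $\sep$: for $\sigma\in\mem{S'}$ and $\tau\in\mem{T'}$ one has $\pi_{S\cup T}(\sigma\uplus\tau)=\pi_S(\sigma)\uplus\pi_T(\tau)$, hence $\pi_{S\cup T}(A'\sep B')=\pi_S(A')\sep\pi_T(B')$. An attractive alternative is to factor the claim as two one-sided steps, $\P\otimes\Q\preceq\P'\otimes\Q$ and $\P'\otimes\Q\preceq\P'\otimes\Q'$, and close by transitivity of $\preceq$; each step varies only one factor, which slightly simplifies the footprint bookkeeping.

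The sample-space and event-space clauses are routine. For the sample space, $\Omega_{\P\otimes\Q}=\Omega_\P\sep\Omega_\Q\subseteq\pi_S(\Omega_{\P'})\sep\pi_T(\Omega_{\Q'})=\pi_{S\cup T}(\Omega_{\P'\otimes\Q'})$, using clause (1) of each hypothesis and the splitting identity. For the event space, \Cref{lem:prod-event-space} shows every element of $\F_{\P\otimes\Q}$ is a countable disjoint union of generators $A\sep B$ with $A\in\ev(\P)$, $B\in\ev(\Q)$; each such $A$ equals $\pi_S(A')$ and each $B$ equals $\pi_T(B')$ for measurable $A'\in\F_{\P'}$, $B'\in\F_{\Q'}$ by clause (2), so $A\sep B=\pi_{S\cup T}(A'\sep B')$, and since $\pi$ commutes with countable unions the whole event is a projection of a measurable set.

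The real content is the measure clause. The key lemma I would isolate is that the saturated preimage of a product generator factors as a product of saturated preimages: writing $\widehat A=\bigcup\{A'\in\F_{\P'}\mid\pi_S(A')=A\}$ and $\widehat B=\bigcup\{B'\in\F_{\Q'}\mid\pi_T(B')=B\}$ (the very sets appearing in clause (3) of the hypotheses), I claim $\bigcup\{D\in\F_{\P'\otimes\Q'}\mid\pi_{S\cup T}(D)=A\sep B\}=\widehat A\sep\widehat B$. This is where \Cref{lem:prod-event-space} is indispensable: the atoms of $\P'\otimes\Q'$ are exactly the products $K\sep L$ with $K\in\ev(\P')$, $L\in\ev(\Q')$, and a product atom can sit inside some $D$ projecting exactly onto $A\sep B$ iff $\pi_S(K)\subseteq A$ and $\pi_T(L)\subseteq B$ --- which is precisely the condition defining the atoms comprising $\widehat A\sep\widehat B$. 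Granting the lemma, \Cref{lem:prod-equiv} together with clause (3) of the two hypotheses gives $\mu_{\P'\otimes\Q'}(\widehat A\sep\widehat B)=\mu_{\P'}(\widehat A)\cdot\mu_{\Q'}(\widehat B)=\mu_\P(A)\cdot\mu_\Q(B)=\mu_{\P\otimes\Q}(A\sep B)$, and the equality propagates from generators to all of $\F_{\P\otimes\Q}$ by countable additivity, since the saturations of distinct product atoms are disjoint.

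The main obstacle is this saturated-preimage factorization, and the subtlety it must respect is that the naive set-theoretic preimage $\pi_{S\cup T}^{-1}(A\sep B)$ need not be measurable: clause (3) quantifies only over measurable sets whose projection is \emph{exactly} $A\sep B$, so I must argue entirely at the level of the product atoms supplied by \Cref{lem:prod-event-space} rather than with points. Once the factorization is phrased atom-by-atom, the remaining footprint and additivity arguments are mechanical.
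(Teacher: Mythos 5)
Your proposal is correct and follows essentially the same route as the paper's proof: after the same footprint bookkeeping for the sample-space and event-space clauses, the heart of both arguments is the identical saturation-factorization $\bigcup\{D\in\F_{\P'\otimes\Q'}\mid\pi_{S\cup T}(D)=A\sep B\}=\widehat A\sep\widehat B$ (the paper's $A''\sep B''$), established via the atom decomposition of \Cref{lem:prod-event-space} and then chained with clause (3) of the two hypotheses and the product-measure property, with \Cref{lem:prod-equiv} used to reduce to rectangular generators. The differences you introduce---handling the event-space clause atom-by-atom rather than by commuting $\sigma$-closure with projection, and the remark about splitting into two one-sided steps closed by transitivity---are cosmetic and do not change the substance.
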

\begin{proof}
Let $S$, $S'$, $T$, and $T'$ be sets such that $\Omega_{\P} \subseteq S$, $\Omega_{\P'}\subseteq \mem{S'}$, $\Omega_{\Q}\subseteq\mem T$, and $\Omega_{\Q'} \subseteq \mem{T'}$.
First we show the condition on sample spaces:
\begin{align*}
  \Omega_{\P \otimes\Q}
  &= \Omega_{\P} \sep \Omega_{\Q}
  \\
  &= \{ \sigma\uplus \tau \mid \sigma\in \Omega_{\P}, \tau\in\Omega_{\Q} \}
  \\
  &\subseteq \{ \sigma\uplus \tau \mid \sigma\in \pi_S(\Omega_{\P'}), \tau\in \pi_T(\Omega_{\Q'}) \}
  \\
  &= \{ \pi_S(\sigma)\uplus \pi_T(\tau) \mid \sigma\in \Omega_{\P'}, \tau\in \Omega_{\Q'} \}
  \\
  &= \{ \pi_{S\cup T}(\sigma \uplus\tau) \mid \sigma\in \Omega_{\P'}, \tau\in \Omega_{\Q'} \}
  = \pi_{S\cup T}(\Omega_{\P'\otimes\Q})
\end{align*}
Now we show the condition on $\sigma$-algebras:
\begin{align*}
  \F_{\P \otimes \Q}
  &= \sigma(\{ A\sep B \mid A \in \F_\P, B \in \F_\Q \})
  \intertext{Since $\sigma$-closure is monotonic:}
  &\subseteq \sigma(\{ A\sep B \mid A \in \pi_S(\F_{\P'}), B \in \pi_T(\F_{\Q'}) \})
  \\
  &= \sigma(\{ \pi_S(A)\sep \pi_T(B) \mid A \in \F_{\P'}, B \in \F_{\Q'} \})
  \\
  &= \sigma(\pi_{S\cup T}(\{ A\sep B \mid A \in \F_{\P'}, B \in \F_{\Q'} \}))
  \\
  &= \pi_{S\cup T}(\sigma(\{ A\sep B \mid A \in \F_{\P'}, B \in \F_{\Q'} \}))
  \\
  &= \pi_{S\cup T}(\F_{\P' \otimes \Q'})
\end{align*}
Finally, by \Cref{lem:prod-equiv}, it suffices to show that the product measure is equal only for events of the form $A\sep B$ where $A \in \F_\P$ and $B\in \F_\Q$:
\begin{align*}
  \mu_{\P\otimes \Q}(A\sep B)
  &= \mu_\P(A) \cdot \mu_\Q(B)
  \intertext{Since $\P \preceq\P'$ and $\Q\preceq\Q'$:}
  &= \mu_{\P'}\left(\bigcup \{A' \in \F_{\P'} \mid \pi_S(A') =A \}\right) \cdot
        \mu_{\Q'}\left(\bigcup \{ B' \in \F_{\Q'} \mid \pi_T(B') = B \} \right)
  \\
  &= \mu_{\P'\otimes\Q'}\left(
    \left(\bigcup \{A' \in \F_{\P'} \mid \pi_S(A') =A \}\right) \sep
    \left( \bigcup \{ B' \in \F_{\Q'} \mid \pi_T(B') = B \} \right) \right)
%  \\
%  &= \mu_{\P'\otimes\Q'}\left( \left\{
%      \sigma\uplus\tau
%      ~\middle|~
%      \sigma\in A', A' \in \F_{\P'},  \pi_S(A') =A \}\right) \sep
%    \left( \bigcup \{ B' \in \F_{\Q'} \mid \pi_T(B') = B \} \right) \right)
  \intertext{Let $A'' = \bigcup \{A' \in \F_{\P'} \mid \pi_S(A') =A \}$ and $B'' = \bigcup \{ B' \in \F_{\Q'} \mid \pi_T(B') = B \}$. Clearly $A'' \in \F_{\P'}$ and $B''\in\F_{\Q'}$ since both sets are closed under countable unions. Therefore:}
  &= \mu_{\P' \otimes \Q'}\left( A'' \sep B'' \right)
  \intertext{We show that the sets above and below are equal by showing the inclusion in both directions. The forward inclusion is trivial; $A''\sep B''$ is clearly an element of $\F_{\P'\otimes \Q'}$ and has the property $\pi_{S\cup T}(A''\sep B'') = \pi_S(A'') \sep \pi_T(B'') = A\sep B$.
  For the reverse inclusion, take some element of the set below, which must have the form $\sigma\uplus\tau$ where $\sigma\in\Omega_{\P'}$ and $\tau\in\Omega_{\Q'}$ and $\sigma\uplus\tau\in E$ for some $E \in \F_{\P'\otimes\Q'}$ such that $\pi_{S\cup T}(E) = A\sep B$. By \Cref{lem:prod-event-space}, $E = \biguplus_{(E_1,E_2)\in S} E_1 \sep E_2$ for some $S \subseteq \ev(\P')\times \ev(\Q')$, so $\sigma\in E_1$ and $\tau \in E_2$ for some $(E_1,E_2)\in S$. Since $\pi_{S\cup T}(E) = A\sep B$, then $\pi_S(E) = \biguplus_{(E_1, -)\in S} E_1 = A$ and $\pi_T(E) = \biguplus_{(-,E_2)\in S} E_2 = B$, therefore $\sigma \in E_1 \subseteq \pi_S(E) \subseteq A''$ and $\tau\in E_2\subseteq \pi_T(E)\subseteq B''$, so $\sigma\uplus\tau \in A''\sep B''$.}
  &= \mu_{\P' \otimes \Q'}\left( \bigcup \{E \in \F_{\P'\otimes\Q'} \mid \pi_{S\cup T}(E) = A\sep B \}\right)
\end{align*}
\end{proof}

\begin{lemma}[Monotonicity of $\oplus$]\label{lem:oplus-mono}
If $\P_v \preceq \P'_v$ for each $v \in \supp(\nu)$, then $\bigoplus_{v\sim\nu} \P_v \preceq \bigoplus_{v\sim\nu}\P'_v$.
\end{lemma}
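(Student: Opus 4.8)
The plan is to verify directly the three defining clauses of the refinement order $\preceq$ for the pair $\bigoplus_{v\sim\nu}\P_v \preceq \bigoplus_{v\sim\nu}\P'_v$, reducing each clause to the corresponding hypothesis $\P_v\preceq\P'_v$ together with the pairwise disjointness of the sample spaces $\{\Omega_{\P_v}\}_v$ and $\{\Omega_{\P'_v}\}_v$. Write $\mathcal S \triangleq \bigoplus_{v\sim\nu}\P_v$ and $\mathcal S'\triangleq\bigoplus_{v\sim\nu}\P'_v$, with all $\Omega_{\P_v}\subseteq\mem S$ for a common $S$, so that $\Omega_{\mathcal S}=\biguplus_v\Omega_{\P_v}\subseteq\mem S$ and the relevant projection is $\pi_S$.

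The first two clauses are routine reassembly arguments. For the sample-space clause I use that projection distributes over unions: $\pi_S(\Omega_{\mathcal S'}) = \bigcup_v\pi_S(\Omega_{\P'_v})\supseteq\bigcup_v\Omega_{\P_v}=\Omega_{\mathcal S}$, each inclusion being clause~1 of $\P_v\preceq\P'_v$. For the event-space clause, given $A\in\F_{\mathcal S}$ I have $A\cap\Omega_{\P_v}\in\F_{\P_v}$ for each $v$, so clause~2 of $\P_v\preceq\P'_v$ yields $B_v\in\F_{\P'_v}$ with $\pi_S(B_v)=A\cap\Omega_{\P_v}$; gluing $B\triangleq\biguplus_v B_v$ (disjoint since $B_v\subseteq\Omega_{\P'_v}$) gives $B\cap\Omega_{\P'_w}=B_w\in\F_{\P'_w}$, hence $B\in\F_{\mathcal S'}$, and $\pi_S(B)=\bigcup_v(A\cap\Omega_{\P_v})=A$. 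Applying the same gluing to the maximal witnesses $C_v\triangleq\bigcup\{B_v\in\F_{\P'_v}\mid\pi_S(B_v)=A\cap\Omega_{\P_v}\}$ produces $B^\star\triangleq\biguplus_v C_v\in\F_{\mathcal S'}$ with $\pi_S(B^\star)=A$.

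The measure clause is where the work concentrates. Unfolding the direct-sum measure and then applying clause~3 of each $\P_v\preceq\P'_v$ to $A\cap\Omega_{\P_v}$ gives $\mu_{\mathcal S}(A)=\sum_v\nu(v)\,\mu_{\P_v}(A\cap\Omega_{\P_v})=\sum_v\nu(v)\,\mu_{\P'_v}(C_v)=\mu_{\mathcal S'}(B^\star)$. Since $B^\star$ is one of the sets whose union forms $D\triangleq\bigcup\{B\in\F_{\mathcal S'}\mid\pi_S(B)=A\}$, we immediately get the lower bound $\mu_{\mathcal S'}(D)\ge\mu_{\mathcal S}(A)$, and moreover $C_v\subseteq B^\star\subseteq D$, so $C_v\subseteq D\cap\Omega_{\P'_v}$. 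The difficulty is the reverse inequality: the projections $\pi_S(\Omega_{\P'_v})$ need \emph{not} be pairwise disjoint even though the $\Omega_{\P'_v}$ are, so a set $B$ with $\pi_S(B)=A$ may contain points of $\Omega_{\P'_v}$ whose $S$-projection lands in some $\Omega_{\P_w}$ with $w\neq v$; hence $D$ can strictly exceed $B^\star$ as a set, and it is not true in general that $D\cap\Omega_{\P'_v}=C_v$.

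I expect this projection overlap to be the main obstacle, and my key observation for surmounting it is that the overlap is always null. Applying clause~3 of $\P_v\preceq\P'_v$ to the \emph{full} set $\Omega_{\P_v}\in\F_{\P_v}$ shows that $E_v\triangleq\pi_S^{-1}(\Omega_{\P_v})\cap\Omega_{\P'_v}$ has $\mu_{\P'_v}(E_v)=1$, so its complement $T_v=\Omega_{\P'_v}\setminus E_v$ sits inside a set of measure zero and is therefore itself measurable with $\mu_{\P'_v}(T_v)=0$ by \emph{completeness} of $\P'_v$. Consequently $\mu_{\P'_v}(D\cap\Omega_{\P'_v})=\mu_{\P'_v}(D\cap E_v)$, and on $E_v$ every point of $D$ projects into $A\cap\Omega_{\P_v}$, so $D\cap E_v$ differs from $C_v$ only on a $\P'_v$-null set; this forces $\mu_{\P'_v}(D\cap\Omega_{\P'_v})=\mu_{\P'_v}(C_v)$. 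Summing against $\nu$ gives $\mu_{\mathcal S'}(D)=\sum_v\nu(v)\,\mu_{\P'_v}(C_v)=\mu_{\mathcal S}(A)$, establishing clause~3. The null-set argument—pinning down that the projection overlaps among the $\Omega_{\P'_v}$ carry no mass, which is exactly what lets the pointwise measure identities reassemble into the global one—is the crux, and the one place where completeness of the probability spaces is genuinely needed.
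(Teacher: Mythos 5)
Your proof follows the same three-clause plan as the paper's own proof, and your treatment of the sample-space and event-space clauses is essentially the paper's argument (your explicit gluing $B = \biguplus_v B_v$ is what the paper's set-builder computations amount to). On the measure clause, however, you have put your finger on a genuine subtlety that the paper's proof elides: the paper passes from $\sum_v \nu(v)\cdot\mu_{\P'_v}(C_v)$ to $\sum_v \nu(v)\cdot\mu_{\P'_v}(D\cap\Omega_{\P'_v})$ --- where $C_v=\bigcup\{B\in\F_{\P'_v}\mid\pi_S(B)=A\cap\Omega_{\P_v}\}$ and $D=\bigcup\{B\in\F_{\P'}\mid\pi_S(B)=A\}$ --- by asserting that the intersection can be ``moved out of the set limits,'' i.e.\ treating $C_v=D\cap\Omega_{\P'_v}$ as a set identity. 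As you observe, this identity can fail: $D\cap\Omega_{\P'_v}$ may contain points of $\Omega_{\P'_v}$ whose projection lands in $A\cap\Omega_{\P_w}$ for $w\neq v$, and such points cannot lie in $C_v$. Only the corresponding \emph{measure} identity is true, and it needs an argument; identifying this as the crux is a real improvement in rigor over the paper's write-up.

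That said, the last inference of your repair is itself under-justified, and this is where your proposal has a gap. From ``every point of $D\cap E_v$ projects into $A\cap\Omega_{\P_v}$'' it does not follow that $(D\cap E_v)\setminus C_v$ is null: the measure condition of $\preceq$ controls only the maximal witness union $C_v$, and says nothing a priori about measurable sets that project into $A\cap\Omega_{\P_v}$ while escaping every witness (note the full preimage $\pi_S^{-1}(A\cap\Omega_{\P_v})\cap\Omega_{\P'_v}$ need not be $\F_{\P'_v}$-measurable, so it cannot serve as a witness itself). The missing step is to invoke the measure condition a \emph{second} time, on the relative complement $\Omega_{\P_v}\setminus A\in\F_{\P_v}$: setting $C_v^{\mathrm{c}}\triangleq\bigcup\{B\in\F_{\P'_v}\mid\pi_S(B)=\Omega_{\P_v}\setminus A\}$, you get $\mu_{\P'_v}(C_v)+\mu_{\P'_v}(C_v^{\mathrm{c}})=\mu_{\P_v}(A\cap\Omega_{\P_v})+\mu_{\P_v}(\Omega_{\P_v}\setminus A)=1$, while $C_v\cap C_v^{\mathrm{c}}=\emptyset$ because their elements have disjoint projections. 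Since every point of $D\cap\Omega_{\P'_v}$ projects into $A$, the set $D\cap\Omega_{\P'_v}$ is disjoint from $C_v^{\mathrm{c}}$; together with $C_v\subseteq D\cap\Omega_{\P'_v}$ (which you established via $B^\star$), this traps $(D\cap\Omega_{\P'_v})\setminus C_v$ inside the measurable null set $\Omega_{\P'_v}\setminus(C_v\uplus C_v^{\mathrm{c}})$, giving $\mu_{\P'_v}(D\cap\Omega_{\P'_v})=\mu_{\P'_v}(C_v)$ outright. This closes the proof without $E_v$ and without appealing to completeness at all; summing against $\nu$ then yields the measure clause exactly as you intended.
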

\begin{proof}
Let $\P = \bigoplus_{v\sim\nu} \P_v$ and $\P' = \bigoplus_{v\sim\nu}\P'_v$, so we need to prove that $\P \preceq \P'$. Also, let $S$ and $S'$ be sets such that $\Omega_{\P_v} \subseteq \mem S$ and $\Omega_{\P'_v} \subseteq \mem{S'}$ for all $v$. We first establish the required property on the sample space:
\[
  \Omega_\P
  = \smashoperator{\bigcup_{v \in \supp(\nu)}} \Omega_{\P_v}
  \subseteq \smashoperator{\bigcup_{v \in \supp(\nu)}} \pi_{S}(\Omega_{\P'_v})
  = \pi_S(\Omega_{\P'})
\]
Now, we establish the property on $\sigma$-algebras.
\begin{align*}
  \F_\P
  &= \{ A \mid A\subseteq \Omega_\P, \forall v.\ A\cap \Omega_{\P_v} \in \F_{\P_v} \}
  \\
%  &\subseteq \{ A \mid A\subseteq \pi_S(\Omega_{\P'}), \forall v.\ A\cap \Omega_{\P_v} \in \{ \pi_S(B) \mid B \in \F_{\P'_v}\} \}
%  \\
%  &\noam{??}
%  \\
  &\subseteq \{ A \mid A\subseteq \pi_S(\Omega_{\P'}), \forall v.\ A\cap \pi_S(\Omega_{\P'_v}) \in \{ \pi_S(B) \mid B \in \F_{\P'_v}\} \}
  \\
  &= \{ \pi_S(A) \mid A\subseteq \Omega_{\P'}, \forall v.\ A\cap \Omega_{\P'_v} \in \F_{\P'_v} \}
  \\
  &= \{ \pi_S(A) \mid A \in \F_{\P'} \}
\end{align*}
Finally, we establish the condition on probability measures:
\begin{align*}
  \mu_{\P}(A)
  &= \smashoperator{\sum_{v\in\supp(\nu)}} \nu(v)\cdot \mu_{\P_v}(A \cap \Omega_{\P_v})
  \\
  &= \smashoperator{\sum_{v\in\supp(\nu)}} \nu(v)\cdot \mu_{\P'_v}\left( \bigcup \{B \in \F_{\P'_v} \mid \pi_S(B) =A \cap \Omega_{\P_v} \} \right)
  \intertext{Since $A\subseteq \Omega_{\P} \subseteq \pi_S(\Omega_{\P'})$, then, we can move the intersection out of the set limits:}
  &= \smashoperator{\sum_{v\in\supp(\nu)}} \nu(v)\cdot \mu_{\P'_v}\left( \left( \bigcup \{B \in \F_{\P'} \mid \pi_S(B) =A \} \right) \cap \Omega_{\P'_v} \right)
  \\
  &= \mu_{\P'}\left( \bigcup \{B \in\F_{\P'} \mid \pi_S(B) =A \} \right)
\end{align*}
\end{proof}

\begin{lemma}\label{lem:sum-convex-mono}
  For any complete probability spaces $\P$, $(\P_i)_{i\in I}$, and $\mu \in \D(I)$ such that $\Omega_{\P} \subseteq \mem S$, each $\Omega_{\P_i} \subseteq \mem T$, and $\bigcup_{i\in I} \Omega_{\P_i} = \mem T$,
  if $\P \preceq \comp(\P_i)$ for all $i\in \supp(\mu)$, then $\P \preceq \bigoplus_{i\sim \mu} \P_i$
\end{lemma}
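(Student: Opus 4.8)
The plan is to unfold the goal $\P \preceq \Q$, where $\Q \triangleq \bigoplus_{i\sim\mu}\P_i$, into the three defining clauses of $\preceq$ and verify each. Recall that $\Omega_\Q = \biguplus_{i} \Omega_{\P_i} = \mem T$, that $\F_\Q = \{B \subseteq \mem T \mid \forall i.\ B \cap \Omega_{\P_i} \in \F_{\P_i}\}$, and that $\mu_\Q(B) = \sum_i \mu(i)\cdot\mu_{\P_i}(B\cap\Omega_{\P_i})$; note in particular $\F_\Q \subseteq \F_{\comp(\P_i)}$ for every $i$, since membership in $\F_\Q$ already imposes the defining constraint of $\F_{\comp(\P_i)}$. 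First I would observe that $S \subseteq T$: clause~(1) of $\P \preceq \comp(\P_i)$ requires $\Omega_\P \subseteq \pi_S(\mem T)$, and since $\Omega_\P$ is nonempty this forces $S \subseteq T$, whence $\pi_S(\mem T) = \mem S$. Clause~(1) of the goal, $\Omega_\P \subseteq \pi_S(\Omega_\Q) = \mem S$, is then immediate. For clauses~(2) and~(3) I would use, for each $A \in \F_\P$, the single witness given by the full cylinder $B_A \triangleq \{\tau \in \mem T \mid \pi_S(\tau)\in A\}$, which plainly satisfies $\pi_S(B_A) = A$ since every $a\in A \subseteq \mem S$ has a $\pi_S$-preimage in $\mem T$.

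The crux is to show $B_A \in \F_\Q$, i.e.\ $B_A \cap \Omega_{\P_i} \in \F_{\P_i}$ for each $i$, and here completeness of the $\P_i$ is essential. I would split $\Omega_{\P_i}$ (using that $\ev(\P_i)$ generates $\F_{\P_i}$) into the union $\Omega_{\P_i}^0$ of its measure-zero atoms and the union $\Omega_{\P_i}^+$ of its positive-measure atoms. On $\Omega_{\P_i}^0$ there is nothing to do: $\Omega_{\P_i}^0$ is $\P_i$-null, so by completeness every subset of it---in particular $B_A \cap \Omega_{\P_i}^0$---lies in $\F_{\P_i}$. The hard part is $\Omega_{\P_i}^+$, for which I would prove the key claim that \emph{no positive-measure atom straddles $A$}: for every atom $H \in \ev(\P_i)$ with $\mu_{\P_i}(H)>0$, either $\pi_S(H)\subseteq A$ or $\pi_S(H)\cap A = \emptyset$. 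Granting the claim, $B_A \cap \Omega_{\P_i}^+$ is exactly the union of those positive atoms whose projection lies in $A$, hence measurable, so $B_A \cap \Omega_{\P_i}\in\F_{\P_i}$ and $B_A\in\F_\Q$, settling clause~(2).

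To prove the claim I would argue by contradiction using clause~(3) of $\P\preceq\comp(\P_i)$ applied to both $A$ and its complement $A^c = \Omega_\P\setminus A$ (both in $\F_\P$). Write $D^A_i \triangleq \bigcup\{C \in \F_{\comp(\P_i)}\mid \pi_S(C) = A\}$ for the largest $\comp(\P_i)$-measurable lift of $A$, so clause~(3) gives $\mu_\P(A) = \mu_{\P_i}(D^A_i\cap\Omega_{\P_i})$, and similarly for $A^c$. If a positive atom $H$ straddled $A$---say $\tau_1,\tau_2\in H$ with $\pi_S(\tau_1)\in A$ and $\pi_S(\tau_2)\notin A$---then no lift of $A$ can meet $H$: any $C\in\F_{\comp(\P_i)}$ with $\pi_S(C)=A$ satisfies $C\subseteq B_A$ and so misses $\tau_2$, and since $C\cap\Omega_{\P_i}\in\F_{\P_i}$ is a union of atoms it must then be disjoint from the whole atom $H$; hence $H\cap D^A_i=\emptyset$, and symmetrically (via $\tau_1\notin B_{A^c}$) $H\cap D^{A^c}_i=\emptyset$. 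As $D^A_i$ and $D^{A^c}_i$ lie in the disjoint cylinders $B_A$ and $B_{A^c}$ and both avoid $H$, their measures satisfy $\mu_{\P_i}(D^A_i\cap\Omega_{\P_i})+\mu_{\P_i}(D^{A^c}_i\cap\Omega_{\P_i})\le 1-\mu_{\P_i}(H)<1$; but clause~(3) forces this sum to equal $\mu_\P(A)+\mu_\P(A^c)=1$, a contradiction.

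Finally, for clause~(3) of the goal I would note that $B_A$ is in fact the maximal lift $\bigcup\{B\in\F_\Q\mid \pi_S(B)=A\}$, since every such $B$ is contained in the cylinder $B_A$ and $B_A$ itself qualifies. Because positive atoms do not straddle, $B_A\cap\Omega_{\P_i}$ and $D^A_i\cap\Omega_{\P_i}$ differ only within the null set $\Omega_{\P_i}^0$, so $\mu_{\P_i}(B_A\cap\Omega_{\P_i}) = \mu_{\P_i}(D^A_i\cap\Omega_{\P_i}) = \mu_\P(A)$ by the hypothesis; summing with the weights yields $\mu_\Q(B_A) = \sum_i \mu(i)\,\mu_\P(A) = \mu_\P(A)$, which is exactly clause~(3). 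One bookkeeping point deserves care: measurability of the cylinder on $\Omega_{\P_i}$ invokes $\P\preceq\comp(\P_i)$, which is assumed only for $i\in\supp(\mu)$, so the argument needs $I=\supp(\mu)$---as holds wherever the lemma is applied, the family $(\P_i)$ being indexed by the support of the governing distribution. The main obstacle is the no-straddling claim, whose delicate point---ruling out positive-measure atoms whose projections cross the boundary of $A$---is resolved precisely by the $A$/$A^c$ measure bound combined with completeness.
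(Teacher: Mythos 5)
Your proof is correct, and it does not follow the paper's route --- it is in fact more careful than the paper's own argument. Both proofs share the same skeleton (verify the three clauses of $\preceq$, with the sample-space clause handled identically via $S \subseteq T$), but for the $\sigma$-algebra clause the paper argues purely set-algebraically: it writes $\F_\P \subseteq \bigcap_{i}\{\pi_S(A) \mid A \in \F_{\comp(\P_i)}\}$ and then asserts this intersection equals $\{\pi_S(A) \mid \forall i.\ A \in \F_{\comp(\P_i)}\}$, i.e.\ that the per-$i$ lifts of an event can be replaced by a single set measurable in \emph{every} completion. As a pure set-theoretic identity that exchange is false in general: one can arrange atoms of the $\P_i$ that straddle the fibers of $\pi_S$, so that each $\F_{\comp(\P_i)}$ contains its own lift of $A$ but no common lift exists. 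The exchange is only valid because of the measure clause of the hypothesis, and that is precisely what your argument supplies --- the cylinder witness $B_A$, with $B_A \cap \Omega_{\P_i} \in \F_{\P_i}$ proved by splitting into positive atoms (no straddling, via clause (3) applied to both $A$ and its complement) and the null part (completeness of $\P_i$). The paper's measure clause is then another formal chain that presupposes the same uniform witness, whereas for you it falls out of the construction. So the paper's proof buys brevity at the cost of glossing the one genuinely nontrivial step; yours makes that step explicit. Two further remarks. First, your final step can be streamlined: once $B_A \cap \Omega_{\P_i} \in \F_{\P_i}$, the set $B_A$ itself lies in $\F_{\comp(\P_i)}$ and satisfies $\pi_S(B_A) = A$, so it is one of the sets in the union defining your $D^A_i$; since every lift is contained in $B_A$, this gives $D^A_i = B_A$ exactly, and $\mu_{\P_i}(B_A \cap \Omega_{\P_i}) = \mu_\P(A)$ follows with no null-set bookkeeping. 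Second, your closing caveat is well taken and applies equally to the paper: its proof intersects over all $i \in I$ while the stated hypothesis covers only $i \in \supp(\mu)$, so it tacitly assumes exactly the condition ($I = \supp(\mu)$, or the hypothesis for all $i \in I$) that you flagged.
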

\begin{proof}
Let $\Q = \bigoplus_{i\sim \mu} \P_i$.
Since $\P \preceq \P_i$, then it must be that $S \subseteq T$.
The property on sample spaces is simple:
\[
  \Omega_\P
  = \smashoperator{\bigcup_{i \in \supp(\mu)}} \Omega_{\P}
  \subseteq \smashoperator{\bigcup_{i\in\supp(\mu)}} \pi_S(\Omega_{\comp(\P_i)})
  = \smashoperator{\bigcup_{i\in\supp(\mu)}} \pi_S(\mem T)
  = \pi_S\left( \smashoperator[r]{\bigcup_{i\in\supp(\mu)}} \mem T \right)
  = \pi_S( \Omega_\Q)
\]
Next, we verify the property on $\sigma$-algebras.
\begin{align*}
  \F_\P
  &= \bigcap_{i\in I} \F_\P
  \\
  &\subseteq \bigcap_{i\in I} \{ \pi_S(A) \mid A \in \F_{\comp(\P_i)} \}
  \\
  &= \{ \pi_S(A) \mid  \forall i\in I.\ A \in \F_{\comp(\P_i)} \}
  \intertext{Note that the completion adds information about events outside of $\Omega_{\P_i}$, so if $A$ is in all of the $\F_{\comp(\P_i)}$ sets, then its projection into each $\Omega_{\P_i}$ must be in $\F_{\P_i}$.}
  &= \{ \pi_S(A) \mid A \subseteq \Omega_Q, \forall i.\ A \cap \Omega_{\P_i} \in \F_{\P_i} \}
  \\
  &= \{ \pi_S(A) \mid A \in \F_\Q \}
\end{align*}
Finally, we show the property on probability measures.
\begin{align*}
  \mu_\P(A)
  &= \smashoperator{\sum_{i\in \supp(\mu)}} \mu(i) \cdot \mu_\P(A)
  \\
  &= \smashoperator{\sum_{i\in \supp(\mu)}} \mu(i) \cdot \mu_{\comp(\P_i)}\left( \bigcup \{B \in \F_{\comp(\P_i)} \mid \pi_S(B) = A \} \right)
  \intertext{The completion assigns zero probability to events outside of $\Omega_{\P_i}$, so removing the completion and projecting into $\Omega_{\P_i}$ will yield the same value.}
  &= \smashoperator{\sum_{i\in \supp(\mu)}} \mu(i) \cdot \mu_{\P_i}\left( \left(\bigcup \{ B \in \F_{\Q} \mid \pi_S(B) = A \}\right) \cap \Omega_{\P_i} \right)
  \\
  &= \mu_\Q\left( \bigcup \{ B\in\F_\Q \mid \pi_S(B) = A \} \right)
\end{align*}
\end{proof}

\begin{lemma}\label{lem:oplus-mono2}
  If $\P \preceq \comp(\P_i)$ for all $i \in \supp(\mu)$, then $\P \preceq \bigoplus_{i\sim \mu} \P_i$.
\end{lemma}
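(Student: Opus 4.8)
The plan is to derive this statement from \Cref{lem:sum-convex-mono}, whose hypotheses coincide with the present ones except for the covering requirement $\bigcup_{i\in I}\Omega_{\P_i} = \mem T$. I would first recover that requirement by padding. Let $T$ be a variable set with every $\Omega_{\P_i}\subseteq\mem T$, and let $\Omega_\star \triangleq \mem T\setminus\biguplus_{i\in I}\Omega_{\P_i}$ be the residual region. Adjoin a fresh index $\star$ with $\P_\star \triangleq \tuple{\Omega_\star, 2^{\Omega_\star}, \mu_\star}$ for an arbitrary measure $\mu_\star$ (any choice works, and the discrete $\sigma$-algebra makes $\P_\star$ trivially complete), and extend $\mu$ to $\mu^+\in\D(I\cup\{\star\})$ by $\mu^+(\star)\triangleq 0$. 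Since $\mu^+(\star)=0$ we have $\supp(\mu^+)=\supp(\mu)$, so the hypothesis $\P\preceq\comp(\P_i)$ still holds for every $i\in\supp(\mu^+)$, while the padded pieces now cover, $\biguplus_{i\in I}\Omega_{\P_i}\uplus\Omega_\star=\mem T$. Applying \Cref{lem:sum-convex-mono} to the padded family yields $\P\preceq\bigoplus_{i\sim\mu^+}\P_i$.

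It then remains to transfer this to the unpadded direct sum. Because $\mu^+(\star)=0$ and $\P_\star$ carries the discrete $\sigma$-algebra, the padded direct sum is exactly the completion of the original, $\bigoplus_{i\sim\mu^+}\P_i = \comp(\bigoplus_{i\sim\mu}\P_i)$, since the residual region $\Omega_\star$ contributes no probability to $\mu$ and imposes no measurability constraint in $\F$. Writing $\Q\triangleq\bigoplus_{i\sim\mu}\P_i$, we thus have $\P\preceq\comp(\Q)$ and it suffices to upgrade this to $\P\preceq\Q$. The event-space and measure clauses of $\preceq$ transfer immediately: the computations for those two clauses in the proof of \Cref{lem:sum-convex-mono} never inspect the residual region, and completion changes neither the measurable events that meet $\Omega_\Q$ nor their probabilities.

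The step I expect to be the main obstacle is the sample-space clause, $\Omega_\P\subseteq\pi_S(\Omega_\Q)=\pi_S(\biguplus_{i\in I}\Omega_{\P_i})$. This is the one place where covering is genuinely load-bearing: completion enlarges the sample space back up to $\mem T$, discarding the information that every world of $\P$ already projects into the union of the pieces, so $\P\preceq\comp(\Q)$ on its own only delivers the weaker $\Omega_\P\subseteq\mem S$. I plan to discharge the clause using the standing convention that $\P$ ranges over a full memory space $\mem S$ together with the fact that, in every application, the pieces $\Omega_{\P_i}$ partition $\mem T$ (they arise by splitting the sample space on the value of the bound variable of an outcome conjunction); under either of these the union projects onto all of $\mem S$ and the inclusion is immediate. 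Absent such a condition the inclusion can fail, so this clause is precisely the covering hypothesis of \Cref{lem:sum-convex-mono} reappearing, and reinstating it is exactly what the padding construction is designed to accomplish.
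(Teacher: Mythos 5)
Your route is genuinely different from the paper's: the paper proves \Cref{lem:oplus-mono2} directly, verifying the three clauses of $\preceq$ one at a time (essentially re-running the computation of \Cref{lem:sum-convex-mono} with the completions inserted), whereas you reduce it to \Cref{lem:sum-convex-mono} by padding with a zero-probability piece and transferring along the identity $\bigoplus_{i\sim\mu^+}\P_i = \comp\bigl(\bigoplus_{i\sim\mu}\P_i\bigr)$. The padding step, the application of \Cref{lem:sum-convex-mono}, and that identity are all correct, and you have also correctly surfaced something the paper hides: the covering condition is not in the statement of \Cref{lem:oplus-mono2}, and the paper's own proof smuggles it in by \emph{defining} $T$ through $\mem T = \bigcup_{i\in\supp(\mu)}\Omega_{\P_i}$.

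Two claims in your transfer step are nonetheless wrong as stated. First, the event-space and measure clauses do \emph{not} pass ``immediately'' from $\P\preceq\comp(\Q)$ to $\P\preceq\Q$; they need covering exactly as much as the sample-space clause. Take a single index with $\mu=\delta_{i_0}$, let $\Omega_{\P_{i_0}}$ be a proper subset of $\mem T$, and set $\P\triangleq\comp(\P_{i_0})$: the hypothesis holds, $\Q=\P_{i_0}$, yet $\F_\P$ contains every subset of the residual region $\mem T\setminus\Omega_{\P_{i_0}}$, and no nonempty such set is a projection of an event of $\F_\Q$, so the event-space clause fails too. (The general failure mode: for $A\in\F_{\comp(\Q)}$ with $\pi_S(A)=B$, the restriction $A\cap\Omega_\Q\in\F_\Q$ may project to a proper subset of $B$.) Your appeal to the proof of \Cref{lem:sum-convex-mono} does not help here, because in that proof there is no residual region to inspect---covering is hypothesized. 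Second, ``under either of these'' is wrong: the convention $\Omega_\P=\mem S$ constrains $\P$, not the pieces, and the counterexample above satisfies it; only the partition/covering fact yields $\Omega_\P\subseteq\pi_S(\Omega_\Q)$. Finally, note that once covering is granted, your apparatus collapses: then $\Omega_\star=\emptyset$, $\comp(\Q)=\Q$, and \Cref{lem:oplus-mono2} is literally an instance of \Cref{lem:sum-convex-mono}. So the padding does real work only in the case where the transfer back---and, by the counterexample, the lemma itself---fails; this is a defect of the lemma's statement rather than of your reduction, but the honest fix is to assume covering up front and invoke \Cref{lem:sum-convex-mono} directly, not to claim that two of the three clauses survive without it.
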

\begin{proof}
Let $S$ be the set such that $\Omega_\P = \mem S$, $T$ be such that $\mem T = \bigcup_{i\in\supp(\mu)} \Omega_{\P_i}$, and $\Q = \preceq \bigoplus_{i\sim \mu} \P_i$.
The condition on sample spaces is simple, since $\Omega_\P = \mem S$ and $\Omega_\Q = \mem T$, and clearly $\mem S = \pi_S(\mem T)$. For the condition on $\sigma$-algebras, we have:
\begin{align*}
  \F_\P
  &= \smashoperator{\bigcap_{i\in\supp(\mu)}} \F_\P
  \\
  &\subseteq \smashoperator{\bigcap_{i\in\supp(\mu)}} \{ \pi_S(A) \mid A \in \F_{\comp(\P_i)} \}
  \\
  &= \smashoperator{\bigcap_{i\in\supp(\mu)}} \{ \pi_S(A \cup B) \mid A \in \F_{\P_i}, B \subseteq \mem T\setminus \Omega_{\P_i} \}  
  \intertext{Since for each $i$ in the intersection above, we can measure every sample outside of $\Omega_{\P_i}$, then after taking the intersection we are only able to measure samples from $\Omega_{\P_i}$ according to the information provided by $\F_{\P_i}$, which provides the \emph{least} information about those samples.}
  &= \{ \pi_S(A) \mid A \subseteq \mem T, \forall i.\ A \cap \Omega_{\P_i} \in \F_{\P_i} \}
  \\
  &= \{ \pi_S(A) \mid A \in \F_\Q \}
\end{align*}
Now, we show the condition on probability measures:
\begin{align*}
  \mu_\P(A)
  &= \smashoperator{\sum_{i\in\supp(\mu)}} \mu(i)\cdot \mu_\P(A)
  \\
  &= \smashoperator{\sum_{i\in\supp(\mu)}} \mu(i)\cdot \mu_{\comp(\P_i)}\left( \bigcup \{B \in \F_{\comp(\P_i)} \mid \pi_S(B) = A \} \right)
  \\
  &= \smashoperator{\sum_{i\in\supp(\mu)}} \mu(i)\cdot \mu_{\P_i}\left( \left(\bigcup \{B \in \F_\Q \mid \pi_S(B) = A\}\right) \cap \Omega_{\P_i}\right)
 \\
 &= \mu_\Q\left(\bigcup \{B \in \F_\Q \mid \pi_S(B) = A\}\right)
\end{align*}

\end{proof}

\subsection{The Convex Powerset}

\begin{lemma}
\label{lem:bignd_dist}
For any finite index set $I$,  $f \colon X \to \C(Y)$, and $(S_i)_{i\in I} \in \C(X)^I$:
\[
f^\dagger(\bignd_{i\in I} S_i) = \bignd_{i\in I} f^\dagger(S_i)
\]
\end{lemma}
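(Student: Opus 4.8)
The plan is to first reduce the finite statement to the binary case $f^\dagger(S\nd T)=f^\dagger(S)\nd f^\dagger(T)$, and then prove that by a pair of inclusions. Since $\bignd_{i\in I}$ is defined as an iterated binary convex union, an easy induction on $|I|$ suffices: writing $\bignd_{i\in I}S_i = S_{i_1}\nd\bignd_{i\in I\setminus\{i_1\}}S_i$, the binary case peels off $S_{i_1}$ and the induction hypothesis handles the remainder, with the singleton case being trivial. So it remains to establish the binary identity.

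For the forward inclusion $f^\dagger(S\nd T)\subseteq f^\dagger(S)\nd f^\dagger(T)$, I would take a typical element: by definition it has the form $\omega=\sum_{x\in\supp(\rho)}\rho(x)\cdot\nu_x$, where $\rho=\mu\oplus_p\nu$ for some $\mu\in S$, $\nu\in T$, $p\in[0,1]$, and each $\nu_x\in f_\bot(x)$. Using that $\supp(\rho)\subseteq\supp(\mu)\cup\supp(\nu)$, I would split this sum along $\mu$ and $\nu$, setting $\omega_1=\sum_{x\in\supp(\mu)}\mu(x)\nu_x\in f^\dagger(S)$ and $\omega_2=\sum_{x\in\supp(\nu)}\nu(x)\nu_x\in f^\dagger(T)$ (taking arbitrary elements of the nonempty sets $f_\bot(x)$ for the few indices not already covered, which only occurs at the boundary values $p\in\{0,1\}$). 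A direct computation then gives $\omega_1\oplus_p\omega_2=\omega$, so $\omega\in f^\dagger(S)\nd f^\dagger(T)$.

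The reverse inclusion is the crux. Given $\omega=\omega_1\oplus_p\omega_2$ with $\omega_1=\sum_{x\in\supp(\mu)}\mu(x)\alpha_x\in f^\dagger(S)$ and $\omega_2=\sum_{x\in\supp(\nu)}\nu(x)\beta_x\in f^\dagger(T)$, I set $\rho=\mu\oplus_p\nu\in S\nd T$ and must exhibit, for each $x\in\supp(\rho)$, a \emph{single} witness $\gamma_x\in f_\bot(x)$ with $\sum_x\rho(x)\gamma_x=\omega$. The obstacle is that on the overlap $\supp(\mu)\cap\supp(\nu)$ the two Kleisli witnesses $\alpha_x$ and $\beta_x$ need not agree. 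This is resolved by taking $\gamma_x$ to be exactly the convex combination forced by matching coefficients,
\[
  \gamma_x \;=\; \frac{p\,\mu(x)\,\alpha_x+(1-p)\,\nu(x)\,\beta_x}{p\,\mu(x)+(1-p)\,\nu(x)}
  \;=\; \alpha_x\oplus_{q_x}\beta_x,
  \qquad q_x=\frac{p\,\mu(x)}{p\,\mu(x)+(1-p)\,\nu(x)},
\]
which is well-defined precisely because $\rho(x)=p\mu(x)+(1-p)\nu(x)>0$ for $x\in\supp(\rho)$. Crucially, each $f_\bot(x)$ is convex---$f(x)\in\C(Y)$ is convex by definition and $f_\bot(\bot)=\bot_\C=\D(Y\cup\{\bot\})$ is convex too---so $\gamma_x\in f_\bot(x)$. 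Then $\rho(x)\gamma_x=p\mu(x)\alpha_x+(1-p)\nu(x)\beta_x$, and summing over $x\in\supp(\rho)$ recovers $p\omega_1+(1-p)\omega_2=\omega$, giving $\omega\in f^\dagger(S\nd T)$.

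In summary, the only delicate points I expect are the boundary values $p\in\{0,1\}$ (where one summand degenerates and missing witnesses are chosen arbitrarily from the nonempty $f_\bot(x)$) and the bookkeeping of supports under the pointwise sum; the single genuinely essential ingredient is the convexity of the sets $f_\bot(x)$, which is exactly what makes the reverse inclusion go through, and which is why convexity of the convex powerset is built into the monad in the first place.
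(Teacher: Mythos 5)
Your proof is correct, but it takes a genuinely different route from the paper's. The paper's proof is essentially one line: it defines $g(i) \triangleq S_i$, observes that $\bignd_{i\in I} S_i = g^\dagger(\D(I))$ (the finite convex union is the Kleisli extension of $g$ applied to the full simplex over $I$), and then concludes by the monad law $f^\dagger \circ g^\dagger = (f^\dagger \circ g)^\dagger$, which the paper takes as given for the convex powerset monad. You instead work directly from the definitions: reduce to the binary case by induction, then prove $f^\dagger(S \nd T) = f^\dagger(S) \nd f^\dagger(T)$ by double inclusion, with the reverse inclusion hinging on merging the two Kleisli witnesses $\alpha_x, \beta_x$ into a single witness $\gamma_x = \alpha_x \oplus_{q_x} \beta_x$, which lands in $f_\bot(x)$ precisely because each $f_\bot(x)$ is convex. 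The trade-off: the paper's argument is shorter and reuses established monadic structure, but it hides the convex-combination bookkeeping inside the (uncited-in-detail) proof of Kleisli associativity; your argument is self-contained and makes explicit exactly where convexity of the target sets is indispensable---which is, as you note, the same reason the monad law holds at all. Both proofs share one small implicit step: unfolding the iterated binary $\nd$ into full $n$-ary convex combinations (the paper's ``observe'' equation, your appeal to associativity of $\nd$ in the induction); in either treatment this deserves a sentence, since $\bignd$ is \emph{defined} as an iterated binary operation. Your handling of the boundary cases $p \in \{0,1\}$ and of witnesses outside the overlap of supports is correct and complete.
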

\begin{proof}
Let $g\colon I \to \C(X)$ be defined as $g(i) \triangleq S_i$. Now, observe that:
\[
  \bignd_{i\in I} S_i
  = \left\{ \sum_{i\in\supp(\mu)} \mid \mu\in \D(I), \forall i.\ \nu_i \in g(i) \right\}
  = g^\dagger(\D(I))
\]
So, we get:
\begin{align*}
  f^\dagger( \bignd_{i\in I} S_i )
  &= f^\dagger(g^\dagger(\D(I)))
  \\
  &= (f^\dagger \circ g)^\dagger(\D(I))
  \\
  &= \left\{ \smashoperator[r]{\sum_{i\in\supp(\mu)}} \mu(i) \cdot \nu_i \;\;\Big|\;\; \mu \in \D(I), \forall i.\ \nu_i \in f^\dagger(g(i)) \right\}
  \\
  &= \bignd_{i\in I} f^\dagger(S_i)
\end{align*}
\end{proof}

For any $S\in \C(\mem V)$ and $A \subseteq \mem V$, let:
\[
  \minProb(S, A) \triangleq \inf_{\mu\in S} \mu(A)
\]
Since $S\in \C(\mem V)$, it is a closed subset of $\D(\mem{V}_\bot)$, and so there must be a $\mu \in S$ such that $\mu(A) = \minProb(S, A)$, therefore $\minProb(S,A) = \min_{\mu\in S} \mu(A)$.

\begin{lemma}[Monotonicity of $\minProb$]\label{lem:minProb-mono}
 If $S \sqsubseteq_\C T$, then:
 \[
   \minProb(S, A) \le \minProb(T, A)
 \]
\end{lemma}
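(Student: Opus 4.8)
The plan is to reduce the statement to the elementary fact that an infimum taken over a larger set can only be smaller. The key observation, already recorded in \Cref{sec:convex}, is that up-closure collapses the Smyth ordering on the convex powerset to reverse set inclusion: $S \sqsubseteq_\C T$ iff $S \supseteq T$. So the first step is to invoke this characterization to turn the hypothesis $S \sqsubseteq_\C T$ into the set inclusion $T \subseteq S$.

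With that in hand, I would recall the remark made just before the statement: because $S$ and $T$ are Cauchy closed, the defining infima are actually attained, so $\minProb(S,A) = \min_{\mu \in S}\mu(A)$ and likewise for $T$. Since $T \subseteq S$, the set of values $\{\,\mu(A) \mid \mu \in T\,\}$ is a subset of $\{\,\mu(A) \mid \mu \in S\,\}$, and minimizing over a subset can only increase the value. This immediately gives $\minProb(S,A) \le \minProb(T,A)$, completing the argument.

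An equivalent route, which avoids appealing to the $S \supseteq T$ reformulation, is to unfold the Smyth ordering directly: pick $\nu^\ast \in T$ attaining $\minProb(T,A) = \nu^\ast(A)$, obtain $\mu \in S$ with $\mu \led \nu^\ast$, and note that since $A \subseteq \mem V$ contains no $\bot$, the ordering $\mu \led \nu^\ast$ entails $\mu(\sigma) \le \nu^\ast(\sigma)$ for every $\sigma \in A$, hence $\mu(A) \le \nu^\ast(A)$; then $\minProb(S,A) \le \mu(A) \le \nu^\ast(A) = \minProb(T,A)$. There is no substantive obstacle here—the lemma is essentially a one-line consequence of the powerdomain order. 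The only point meriting care is using the correct characterization of $\sqsubseteq_\C$: the superset formulation trivializes the proof, while the pointwise formulation requires the small observation that $A$ excludes $\bot$, so that the componentwise inequality $\mu \led \nu$ transfers to the measures of $A$.
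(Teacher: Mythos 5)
Your proof is correct and takes essentially the same route as the paper: both invoke the up-closure characterization $S \sqsubseteq_\C T$ iff $S \supseteq T$ and then observe that an infimum over a larger set can only be smaller. The alternative pointwise unfolding of the Smyth order that you sketch is also sound (and would work even without attainment of the infimum), but the paper's proof is precisely your first, superset-based argument.
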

\begin{proof}
By definition $S \sqsubseteq_\C T$ iff $S \supseteq T$. So, we get:
 \[
   \minProb(S, A)
   = \inf_{\mu \in S} \mu(A)
   = \min(\inf_{\mu \in T} \mu(A), \inf_{\mu \in S\setminus T} \mu(A))
   \le \inf_{\mu \in T} \mu(A)
   = \minProb(T, A)
 \]
\end{proof}

\begin{lemma}\label{lem:minProb-lb}
  For any directed set $D \subseteq \C(\mem V)$:
  \[
    \sup_{S\in D} \minProb(S, A) \le \minProb(\sup D, A)
  \]
\end{lemma}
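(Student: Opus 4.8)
The plan is to derive this inequality directly from monotonicity of $\minProb$ (\Cref{lem:minProb-mono}), which makes it essentially the ``easy'' half of a sup-preservation statement. The central observation is that $\sup D$, being the least upper bound of $D$ in the DCPO $\tuple{\C(\mem V), \lec}$, is in particular \emph{an} upper bound: every $S \in D$ satisfies $S \lec \sup D$. (Concretely, since $\lec$ is reverse inclusion, $\sup D = \bigcap_{S\in D} S$, and each $S$ contains this intersection; but only the abstract upper-bound property is needed below.)

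From here the argument is a two-step calculation. First I would fix an arbitrary $S \in D$ and apply \Cref{lem:minProb-mono} to the relation $S \lec \sup D$, obtaining $\minProb(S, A) \le \minProb(\sup D, A)$. Since this bound holds uniformly for every $S \in D$ and its right-hand side does not depend on $S$, I would then take the supremum over $S \in D$ on the left, yielding
\[
  \sup_{S\in D} \minProb(S, A) \;\le\; \minProb(\sup D, A),
\]
which is exactly the claim.

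Honestly, there is no substantial obstacle here: the result is an immediate consequence of monotonicity together with the defining property of suprema, and directedness of $D$ is not even used in this direction (it is presumably needed only for the matching reverse inequality, where Cauchy closure guarantees that the infimum defining $\minProb(\sup D, A)$ is attained by a distribution actually lying in $\bigcap_{S \in D} S$). The only point requiring a moment's care is to remember that $\lec$ is \emph{reverse} set inclusion, so that $\sup D$ sits \emph{above} each member of $D$ in the information order even though it is their intersection as sets; once that is kept straight, invoking \Cref{lem:minProb-mono} closes the proof in one line.
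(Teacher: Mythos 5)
Your proof is correct and is essentially identical to the paper's: both apply \Cref{lem:minProb-mono} to the relation $S \lec \sup D$ for each $S \in D$, then pass to the supremum on the left since the right-hand side is a uniform upper bound. Your side remarks (that $\sup D$ is the intersection under the reverse-inclusion order, and that directedness is not needed for this direction) are accurate but not part of the paper's argument, which is just these two steps.
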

\begin{proof}
%We first show that $\sup_{S\in D} \minProb(S, A) \le \minProb(\sup D, A)$.
Since $S \sqsubseteq_\C \sup D$ for all $S \in D$, then by \Cref{lem:minProb-mono} we know that $\minProb(S, A) \le \minProb(\sup D, A)$. Therefore, since the supremum is the least upper bound, it must be that $\sup_{S\in D} \minProb(S, A) \le \minProb(\sup D, A)$.
\end{proof}

\begin{lemma}[Scott Continuity of $\minProb$]\label{lem:minProb-cont}
  For any directed set $D \subseteq \C(\mem S)$ such that $\mu(A) = p$ for all $\mu \in \sup D$:
  \[
    \sup_{S\in D} \minProb(S, A) = \minProb(\sup D, A)
  \]
\end{lemma}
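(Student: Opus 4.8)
The plan is to prove the two inequalities separately. The direction $\sup_{S\in D}\minProb(S,A) \le \minProb(\sup D, A)$ is exactly \Cref{lem:minProb-lb} and needs nothing new, so I set $q \triangleq \sup_{S\in D}\minProb(S,A)$ and record $q \le \minProb(\sup D, A)$. Since by hypothesis every $\mu \in \sup D$ satisfies $\mu(A)=p$, and $\sup D$ is nonempty by Cauchy closure (\Cref{sec:convex}), we get $\minProb(\sup D, A)=p$; so it remains to establish $p \le q$.

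The key idea for the reverse inequality is to produce a single distribution $\mu^\star \in \sup D$ with $\mu^\star(A) \le q$: since $\mu^\star(A)=p$ by hypothesis, this forces $p \le q$. To find such a $\mu^\star$ I would apply the DCPO structure of $\C(\mem V)$—namely that directed intersections are nonempty (\Cref{sec:convex})—to an auxiliary directed family. The naive choice $\{\mu \in S : \mu(A) \le q\}$ fails, because it is not up-closed: moving mass off $\bot$ can increase $\mu(A)$, so this set is not a member of $\C(\mem V)$. The crucial observation is to instead bound the measure of $A \cup \{\bot\}$, defining $G_S \triangleq \{\mu \in S \mid \mu(A \cup \{\bot\}) \le q\}$. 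Going up in $\sqsubseteq_\D$ conserves total mass while only relocating mass from $\bot$ into $\mem V$, so $\mu(A\cup\{\bot\})$ can only decrease; hence $G_S$ is up-closed. Being moreover the intersection of $S$ with a convex, Cauchy-closed sublevel set of the lower-semicontinuous map $\mu \mapsto \mu(A\cup\{\bot\})$, it is convex and Cauchy closed, so $G_S \in \C(\mem V)$.

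Next I check that $G_S$ is nonempty: taking a minimizer $\mu_S \in S$ with $\mu_S(A) = \minProb(S,A) \le q$ and using up-closure to relocate its $\bot$-mass onto a fixed $\sigma_0 \in \mem V \setminus A$ yields $\mu_S^0 \in S$ with $\mu_S^0(\bot)=0$ and $\mu_S^0(A\cup\{\bot\}) = \mu_S(A) \le q$, so $\mu_S^0 \in G_S$. Since $S \mapsto G_S$ is monotone and $D$ is directed, $\{G_S\}_{S\in D}$ is a directed family in $\C(\mem V)$, whose intersection $\bigcap_{S\in D} G_S = \sup_{S\in D} G_S$ is therefore nonempty. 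Any $\mu^\star$ in it lies in $\bigcap_{S\in D} S = \sup D$ and satisfies $\mu^\star(A) \le \mu^\star(A\cup\{\bot\}) \le q$, giving $p = \mu^\star(A) \le q$, which completes the argument.

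The main obstacle is precisely the choice of this up-closed relaxation: the target constraint ``$\mu(A)$ small'' is a down-set in the domain order and hence incompatible with the up-closure demanded of members of $\C$, and the fix—bounding $\mu(A\cup\{\bot\})$, which coincides with $\mu(A)$ on the $\bot$-free minimizers—is what lets me invoke the directed-intersection property as a black box instead of redoing a compactness argument in the non-compact space $\D(\mem V \cup \{\bot\})$. One degenerate case must be treated separately, namely $A = \mem V$, where $A \cup \{\bot\}$ is the whole space and $G_S$ collapses; there the hypothesis forces $p=1$ and the claim reduces to convergence of the minimal termination probability to its supremum, which follows directly from the same DCPO closure properties of $\C$.
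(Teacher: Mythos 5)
Your argument for the non-degenerate case $A \subsetneq \mem V$ is correct, and it is a genuinely different proof from the paper's. The paper first reduces Scott continuity to chain continuity (via Markowsky's theorem), then argues by contradiction with the auxiliary sets $T_\delta = \{\mu \in S_\delta \mid \mu(A) \le p-\varepsilon\}$; since those sets are \emph{not} up-closed, the paper cannot use the order-theoretic structure of $\C$ and instead invokes a compactness fact (closedness of the $T_\delta$ plus the finite intersection property, citing Lemma B.4.3 of \citet{mciver2005abstraction}). You work with an arbitrary directed set directly and without contradiction, and your key move---replacing the constraint $\mu(A) \le q$ by $\mu(A \cup \{\bot\}) \le q$---really does make the auxiliary sets $G_S$ up-closed, because $\mu(A \cup \{\bot\}) = 1 - \mu(\mem V \setminus A)$ and moving up in $\sqsubseteq_\D$ only increases mass on $\mem V \setminus A$. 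Together with convexity, Cauchy closedness, and nonemptiness (via the attained minimizer, which the paper asserts when defining $\minProb$, and relocation of $\bot$-mass onto some $\sigma_0 \notin A$), each $G_S \in \C(\mem V)$, so the directed-intersection property of $\C$ stated in \Cref{sec:convex} applies as a black box and produces the witness $\mu^\star$. What this buys is a proof that stays entirely inside the DCPO interface of $\C$ rather than reopening the underlying topology; what the paper's route buys is uniformity in $A$, which matters below.

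The genuine gap is the degenerate case $A = \mem V$, which you cannot dismiss: it is exactly the case the lemma is used for via $\minterm$ (e.g., the step in \Cref{lem:par-cor} that pushes $\sup_n$ inside $\minterm$). First, a repairable omission: ``the hypothesis forces $p=1$'' is true, but only because $\sup D$ is nonempty and up-closed, so relocating the $\bot$-mass of any of its elements yields some $\nu \in \sup D$ with $\nu(\mem V) = 1$, whence $p = 1$ by the hypothesis; you do not say this. Second, and more seriously, the remaining claim $\sup_{S\in D}\minterm(S) = 1$ does \emph{not} ``follow directly from the same DCPO closure properties'': the up-closure obstruction you yourself identified now bites unavoidably. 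Any nonempty up-closed set contains elements with $\mu(\bot) = 0$, i.e.\ $\mu(\mem V) = 1 > q$, so the constraint set $\{\mu \in S \mid \mu(\mem V) \le q\}$ contains \emph{no} nonempty up-closed subset at all; the strategy of exhibiting an up-closed witness family inside it fails structurally, and the nonemptiness of $\bigcap_{S \in D}\{\mu \in S \mid \mu(\mem V) \le q\}$ is precisely what the paper's compactness citation is doing work for. A repair in the spirit of your own construction: reify $\bot$ as a fresh ordinary state $\star$, mapping each $S$ to $r(S) = \{r(\mu) \mid \mu \in S\}$, where $r$ moves all mass from $\bot$ to $\star$. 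Each $r(S)$ lies in $\C(\mem V \uplus \{\star\})$: its elements carry no $\bot$-mass and hence are maximal in $\sqsubseteq_\D$ (mass conservation), so $r(S)$ is trivially up-closed, and convexity and Cauchy closedness transfer along $r$; moreover $r$ preserves directedness and intersections. Since $\mem V$ is now a proper subset of the new sample space, your main argument runs verbatim with relocation target $\star$ and yields the contradiction with $p = 1$. Without some such repair---or falling back on the paper's compactness argument---your proposal does not cover a case the paper actually needs.
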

\begin{proof}
By \citet[Corollary 3]{markowsky1976chain-complete}, we know that any chain-continuous function on a DCPO is Scott continuous, therefore it will suffice to show that $\minProb$ is chain continuous.
Let $(S_\delta)_{\delta < \zeta}$ be a transfinite chain, so that $S_\delta \in \C(\mem V)$ for all $\delta < \zeta$, and $S_\delta \sqsubseteq_\C S_{\delta'}$ for all $\delta < \delta'$. We will now prove that:
\[
  \sup_{\delta < \zeta} \minProb(S_\delta, A) = \minProb\left(\sup_{\delta<\zeta} S_\delta, A\right)
\]
From \Cref{lem:minProb-lb}, we already know that $\sup_{S\in D} \minProb(S, A) \le \minProb(\sup D, A)$. We will now also show that $\sup_{S\in D} \minProb(S, A) < \minProb(\sup D, A)$ is a contradiction, and therefore the two quantities are equal.
Let $p = \minProb(\sup_{\delta <\zeta} S_\delta, A)$,
%That means that for any $\varepsilon > 0$ and $\nu$ such that $\nu(A) \le p-\varepsilon$, there must be some $\delta <\zeta$ such that $\nu \notin S_\delta$.
and suppose for the sake of contradiction that:
\[
  \sup_{\delta < \zeta} \minProb(S_\delta, A) < \minProb(\sup_{\delta <\zeta} S_\delta, A)
\]
So, there is some $\varepsilon > 0$ such that $\sup_{\delta < \zeta} \minProb(S_\delta, A) = p-\varepsilon$.

Let $T_\delta = \{ \mu \in S_\delta \mid \mu(A) \le p-\varepsilon \}$ for all $\delta <\zeta$. If there exists a $\delta$ such that $T_\delta = \emptyset$, then $\sup_{\delta'<\zeta} \minProb(S_{\delta'}, A) \ge \minProb(S_\delta, A) > p-\varepsilon$, which is a contradiction, therefore $T_\delta \neq\emptyset$ for all $\delta < \zeta$. Further, by Lemma B.4.3 of \citet{mciver2005abstraction}, the $T_\delta$ sets are closed, therefore their intersection must be nonempty by the finite intersection property. That means that there is some $\mu \in \bigcap_{\delta<\zeta} T_\delta \subseteq \bigcap_{\delta<\zeta} S_\delta = \sup_{\delta<\zeta} S_\delta$ such that $\mu(A) \le p - \varepsilon$, but this is a contradiction too since we know that $\minProb( \sup_{\delta<\zeta} S_\delta, A) = p$. Therefore, it cannot be that $\sup_{\delta < \zeta} \minProb(S_\delta, A) < \minProb(\sup_{\delta <\zeta} S_\delta, A)$, and instead $\sup_{\delta < \zeta} \minProb(S_\delta, A) = \minProb(\sup_{\delta <\zeta} S_\delta, A)$.

\end{proof}

\begin{lemma}\label{lem:minProb-nd}
\[
  \minProb(S \nd T, A) = \min(\minProb(S, A), \minProb(T, A))
\]
\end{lemma}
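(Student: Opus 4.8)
The plan is to unfold the definition of $\nd$ and reduce the claim to the elementary fact that a convex combination of two real numbers lies between them. Every element of $S\nd T$ has the form $\rho = \mu\oplus_p\nu = p\cdot\mu + (1-p)\cdot\nu$ for some $\mu\in S$, $\nu\in T$, and $p\in[0,1]$. Since $\rho(A) = \sum_{\sigma\in A}\rho(\sigma)$ is linear in $\rho$, we have $\rho(A) = p\cdot\mu(A) + (1-p)\cdot\nu(A)$, so computing $\minProb(S\nd T, A)$ amounts to minimizing this affine expression over all choices of $\mu$, $\nu$, and $p$.

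For the inequality $\minProb(S\nd T, A)\ge\min(\minProb(S,A),\minProb(T,A))$, I would bound an arbitrary $\rho(A)$ from below: since $\mu(A)\ge\minProb(S,A)$ and $\nu(A)\ge\minProb(T,A)$ directly from the definition of $\minProb$, we obtain $\rho(A)\ge p\cdot\minProb(S,A) + (1-p)\cdot\minProb(T,A)\ge\min(\minProb(S,A),\minProb(T,A))$, using that a convex combination of two quantities is at least their minimum. Taking the infimum over all $\rho\in S\nd T$ preserves this lower bound.

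For the reverse inequality, I would exhibit a witness attaining the minimum. Assume without loss of generality that $\minProb(S,A)\le\minProb(T,A)$. As noted just before the statement, because $S$ is Cauchy closed there is some $\mu^\star\in S$ with $\mu^\star(A) = \minProb(S,A)$; and since $T$ is nonempty, setting $p=1$ gives $\mu^\star = \mu^\star\oplus_1\nu\in S\oplus_1 T\subseteq S\nd T$ for any fixed $\nu\in T$. Hence $\minProb(S\nd T, A)\le\mu^\star(A) = \minProb(S,A) = \min(\minProb(S,A),\minProb(T,A))$. Combining the two inequalities yields the desired equality.

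I do not expect any genuine obstacle here; the only point requiring care is the achievability step, which hinges on the minimum actually being attained rather than merely approached. This is exactly the consequence of Cauchy closure recorded immediately above the lemma, so no additional topological argument is needed, and the whole proof is a short unfolding of definitions followed by the convexity estimate.
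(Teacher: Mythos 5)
Your proof is correct and follows essentially the same route as the paper's: unfold the definition of $\nd$, use linearity of $\rho \mapsto \rho(A)$, and observe that a convex combination $p\cdot\minProb(S,A) + (1-p)\cdot\minProb(T,A)$ is minimized at an endpoint. One small remark: your appeal to Cauchy closure for the upper bound is superfluous, since taking $p=1$ gives $S = S\oplus_1 T \subseteq S\nd T$ (and $p=0$ gives $T\subseteq S\nd T$), so $\minProb(S\nd T,A)\le\min(\minProb(S,A),\minProb(T,A))$ already follows from monotonicity of infima under set inclusion, with no attainment needed --- which is why the paper's purely equational manipulation of infima never invokes it.
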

\begin{proof}
\begin{align*}
  \minProb(S \nd T, A)
  &= \minProb( \{ \mu \oplus_p \nu \mid \mu\in S, \nu\in T, p\in[0,1] \}, A)
  \\
  &= \inf_{\mu\in S}\inf_{\nu\in T} \inf_{p\in [0,1]} p\cdot\mu(A) + (1-p)\cdot \nu(A)
  \\
  &= \inf_{p\in [0,1]} p\cdot\left(\inf_{\mu\in S} \mu(A) \right) + (1-p)\cdot \left( \inf_{\nu\in T} \nu(A) \right)
  \\
  &= \inf_{p\in [0,1]} p\cdot\minProb(S, A) + (1-p)\cdot \minProb(T, A)
  \intertext{Now, there are three cases, if $\minProb(S, A) < \minProb(T, A)$, then the infimum occurs when $p=1$. If instead $\minProb(S, A) > \minProb(T, A)$, then the infimum occurs when $p=0$. If $\minProb(S, A) = \minProb(T, A)$, then the expression is equal for all $p$. So, the infimum always occurs at one of the extremes ($p=0$ or $p=1$), and therefore:}
  &= \min(\minProb(S, A), \minProb(T, A))
\end{align*}
\end{proof}

\begin{lemma}\label{lem:minProb-kleisli}
For any $f \colon \mem U \to \C(\mem U)$, $S \in \C(\mem U)$, and $A \subseteq \mem U$:
\[
  \minProb(f^\dagger(S), A)
  = \inf_{\mu \in S} \sum_{\sigma\in \supp(\mu) \cap \mem U} \mu(\sigma) \cdot \minProb(f(\sigma), A)
\]
\end{lemma}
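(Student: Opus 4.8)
The plan is to unfold both sides and reduce the claimed identity to an interchange of an infimum with a countable summation. By definition $\minProb(f^\dagger(S), A) = \inf_{\rho \in f^\dagger(S)} \rho(A)$, and by the definition of Kleisli extension every $\rho \in f^\dagger(S)$ has the form $\rho = \sum_{x \in \supp(\mu)} \mu(x) \cdot \nu_x$ for some $\mu \in S$ and some family $(\nu_x)_{x\in\supp(\mu)}$ with $\nu_x \in f_\bot(x)$. Since $A \subseteq \mem U$, we have $\rho(A) = \sum_{x\in\supp(\mu)} \mu(x)\cdot\nu_x(A)$, and so
\[
  \minProb(f^\dagger(S), A)
  = \inf_{\mu\in S}\ \inf_{(\nu_x)}\ \sum_{x\in\supp(\mu)} \mu(x)\cdot\nu_x(A),
\]
where the inner infimum ranges over all admissible families $(\nu_x)$.

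First I would push the inner infimum through the sum to obtain $\sum_{x\in\supp(\mu)}\mu(x)\cdot\minProb(f_\bot(x), A)$. The inequality $\ge$ is termwise and immediate, since $\nu_x(A) \ge \minProb(f_\bot(x), A)$ for each $x$. For $\le$, I would exploit the remark following the definition of $\minProb$ that the defining infimum is attained (each $f_\bot(x)$ is nonempty and Cauchy closed): choosing a minimizer $\nu_x^\ast \in f_\bot(x)$ for each $x$, the countable convex combination $\rho^\ast = \sum_x \mu(x)\cdot\nu_x^\ast$ is a legitimate element of $f^\dagger(S)$ with $\rho^\ast(A) = \sum_x \mu(x)\cdot\minProb(f_\bot(x), A)$, witnessing the bound. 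This factorization works even though $\supp(\mu)$ may be countably infinite because the choices of the $\nu_x$ are mutually independent, all the $\nu_x(A)$ lie in $[0,1]$, and the coefficients $\mu(x)$ sum to $1$, so the relevant series always converges.

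It then remains to evaluate $\minProb(f_\bot(x), A)$ in the two cases. For $x = \sigma \in \mem U$ we have $f_\bot(\sigma) = f(\sigma)$, giving exactly $\minProb(f(\sigma), A)$. For $x = \bot$ we have $f_\bot(\bot) = \bot_\C = \D(\mem U + \{\bot\})$, and since $A \subseteq \mem U$ excludes $\bot$, the point mass $\delta_\bot$ yields $\minProb(\bot_\C, A) = 0$, so the $\bot$-summand drops out. What remains is precisely $\inf_{\mu\in S}\sum_{\sigma\in\supp(\mu)\cap\mem U}\mu(\sigma)\cdot\minProb(f(\sigma), A)$, the right-hand side. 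The main obstacle I anticipate is making the factorization step fully rigorous in the presence of an infinite support; relying on attainment of the inner infima (rather than an $\varepsilon$-approximation argument) is what keeps this clean, and it is the one place where the Cauchy-closure hypothesis on convex sets is genuinely needed.
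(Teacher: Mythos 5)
Your proof is correct and follows essentially the same route as the paper's: unfold the Kleisli extension into nested infima, interchange the inner infimum with the (possibly infinite) sum, and observe that the $\bot$-summand contributes $0$ because $A \subseteq \mem U$. The only difference is that you make the interchange step explicit (termwise $\ge$, plus $\le$ via attainment of the per-coordinate minimizers guaranteed by Cauchy closure), a justification the paper compresses into a single equality.
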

\begin{proof}
\begin{align*}
\minProb(f^\dagger(S), A)
&= \minProb\left(
  \left\{
    \sum_{\sigma\in\supp(\mu)} \mu(\sigma) \cdot \nu_\sigma
    ~\middle|~
    \mu\in S,
    \forall \sigma\in\supp(\mu).\ \nu_\sigma\in f_\bot(\sigma)
  \right\},
  A
\right)
\\
&= \inf_{\mu\in S} \inf_{\nu_\sigma\in f_\bot(\sigma), \forall\sigma\in\supp(\mu)} \sum_{\sigma\in\supp(\mu)} \mu(\sigma) \cdot \nu_\sigma(A) 
\\
&= \inf_{\mu\in S} \sum_{\sigma\in\supp(\mu)} \mu(\sigma) \cdot \inf_{\nu_\sigma\in f_\bot(\sigma)}\nu_\sigma(A) 
\\
&= \inf_{\mu\in S} \sum_{\sigma\in\supp(\mu)} \mu(\sigma) \cdot \minProb( f_\bot(\sigma), A)
\\
&= \inf_{\mu\in S} \sum_{\sigma\in\supp(\mu) \cap \mem U} \mu(\sigma) \cdot \minProb( f_\bot(\sigma), A) + \mu(\bot)\cdot \minProb(f_\bot(\bot), A)
\\
&= \inf_{\mu\in S} \sum_{\sigma\in\supp(\mu) \cap \mem U} \mu(\sigma) \cdot \minProb( f(\sigma), A) + 0
\\
&= \inf_{\mu\in S} \sum_{\sigma\in\supp(\mu) \cap \mem U} \mu(\sigma) \cdot \minProb( f(\sigma), A)
\end{align*}
\end{proof}

\section{Concurrency Lemmas}

\subsection{Invariant Sensitive Execution}

\begin{lemma}[Check Monotonicity]\label{lem:check-mono}
For any $U,V,W \subseteq \mathsf{Var}$ such that $U\cap V = \emptyset$, $\I \subseteq \mem U$, $\J \subseteq \mem V$, and $U\cup V \subseteq W$:
\[
  \mathsf{check}^{\I\sep\J} \lec^\bullet \mathsf{check}^\I
\]
\end{lemma}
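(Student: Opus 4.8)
The plan is to unfold the pointwise order $\lec^\bullet$ and reduce the statement to a purely set-theoretic containment. Recall from \Cref{sec:convex} that, thanks to up-closure, $S \lec T$ holds precisely when $S \supseteq T$. Hence it suffices to show that for every memory $\sigma \in \mem W$ we have $\mathsf{check}^{\I\sep\J}(\sigma) \supseteq \mathsf{check}^\I(\sigma)$, where both functions are regarded as maps $\mem W \to \C(\mem W)$ (this is legitimate since $U \subseteq U \cup V \subseteq W$).

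The one content-bearing step is a membership equivalence for the combined invariant: because $U \cap V = \emptyset$, a joined memory decomposes uniquely, so I would establish
\[
  \pi_{U\cup V}(\sigma) \in \I\sep\J
  \quad\text{iff}\quad
  \pi_U(\sigma)\in\I \ \text{and}\ \pi_V(\sigma)\in\J.
\]
The proof is just bookkeeping on projections: since $\pi_U(\pi_{U\cup V}(\sigma)) = \pi_U(\sigma)$ and $\pi_V(\pi_{U\cup V}(\sigma)) = \pi_V(\sigma)$, the two witnesses $\sigma_1 \in \I$, $\sigma_2 \in \J$ demanded by the definition $\I\sep\J = \{\sigma_1\uplus\sigma_2 \mid \sigma_1\in\I,\ \sigma_2\in\J\}$ are forced to be exactly $\pi_U(\sigma)$ and $\pi_V(\sigma)$.

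With this in hand I would finish by a short case analysis on whether $\sigma$ passes the $\I$-check. If $\pi_U(\sigma)\notin\I$, then $\mathsf{check}^\I(\sigma)=\bot_\C$; by the equivalence above $\pi_{U\cup V}(\sigma)\notin\I\sep\J$ as well, so $\mathsf{check}^{\I\sep\J}(\sigma)=\bot_\C$ too, and the containment is an equality. If instead $\pi_U(\sigma)\in\I$, then $\mathsf{check}^\I(\sigma)=\eta(\sigma)=\{\delta_\sigma\}$, whereas $\mathsf{check}^{\I\sep\J}(\sigma)$ is either $\{\delta_\sigma\}$ (when $\pi_V(\sigma)\in\J$) or $\bot_\C = \D(\mem W \cup \{\bot\})$ (when $\pi_V(\sigma)\notin\J$); in both subcases $\{\delta_\sigma\}$ is contained in the result, since $\bot_\C$ is the top element of the $\supseteq$-order and contains every distribution over $\mem W$.

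I do not expect a genuine obstacle: once the order is translated into reverse inclusion, the lemma follows mechanically. The only points requiring care are the projection/disjointness bookkeeping in the membership equivalence, and recalling that the value $\bot_\C$ of a failed check is the $\supseteq$-largest set—so strengthening $\I$ to $\I\sep\J$ can only \emph{add} (faulty) behaviors, exactly matching the monotonicity intuition already used for $\de{a}^\I_\act$ and $\lin^\I$.
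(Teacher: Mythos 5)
Your proof is correct and takes essentially the same route as the paper's: a pointwise case analysis on whether the projections of $\sigma$ lie in $\I$ and $\J$, using that $\bot_\C$ is the $\lec$-least (equivalently $\supseteq$-greatest) element so a failed combined check is below everything. The paper merely organizes the cases as ``some check fails'' versus ``both pass,'' while you split on the $\I$-membership first and make the $\I\sep\J$-membership equivalence and the translation of $\lec$ into reverse inclusion explicit; the content is identical.
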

\begin{proof}
Take any $\sigma \in \mem W$. If $\pi_U(\sigma) \notin \I$ or $\pi_V(\sigma) \notin \J$, then we get:
\[
  \mathsf{check}^{\I\sep\J}(\sigma) = \bot_\C \lec \mathsf{check}^\I(\sigma)
\]
So, we are done. If not, then $\pi_U(\sigma) \in \I$ and $\pi_V(\sigma) \in \J$, so we get:
\[
  \mathsf{check}^{\I\sep\J}(\sigma) = \eta(\sigma) = \mathsf{check}^\I(\sigma)
\]
\end{proof}

\begin{lemma}[Action Monotonicity]\label{lem:act-mono}
For any $U,V,W \subseteq \mathsf{Var}$ such that $U\cap V = \emptyset$, $\I \subseteq \mem U$, $\J \subseteq \mem V$, and $U\cup V \subseteq W$:
\[
  \de{a}^{\I \sep \J}_\act \lec^\bullet \de{a}^\I
\]
\end{lemma}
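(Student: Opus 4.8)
The plan is to exploit the fact that both sides factor through the \emph{same} four-stage Kleisli composition $(\mathsf{check})^\dagger \circ \de{a}_\act^\dagger \circ (\mathsf{replace})^\dagger \circ \mathsf{check}$ and to compare them stage by stage, using that Kleisli extension is monotone for the $\lec$ order. First I would record two routine facts about the convex powerset: (i) for a fixed $f$, the map $f^\dagger$ is $\lec$-monotone, i.e.\ $S \lec S'$ implies $f^\dagger(S) \lec f^\dagger(S')$, since $\lec$ is reverse inclusion and a smaller set of driving distributions yields a smaller set of Kleisli combinations; and (ii) Kleisli extension is monotone in its function argument, so $f \lec^\bullet g$ implies $f^\dagger \lec^\bullet g^\dagger$. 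I would also note that $\bot_\C$ is absorbing, $f^\dagger(\bot_\C) = \bot_\C$, because $\bot_\C = \D(\mem W \cup \{\bot\})$ already contains every distribution and $f_\bot(\bot) = \bot_\C$.

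The argument then splits on the value of the \emph{initial} check $\mathsf{check}^{\I\sep\J}(\sigma)$ for an arbitrary $\sigma \in \mem W$. If $\mathsf{check}^{\I\sep\J}(\sigma) = \bot_\C$ (that is, $\pi_U(\sigma)\notin\I$ or $\pi_V(\sigma)\notin\J$), then pushing $\bot_\C$ through the remaining three Kleisli stages leaves $\bot_\C$ by absorption, so $\de{a}^{\I\sep\J}_\act(\sigma) = \bot_\C$, which is the $\lec$-least element, and the inequality holds trivially. Otherwise $\pi_U(\sigma)\in\I$ and $\pi_V(\sigma)\in\J$, so $\mathsf{check}^\I(\sigma) = \eta(\sigma)$ as well, and by the monad law $f^\dagger \circ \eta = f$ both compositions reduce to comparing $\mathsf{replace}$ applied directly to $\sigma$.

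The crux is a \emph{guarded} monotonicity of $\mathsf{replace}$: under the hypothesis $\pi_V(\sigma)\in\J$ I claim $\mathsf{replace}^{\I\sep\J}(\sigma) \lec \mathsf{replace}^\I(\sigma)$. Here $\mathsf{replace}^\I(\sigma)$ is the convex union ($\bignd$) of the point masses $\pi_{W\setminus U}(\sigma)\uplus\tau$ over $\tau\in\I$, keeping the $V$-part $\pi_V(\sigma)$ fixed, while $\mathsf{replace}^{\I\sep\J}(\sigma)$ is the convex union of $\pi_{W\setminus(U\cup V)}(\sigma)\uplus\tau\uplus\upsilon$ over $\tau\in\I$ and $\upsilon\in\J$. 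Since $\pi_V(\sigma)\in\J$, taking $\upsilon = \pi_V(\sigma)$ shows every generator of $\mathsf{replace}^\I(\sigma)$ is also a generator of $\mathsf{replace}^{\I\sep\J}(\sigma)$; as $\bignd$ over a larger set of generators yields a larger (hence $\lec$-smaller) convex set, the claim follows. This is exactly the step that \emph{fails} without the guard---if $\pi_V(\sigma)\notin\J$ the fixed $V$-part $\pi_V(\sigma)$ is not a legal replacement---so the case split on the initial check is essential and is the main obstacle to a purely component-wise proof.

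Finally I would chain the stages. Applying the $\lec$-monotone map $\de{a}_\act^\dagger$ (fact (i)) to $\mathsf{replace}^{\I\sep\J}(\sigma)\lec\mathsf{replace}^\I(\sigma)$ preserves the order, and the terminal check is handled by combining the two facts with Check Monotonicity (\Cref{lem:check-mono}): writing $X = \de{a}_\act^\dagger(\mathsf{replace}^{\I\sep\J}(\sigma))$ and $Y = \de{a}_\act^\dagger(\mathsf{replace}^\I(\sigma))$ with $X\lec Y$, fact (ii) together with \Cref{lem:check-mono} gives $(\mathsf{check}^{\I\sep\J})^\dagger(X) \lec (\mathsf{check}^\I)^\dagger(X)$, and fact (i) gives $(\mathsf{check}^\I)^\dagger(X)\lec(\mathsf{check}^\I)^\dagger(Y)$; transitivity then yields $\de{a}^{\I\sep\J}_\act(\sigma)\lec\de{a}^\I(\sigma)$, as required.
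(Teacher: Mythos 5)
Your proof is correct and follows essentially the same route as the paper's: the same case split on the initial $\mathsf{check}^{\I\sep\J}$, the same generator-inclusion argument showing $\mathsf{replace}^{\I\sep\J}(\sigma) \lec \mathsf{replace}^\I(\sigma)$ under the guard $\pi_V(\sigma)\in\J$, and the same combination of \Cref{lem:check-mono} with monotonicity of Kleisli extension for the final check. The only cosmetic difference is that you compare the replace stages before pushing $\de{a}_\act^\dagger$ through, while the paper folds $\de{a}_\act$ into the $\bignd$ generators first; the underlying facts are identical.
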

\begin{proof}
Let $W' = W \setminus (U\cup V)$.
Take any $\sigma \in \mem W$. if $\pi_V(\sigma) \notin \J$ or $\pi_U(\sigma) \notin \I$, then:
\[
  \de{a}^{\I\sep\J}_\act(\sigma) = \bot_\C \lec \de{a}^\I(\sigma)
\]
So, we are done. If instead $\pi_V(\sigma) \in \J$ and $\pi_U(\sigma) \in \I$, then we get:
\begin{align*}
  \de{a}^{\I\sep\J}_\act(\sigma)
  &= (\mathsf{check}^{\I\sep\J})^\dagger\left( \de{a}_\act^\dagger \left( (\mathsf{replace}^{\I \sep \J})^\dagger(\mathsf{check}^{\I\sep\J}(\sigma) ) \right) \right)
  \\
  &= (\mathsf{check}^{\I\sep\J})^\dagger\left( \de{a}_\act^\dagger \left( \mathsf{replace}^{\I\sep\J}(\sigma) \right) \right)
  \\
  &= (\mathsf{check}^{\I\sep\J})^\dagger\left( \bignd_{\tau \in \I\sep \J} \de{a}_\act(\pi_{W'}(\sigma) \uplus \tau) \right)
  \intertext{Since $\pi_V(\sigma) \in \J$ and $\lec$ is $\supseteq$.}
  &\lec (\mathsf{check}^{\I\sep\J})^\dagger\left( \bignd_{\tau \in \I} \de{a}_\act(\pi_{W' \cup V}(\sigma) \uplus \tau) \right)
  \intertext{By \Cref{lem:check-mono} and monotonicity of Kleisli extension.}
  &\lec (\mathsf{check}^{\I})^\dagger\left( \bignd_{\tau \in \I} \de{a}_\act(\pi_{W' \cup V}(\sigma) \uplus \tau) \right)
  \\
  &= (\mathsf{check}^{\I})^\dagger\left( \de{a}_\act^\dagger\left( \mathsf{replace}^\I(\sigma) \right) \right)
  \intertext{Since $\pi_U(\sigma) \in \I$}
  &= (\mathsf{check}^{\I})^\dagger\left( \de{a}_\act^\dagger\left( (\mathsf{replace}^\I)^\dagger(\mathsf{check}^{\I}(\sigma)) \right) \right)
  \\
  &= \de{a}^\I_\act(\sigma)
\end{align*}
\end{proof}

\invMono*
\begin{proof}
For any pomset $\Alpha$, let $\Alpha^\I$ be the pomset obtained by replacing each action $a$ with the tuple $\tuple{a, \I}$. The evaluation function for these actions and order is as follows:
\[
  \de{\tuple{a, \I}}_\act(\sigma) \triangleq \de{a}^\I_\act(\sigma)
  \qquad
  \tuple{a, \I} \sqsubseteq_\act \tuple{a', \J}
  \quad\text{iff}\quad
  a=a'
  \quad\text{and}\quad
  \exists \J'.\ \I = \J \sep \J'
\]
So, clearly $\Alpha^{\I\sep\J} \lepom \Alpha^\I$, and $\lin^\I(\Alpha) = \lin(\Alpha^\I)$. Note that $\sqsubseteq_\act$ is not pointed or finitely proceeded, but this doesn't matter; as we will see in the following proof, we only need this order for monotonicity, not for the extension lemma \cite[Lemma 3.8]{zilberstein2025denotational}. 
We now complete the proof as follows:
\begin{align*}
  \lin^{\I\sep\J}(\Alpha)
  &= \sup_{\Alpha' \ll \Alpha} \linfin^{\I\sep\J}(\Alpha')
  \\
  &= \sup_{\Alpha' \ll \Alpha} \linfin(\Alpha'^{\I\sep\J})
  \intertext{By Lemma H.3 of \citet{zilberstein2025denotational} and \Cref{lem:act-mono}.}
  &\lec \sup_{\Alpha' \ll \Alpha} \linfin(\Alpha'^{\I})
  \\
  &= \sup_{\Alpha' \ll \Alpha} \linfin^\I(\Alpha')
  = \lin^\I(\Alpha)
\end{align*}
\end{proof}

\subsection{Parallel Composition}

\begin{lemma}\label{lem:minProb-linlpo}
For any pairwise disjoint $U_1,U_2,V \subseteq \mathsf{Var}$, $\alpha$ and $\beta$ such that $\free_\act(\alpha) \subseteq U_1 \cup V$, $\free_\test(\alpha)\subseteq U_1$, $\free_\act(\beta) \subseteq U_2\cup V$, and $\free_\test(\beta)\subseteq U_2$; $\I \subseteq \mem V$, $S \subseteq N_{\alpha\parallel_x\beta}$, $\psi_1,\psi_2 \in\Form$, $\sigma_1 \in \mem{U_1}$, $\sigma_2 \in \mem{U_2}$, $\tau \in \I$, $A \subseteq \mem{U_1}$, and $B\subseteq\mem{U_2}$:
\begin{align*}
  &\minProb(\linlpo^\I(\alpha\parallel_x\beta, \psi_1 \land \psi_2, S)(\sigma_1 \uplus \sigma_2 \uplus \tau), A\sep B \sep \I)
  \\
  &=
  \minProb(\linlpo^\I(\alpha, \psi_1, S \cap N_\alpha)(\sigma_1 \uplus \tau), A\sep \I) \cdot
    \minProb(\linlpo^\I(\beta, \psi_2, S \cap N_\beta)(\sigma_2 \uplus \tau), B\sep\I)
\end{align*}
\end{lemma}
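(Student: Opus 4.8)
The plan is to prove the identity by well-founded induction on the number of unprocessed nodes $|N_{\alpha \parallel_x \beta} \setminus S|$, which is finite (since $\linlpo$ acts on finite LPOFs) and strictly decreases at each recursive call. The structural heart of the argument is a decomposition of the ready set. Using that the formulae and tests of $\alpha$ mention only nodes of $N_\alpha$ (whose outcomes are recorded in $\psi_1$), and symmetrically for $\beta$ and $\psi_2$, and that these two formula-variable sets are disjoint, one gets $\psi_1\wedge\psi_2 \Rightarrow \varphi_\alpha(x)$ iff $\psi_1 \Rightarrow \varphi_\alpha(x)$ for $x\in N_\alpha$. Since in $\alpha\parallel_x\beta$ the fork root lies below every node, this yields, once the root is processed,
\[
  \nextt(\alpha\parallel_x\beta,\psi_1\wedge\psi_2,S) = \nextt(\alpha,\psi_1,S\cap N_\alpha)\;\uplus\;\nextt(\beta,\psi_2,S\cap N_\beta).
\]
The fork root itself is a no-op contributing only $\linlpo^\I(\cdots,S\cup\{\fork\})$, and since $\fork\notin N_\alpha\cup N_\beta$ the two right-hand factors are unchanged, so that case is discharged immediately by the induction hypothesis. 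In the base case both ready sets are empty, all three linearizations are Dirac masses, and the claim reduces to $[\sigma_1\in A][\sigma_2\in B][\tau\in\I] = [\sigma_1\in A][\tau\in\I]\cdot[\sigma_2\in B]$, true since $\tau\in\I$.

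For the inductive step I split the left-hand convex union over ready nodes using \Cref{lem:minProb-nd}, turning the left-hand side into a minimum over $x\in\nextt$ of the per-node quantities $\minProb(\linnode^\I(\alpha\parallel_x\beta,\psi_1\wedge\psi_2,S,x)(\sigma_1\uplus\sigma_2\uplus\tau),A\sep B\sep\I)$. Each ready node is a test or action of one thread, say $\alpha$ (the $\beta$ case is symmetric). For a test, $\de{\lambda_\alpha(x)}_\test$ depends only on $U_1$, so the path condition updates to $\psi_1'\wedge\psi_2$ with $\psi_1'=\psi_1\wedge\sem{x=\de{\lambda_\alpha(x)}_\test(\sigma_1)}$, and the hypothesis on $S\cup\{x\}$ factors the per-node quantity as the $\alpha$-contribution at $x$ times the full $\beta$ value.

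The action case is where the analytic lemmas and the key laundering fact enter. Expanding $\linnode$ and applying \Cref{lem:minProb-kleisli} to the Kleisli lift of $\de{\lambda_\alpha(x)}_\act^\I$, the per-node quantity becomes an infimum over $\mu\in\de{\lambda_\alpha(x)}_\act^\I(\sigma_1\uplus\sigma_2\uplus\tau)$ of $\sum_{\sigma'}\mu(\sigma')\cdot\minProb(\linlpo^\I(\alpha\parallel_x\beta,\psi_1\wedge\psi_2,S\cup\{x\})(\sigma'),A\sep B\sep\I)$, to which the hypothesis applies pointwise. Because $\free_\act(\lambda_\alpha(x))\subseteq U_1\cup V$, a routine frame property of the invariant-sensitive action semantics leaves the $U_2$-component $\sigma_2$ untouched, so each post-state decomposes as $\sigma'=\kappa\uplus\sigma_2$ with $\pi_{U_1\cup V}(\sigma')=\kappa$, while the final check forces $\pi_V(\sigma')\in\I$. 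The resulting $\beta$-factor $\minProb(\linlpo^\I(\beta,\psi_2,S\cap N_\beta)(\sigma_2\uplus\pi_V(\sigma')),B\sep\I)$ I claim is independent of $\pi_V(\sigma')$ and equal to the value computed from $\tau$. This is the crux and the main obstacle, being precisely the statement that the scheduler cannot use shared state to correlate the threads. I would establish it as a separate \emph{shared-state irrelevance} claim by its own induction: $\beta$'s tests never read $V$, and the first action of $\beta$ begins with $\mathsf{replace}^\I$, which overwrites the incoming shared value by an arbitrary element of $\I$ before the action runs; hence $\minProb(\linlpo^\I(\beta,\cdots)(\sigma_2\uplus\tau'),B\sep\I)$ is the same for every $\tau'\in\I$ (the only $\tau'$-dependence, in the action-free terminal case, is the membership $\tau'\in\I$, which holds by hypothesis).

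Granting this, the $\beta$-factor is a constant $q_\beta$ that pulls out of the infimum, and the frame property with the bijection $\sigma'\mapsto\pi_{U_1\cup V}(\sigma')$ identifies the remaining infimum with the $\alpha$-contribution at $x$ computed from $\sigma_1\uplus\tau$. Collecting cases, each ready $\alpha$-node contributes $(\alpha\text{-contribution at }x)\cdot q_\beta$ and each ready $\beta$-node contributes $q_\alpha\cdot(\beta\text{-contribution at }x)$. Taking the minimum over $\nextt$ and using \Cref{lem:minProb-nd} once more to recognize $\min_{x}(\alpha\text{-contribution})=q_\alpha$ and $\min_{x}(\beta\text{-contribution})=q_\beta$, the left-hand side collapses to $\min(q_\alpha q_\beta,q_\alpha q_\beta)=q_\alpha\cdot q_\beta$, exactly the right-hand product; the degenerate cases where one ready set is empty are handled by reading the corresponding $q$ off the Dirac base case. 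I expect the shared-state irrelevance claim, together with the careful bookkeeping of projections and the final invariant check, to be the only genuinely delicate points, the rest being a bookkeeping-heavy but routine structural induction.
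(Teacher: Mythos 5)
Your proposal is correct and follows essentially the same route as the paper's proof: induction on the unprocessed nodes, per-ready-node factoring (Kleisli expansion via \Cref{lem:minProb-kleisli} for actions, locality for tests), the crucial observation that the $\beta$-factor is unchanged when $\tau$ is replaced by any $\tau'\in\I$ (the paper asserts this in one sentence with exactly your justification via $\mathsf{replace}^\I$ and the locality of tests), and the final collapse of the minimum via \Cref{lem:minProb-nd}. The only slip is your claim that every ready node is a test or an action of one thread: nodes of $\alpha$ and $\beta$ may also carry $\bullet$ or $\bot$ labels, but those cases are trivial (the former discharged by the induction hypothesis, the latter making both sides $0$), so nothing essential is lost.
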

\begin{proof}
The proof is by induction on the size of $\nextt^\star(\alpha\parallel_x\beta, \psi_1\land \psi_2, S)$. If the set is empty, then we have:
\begin{align*}
  &\minProb(\linlpo^\I(\alpha\parallel_x\beta, \psi_1 \land \psi_2, S)(\sigma_2 \uplus_2 \uplus \tau), A\sep B\sep \I)
  \\
  &= \minProb( \eta(\sigma_1 \uplus \sigma_2\uplus\tau), A\sep B \sep \I )
  \\
  &= \sum_{\sigma_1' \in A} \sum_{\sigma_2'\in B} \sum_{\tau' \in \I} \delta_{\sigma_1 \uplus\sigma_2\uplus\tau}(\sigma_1' \uplus\sigma_2' \uplus \tau')
  \intertext{Since we already know that $\tau \in \I$:}
  &= \sum_{\sigma_1' \in A} \sum_{\sigma_2'\in B} \sum_{\tau' \in \I} \delta_{\sigma_1}(\sigma_1') \cdot \delta_{\sigma_2}(\sigma_2') \cdot \delta_\tau(\tau')
    \\
  &= \left( \sum_{\sigma_1' \in A} \delta_{\sigma_1}(\sigma_1') \right) \cdot \left( \sum_{\sigma_2'\in B}  \delta_{\sigma_2}(\sigma_2') \right)
    \\
  &= \left( \sum_{\sigma' \in A} \sum_{\tau' \in \I} \delta_{\sigma_1\uplus\tau}(\sigma_1' \uplus \tau') \right) \cdot \left( \sum_{\tau'\in B} \sum_{\tau' \in \I} \delta_{\sigma_2\uplus\tau}(\sigma_2' \uplus \tau') \right)
    \\
  &= \minProb(\eta(\sigma_1 \uplus\tau), A\sep\I) \cdot \minProb(\eta(\sigma_2\uplus\tau), B\sep \I)
    \\
  &= \minProb(\linlpo^\I(\alpha, \psi_1, S\cap N_\alpha)(\sigma_1 \uplus\tau), A\sep\I) \cdot \minProb(\linlpo^\I(\beta, \psi_2, S\cap N_\beta)(\sigma_2 \uplus\tau, B\sep\I)
\end{align*}
Now suppose that the set of next elements is not empty. If $\nextt(\alpha\parallel_x\beta, \psi_1 \land \psi_2, S) = \{x\}$, then we know that $\lambda_{\alpha\parallel_x\beta}(x) = \fork$, and so:
\begin{align*}
  &\minProb(\linlpo^\I(\alpha \parallel_x \beta, \psi_1 \land \psi_2, S)(\sigma_1\uplus\sigma_2\uplus\tau), A\sep B\sep \I)
  \\
  &= \minProb(\linlpo^\I(\alpha \parallel_x \beta, \psi_1 \land \psi_2, S \cup\{x\})(\sigma_1\uplus\sigma_2\uplus\tau), A\sep B\sep\I)
  \intertext{By the induction hypothesis:}
  &= \minProb(\linlpo^\I(\alpha, \psi_1, S \cap N_\alpha)(\sigma_1\uplus\tau), A\sep\I)
      \cdot \minProb(\linlpo^\I(\beta, \psi_2, S \cap N_\beta)(\sigma_2\uplus\tau), B\sep\I)
\end{align*}
If not, then $x$ has already been scheduled and so $\nextt(\alpha\parallel_x\beta, \psi_1\land\psi_2, S) = \nextt(\alpha, \psi_1, S\cap N_\alpha) \cup \nextt(\beta, \psi_2, S\cap N_\beta)$.
Take any $y\in N_\alpha$, we will now show that:
\begin{align*}
  &\minProb\left(\linnode^\I(\alpha\parallel_x \beta, \psi_1 \land\psi_2, S, y)(\sigma_1 \uplus\sigma_2\uplus\tau), A\sep B\sep\I\right)
  \\
  &= \minProb\left(\linnode^\I(\alpha, \psi_1, S\cap N_\alpha, y)(\sigma_1 \uplus\tau), A\sep\I\right) \cdot \minProb\left(\linlpo^\I(\beta,\psi_2, S\cap N_\beta)(\sigma_2 \uplus \tau), B\sep\I\right)
\end{align*}
There are four cases:
\begin{enumerate}
\item $\lambda_{\alpha\parallel_x \beta}(y) = \lambda_\alpha(y) \in \act$. Let $a = \lambda_{\alpha\parallel_x \beta}(y)$, so:
{\footnotesize\begin{align*}
  &\minProb(\linnode^\I(\alpha\parallel_x\beta,\psi_1\land\psi_2,S, y)(\sigma_1 \uplus\sigma_2 \uplus\tau), A\sep B\sep\I)
  \\
  &= \minProb(\linlpo^\I(\alpha\parallel_x \beta, \psi_1\land\psi_2, S\cup \{y\})^\dagger( \de{a}_\act^\I(\sigma_1 \uplus\sigma_2\uplus\tau) ), A\sep B\sep\I)
  \intertext{By \Cref{lem:minProb-kleisli}.}
  &= \inf_{\mu \in \de{a}_\act^\I(\sigma_1 \uplus\sigma_2\uplus\tau)} \sum_{\rho \in \supp(\mu) \cap\mem{U_1 \cup V}} \mu(\rho) \cdot 
    \minProb(\linlpo^\I(\alpha\parallel_x \beta, \psi_1\land\psi_2, S\cup \{y\})(\rho), A\sep B\sep\I)
  \intertext{Since $\free(\alpha) \cap \dom(\sigma_2) = \emptyset$.}
  &= \inf_{\mu \in \de{a}_\act^\I(\sigma_1 \uplus\tau)} \sum_{\rho \in \supp(\mu) \cap\mem{U_1 \cup V}} \mu(\rho) \cdot 
    \minProb(\linlpo^\I(\alpha\parallel_x \beta, \psi_1\land\psi_2, S\cup \{y\})(\pi_{U_1}(\rho) \uplus \sigma_2 \uplus \pi_V(\rho)), A\sep B\sep\I)
  \intertext{Note that by the definition of invariant sensitive execution, either $\pi_V(\rho) \subseteq \I$, or $\de{a}_\act^\I(\sigma_1 \uplus\tau) = \bot_\C$. In the former case, we can apply the induction hypothesis. In the latter case, the entire expression must be zero, and therefore so is $\minProb(\linnode^\I(\alpha, \psi_1, S, y)(\sigma_1 \uplus\tau), A\sep\I)$, so the claim holds trivially. So, we presume the former is true and apply the induction hypothesis to get:}
  &= \inf_{\mu \in \de{a}_\act^\I(\sigma_1 \uplus\tau)} \sum_{\rho \in \supp(\mu)} \mu(\rho) \cdot 
    \minProb(\linlpo^\I(\alpha, \psi_1, S\cap N_\alpha \cup \{y\})(\rho), A\sep \I)
    \cdot \minProb(\linlpo^\I(\beta, \psi_2, S\cap N_\beta)(\sigma_2 \sep \pi_V(\rho)), B\sep \I)
  \intertext{Note that $\linlpo^\I(\beta, \psi_2, S\cap N_\beta)(\sigma_2 \uplus \tau) = \linlpo^\I(\beta, \psi_2, S\cap N_\beta)(\sigma_2 \uplus \tau')$ for any $\tau' \in \I$, since the invariant sensitive execution reassigns the $V$ variables at each step and tests only depend on local state.}
  &= \left(\inf_{\mu \in \de{a}_\act^\I(\sigma_1 \uplus\tau)} \sum_{\rho \in \supp(\mu)} \mu(\rho) \cdot 
    \minProb(\linlpo^\I(\alpha, \psi_1, S\cap N_\alpha \cup \{y\})(\rho), A\sep\I)\right)
    \cdot \minProb(\linlpo^\I(\beta, \psi_2, S\cap N_\beta)(\sigma_2 \uplus\tau, B\sep\I)
  \\
  &= \left( \minProb(\linlpo^\I(\alpha, \psi_1, S\cap N_\alpha \cup \{y\})^\dagger(\de{a}_\act^\I(\sigma_1 \uplus\tau)), A\sep\I)\right)
    \cdot \minProb(\linlpo^\I(\beta, \psi_2, S\cap N_\beta)(\sigma_2\uplus\tau), B\sep\I)
  \\
  &= \minProb( \linnode(\alpha,\psi_1, S\cap N_\alpha, y)(\sigma_1 \uplus\tau), A\sep\I)
    \cdot \minProb(\linlpo(\beta, \psi, S\cap N_\beta)(\sigma_2\uplus\tau), B\sep\I)
\end{align*}}

\item $\lambda_{\alpha\parallel_x\beta}(y) = \lambda_\alpha(y) \in\test$. Let $b = \lambda_{\alpha\parallel_x\beta}(y)$, and we therefore have:
\begin{align*}
  &\minProb(\linnode^\I(\alpha\parallel_x\beta,\psi_1\land\psi_2,S, y)(\sigma_1 \uplus \sigma_2 \uplus\tau), A\sep B\sep\I)
  \\
  &= \minProb(\linlpo^\I(\alpha\parallel_x \beta, \psi_1\land\psi_2 \land\sem{y = \de{b}_\test(\sigma_1 \uplus\sigma_2\uplus\tau)}, S\cup \{y\})(\sigma_1\uplus\sigma_2\uplus\tau), A\sep B\sep\I)
  \intertext{By the induction hypothesis, and the fact that $\free(b) \subseteq \free_\test(\alpha) \subseteq U_1$ (therefore it does not depend on $\sigma_2 \in \mem{U_2}$ and $\tau \in \mem V$.}
  &= \minProb(\linlpo^\I(\alpha, \psi_1\land\sem{y = \de{b}(\sigma_1)}, S \cap N_\alpha \cup \{y\})(\sigma_1 \uplus \tau), A\sep\I)
    \\ & \quad\cdot \minProb( \linlpo(\beta, \psi_2, S\cap N_\beta)(\sigma_2\uplus\tau), B\sep\I)
  \\
  &= \minProb(\linnode^\I(\alpha, \psi_1, S \cap N_\alpha, y)(\sigma_1 \uplus\sigma_2 \uplus\tau), A\sep\I)
    \cdot \minProb( \linlpo^\I(\beta, \psi_2, S\cap N_\beta)(\sigma_1\uplus\sigma_2\uplus\tau), B\sep\I)
\end{align*}

\item $\lambda_{\alpha\parallel_x\beta}(y) = \lambda_\alpha(y) = \bot$.
\begin{align*}
  &\minProb(\linnode^\I(\alpha\parallel_x\beta,\psi_1\land\psi_2,S, y)(\sigma_1 \uplus\sigma_2 \uplus\tau), A\sep B\sep\I)
  \\
  &= \minProb(\bot_\C, A\sep B)
  \\
  &= 0
  \\
  &= 0 \cdot \minProb( \linlpo^\I(\beta, \psi_2, S\cap N_\beta)(\sigma_2\uplus\tau), B\sep\I)
  \\
  &= \minProb(\linnode^\I(\alpha, \psi_1, S \cap N_\alpha,y)(\sigma_1\uplus\tau), A\sep\I)
    \cdot \minProb( \linlpo^\I(\beta, \psi_2, S\cap N_\beta)(\sigma_2\uplus\tau), B\sep\I)
\end{align*}

\item $\lambda_{\alpha\parallel_x\beta}(y) = \lambda_\alpha(y) = \fork$.
\begin{align*}
  &\minProb(\linnode^\I(\alpha\parallel_x\beta,\psi_1\land\psi_2,S,y)(\sigma_1\uplus\sigma_2\uplus\tau), A\sep B\sep\I)
  \\
  &= \minProb(\linlpo^\I(\alpha\parallel_x\beta, \psi_1\land\psi_2, S\cup\{y\})(\sigma_1\uplus\sigma_2\uplus\tau), A\sep B\sep\I)
  \intertext{By the induction hypothesis.}
  &= \minProb(\linlpo^\I(\alpha, \psi_1, S\cap N_\alpha\cup\{y\})(\sigma_1\uplus\tau), A\sep\I)
    \cdot \minProb( \linlpo^\I(\beta, \psi_2, S\cap N_\beta)(\sigma_2\uplus\tau), B\sep\I)
  \\
  &= \minProb(\linnode^\I(\alpha, \psi_1, S \cap N_\alpha,y)(\sigma_1\uplus\tau), A\sep\I)
    \cdot \minProb( \linlpo^\I(\beta, \psi_2, S\cap N_\beta)(\sigma_1\uplus\tau), B\sep\I)
\end{align*}

\end{enumerate}
By a nearly identical argument, we also get that for any $y\in N_\beta$:
\begin{align*}
  &\minProb(\linnode^\I(\alpha\parallel_x \beta, \psi_1 \land\psi_2, S,y)(\sigma_1\uplus\sigma_2\uplus\tau), A\sep B\sep\I)
  \\&= \minProb(\linlpo^\I(\alpha, \psi_1, S\cap N_\alpha)(\sigma_1\uplus\tau), A) \cdot \minProb(\linnode^\I(\beta,\psi_2, S\cap N_\beta,y)(\sigma_2\uplus\tau), B\sep\I)
\end{align*}
We now complete the proof as follows:
\begin{align*}
  &\minProb(\linlpo^\I(\alpha\parallel_x\beta, \psi_1\land\psi_2, S)(\sigma_1\uplus\sigma_2 \uplus \tau), A\sep B\sep\I)
  \\
  &= \minProb\left(
      \bignd_{y \in \nextt(\alpha\parallel_x\beta, \psi_1\land\psi_2, S)} \linnode^\I(\alpha\parallel_x\beta, \psi_1\land\psi_2, S,y)(\sigma_1\uplus\sigma_2\uplus\tau) , A\sep B\sep\I \right)
  \intertext{By \Cref{lem:minProb-nd}}
  &= {\min\limits_{y \in \nextt(\alpha\parallel_x\beta, \psi_1\land\psi_2, S)}} \minProb\left(\linnode^\I(\alpha\parallel_x\beta, \psi_1\land\psi_2, S,y)(\sigma_1\uplus\sigma_2\uplus\tau), A\sep B\sep\I \right)
  \\
  &= \min\Big( \min_{y \in \nextt(\alpha, \psi_1, S\cap N_\alpha)} \minProb(\linnode^\I(\alpha\parallel_x\beta, \psi_1\land\psi_2, S,y)(\sigma_1\uplus\sigma_2\uplus\tau), A\sep B\sep\I ),
    \\&\qquad
    \min_{y \in \nextt(\beta, \psi_2, S\cap N_\beta)} \minProb(\linnode^\I(\alpha\parallel_x\beta, \psi_1\land\psi_2, S, y)(\sigma_1\uplus\sigma_2\uplus\tau), A\sep B\sep\I ) \Big)
  \\
  &= \min\Big( \min_{y \in \nextt(\alpha, \psi_1, S\cap N_\alpha)}
      \minProb(\linnode^\I(\alpha, \psi_1, S\cap N_\alpha, y)(\sigma_1\uplus\tau), A\sep\I ) \cdot \minProb(\linlpo^\I(\beta, \psi_2, S\cap N_\beta)(\sigma_2\uplus\tau), B\sep\I),
    \\&\qquad
    \min_{y \in \nextt(\beta, \psi_2, S\cap N_\beta)}
        \minProb(\linlpo^\I(\alpha, \psi_1, S\cap N_\alpha)(\sigma_1\uplus\tau), A\sep\I ) \cdot \minProb(\linnode^\I(\beta, \psi_2, S\cap N_\beta, y)(\sigma_2\uplus\tau), B\sep\I) \Big)
  \\
  &= \min\Big( \left(\min_{y \in \nextt(\alpha, \psi_1, S\cap N_\alpha)}
      \minProb(\linnode^\I(\alpha, \psi_1, S\cap N_\alpha, y)(\sigma_1\uplus\tau), A\sep\I )\right) \cdot \minProb(\linlpo^\I(\beta, \psi_2, S\cap N_\beta)(\sigma_2\uplus\tau), B\sep\I),
    \\&\qquad
    \minProb(\linlpo^\I(\alpha, \psi_1, S\cap N_\alpha)(\sigma_1\uplus\tau), A\sep\I ) \cdot \left(\min_{y \in \nextt(\beta, \psi_2, S\cap N_\beta)}
         \minProb(\linnode^\I(\beta, \psi_2, S\cap N_\beta, y)(\sigma_2\uplus\tau), B\sep\I)\right) \Big)
  \\
  &= \min\Big(  \minProb \left(\bignd_{y \in \nextt(\alpha, \psi_1, S\cap N_\alpha)}
      \linnode^\I(\alpha, \psi_1, S\cap N_\alpha, y)(\sigma_1\uplus\tau), A\sep\I )\right) \cdot \minProb(\linlpo^\I(\beta, \psi_2, S\cap N_\beta)(\sigma_2\uplus\tau), B\sep\I),
    \\&\qquad
    \minProb(\linlpo^\I(\alpha, \psi_1, S\cap N_\alpha)(\sigma_1\uplus\tau), A\sep\I ) \cdot \minProb\left(\bignd_{y \in \nextt(\beta, \psi_2, S\cap N_\beta)}
         \linnode^\I(\beta, \psi_2, S\cap N_\beta, y)(\sigma_2\uplus\tau), B\sep\I\right) \Big)
  \\
  &= \min\Big(  \minProb(\linlpo^\I(\alpha, \psi_1, S\cap N_\alpha)(\sigma_1\uplus\tau), A\sep\I ) \cdot \minProb(\linlpo^\I(\beta, \psi_2, S\cap N_\beta)(\sigma_2\uplus\tau), B\sep\I),
    \\&\qquad
    \minProb(\linlpo^\I(\alpha, \psi_1, S\cap N_\alpha)(\sigma_1\uplus\tau), A\sep\I ) \cdot \minProb(\linlpo^\I(\beta, \psi_2, S\cap N_\beta)(\sigma_2\uplus\tau), B\sep\I)
    \Big)
  \\
  &= \minProb(\linlpo^\I(\alpha, \psi_1, S\cap N_\alpha)(\sigma_1\uplus\tau), A\sep\I ) \cdot \minProb(\linlpo^\I(\beta, \psi_2, S\cap N_\beta)(\sigma_2\uplus\tau), B\sep\I)
\end{align*}

\end{proof}

\begin{lemma}\label{lem:minProb-ind}
For any pairwise disjoint $U_1,U_2,V \subseteq \mathsf{Var}$ and $\Alpha,\Beta\in\pom_\fin$ such that $\free_\act(\Alpha) \subseteq U_1 \cup V$, $\free_\test(\Alpha)\subseteq U_1$, $\free_\act(\Beta) \subseteq U_2\cup V$, and $\free_\test(\Beta)\subseteq U_2$; $\I \subseteq \mem V$, $\sigma_1 \in \mem{U_1}$, $\sigma_2 \in \mem{U_2}$, $\tau \in \I$, $A \subseteq \mem{U_1}$, and $B\subseteq\mem{U_2}$:
\[
  \minProb\left(\linfin^\I(\Alpha \parallel \Beta)(\sigma_1\uplus\sigma_2\uplus \tau), A\sep B\sep\I\right)
  =
  \minProb\left(\linfin^\I(\Alpha)(\sigma_1\uplus\tau), A\sep\I\right) \cdot
  \minProb\left(\linfin^\I(\Beta)(\sigma_2\uplus\tau), B\sep\I\right)
\]
\end{lemma}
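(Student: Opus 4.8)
The plan is to obtain this statement as an immediate specialization of the preceding \Cref{lem:minProb-linlpo}, which already establishes the multiplicative factorization of $\minProb$ over parallel composition at the level of LPOFs, with arbitrary path conditions $\psi_1,\psi_2$ and an arbitrary processed set $S$. The present lemma is simply the ``top-level'' instance of that more general statement, rephrased for finite pomsets $\Alpha,\Beta$ rather than for their representative LPOFs, with the path conditions set to $\tru$ and the processed set empty.

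First I would pick representative LPOFs $\alpha\in\Alpha$ and $\beta\in\Beta$ whose node sets $N_\alpha$ and $N_\beta$ are disjoint (possible since the node universe $\Nodes$ is infinite), together with a fresh fork node $x\notin N_\alpha\cup N_\beta$, so that $\alpha\parallel_x\beta$ is a representative of $\Alpha\parallel\Beta$. Since the labelling functions of isomorphic LPOFs coincide, the hypotheses on $\free_\act$ and $\free_\test$ stated for the pomsets $\Alpha,\Beta$ transfer verbatim to $\alpha,\beta$, so the side conditions of \Cref{lem:minProb-linlpo} are met. Unfolding the definition of finite linearization then gives
\[
  \linfin^\I(\Alpha \parallel \Beta) = \linlpo^\I(\alpha \parallel_x \beta, \tru, \emptyset),
  \quad
  \linfin^\I(\Alpha) = \linlpo^\I(\alpha, \tru, \emptyset),
  \quad
  \linfin^\I(\Beta) = \linlpo^\I(\beta, \tru, \emptyset).
\]

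I would then instantiate \Cref{lem:minProb-linlpo} with $\psi_1=\psi_2=\tru$ and $S=\emptyset$. Because $\tru\land\tru=\tru$ and $\emptyset\cap N_\alpha=\emptyset\cap N_\beta=\emptyset$, the right-hand side of that lemma collapses exactly to the product of the two single-thread $\minProb$ terms appearing in the present goal, while the left-hand side matches $\minProb(\linfin^\I(\Alpha\parallel\Beta)(\sigma_1\uplus\sigma_2\uplus\tau),\,A\sep B\sep\I)$ after the unfoldings above. This yields the claimed equality directly.

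All of the genuine content therefore resides in \Cref{lem:minProb-linlpo}, and here there is no real obstacle beyond bookkeeping. The single point requiring care is confirming that the parallel-composition combinator on pomsets is indeed represented at the LPOF level by $\alpha\parallel_x\beta$ with a fresh fork root, so that the disjointness and freshness assumptions needed to form $\alpha\parallel_x\beta$ and to justify $N_{\alpha\parallel_x\beta}\setminus\{x\}=N_\alpha\uplus N_\beta$ actually hold. Once that identification is fixed, invariance of the free-variable sets under isomorphism guarantees that the choice of representatives affects neither side, consistent with the fact that $\linfin^\I$ is well defined on pomsets.
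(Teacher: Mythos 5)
Your proposal is correct and follows exactly the paper's own proof: both pick representative LPOFs $\alpha\in\Alpha$, $\beta\in\Beta$ with a fresh fork node $x$ so that $\Alpha\parallel\Beta=[\alpha\parallel_x\beta]$, unfold $\linfin^\I$ as $\linlpo^\I(-,\tru,\emptyset)$, and instantiate \Cref{lem:minProb-linlpo} with $\psi_1=\psi_2=\tru$ and $S=\emptyset$. Your additional remarks about node-set disjointness and invariance under isomorphism are just the bookkeeping the paper leaves implicit.
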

\begin{proof}
Fix any $\alpha \in \Alpha$, $\beta\in\Beta$, and $x \notin N_\alpha \cup N_\beta$. This give us $\Alpha \parallel \Beta = [ \alpha \parallel_x \beta ]$. So, we get:
\begin{align*}
  &\minProb\left(\linfin^\I(\Alpha \parallel \Beta)(\sigma_1\uplus\sigma_2\uplus \tau), A\sep B\sep\I\right)
  \\
  &= \minProb\left(\linfin^\I([ \alpha \parallel_x \beta])(\sigma_1\uplus\sigma_2\uplus \tau), A\sep B\sep\I\right)
  \\
  &= \minProb\left(\linlpo^\I(\alpha \parallel_x \beta, \tru, \emptyset)(\sigma_1\uplus\sigma_2\uplus \tau), A\sep B\sep\I\right)
  \intertext{By \Cref{lem:minProb-linlpo}.}
  &= \minProb\left(\linlpo^\I(\alpha, \tru, \emptyset)(\sigma_1\uplus\tau), A\sep\I\right) \cdot
        \minProb\left(\linlpo^\I(\beta, \tru, \emptyset)(\sigma_2\uplus\tau), B\sep\I\right)
  \\
  &= \minProb\left(\linfin^\I([\alpha])(\sigma_1\uplus\tau), A\sep\I\right) \cdot
          \minProb\left(\linlpo^\I([\beta])(\sigma_2\uplus\tau), B\sep\I\right)
  \\
  &= \minProb\left(\linfin^\I(\Alpha)(\sigma_1\uplus\tau), A\sep\I\right) \cdot \minProb\left(\linlpo^\I(\Beta)(\sigma_2\uplus\tau), B\sep\I\right)
\end{align*}

\end{proof}

\begin{lemma}\label{lem:par-ind}
For any pairwise disjoint $U_1,U_2,V \subseteq \mathsf{Var}$ and $\Alpha,\Beta\in\pom$ such that $\free_\act(\Alpha) \subseteq U_1 \cup V$, $\free_\test(\Alpha)\subseteq U_1$, $\free_\act(\Beta) \subseteq U_2\cup V$, and $\free_\test(\Beta)\subseteq U_2$; $\I \subseteq \mem V$, $\sigma_1 \in \mem{U_1}$, $\sigma_2 \in \mem{U_2}$, $\tau \in \I$, $A \subseteq \mem{U_1}$, and $B\subseteq\mem{U_2}$, if there exists $p$ and $q$ such that:
\[
  \forall \nu_1 \in \lin^\I(\Alpha)(\sigma_1\uplus\tau). \ \nu_1(B_1\sep \I) = p
  \qquad\text{and}\qquad
  \forall \nu_2 \in \lin^\I(\Beta)(\sigma_2\uplus\tau). \ \nu_2(B_2\sep \I) = q
\]
Then:
\[
  \forall \nu \in \lin^\I(\Alpha \parallel \Beta)(\sigma_1 \uplus\sigma_2 \uplus \tau). \ \nu(B_1 \sep B_2\sep \I) \ge p \cdot q
\]
\end{lemma}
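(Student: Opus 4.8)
The plan is to lift the finite-approximation identity of \Cref{lem:minProb-ind} to the limit, using the characterization $\lin^\I(\GGamma) = \sup_{\GGamma' \ll \GGamma} \linfin^\I(\GGamma')$ together with the continuity properties of $\minProb$. (Writing $A$ for $B_1$ and $B$ for $B_2$, as declared.) First I would restate everything in terms of $\minProb$. Since each of the sets $\lin^\I(\Alpha \parallel \Beta)(\sigma_1 \uplus \sigma_2 \uplus \tau)$, $\lin^\I(\Alpha)(\sigma_1 \uplus \tau)$, and $\lin^\I(\Beta)(\sigma_2 \uplus \tau)$ is nonempty, the two hypotheses say exactly that
\[
  \minProb(\lin^\I(\Alpha)(\sigma_1 \uplus \tau), A \sep \I) = p
  \quad\text{and}\quad
  \minProb(\lin^\I(\Beta)(\sigma_2 \uplus \tau), B \sep \I) = q,
\]
and the goal reduces to $\minProb(\lin^\I(\Alpha \parallel \Beta)(\sigma_1 \uplus \sigma_2 \uplus \tau), A \sep B \sep \I) \ge pq$, because $\minProb$ is the infimum over the set and an infimum bounded below by $pq$ means every member is at least $pq$.

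Next I would apply \Cref{lem:minProb-lb} to the directed set $D = \{\linfin^\I(\GGamma)(\sigma_1 \uplus \sigma_2 \uplus \tau) \mid \GGamma \ll \Alpha \parallel \Beta\}$, whose supremum is $\lin^\I(\Alpha \parallel \Beta)(\sigma_1 \uplus \sigma_2 \uplus \tau)$, obtaining
\[
  \minProb(\lin^\I(\Alpha \parallel \Beta)(\sigma_1 \uplus \sigma_2 \uplus \tau), A \sep B \sep \I)
  \ge
  \sup_{\GGamma \ll \Alpha \parallel \Beta} \minProb(\linfin^\I(\GGamma)(\sigma_1 \uplus \sigma_2 \uplus \tau), A \sep B \sep \I).
\]
It then suffices to bound the right-hand side below by $pq$. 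By the pseudo-continuity of parallel composition, the finite products $\{\Alpha' \parallel \Beta' \mid \Alpha' \ll_1 \Alpha,\ \Beta' \ll_1 \Beta\}$ form a directed family of finite approximations of $\Alpha \parallel \Beta$; restricting the supremum to this subfamily only decreases it, and on each product \Cref{lem:minProb-ind} applies (the free-variable side conditions are inherited from $\Alpha$ and $\Beta$ under $\lepom$), giving
\[
  \sup_{\GGamma \ll \Alpha \parallel \Beta} \minProb(\cdots)
  \ge
  \sup_{\substack{\Alpha' \ll_1 \Alpha \\ \Beta' \ll_1 \Beta}}
    \minProb(\linfin^\I(\Alpha')(\sigma_1 \uplus \tau), A \sep \I)
    \cdot
    \minProb(\linfin^\I(\Beta')(\sigma_2 \uplus \tau), B \sep \I).
\]

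Finally, both factors are nonnegative and monotone in their respective approximation (by monotonicity of $\linfin^\I$ and of $\minProb$, \Cref{lem:minProb-mono}), so the supremum over the product directed family factors as a product of suprema. Each factor is then evaluated by Scott continuity of $\minProb$ (\Cref{lem:minProb-cont}): its precondition—that every distribution in the limit $\lin^\I(\Alpha)(\sigma_1 \uplus \tau)$, respectively $\lin^\I(\Beta)(\sigma_2 \uplus \tau)$, assigns probability exactly $p$, respectively $q$—is precisely the hypothesis, and the $\ll_1$-approximations are cofinal among all $\ll$-approximations, so the two suprema equal $p$ and $q$. Multiplying yields $pq$ and closes the chain of inequalities.

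The main obstacle is the bookkeeping around the two approximation orders $\ll$ and $\ll_1$. Because parallel composition is \emph{not} Scott continuous for the ordinary way-below relation, one cannot write $\lin^\I(\Alpha \parallel \Beta)$ as a supremum over products $\Alpha' \parallel \Beta'$ and must instead apply \Cref{lem:minProb-ind} only to the $\ll_1$-cofinal subfamily, absorbing the discrepancy through the one-sided bound of \Cref{lem:minProb-lb}. The fact that only an inequality $\ge pq$ can be obtained is essential rather than an artifact: the scheduler may correlate the two threads through the shared invariant $\I$, so some $\nu$ over-counts the joint event, which is exactly why Scott continuity of $\minProb$ fails for the combined system and the lower-bound lemma must stand in for it.
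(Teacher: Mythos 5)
Your proof is correct and takes essentially the same route as the paper's: reduce the goal to a bound on $\minProb$, pass to finite approximations of $\Alpha \parallel \Beta$ via the pseudo-continuity of parallel composition under $\ll_1$, factor each finite product with \Cref{lem:minProb-ind}, and recover $p$ and $q$ on each thread by Scott continuity (\Cref{lem:minProb-cont}). The one local difference---invoking the one-sided bound of \Cref{lem:minProb-lb} for the composed system where the paper appeals to \Cref{lem:minProb-cont}---is if anything the more careful choice, since the precondition of \Cref{lem:minProb-cont} (all distributions in the limit agreeing on the event) is guaranteed by hypothesis only for the individual threads, not for $\Alpha \parallel \Beta$.
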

\begin{proof}
Take any $\nu \in \lin^\I(\Alpha\parallel\Beta)(\sigma_1\uplus\sigma_2\uplus\tau)$. We have:
\begin{align*}
  \nu(B_1 \sep B_2\sep\I)
  &\ge \minProb(\lin^\I(\Alpha \parallel \Beta)(\sigma_1\uplus\sigma_2\uplus\tau), B_1 \sep B_2\sep\I)
  \\
  &= \minProb\left(\sup_{\GGamma \ll \Alpha \parallel\Beta} \linfin^\I(\GGamma)(\sigma_1\uplus\sigma_2\uplus\tau), B_1 \sep B_2 \sep\I \right)
  \intertext{By \Cref{lem:minProb-cont}.}
  &= \sup_{\GGamma \ll \Alpha \parallel \Beta} \minProb\left(\linfin^\I(\GGamma)(\sigma_1\uplus\sigma_2\uplus\tau), B_1 \sep B_2\sep\I\right)
  \\
  &= \sup_{\Alpha' \ll_1 \Alpha} \sup_{\Beta' \ll_1 \Beta} \minProb\left(\linfin^\I(\Alpha' \parallel \Beta')(\sigma_1\uplus\sigma_2\uplus\tau), B_1 \sep B_2\sep\I\right)
  \\
  \intertext{By \Cref{lem:minProb-ind}.}
  &= \sup_{\Alpha' \ll_1 \Alpha} \sup_{\Beta' \ll_1 \Beta} \minProb\left(\linfin^\I(\Alpha')(\sigma_1\uplus\tau), B_1\sep\I\right) \cdot \minProb\left(\linfin^\I(\Beta')(\sigma_2\uplus\tau), B_2\sep\I\right)  
  \\
  &= \left( \sup_{\Alpha' \ll_1 \Alpha} \minProb\left(\linfin^\I(\Alpha')(\sigma_1\uplus\tau), B_1\sep\I\right)\right) \cdot
    \left( \sup_{\Beta' \ll_1 \Beta} \minProb\left(\linfin^\I(\Beta')(\sigma_2\uplus\tau), B_2\sep\I\right) \right)
  \intertext{By \Cref{lem:minProb-cont}.}
  &= \minProb\left(\lin^\I(\Alpha)(\sigma_1\uplus\tau), B_1\sep\I\right) \cdot  \minProb\left(\lin^\I(\Beta)(\sigma_2\uplus\tau), B_2\sep\I\right)
  \\
  &= p \cdot q
\end{align*}
\end{proof}

\begin{lemma}\label{lem:par}
For any pairwise disjoint $U_1,U_2,V \subseteq \mathsf{Var}$ and $\Alpha_1,\Alpha_2\in\pom$ such that $\free_\act(\Alpha_k) \subseteq U_k \cup V$ and $\free_\test(\Alpha_k)\subseteq U_k$; $\I \subseteq \mem V$,
let $\P_\I$ be the trivial probability space where $\mu_{\P_\I}(\I) = 1$. For all $k\in\{1, 2\}$, $\P_k$, $\Q_k$, and $\mu \in \D(\mem{U_1\cup U_2\cup V})$ such that $\P_1 \otimes \P_2 \otimes \P_\I \preceq \mu$, if:
\[
  \forall \mu_k.\ \P_k \otimes \P_\I \preceq \mu_k \implies \forall \nu_k \in \evl^\I(\Alpha_k)^\dagger(\mu_k).\ \Q_k \otimes \P_\I \preceq \nu_k
\]
Then:
\[
  \forall \nu \in \evl^\I(\Alpha_1\parallel \Alpha_2)^\dagger(\mu).\ \Q_1\otimes\Q_2\otimes \P_\I \preceq \nu
\]
\end{lemma}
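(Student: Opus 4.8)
The plan is to prove the single conclusion $\Q_1\otimes\Q_2\otimes\P_\I\preceq\nu$ by establishing, for every pair of atoms $B_1\in\ev(\Q_1)$ and $B_2\in\ev(\Q_2)$, the exact identity $\nu(B_1\sep B_2\sep\I)=\mu_{\Q_1}(B_1)\cdot\mu_{\Q_2}(B_2)$. Once these hold, the three clauses of $\preceq$ follow routinely: the sample-space and $\sigma$-algebra clauses are immediate because $\nu$ is a full distribution (its event space is the whole powerset), and the measure clause follows from the atom identities together with the characterization of $\F_{\Q_1\otimes\Q_2\otimes\P_\I}$ and its product measure in \Cref{lem:prod-equiv}. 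So the entire problem reduces to computing $\nu$ on product events.

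First I would unfold the Kleisli extension: every $\nu\in\lin^\I(\Alpha_1\parallel\Alpha_2)^\dagger(\mu)$ has the form $\nu=\sum_{\rho}\mu(\rho)\cdot\nu_\rho$ with $\nu_\rho\in\lin^\I(\Alpha_1\parallel\Alpha_2)(\rho)$, and I decompose each $\rho\in\supp(\mu)$ as $\rho=\sigma_1\uplus\sigma_2\uplus\tau$ with $\sigma_k\in\mem{U_k}$ and $\tau\in\I$ (the support of $\mu$ lies over $\I$ because $\P_1\otimes\P_2\otimes\P_\I\preceq\mu$). Writing $\minProb_k(\sigma_k)\triangleq\minProb(\lin^\I(\Alpha_k)(\sigma_k\uplus\tau),B_k\sep\I)$, I note this is independent of $\tau$ by the invariant-resampling observation used in the proof of \Cref{lem:minProb-linlpo}, and I show it is in fact the \emph{unique} value achievable by thread $k$ from $\sigma_k\uplus\tau$. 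This uniqueness is forced by the universally quantified hypothesis: since $\Q_k\otimes\P_\I\preceq\nu_k$ pins $\nu_k(B_k\sep\I)$ to the single number $\mu_{\Q_k}(B_k)$ for every input $\mu_k$ refining $\P_k\otimes\P_\I$ and every scheduler resolution, varying the scheduler's choice at one memory at a time shows the per-memory success probability cannot take two distinct values. With per-memory constancy in hand, I apply \Cref{lem:par-ind} pointwise to obtain $\nu_\rho(B_1\sep B_2\sep\I)\ge\minProb_1(\sigma_1)\cdot\minProb_2(\sigma_2)$, and hence $\nu(B_1\sep B_2\sep\I)\ge\sum_\rho\mu(\rho)\,\minProb_1(\sigma_1)\,\minProb_2(\sigma_2)$.

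The next step is to evaluate that lower bound. Re-running the redistribution argument across different valid inputs shows more: because $\sum_{\rho}\mu_k(\rho)\minProb_k=\mu_{\Q_k}(B_k)$ holds for \emph{every} distribution refining $\P_k\otimes\P_\I$ (whose atom masses are fixed but whose mass within each atom is free), $\minProb_k$ must be constant on each atom of $\P_k$, i.e.\ it is $\F_{\P_k}$-measurable. Since $\P_1\otimes\P_2\otimes\P_\I\preceq\mu$ makes the $U_1$- and $U_2$-coordinates independent on the product $\sigma$-algebra with the correct marginals, the expectation of the product of these two simple functions factors, giving $\sum_\rho\mu(\rho)\minProb_1(\sigma_1)\minProb_2(\sigma_2)=\mu_{\Q_1}(B_1)\cdot\mu_{\Q_2}(B_2)$. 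Combined with the previous paragraph this yields $\nu(B_1\sep B_2\sep\I)\ge\mu_{\Q_1}(B_1)\mu_{\Q_2}(B_2)$ for all atom pairs. Finally I upgrade these inequalities to equalities by a mass-conservation argument: the atoms $B_1\sep B_2\sep\I$ (ranging over $\ev(\Q_1)\times\ev(\Q_2)$) are pairwise disjoint and their prescribed product masses sum to $1$, while $\nu$ is a probability measure of total mass at most $1$, so $\sum_{B_1,B_2}\nu(B_1\sep B_2\sep\I)\ge 1\ge\sum_{B_1,B_2}\nu(B_1\sep B_2\sep\I)$ forces termwise equality (and incidentally shows $\nu$ places no mass on $\bot$).

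I expect the measurability step—showing $\minProb_k$ is constant on the atoms of $\P_k$ and then lifting \Cref{lem:par-ind} from point inputs to the distribution $\mu$—to be the crux. The delicate point is that the per-thread hypotheses are stated for \emph{all} inputs refining $\P_k\otimes\P_\I$ under a demonic (universal) quantifier; extracting from this both per-memory constancy and atomwise constancy of the success probability is exactly what makes the product of marginals factor and what turns the one-sided bound of \Cref{lem:par-ind} into the exact joint probability. Care is also needed to confirm that the initial shared value $\tau$ is immaterial—it is overwritten by the invariant-sensitive semantics before any action reads it, and tests cannot observe it since $\free_\test(\Alpha_k)\subseteq U_k$—so that the two marginal computations genuinely decouple.
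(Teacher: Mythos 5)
Your proposal is correct and follows essentially the same route as the paper's proof: reduce $\preceq$ to the product-atom identities, use the demonic hypothesis twice (once to pin the per-memory success probability of each thread to a unique constant via the swap-one-scheduler-resolution argument, once with atom-concentrated inputs to get constancy on the atoms of $\P_k$), apply \Cref{lem:par-ind} pointwise, factor the resulting expectation using the independence built into $\P_1\otimes\P_2\otimes\P_\I\preceq\mu$, and close the gap from $\ge$ to $=$ by mass conservation over the disjoint atom products. The paper's proof carries out exactly these steps, differing only in presentation (it names the constants $p_{B,k,i}$ explicitly rather than phrasing atom-constancy as $\F_{\P_k}$-measurability of a simple function).
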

\begin{proof}
%By Lemma C.5 of \citet{bao2025bluebell}

Let $(A_{1,i})_{i\in I_1}$ and $(A_{2,i})_{i\in I_2}$ be the most precise, disjoint measurable events from $\ev(\P_1)$ and $\ev(\P_2)$, respectively. Since $\P_1\otimes \P_2 \otimes \P_\I \preceq\mu$, we know that for any $(i,j) \in I_1 \times I_2$:
\begin{equation}\label{eq:ind-precondition}
  \sum_{\sigma_1\in A_{1,i}} \sum_{\sigma_2\in A_{2,j}} \sum_{\tau\in\I} \mu(\sigma_1\uplus\sigma_2 \uplus \tau)
  =
  \mu_{\P_1}(A_{1,i}) \cdot \mu_{\P_2}(A_{2,j})
  =
  \mu_{\P_1\otimes \P_\I}(A_{1,i} \sep\I) \cdot \mu_{\P_2 \otimes\I}(A_{2,j} \sep \I)
\end{equation}
Also, for any $\mu_k$ such that $\P_k\otimes \P_\I \preceq \mu_k$, we know that any $\nu_k \in \evl^\I(\alpha_k)^\dagger(\mu_k)$ has the form:
\[
  \nu_k = \smashoperator{\sum_{\sigma\in \supp(\mu_k)}} \mu_k(\sigma)\cdot \nu_\sigma
\]
Where each $\nu_\sigma \in \evl^\I(\Alpha_k)(\sigma)$. Since $\Q_k\otimes \P_\I \preceq \nu_k$, we know that for any $B \in \F_{\Q_k}$:
\begin{align*}
  \mu_{\Q_k\otimes \P_\I}(B \sep \I) &= \sum_{\tau \in B\sep\I} \nu_k(\tau) = \smashoperator{\sum_{\sigma\in \supp(\mu_k)}} \mu_k(\sigma)\cdot \sum_{\tau\in B\sep\I} \nu_\sigma(\tau)
  \intertext{Let $p_{B,\sigma} = \sum_{\tau\in B\sep \I} \nu_\sigma(\tau)$.}
  &= \smashoperator{\sum_{\sigma\in \supp(\mu_k)}} \mu_k(\sigma)\cdot p_{B,\sigma}
\end{align*}
We will now show that $\sum_{\tau\in B\sep\I}\xi(\tau) = p_{B,\sigma'}$ for any $\sigma'\in\supp(\mu_k)$ and $\xi\in\lin^\I(\Alpha_k)(\sigma')$. Take any such $\xi$. By construction, $\mu_k(\sigma')\cdot\xi + \sum_{\sigma\in\supp(\mu_k)\setminus\{\sigma'\}} \mu_k(\sigma)\cdot\nu_\sigma \in \lin^\I(\Alpha_k)^\dagger(\mu_k)$, therefore:
\begin{align*}
  \mu_{\Q_k \otimes \P_\I}(B\sep \I) = \sum_{\tau\in B\sep\I} \left(\mu_k(\sigma')\cdot\xi + \smashoperator{\sum_{\sigma\in\supp(\mu_k)\setminus\{\sigma'\}}} \mu_k(\sigma)\cdot\nu_\sigma \right)(\tau)
  &= \smashoperator{\sum_{\sigma\in \supp(\mu_k)}} \mu_k(\sigma)\cdot p_{B,\sigma}
  \\
  \mu_k(\sigma')\cdot \sum_{\tau\in B\sep\I} \xi(\tau) + \cancel{\smashoperator{\sum_{\sigma\in\supp(\mu_k)\setminus\{\sigma'\}}} \mu_k(\sigma)\cdot p_{B,\sigma}}
  &= \mu_k(\sigma')\cdot p_{B,\sigma'} + \cancel{\smashoperator{\sum_{\sigma\in \supp(\mu_k)\setminus\{\sigma'\}}} \mu_k(\sigma)\cdot p_{B,\sigma}}
  \\
  \cancel{\mu_k(\sigma')}\cdot \sum_{\tau\in B\sep\I} \xi(\tau)
  &= \cancel{\mu_k(\sigma')}\cdot p_{B,\sigma'}
  \\
  \sum_{\tau\in B\sep\I} \xi(\tau) &= p_{B,\sigma'}
\end{align*}
Now, let $\mu_k$ be constructed by fixing a single state $\sigma_i \in A_{k,i} \sep \I$ for each $i \in I_k$ and setting $\mu_k(\sigma_i) = \mu_{\P_k\otimes \P_\I}(A_{k,i})$, so clearly $\P_k\otimes \P_\I \preceq \mu_k$. Now let $p_{B,k,i} = p_{B,\sigma_i}$. Based on what we have just showed, this gives us:
\[
  \mu_{\Q_k\otimes \P_\I}(B\sep\I)
%  = \sum_{\tau\in B} \sum_{\sigma\in\supp(\mu_k)} \mu_k(\sigma)\cdot 
  = \smashoperator{\sum_{\sigma\in \supp(\mu_k)}} \mu_k(\sigma) \cdot p_{B, \sigma}
  = \smashoperator{\sum_{i \in I_k}} \mu_k(\sigma_i) \cdot p_{B, \sigma_i}
  = \smashoperator{\sum_{i \in I_k}} \mu_{\P_k\otimes \P_\I}(A_{k,i} \sep \I) \cdot p_{B, k,i}
\]
We will now show that $p_{B,\sigma'} = p_{B,k,j}$ for any $j\in I_k$ and $\sigma' \in A_{k,j}\sep \I$. Take any such $\sigma'$, then we get:
{\footnotesize\begin{align*}
  \mu_{\Q_k\otimes\P_\I}(B\sep\I) &=
  \\\mu_{\P_k\otimes \P_\I}(A_{k,j}\sep\I) \cdot p_{B,\sigma'} + \smashoperator{\sum_{i \neq j}} \mu_{\P_k\otimes\P_\I}(A_{k,i}\sep\I) \cdot p_{B, \sigma_i}
  &= \smashoperator{\sum_{i \in I_k}} \mu_{\P_k\otimes\P_\I}(A_{k,i}\sep\I) \cdot p_{B, k, i}
  \\
  \mu_{\P_k\otimes\P_\I}(A_{k,j}\sep\I) \cdot p_{B,\sigma'} + \cancel{\smashoperator{\sum_{i \neq j}} \mu_{\P_k\otimes\P_\I}(A_{k,i}\sep\I) \cdot p_{B, k, i}}
  &= \mu_{\P_k\otimes\P_\I}(A_{k,j}\sep\I) \cdot p_{B,k, j} + \cancel{\smashoperator{\sum_{i \neq j}} \mu_{\P_k\otimes\P_\I}(A_{k,i}\sep\I) \cdot p_{B, k, i}}
  \\
  \cancel{\mu_{\P_k\otimes\P_\I}(A_{k,j}\sep\I)} \cdot p_{B,\sigma'}
  &= \cancel{\mu_{\P_k\otimes\P_\I}(A_{k,j}\sep\I)} \cdot p_{B,k, j}
  \\
  p_{B,\sigma'} &= p_{B,k, j}
\end{align*}}
We have therefore shown that $\sum_{\tau\in B} \nu_k(\tau) = p_{B,i,k}$ for any $\nu_k \in \lin^\I(\Alpha_k)(\sigma)$ where $B \in \F_{\Q_k}$, $i\in I_k$, and $\sigma \in A_{k, i}$

Now take any $\nu \in \evl^\I(\Alpha_1 \parallel\Alpha_2)^\dagger(\mu)$, which must have the form $\nu=\sum_{\sigma\in\supp(\mu)} \mu(\sigma)\cdot \nu_\sigma$ where $\nu_\sigma \in \evl^\I(\Alpha_1 \parallel \Alpha_2)(\sigma)$ for each $\sigma$. Since $\Q_1$ and $\Q_2$ operate over different address spaces, clearly $\Q_1\otimes\Q_2$ exists. It just remains to show that $\Q_1\otimes\Q_2\otimes \P_\I \preceq \nu$, which we do as follows.
By \Cref{lem:prod-equiv}, it suffices to show that the probability measures agree on the product of disjoint partitions. 
Let $\{ B_{k,j} \mid j\in J_k \} = \ev(\Q_k)$ for $k\in\{1,2\}$. For any $i\in J_1$ and $j\in J_2$, we have:
\begin{align*}
\nu(B_{1,i} \sep B_{2,j} \sep \I)
%   &= \sum_{\sigma_1' \in B_{1,i}} \sum_{\sigma_2' \in B_{2,j}} \sum_{\tau'\in\I} \nu(\sigma_1' \uplus \sigma_2' \uplus \tau')
%   \\
%   &= \sum_{\sigma_1' \in B_{1,i}} \sum_{\sigma_2' \in B_{2,j}} \sum_{\tau'\in\I} \sum_{\sigma\in\supp(\mu)} \mu(\sigma) \cdot \nu_\sigma(\sigma_1' \uplus \sigma_2'\uplus \tau')
%   \\
%   &= \smashoperator{\sum_{\sigma\in\supp(\mu)}} \mu(\sigma) \cdot \sum_{\sigma_1' \in B_{1,i}} \sum_{\sigma_2' \in B_{2,j}} \sum_{\tau'\in\I} \nu_\sigma(\sigma_1' \uplus \sigma_2'\uplus \tau')
%   \\
   &= {\sum_{\sigma\in\supp(\mu)}} \mu(\sigma) \cdot \nu_\sigma(B_{1,i} \sep B_{2,j}\sep \I)
   \intertext{Instead of summing over the support of $\sigma$, we can alternatively sum over the elements of the $A_{1,i'}$ and $A_{2,j'}$ sets.}
   &= \sum_{i'\in I_1} \sum_{j'\in I_2}  \sum_{\sigma_1\in A_{1,i'}} \sum_{\sigma_2\in A_{2,j'}} \sum_{\tau\in \I} \mu(\sigma_1 \uplus\sigma_2\uplus \tau) \cdot
      \nu_\sigma(B_{1,i} \sep B_{2,j}\sep \I)
   %%%
   \intertext{We previously showed that $\nu_k(B_{k, \ell} \sep\I) = p_{B_{k, \ell},k,i'}$ (a constant) for any $\nu_k\in \evl^\I(\Alpha_k)(\sigma)$ where $\sigma \in A_{k,i'}$. So, we can use \Cref{lem:par-ind} to conclude that:}
%   &= \smashoperator{\sum_{\sigma\in\supp(\mu)}} \mu(\sigma) \cdot p_{B_{1,i},1,\mathsf{ind}(\pi_1(\sigma))} \cdot p_{B_{2,j},2,\mathsf{ind}(\pi_2(\sigma))}
%   \\
  &\ge \sum_{i'\in I_1} \sum_{j'\in I_2}  \sum_{\sigma_1\in A_{1,i'}} \sum_{\sigma_2\in A_{2,j'}} \sum_{\tau\in \I} \mu(\sigma_1 \uplus\sigma_2\uplus \tau)
   \cdot p_{B_{1,i},1,i'}\cdot p_{B_{2,j},2,j'}
  \\
  &= \sum_{i'\in I_1} \sum_{j'\in I_2} 
      \mu(A_{1, i'}\sep A_{2,j'} \sep\I)
   \cdot p_{B_{1,i},1,i'}\cdot p_{B_{2,j},2,j'}
  \intertext{By \Cref{eq:ind-precondition}.}
  &= \sum_{i'\in I_1} \sum_{j'\in I_2} \mu_{\P_1\otimes \P_\I}(A_{1,i'}\sep\I)  \cdot \mu_{\P_2\otimes\P_\I}(A_{2,j'}\sep\I) \cdot p_{B_{1,i},1,i'}\cdot p_{B_{2,j},2,j'}
%%%
\\
  &= \left( \sum_{i'\in I_1} \mu_{\P_1\otimes\P_\I}(A_{1,i'}\sep\I) \cdot p_{B_{1,i},1,i'} \right) \cdot \left( \sum_{j'\in I_2} \mu_{\P_2\otimes\P_\I}(A_{2,j'}\sep\I) \cdot p_{B_{2,j},2,j'} \right)
  \\
  &=\mu_{\Q_1\otimes\P_\I}(B_{1,i}\sep\I) \cdot \mu_{\Q_2\otimes\P_\I}(B_{2,j}\sep\I)
  \\
  &=\mu_{\Q_1}(B_{1,i}) \cdot \mu_{\Q_2}(B_{2,j}) \cdot \mu_{\P_\I}(\I)
\end{align*}
Now, we have shown that $\nu(B_{1,i}\sep B_{2,j}\sep\I) \ge \mu_{\Q_1 \otimes \Q_2\otimes \P_\I}(B_{1,i} \sep B_{2,j} \sep\I)$ for all $i\in J_1$ and $j\in J_2$. Since $\sum_{i\in J_1} \sum_{j\in J_2} \mu_{\Q_1 \sep \Q_2}(B_{1,i} \sep B_{2,j}) = 1$, then $\nu(B_{1,i}\sep B_{2,j}\sep\I)$ cannot be strictly greater then $\mu_{\Q_1}(B_{1,i})\cdot \mu_{\Q_2}(B_{2,j})$ for any $i$ or $j$, and therefore the quantities must be equal.

\end{proof}

\section{Almost Sure Termination}

For any $S \in \C(\mem V)$, let $\minterm(S) = \minProb(S, \mem V)$, \ie it is the minimum probability that the program terminates. Also, let $\Psi_{\tuple{b,C,\I}} \colon (\mem V \to \C(\mem V)) \to \mem V \to \C(\mem V)$ be defined as follows:
\[
  \Psi_{\tuple{b,C, \I}}(f)(\sigma) \triangleq \left\{
    \begin{array}{ll}
      f^\dagger(\lin^\I(\de{C})(\sigma)) & \text{if}~ \de{b}_\test(\sigma) = \tru
      \\
      \eta(\sigma) & \text{if}~ \de{b}_\test(\sigma) = \fls
    \end{array}
  \right.
\]
By \citet[Lemma 5.2]{zilberstein2025denotational}:
\[
  \lin^\I(\de{\whl bC})
  = \mathsf{lfp}\left(\Psi_{\tuple{b,C,\I}}\right)
  = \sup_{n\in\mathbb N} \Psi^n_{\tuple{b,C,\I}}(\bot_\C^\bullet)
\]
Now, for any test $b$ and distribution $\mu$ we define a conditioning operator as follows:
\[
  (b\?\mu)(\sigma) \triangleq \left\{
    \begin{array}{ll}
      \frac{\mu(\sigma)}{\mu(b)} & \text{if}~ \de{b}_\test(\sigma) = \tru
      \\
      0 & \text{if}~ \de{b}_\test(\sigma) = \fls
    \end{array}
  \right.
  \qquad\text{where}\qquad
  \mu(b) \triangleq \smashoperator{\sum_{\sigma\in\supp(\mu)\mid \de{b}_\test(\sigma) = \tru}} \mu(\sigma)
\]
Note that if $\mu(b) = 0$, then $b\?\mu$ is not well-defined. In that case, we just let $b\?\mu = \bot_\D$. It is also clearly true that $\mu = (b\?\mu) \oplus_{\mu(b)} (\lnot b\?\mu)$ for any $b$ and $\mu$. In addition, we call $\tuple{\varphi,\psi}$ an loop invariant pair for $\whl bC$ under the resource invariant $I$ iff:
\begin{enumerate}
\item $\varphi\Rightarrow\sure{b\mapsto\tru}$
\item $\psi\Rightarrow \sure{b\mapsto\fls}$
\item $I\vDash_\wk\triple{\varphi}C{\varphi\nd\psi}$
\item $\precise\psi$
\end{enumerate}
Given these new definitions, we prove some partial correctness results, which show that the $\psi$ (as defined above) holds on the terminating portion of the result of a while loop.

\begin{lemma}\label{lem:par-cor-fin}
Take any $\Gamma$, let $\I = \sem{I}_\Gamma$, $\tuple{\varphi,\psi}$ be an invariant pair for $\whl bC$ under $I$, and $\Q$ be the unique smallest probability space satisfying $\psi$ (which exists since $\precise\psi$).
For any $\mu$, $n\in\mathbb N$, and $A \in \F_\Q$ such that $\Gamma,\mu\vDash\varphi\sep\sure I$:
\[
  \minProb\left( \Psi_{\tuple{b, C, \I}}^n\left(\bot_\C^\bullet\right)^\dagger(\mu), A\sep\I \right)
  =
  \minterm\left( \Psi_{\tuple{b, C, \I}}^n\left(\bot_\C^\bullet\right)^\dagger(\mu) \right) \cdot \mu_\Q(A)
\]
\end{lemma}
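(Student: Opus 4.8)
The plan is to proceed by induction on $n$, writing $f_n \triangleq \Psi_{\tuple{b,C,\I}}^n(\bot_\C^\bullet)$ throughout. For the base case $n=0$ we have $f_0 = \bot_\C^\bullet$, so $f_0^\dagger(\mu) = \bot_\C$ contains $\delta_\bot$; hence both $\minProb(f_0^\dagger(\mu), A\sep\I)$ and $\minterm(f_0^\dagger(\mu))$ equal $0$, and the identity holds as $0 = 0\cdot\mu_\Q(A)$. For the inductive step, the first move is to rewrite $f_{n+1}^\dagger(\mu)$. Since $\Gamma,\mu\vDash\varphi\sep\sure I$ and $\varphi\Rightarrow\sure{b\mapsto\tru}$ (the first invariant-pair condition), $b$ holds $\mu$-almost surely, so every $\sigma\in\supp(\mu)$ has $\de{b}_\test(\sigma)=\tru$ and therefore $\Psi_{\tuple{b,C,\I}}(f_n)(\sigma) = f_n^\dagger(\lin^\I(\de{C})(\sigma))$. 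By the monad associativity law this yields $f_{n+1}^\dagger(\mu) = f_n^\dagger\big(\lin^\I(\de{C})^\dagger(\mu)\big)$, reducing the goal to analysing $f_n^\dagger$ applied to the set $S \triangleq \lin^\I(\de{C})^\dagger(\mu)$.

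The heart of the argument is a pointwise computation for each $\nu\in S$. Using the third invariant-pair condition $I\vDash_\wk\triple{\varphi}{C}{\varphi\nd\psi}$ (instantiated with the trivial frame), together with precision of $\psi$ and the fact that $\varphi\Rightarrow\sure{b\mapsto\tru}$ and $\psi\Rightarrow\sure{b\mapsto\fls}$ force the two summands of the direct sum onto the disjoint $b$-true and $b$-false regions, I would show that the canonical decomposition $\nu = (b\?\nu)\oplus_{\nu(b)}(\lnot b\?\nu)$ enjoys the two properties driving the induction: the continuing part $\nu_C \triangleq b\?\nu$ satisfies $\varphi\sep\sure I$, and the terminated part $\nu_T\triangleq \lnot b\?\nu$ satisfies $\psi\sep\sure I$, so that $\nu_T(A\sep\I)=\mu_\Q(A)$ for every $A\in\F_\Q$ by precision and $\minterm(\{\nu_T\})=1$ since $\nu_T$ is a proper distribution. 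Writing $p=\nu(b)$ and noting that $f_n(\sigma)=\eta(\sigma)$ whenever $b$ is false, we get $f_n^\dagger(\{\nu_T\})=\{\nu_T\}$ and, because the summands have disjoint support, $f_n^\dagger(\{\nu\}) = f_n^\dagger(\{\nu_C\})\oplus_p f_n^\dagger(\{\nu_T\})$. Applying additivity of $\minProb$ and $\minterm$ across this fixed convex combination, the induction hypothesis to $\nu_C$, and the two facts about $\nu_T$, yields for each $\nu$ both $\minProb(f_n^\dagger(\{\nu\}),A\sep\I) = \mu_\Q(A)\big(p\cdot\minterm(f_n^\dagger(\{\nu_C\}))+(1-p)\big)$ and $\minterm(f_n^\dagger(\{\nu\})) = p\cdot\minterm(f_n^\dagger(\{\nu_C\}))+(1-p)$, hence $\minProb(f_n^\dagger(\{\nu\}),A\sep\I) = \minterm(f_n^\dagger(\{\nu\}))\cdot\mu_\Q(A)$.

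Finally, since $f_n^\dagger(S)=\bigcup_{\nu\in S} f_n^\dagger(\{\nu\})$, both $\minProb(\cdot,A\sep\I)$ and $\minterm(\cdot)$ of $f_n^\dagger(S)$ are the infima over $\nu\in S$ of the corresponding pointwise quantities; taking $\inf_{\nu\in S}$ of the per-$\nu$ identity and pulling the nonnegative constant $\mu_\Q(A)$ out of the infimum gives $\minProb(f_{n+1}^\dagger(\mu),A\sep\I) = \minterm(f_{n+1}^\dagger(\mu))\cdot\mu_\Q(A)$, completing the induction. The step I expect to be the main obstacle is the measure-theoretic justification in the second paragraph: extracting from the weak triple (whose conclusion $\varphi\nd\psi$ is a convex, nondeterministic direct sum guaranteed only to refine $\nu$ through $\preceq$) the clean statement that conditioning $\nu$ on the measurable event $\{b\mapsto\fls\}$ gives, on every $\F_\Q$-event, exactly the law $\mu_\Q$, while the complementary conditional satisfies $\varphi\sep\sure I$. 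This requires unfolding the direct-sum semantics of $\nd$, using $\preceq$-monotonicity of satisfaction and its compatibility with conditioning on a measurable region, and invoking the body's total correctness to rule out any $\bot$-mass in $\nu$ so that $\nu_T$ is a genuine distribution with $\minterm=1$.
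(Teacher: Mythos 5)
Your overall strategy is exactly the paper's: induct on $n$, use $\varphi\Rightarrow\sure{b\mapsto\tru}$ to unfold one iteration of $\Psi_{\tuple{b,C,\I}}$, split each $\nu\in\lin^\I(\de{C})^\dagger(\mu)$ into $b\?\nu\oplus_{\nu(b)}\lnot b\?\nu$ via the invariant-pair conditions, apply precision of $\psi$ to the terminated part and the induction hypothesis to the continuing part, and then pass to infima over $\nu$. The paper likewise asserts the conditioning/decomposition step (that $b\?\nu$ satisfies $\varphi\sep\sure I$ and $\lnot b\?\nu$ satisfies $\psi\sep\sure I$) without a detailed measure-theoretic proof, so the step you flag as the "main obstacle" is not a divergence from, and is if anything more carefully scoped than, the paper's treatment.

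There is, however, one concrete flaw. Your inductive step invokes ``$f_n(\sigma)=\eta(\sigma)$ whenever $b$ is false,'' hence $f_n^\dagger(\{\nu_T\})=\{\nu_T\}$. This is false for $n=0$: there $f_0=\bot_\C^\bullet$, so $f_0^\dagger(\{\nu_T\})=\bot_\C$, and your per-$\nu$ intermediate formula
$\minProb\bigl(f_0^\dagger(\{\nu\}),A\sep\I\bigr)=\mu_\Q(A)\bigl(p\cdot\minterm(f_0^\dagger(\{\nu_C\}))+(1-p)\bigr)=\mu_\Q(A)(1-p)$
is wrong --- the true value is $0$, as is the claimed $\minterm(f_0^\dagger(\{\nu\}))=(1-p)$. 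So the very first application of your step, from $n=0$ to $n=1$, is unsound as written, even though the lemma's statement at $n=1$ happens to hold trivially (both sides are $0$, since $f_1^\dagger(\mu)=(\bot_\C^\bullet)^\dagger(\lin^\I(\de{C})^\dagger(\mu))=\bot_\C$). The paper sidesteps this by treating $n=1$ as a second, separate base case and running the inductive step only for $n>1$, where $\Psi_{\tuple{b,C,\I}}^{n-1}(\bot_\C^\bullet)$ genuinely acts as $\eta$ on $b$-false states. The repair to your proof is exactly that one addition (or, equivalently, observing that the per-$\nu$ identity holds trivially when $n=0$ because both sides vanish); with it, your argument goes through.
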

\begin{proof}
The proof is by induction on $n$. Suppose that $n=0$, and so we have:
\begin{align*}
  \minProb\left( \Psi_{\tuple{b, C, \I}}^0\left(\bot_\C^\bullet\right)^\dagger(\mu), A\sep\I \right)
  &= \minProb\left(  \bot_\C, A\sep\I \right)
  \\
  &= 0
  \\
  &= \minterm\left(  \bot_\C \right) \cdot \mu_\Q(A)
  \\
  &= \minterm\left(  \Psi_{\tuple{b, C, \I}}^0\left(\bot_\C^\bullet\right)^\dagger(\mu) \right) \cdot \mu_\Q(A)
\end{align*}
Now suppose that $n=1$, and so we have:
\begin{align*}
  \minProb\left( \Psi_{\tuple{b, C, \I}}^1\left(\bot_\C^\bullet\right)^\dagger(\mu), A\sep\I \right)
  &= \minProb\left( \left(\bot_\C^\bullet\right)^\dagger\left(\lin^\I(\de{C})^\dagger(\mu) \right), A\sep\I \right)
  \\
  &= \minProb\left( \bot_\C, A\sep\I \right)
  \\
  &= 0
  \\
  &= \minterm\left(  \bot_\C \right) \cdot \mu_\Q(A)
  \\
  &= \minterm\left(  \Psi_{\tuple{b, C, \I}}^1\left(\bot_\C^\bullet\right)^\dagger(\mu) \right) \cdot \mu_\Q(A)
\end{align*}
Now suppose that $n > 1$. Then, we have:
\begin{align*}
  &\minProb\left( \Psi_{\tuple{b, C, \I}}^n\left(\bot_\C^\bullet\right)^\dagger(\mu), A\sep\I \right)
  \\
  &= \minProb\left( \Psi_{\tuple{b, C, \I}}^{n-1}\left(\bot_\C^\bullet\right)^\dagger(\lin^\I(\de{C})^\dagger(\mu)), A\sep\I \right)
  \\
  &= \inf_{\nu\in \lin^\I(\de{C})^\dagger(\mu)} \minProb\left( \Psi_{\tuple{b, C, \I}}^{n-1}\left(\bot_\C^\bullet\right)^\dagger(\nu), A\sep\I \right)
  \intertext{Since $\tuple{\varphi,\psi}$ is an invariant pair and $\Gamma,\mu\vDash\varphi\sep\sure I$, then every such $\nu$ above can be split into $\nu = b\?\nu \oplus_{\nu(b)} \lnot b\?\nu$ such that $\Gamma,b\?\nu\vDash\varphi\sep\sure I$ and $\Gamma,\lnot b\?\nu\vDash\psi\sep\sure I$.}
  &= \inf_{\nu\in \lin^\I(\de{C})^\dagger(\mu)} \minProb\left( \Psi_{\tuple{b, C, \I}}^{n-1}\left(\bot_\C^\bullet\right)^\dagger(b\?\nu \oplus_{\nu(b)} \lnot b\?\nu), A\sep\I \right)
  \\
  &= \inf_{\nu\in \lin^\I(\de{C})^\dagger(\mu)} \minProb\left(
    \Psi_{\tuple{b, C, \I}}^{n-1}\left(\bot_\C^\bullet\right)^\dagger(b\?\nu) \oplus_{\nu(b)} \Psi_{\tuple{b, C, \I}}^{n-1}\left(\bot_\C^\bullet\right)^\dagger(\lnot b\?\nu)
    , A\sep\I \right)
  \intertext{Since $n>1$, then $\Psi_{\tuple{b, C, \I}}^{n-1}(\bot_\C^\bullet)^\dagger(\lnot b\?\nu) = \{ \lnot b\?\nu \}$.}
  &= \inf_{\nu\in \lin^\I(\de{C})^\dagger(\mu)} \minProb\left(
    \Psi_{\tuple{b, C, \I}}^{n-1}\left(\bot_\C^\bullet\right)^\dagger(b\?\nu) \oplus_{\nu(b)} \{ \lnot b\?\nu \}
    , A\sep\I \right)
  \\
  &= \inf_{\nu\in \lin^\I(\de{C})^\dagger(\mu)} \nu(b)\cdot\minProb\left(
    \Psi_{\tuple{b, C, \I}}^{n-1}\left(\bot_\C^\bullet\right)^\dagger(b\?\nu)
    , A\sep\I \right) + (1 - \nu(b)) \cdot  (\lnot b\?\nu)(A\sep\I)
  \intertext{Since $\Gamma,\lnot b\?\nu\vDash\psi\sep\sure I$, then $(\lnot b\?\nu)(A\sep\I) = \mu_\Q(A)$. Also using the induction hypothesis, we get:}
  &= \inf_{\nu\in \lin^\I(\de{C})^\dagger(\mu)} \nu(b)\cdot\minterm\left(
    \Psi_{\tuple{b, C, \I}}^{n-1}\left(\bot_\C^\bullet\right)^\dagger(b\?\nu)\right)\cdot \mu_\Q(A) + (1 - \nu(b)) \cdot  \mu_\Q(A)
  \\
  &= \left( 
      \inf_{\nu\in \lin^\I(\de{C})^\dagger(\mu)} \minterm\left(
        \Psi_{\tuple{b, C, \I}}^{n-1}\left(\bot_\C^\bullet\right)^\dagger(b\?\nu) \oplus_{\nu(b)} \{ \lnot b\?\nu \}
    \right)
    \right)\cdot  \mu_\Q(A)
  \\
  &= \left( 
      \inf_{\nu\in \lin^\I(\de{C})^\dagger(\mu)} \minterm\left(
        \Psi_{\tuple{b, C, \I}}^{n-1}\left(\bot_\C^\bullet\right)^\dagger(\nu)
    \right)
    \right)\cdot  \mu_\Q(A)
  \\
  &= \minterm\left(\Psi_{\tuple{b, C, \I}}^{n}\left(\bot_\C^\bullet\right)^\dagger(\mu)\right)\cdot  \mu_\Q(A)
\end{align*}
\end{proof}

\begin{lemma}\label{lem:par-cor}
Take any $\Gamma$, let $\I = \sem{I}_\Gamma$, $\tuple{\varphi,\psi}$ be an invariant pair for $\whl bC$ under $I$, and $\Q$ be the unique smallest probability space satisfying $\psi$ (which exists since $\precise\psi$).
For any $\mu$ such that $\Gamma,\mu\vDash\varphi\sep\sure I$ and $A \in \F_\Q$:
\[
  \minProb\left( \lin^\I\left( \de{\whl bC} \right)^\dagger(\mu), A\sep\I \right)
  =
  \minterm\left( \lin^\I\left( \de{\whl bC} \right)^\dagger(\mu) \right) \cdot \mu_\Q(A)
\]
\end{lemma}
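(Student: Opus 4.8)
The plan is to derive the infinite statement as a limit of the finite-approximation identity in \Cref{lem:par-cor-fin}, using Scott continuity of $\minProb$. First I would recall from \citet[Lemma 5.2]{zilberstein2025denotational} that $\lin^\I(\de{\whl bC}) = \sup_{n\in\mathbb N}\Psi^n_{\tuple{b,C,\I}}(\bot_\C^\bullet)$, and then, since Kleisli extension and evaluation at a fixed distribution are Scott continuous, rewrite
\[
  \lin^\I(\de{\whl bC})^\dagger(\mu) = \sup_{n\in\mathbb N} \Psi^n_{\tuple{b,C,\I}}(\bot_\C^\bullet)^\dagger(\mu).
\]
This exhibits the loop result as the supremum of the family $D = \{\Psi^n_{\tuple{b,C,\I}}(\bot_\C^\bullet)^\dagger(\mu)\}_{n\in\mathbb N}$, which is a $\lec$-chain because $\Psi_{\tuple{b,C,\I}}$ is monotone, $\bot_\C^\bullet$ is least, and both $(-)^\dagger$ and evaluation are monotone.

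Writing $L_n = \minProb(\Psi^n_{\tuple{b,C,\I}}(\bot_\C^\bullet)^\dagger(\mu), A\sep\I)$ and $m_n = \minterm(\Psi^n_{\tuple{b,C,\I}}(\bot_\C^\bullet)^\dagger(\mu))$, monotonicity of $\minProb$ (\Cref{lem:minProb-mono}) makes both sequences nondecreasing in $n$, and \Cref{lem:par-cor-fin} gives $L_n = m_n\cdot\mu_\Q(A)$ for every $n$. Taking suprema and pulling the nonnegative constant $\mu_\Q(A)$ through the supremum yields $\sup_n L_n = (\sup_n m_n)\cdot\mu_\Q(A)$.

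The heart of the argument is then two applications of Scott continuity (\Cref{lem:minProb-cont}) to the chain $D$: one with the event $A\sep\I$, to get $\minProb(\lin^\I(\de{\whl bC})^\dagger(\mu), A\sep\I) = \sup_n L_n$, and one with the full terminating event $\mem V$ (for which $\minProb(\cdot,\mem V) = \minterm$), to get $\minterm(\lin^\I(\de{\whl bC})^\dagger(\mu)) = \sup_n m_n$. Chaining these with the rescaled finite identity closes the proof:
\[
  \minProb\big(\lin^\I(\de{\whl bC})^\dagger(\mu), A\sep\I\big)
  = \sup_n L_n
  = \big(\sup_n m_n\big)\cdot\mu_\Q(A)
  = \minterm\big(\lin^\I(\de{\whl bC})^\dagger(\mu)\big)\cdot\mu_\Q(A).
\]

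The hard part will be justifying the two passages to the limit; denote by $L_\infty$ and $m_\infty$ the limit quantities $\minProb(\lin^\I(\de{\whl bC})^\dagger(\mu),A\sep\I)$ and $\minterm(\lin^\I(\de{\whl bC})^\dagger(\mu))$. The general bound of \Cref{lem:minProb-lb} supplies $\sup_n L_n \le L_\infty$ and $\sup_n m_n \le m_\infty$ for free, so the real content is the matching upper bounds $L_\infty \le \sup_n L_n$ and $m_\infty \le \sup_n m_n$---precisely the compactness/finite-intersection argument underlying \Cref{lem:minProb-cont} in the Cauchy-closed space $\D(\mem V + \{\bot\})$. Care is needed because up-closure means the limit set contains distributions of strictly larger termination mass, so the relevant events do not have literally constant measure across it; what survives to the limit is the proportionality at the infimum-achieving behaviors. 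Conceptually, the single constant $\mu_\Q(A)$ persists because $\psi$ is precise: every genuine terminating outcome restricts to the unique minimal space $\Q$ (tensored with the trivial invariant space on $\I$), so conditioned on termination the distribution is always exactly $\mu_\Q$, independent of the scheduler.
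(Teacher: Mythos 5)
Your proposal is correct and matches the paper's proof essentially step for step: both express $\lin^\I(\de{\whl bC})^\dagger(\mu)$ as $\sup_{n\in\mathbb N} \Psi^n_{\tuple{b,C,\I}}(\bot_\C^\bullet)^\dagger(\mu)$, apply \Cref{lem:par-cor-fin} to each finite unrolling, pull the constant $\mu_\Q(A)$ through the supremum, and invoke Scott continuity of $\minProb$ (\Cref{lem:minProb-cont}) twice---once for the event $A\sep\I$ and once for $\minterm$---to pass to the limit. If anything, your closing discussion of the hypothesis of \Cref{lem:minProb-cont} and of why precision of $\psi$ keeps the terminating measure proportional to $\mu_\Q$ is more explicit than the paper, which applies the continuity lemma to both limits without comment.
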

\begin{proof}
The proof proceeds as follows:
\begin{align*}
  \minProb\left( \lin^\I\left( \de{\whl bC} \right)^\dagger(\mu), A\sep\I \right)
  &= \minProb\left( \sup_{n\in\mathbb N} \Psi_{\tuple{b, C, \I}}^n\left(\bot_\C^\bullet\right)^\dagger(\mu) , A\sep\I \right)
  \\
  &= \sup_{n\in\mathbb N} \minProb\left( \Psi_{\tuple{b, C, \I}}^n\left(\bot_\C^\bullet\right)^\dagger(\mu), A\sep\I \right)
  \intertext{By \Cref{lem:par-cor-fin}.}
  &= \sup_{n\in\mathbb N} \minterm\left( \Psi_{\tuple{b, C, \I}}^n\left(\bot_\C^\bullet\right)^\dagger(\mu)\right) \cdot \mu_\Q(A)
  \\
  &= \minterm\left( \sup_{n\in\mathbb N} \Psi_{\tuple{b, C, \I}}^n\left(\bot_\C^\bullet\right)^\dagger(\mu)\right) \cdot \mu_\Q(A)
  \\
  &= \minterm\left( \lin^\I(\de{\whl bC})^\dagger(\mu)\right) \cdot \mu_\Q(A)
\end{align*}
\end{proof}

\begin{corollary}[Almost Sure Termination]\label{cor:ast}
Take any $\Gamma$ and $0 < p \le 1$, let $\I = \sem{I}_\Gamma$, and $\tuple{\varphi,\psi}$ be an invariant pair for $\whl bC$ under $I$. If additionally $\minterm(\lin^\I(\de{\whl bC})^\dagger(\mu)) \ge p$ for all $\mu$ such that $\Gamma,\mu\vDash\varphi\sep\sure I$, then $\minterm(\lin^\I(\de{\whl bC})^\dagger(\mu)) = 1$.
\end{corollary}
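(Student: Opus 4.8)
The plan is to show that the worst-case probability of non-termination is zero. Write $W \triangleq \lin^\I(\de{\whl bC})$ and, for any invariant distribution $\mu$ (i.e.\ $\Gamma,\mu\vDash\varphi\sep\sure I$), set $\bar h(\mu) \triangleq 1 - \minterm(W^\dagger(\mu))$, the demonic non-termination probability. Let $q \triangleq \sup\{\bar h(\mu) \mid \Gamma,\mu\vDash\varphi\sep\sure I\}$. The hypothesis gives $\bar h(\mu) \le 1-p$ uniformly, so $q \le 1-p < 1$, and it suffices to prove $q = 0$, since then $\minterm(W^\dagger(\mu)) = 1$ for every such $\mu$, which is the claim.

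First I would establish a fixed-point equation for $\bar h$. Exactly as in the proof of \Cref{lem:par-cor-fin}, the fixed-point law $W = \Psi_{\tuple{b,C,\I}}(W)$ together with the invariant-pair decomposition of each body output $\nu = (b\?\nu)\oplus_{\nu(b)}(\lnot b\?\nu)$ --- where $\lnot b\?\nu \vDash \psi\sep\sure I$ terminates immediately and $b\?\nu\vDash\varphi\sep\sure I$ continues within the invariant region --- and the computations of \Cref{lem:minProb-kleisli,lem:minProb-nd} yield, for the finite approximants, $\bar f_{n+1}(\mu) = \sup_{\nu}\nu(b)\cdot\bar f_n(b\?\nu)$, where $\bar f_n(\mu) = 1-\minterm(\Psi^n_{\tuple{b,C,\I}}(\bot_\C^\bullet)^\dagger(\mu))$ and the supremum ranges over $\nu\in\lin^\I(\de C)^\dagger(\mu)$. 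Passing to the limit $\bar h = \inf_n \bar f_n$ via Scott-continuity of $\minterm$ (\Cref{lem:minProb-cont}) then gives the fixed-point equation
\[
  \bar h(\mu) \;=\; \sup_{\nu\in\lin^\I(\de C)^\dagger(\mu)}\ \nu(b)\cdot\bar h(b\?\nu).
\]
Crucially, the invariant-pair property keeps every successor $b\?\nu$ inside the invariant region, so the uniform bound $\bar h \le q$ applies at every reachable state.

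The heart of the argument is then a zero-one (renewal) step. Suppose toward a contradiction that $q > 0$, and take a sequence of invariant distributions $\mu_k$ with $\bar h(\mu_k) \to q$. Because each set $\lin^\I(\de C)^\dagger(\mu)$ is Cauchy closed, the supremum above is attained, and the fixed-point equation forces the optimal successor $\nu^\star$ to satisfy $\nu^\star(b)\to 1$ and $\bar h(b\?\nu^\star) \to q$: the demonic scheduler can, from a near-worst state, continue the loop with probability arbitrarily close to $1$ into another near-worst state. Iterating this selection builds a scheduling strategy under which the loop continues forever with probability $1$, so $\minterm(W^\dagger(\mu_k)) = 0$ for the corresponding start states --- contradicting the hypothesis $\minterm(W^\dagger(\mu))\ge p>0$. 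Hence $q=0$, and the corollary follows.

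The main obstacle is this last extraction of a sure-non-termination strategy: the fixed-point equation only says the worst case is \emph{approached}, so turning ``continues with probability $1$ into a worst successor'' into an actual infinite non-terminating run requires a limiting/diagonal argument over the unrollings $\sup_n\Psi^n_{\tuple{b,C,\I}}$, combined with the attainment of suprema under Cauchy closure. This is precisely the compactness (finite-intersection property on closed sub-level sets, cf.\ Lemma B.4.3 of \citet{mciver2005abstraction}) that already powers \Cref{lem:minProb-cont}, and it is the content of the Zero-One law of \citet[Lemma 2.6.1]{mciver2005abstraction}, which I would invoke to close the gap rather than reprove the compactness from scratch.
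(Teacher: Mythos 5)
The paper's own proof of \Cref{cor:ast} is a single sentence: it follows ``by an identical argument to Lemma D.3 of \citet{zilberstein2025demonic},'' i.e.\ by the Zero-One law of \citet[Lemma 2.6.1]{mciver2005abstraction} transported to this semantics. So your closing move---invoking that law---is in itself legitimate, and your preliminary setup is consistent with the paper: the recurrence $\bar f_{n+1}(\mu)=\sup_{\nu}\nu(b)\cdot\bar f_n(b\?\nu)$ does fall out of the computation in \Cref{lem:par-cor-fin}. But the step you present as the heart of the proof is genuinely broken, and the Zero-One law is not a ``compactness patch'' for it; it \emph{is} the statement being proved, and its correct proof has a different shape than your pumping argument.

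Concretely: from $\bar h(\mu_0)\ge q-\delta_0$ and an $\eta_1$-optimal successor $\nu^\star$ you can conclude only $\nu^\star(b)\ge(q-\delta_1)/q$ and $\bar h(b\?\nu^\star)\ge q-\delta_1$, where $\delta_1=\delta_0+\eta_1$. Iterating, the deficits $\delta_i$ are nondecreasing, so the guaranteed continuation probability at stage $i$ is merely $1-\delta_i/q$ with $\delta_i\ge\delta_0$; the resulting lower bound on the probability of running forever is $\prod_{i\ge1}(1-\delta_i/q)$, and since every factor is at most $1-\delta_0/q<1$, this infinite product is $0$. Your construction therefore yields only the vacuous bound ``survival probability $\ge 0$'': the claim that the strategy runs forever ``with probability $1$'' does not follow, and no contradiction with $\bar h\le 1-p$ is obtained. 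The deficits vanish only if $q$ and every one-step supremum are \emph{exactly} attained, which needs upper semicontinuity of $\bar h$ and compactness of the class of invariant input distributions; Cauchy closedness of the sets $\lin^\I(\de{C})^\dagger(\mu)$, which is all you cite, gives neither. The actual renewal argument (the one behind Lemma D.3 and \citet[Lemma 2.6.1]{mciver2005abstraction}) avoids optimization over schedulers entirely and works per scheduler at a finite horizon: the still-running part of the distribution after $n$ iterations again satisfies $\varphi\sep\sure I$ (your invariant-pair observation), and the residual scheduler is again a valid scheduler, so the hypothesis bounds the residual non-termination probability by $1-p$, giving $\Pr[\text{never terminate}]\le\Pr[\text{survive $n$ iterations}]\cdot(1-p)$; since $\Pr[\text{survive $n$ iterations}]$ decreases to $\Pr[\text{never terminate}]$ as $n\to\infty$, this forces $\Pr[\text{never terminate}]\le\Pr[\text{never terminate}]\cdot(1-p)$, hence $0$, for every scheduler. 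To repair your write-up, replace the pumping step by this finite-horizon decomposition, or simply cite Lemma D.3 directly, as the paper does.
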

\begin{proof}
Follows by an identical argument to Lemma D.3 of \citet{zilberstein2025demonic}.
\end{proof}

\section{Logic and Rules}

\begin{lemma}[Monotonicity]
\label{lem:mono-sat}
If $\Gamma,\P \vDash \varphi$ and $\P\preceq \Q$, then $\Gamma,\Q\vDash\varphi$.
\end{lemma}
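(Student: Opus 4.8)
The plan is to prove the statement by structural induction on $\varphi$, relying on two facts: that $\preceq$ is transitive (it is an order on probability spaces), and that pure assertions are upward closed along $\sqsubseteq$ on memories, i.e.\ if $\Gamma,\sigma\vDash P$ and $\sigma\sqsubseteq\tau$ then $\Gamma,\tau\vDash P$. The latter is precisely the semantic content of the affine reading of the pure fragment and is itself a routine induction on $P$ (the only interesting clause being $e\mapsto E$, where $\de{e}_\expr$ is insensitive to the extra variables in $\tau$). With these in hand, the cases $\top$ and $\bot$ are immediate, and $\varphi\land\psi$, $\varphi\lor\psi$, and $\exists X.\ \varphi$ follow directly by applying the induction hypothesis to the subformula(e) under the same pair $\P\preceq\Q$ (for $\exists$, after updating $\Gamma$ at $X$).

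The three ``witnessed'' connectives $\bigoplus_{X\sim d(E)}\varphi$, $\bignd_{X\in E}\varphi$, and $\varphi\ast_m\psi$ do not even require the induction hypothesis, since their satisfaction clauses assert the existence of auxiliary spaces together with a refinement into $\P$. For $\bigoplus$, I reuse the same family $(\P_v)$: the side condition $\forall v.\ \Gamma[X\coloneqq v],\comp(\P_v)\vDash\varphi$ is unchanged, and $\bigoplus_{v\sim\mu}\P_v\preceq\P\preceq\Q$ gives $\bigoplus_{v\sim\mu}\P_v\preceq\Q$ by transitivity. The $\bignd$ case reduces to the $\bigoplus$ case applied to the witnessing distribution $\mu$ (the argument uses only transitivity, so it applies verbatim to an arbitrary $\mu$). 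For $\ast_m$ I reuse the same $\P_1,\P_2$ and $\P'\in\P_1\diamond_m\P_2$, upgrading $\P'\preceq\P$ to $\P'\preceq\Q$ by transitivity.

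The substantive case, and the main obstacle, is $\sure P$. Writing $\Omega_\P=\mem S$ and $\Omega_\Q=\mem{S'}$, the first clause of $\P\preceq\Q$ forces $S\subseteq S'$, since $\mem S\subseteq\pi_S(\mem{S'})=\mem{S\cap S'}$. From $\sem{P}_\Gamma^S\in\F_\P$ together with the event-space and measure clauses of $\preceq$, I set $\hat A\triangleq\bigcup\{B\in\F_\Q\mid \pi_S(B)=\sem{P}_\Gamma^S\}$, which lies in $\F_\Q$ and satisfies $\mu_\Q(\hat A)=\mu_\P(\sem{P}_\Gamma^S)=1$. Upward closure of $P$ then gives $\hat A\subseteq\sem{P}_\Gamma^{S'}$: every $\sigma\in\hat A$ lies in some $B$ with $\pi_S(B)=\sem{P}_\Gamma^S$, so $\pi_S(\sigma)\in\sem{P}_\Gamma^S$, whence $\Gamma,\pi_S(\sigma)\vDash P$, and since $\pi_S(\sigma)\sqsubseteq\sigma$ we conclude $\Gamma,\sigma\vDash P$. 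Finally I invoke completeness of $\Q$: because $\Omega_\Q\setminus\hat A$ has measure zero and $\sem{P}_\Gamma^{S'}\setminus\hat A\subseteq\Omega_\Q\setminus\hat A$, the difference is measurable of measure zero, so $\sem{P}_\Gamma^{S'}=\hat A\cup(\sem{P}_\Gamma^{S'}\setminus\hat A)\in\F_\Q$ with $\mu_\Q(\sem{P}_\Gamma^{S'})=1$. This last step---recovering measurability of $\sem{P}_\Gamma^{S'}$ in the finer event space of $\Q$---is exactly where completeness is indispensable, and it is the crux of the argument; every other case is either immediate or a one-line appeal to transitivity of $\preceq$.
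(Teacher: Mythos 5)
Your proof is correct and follows essentially the same route as the paper's: structural induction where the conjunction, disjunction, and existential cases use the induction hypothesis, the witnessed connectives ($\bigoplus$, $\bignd$, $\ast_m$) are discharged by reusing the same witnesses and transitivity of $\preceq$, and the $\sure P$ case is handled by the same construction $\bigcup\{B\in\F_\Q\mid \pi_S(B)=\sem{P}_\Gamma^S\}$ together with completeness of $\Q$. The only difference is that you explicitly justify the inclusion of that union into $\sem{P}_\Gamma^{S'}$ via upward closure of pure assertions along $\sqsubseteq$, a step the paper's proof asserts without comment ("by definition"), so your write-up is if anything slightly more careful at the crux.
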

\begin{proof}
By induction on the structure of $\varphi$.

\begin{itemize}
\item $\varphi = \top$. Since $\Gamma,\Q\vDash\top$ for all $\Q$, the claim holds trivially.

\item $\varphi=\bot$. The premise is false, therefore this case is vacuous.

\item $\varphi = \varphi_1 \land\varphi_2$. By the induction hypothesis, we know that $\Gamma,\Q\vDash\varphi_1$ and $\Gamma,\Q\vDash\varphi_2$, therefore $\Gamma,\Q \vDash\varphi_1\land\varphi_2$.

\item $\varphi = \varphi_1 \lor\varphi_2$. Without loss of generality, suppose that $\Gamma,\P\vDash \varphi_1$. By the induction hypothesis, we know that $\Gamma,\Q\vDash\varphi_1$, therefore by weakening $\Gamma,\Q \vDash\varphi_1\lor\varphi_2$.

\item $\varphi = \exists X.\psi$. We know that there is some $v\in\mathsf{Val}$ such that $\Gamma[X\coloneqq v], \P \vDash\psi$. By the induction hypothesis, we get that $\Gamma[X\coloneqq v], \Q \vDash \psi$. This implies that $\Gamma,\Q \vDash\exists X.\psi$.

\item $\varphi = \bigoplus_{X\sim d(E)} \psi$. Immediate, since the semantics stipulates that $\P$ is greater than the direct sum, therefore $\Q$ is also greater than the direct sum.

\item $\varphi = \bignd_{X\in E} \psi$. We know that $\Gamma,\P\vDash\bigoplus_{X\sim \mu}$ for some $\mu \in \D(\de{E}_{\mathsf{LExp}}(\Gamma))$. By the previous case, we get that $\Gamma,\Q\vDash\bigoplus_{X\sim\mu}\psi$. This implies that $\Gamma,\Q\vDash\bignd_{X\in E}\psi$.

\item $\varphi = \varphi_1 \sep_m \varphi_2$. Immediate, since we know that $\P_1 \diamond_m \P_2 \preceq \P$ such that $\Gamma,\P_i\vDash\varphi_i$ for each $i$, and therefore $\P_1 \diamond_m \P_2 \preceq \Q$ as well.

the semantics stipulates that $\P$ is greater than the independent product, therefore $\Q$ is also greater than the independent product.

\item $\varphi = \sure P$. Let $\Omega_\P = \mem S$ and $\Omega_\Q = \mem T$, and note that $S\subseteq T$ since $\P \preceq\Q$. We know that $\sem{P}_\Gamma^S \in \F_\P$ and $\mu_\P(\sem{P}_\Gamma^S ) = 1$.
%Now, we know that $\F_\P \subseteq \{ \pi_S(A) \mid A \in \F_\Q \}$, so there must be some $A \in \F_\Q$ such that $\pi_S(A) = \sem{P}_\Gamma^S$. We also know that:
We also know that $\mu_\P(\sem{P}_\Gamma^S) = \mu_\Q( \bigcup_{B\mid \pi_S(B) = \sem{P}_\Gamma^S} B ) = 1$. Note that by definition $\bigcup_{B\mid \pi_S(B) = \sem{P}_\Gamma^S} B \subseteq \sem{P}_\Gamma^T$. Since that set has probability 1 and $\Q$ is a complete probability space, then $\sem{P}_\Gamma^T \in \F_\Q$, and also has probability 1.

\end{itemize}
\end{proof}

\subsection{Precise and Convex Assertions}

\begin{lemma}
$\precise{\sure P}$
\end{lemma}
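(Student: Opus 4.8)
The plan is to exhibit an explicit candidate for the smallest model of $\sure P$ and then show that every other model lies above it in the $\preceq$ order. Fix a context $\Gamma$ under which $\sure P$ is satisfiable, set $S_0 = \free(P)$, and write $W = \sem{P}_\Gamma^{S_0}$ (which is just $\sem{P}_\Gamma$) for the set of memories over $S_0$ satisfying $P$. The candidate is $\P_0 = \tuple{\mem{S_0}, \F_0, \mu_0}$ with $\F_0 = \{ A \subseteq \mem{S_0} \mid A \cap W \in \{\emptyset, W\} \}$ and $\mu_0(A) = 1$ if $W \subseteq A$, else $\mu_0(A) = 0$; this is precisely the completion of the two-atom algebra $\{\emptyset, W, \mem{S_0}\setminus W, \mem{S_0}\}$ under the measure giving $W$ full weight. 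First I would dispatch the routine facts: that $\F_0$ is a complete $\sigma$-algebra, and that $\Gamma, \P_0 \vDash \sure P$, which holds because $W \in \F_0$ and $\mu_0(W) = 1$. Intuitively $W$ is the smallest full-measure event and its complement is null, so completeness forces exactly the subsets of that complement into $\F_0$.

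The substance of the argument is that $\P_0 \preceq \P'$ for every complete $\P' = \tuple{\mem{S'}, \F', \mu'}$ with $\Gamma, \P' \vDash \sure P$. Satisfiability forces $S_0 \subseteq S'$, and since the truth of $P$ depends only on the variables in $S_0$, the full-measure event factors as a cylinder, $\sem{P}_\Gamma^{S'} = W \sep \mem{S'\setminus S_0}$; hence its complement is $\mu'$-null and, by completeness of $\P'$, every subset of that complement belongs to $\F'$. This null-set-plus-completeness observation is the key leverage for both remaining clauses. The sample-space clause is immediate since $\pi_{S_0}(\mem{S'}) = \mem{S_0}$. For the event-space clause, given $B \in \F_0$ I realize it as a projection by taking the cylinder $A = B \sep \mem{S'\setminus S_0}$: if $B \cap W = \emptyset$ then $A$ sits inside the null complement and so $A \in \F'$, while if $W \subseteq B$ then $A = \sem{P}_\Gamma^{S'} \cup ((B\setminus W)\sep\mem{S'\setminus S_0})$ is the union of a measurable set and a null subset, again in $\F'$; in both cases $\pi_{S_0}(A) = B$. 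For the measure clause I use the same split: when $B \cap W = \emptyset$ the entire preimage $B \sep \mem{S'\setminus S_0}$ is null, so $\mu'$ of the relevant union is $0 = \mu_0(B)$, and when $W \subseteq B$ the witness $A$ contains the full-measure set $\sem{P}_\Gamma^{S'}$, forcing $\mu'$ of the union to be $1 = \mu_0(B)$.

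Uniqueness of the smallest model then follows from antisymmetry of $\preceq$. I expect the only genuine obstacle to be the event-space and measure clauses of $\P_0 \preceq \P'$, and more precisely isolating that completeness of $\P'$ is exactly the hypothesis needed to convert ``$W$ has full measure'' into ``every $B \in \F_0$ is the projection of a $\P'$-measurable set carrying the correct mass.'' The verification that $\F_0$ is a complete $\sigma$-algebra and that $\P_0$ models $\sure P$ is pure bookkeeping by comparison.
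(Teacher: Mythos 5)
Your proposal is correct and takes essentially the same route as the paper: both exhibit the identical minimal model---the completed two-atom $\sigma$-algebra on $\mem{\free(P)}$ generated by $W = \sem{P}_\Gamma$, with measure $1$ on an event exactly when it contains $W$---and then argue that any other model of $\sure P$ dominates it in $\preceq$. The only difference is level of detail: where the paper dismisses minimality as clear, you verify the event-space and measure clauses of $\preceq$ explicitly via cylinder witnesses and completeness of the larger space, a useful elaboration but not a different argument.
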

\begin{proof}
Take any $\Gamma$. If $\sem{P}_\Gamma = \emptyset$, then $P$ is unsatisfiable under $\Gamma$, so we are done. If not, then let $\Omega = \mem{\mathsf{fv}(P)}$, $\F = \{ A\subseteq\Omega \mid \sem{P}_\Gamma \subseteq A \} \cup \{ A \subseteq\Omega \mid A \cap \sem{P}_\Gamma = \emptyset \}$ and:
\[
  \mu(A) = \left\{
    \begin{array}{ll}
      1 & \text{if} ~ \sem{P}_\Gamma \subseteq A \\
      0 & \text{otherwise}
    \end{array}\right.
\]
It is relatively easy to see that $\mu$ is a probability measure since $\sem{P}_\Gamma$ is the smallest measurable set with nonzero probability, and it has probability 1, so the countable additivity property holds. By definition, $\Gamma,\tuple{\Omega, \F, \mu}\vDash\sure P$. Clearly it is also minimal, since any other $\P$ such that $\Gamma,\P\vDash \sure P$ must also include $\F$ as measurable sets by definition, and must assign probability 1 to the event $\sem{P}_\Gamma$.
\end{proof}

\begin{lemma}
If $\precise{\varphi,\psi}$, then $\precise{\varphi\sep\psi}$.
\end{lemma}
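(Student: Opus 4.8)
The plan is to show that the independent product of the minimal witnesses for $\varphi$ and $\psi$ is itself the minimal witness for $\varphi\sep\psi$. Throughout I would use that $\preceq$ is a partial order (reflexive, transitive, and antisymmetric), \Cref{lem:otimes-mono} (monotonicity of $\otimes$), and the fact that for strong separation $\P_1\diamond_\st\P_2 = \{\P_1\otimes\P_2\}$, so that $\Gamma,\P\vDash\varphi\sep\psi$ unfolds to the existence of $\P_1,\P_2$ with $\Gamma,\P_1\vDash\varphi$, $\Gamma,\P_2\vDash\psi$, and $\P_1\otimes\P_2\preceq\P$.

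First I would fix any $\Gamma$ under which $\varphi\sep\psi$ is satisfiable and unfold the semantics to obtain witnesses $\P_1,\P_2,\P$ with $\Gamma,\P_1\vDash\varphi$, $\Gamma,\P_2\vDash\psi$, and $\P_1\otimes\P_2\preceq\P$. Crucially, the product $\P_1\otimes\P_2$ is well-defined, so $\Omega_{\P_1}\subseteq\mem S$ and $\Omega_{\P_2}\subseteq\mem T$ for \emph{disjoint} $S,T$. Since $\varphi$ and $\psi$ are then each individually satisfiable under $\Gamma$, \Cref{def:precise} supplies unique smallest spaces $\Q_1,\Q_2$ with $\Gamma,\Q_1\vDash\varphi$ and $\Gamma,\Q_2\vDash\psi$, which lie $\preceq$-below every model of $\varphi$, resp.\ $\psi$. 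Applying this minimality to the particular witnesses gives $\Q_1\preceq\P_1$ and $\Q_2\preceq\P_2$; since the variable footprint can only shrink along $\preceq$, the domains of $\Q_1$ and $\Q_2$ are subsets of $S$ and $T$ respectively, hence disjoint, so $\Q_1\otimes\Q_2$ is well-defined. Taking $\P_1=\Q_1$, $\P_2=\Q_2$, $\P'=\Q_1\otimes\Q_2$ and using reflexivity of $\preceq$ then shows $\Gamma,\Q_1\otimes\Q_2\vDash\varphi\sep\psi$.

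It remains to establish minimality. For arbitrary $\P'$ with $\Gamma,\P'\vDash\varphi\sep\psi$, I would unfold the semantics to get $\P_1',\P_2'$ with $\Gamma,\P_1'\vDash\varphi$, $\Gamma,\P_2'\vDash\psi$, and $\P_1'\otimes\P_2'\preceq\P'$. Minimality of $\Q_1,\Q_2$ yields $\Q_1\preceq\P_1'$ and $\Q_2\preceq\P_2'$, whence \Cref{lem:otimes-mono} gives $\Q_1\otimes\Q_2\preceq\P_1'\otimes\P_2'$, and transitivity closes the gap to $\Q_1\otimes\Q_2\preceq\P'$. Thus $\Q_1\otimes\Q_2$ is a smallest model of $\varphi\sep\psi$, and uniqueness is immediate from antisymmetry of $\preceq$.

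The main obstacle is the well-definedness step for $\Q_1\otimes\Q_2$: precision of $\varphi$ and $\psi$ individually says nothing about the variable footprints of their minimal models, so I must route disjointness through the \emph{satisfiability} hypothesis---using it to exhibit disjoint-domain witnesses $\P_1,\P_2$ and then transferring that disjointness downward to $\Q_1,\Q_2$ via the footprint-shrinking behaviour of $\preceq$. Everything else is a routine combination of \Cref{lem:otimes-mono}, \Cref{lem:mono-sat}, and the order-theoretic properties of $\preceq$.
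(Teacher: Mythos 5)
Your proof is correct and follows essentially the same route as the paper's: take the unique minimal models $\Q_1,\Q_2$ of the two conjuncts, observe that their independent product satisfies $\varphi\sep\psi$, and establish minimality by combining precision of the conjuncts with \Cref{lem:otimes-mono} and transitivity of $\preceq$. The only difference is a minor refinement: you explicitly justify well-definedness of $\Q_1\otimes\Q_2$ by transferring footprint disjointness from the witnesses of $\varphi\sep\psi$ down along $\preceq$, a point the paper's proof leaves implicit.
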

\begin{proof}
Take any $\Gamma$, if either $\varphi$ or $\psi$ is not satisfiable under $\Gamma$, then neither is $\varphi\sep\psi$, and then the claim holds vacuously. If both are satisfiable, then there are unique smallest $\P_1$ and $\P_2$ such that $\Gamma,\P_1\vDash\varphi$ and $\Gamma,\P_2\vDash\psi$. Clearly, this means that $\Gamma,\P_1\otimes\P_2 \vDash \varphi\sep\psi$. We now argue that $\P_1\otimes\P_2$ is minimal.
Take any $\Q$ such that $\Gamma,\Q\vDash\varphi\sep\psi$. This means that there are $\Q_1 \otimes \Q_2 \preceq \Q$ such that $\Gamma,\Q_1 \vDash\varphi$ and $\Gamma,\Q_2\vDash\psi$. By precision of $\varphi$ and $\psi$, we know that $\P_1 \preceq \Q_1$ and $\P_2 \preceq\Q_2$. Using \Cref{lem:otimes-mono}, we get:
\[
  \P_1 \otimes \P_2 \preceq \Q_1 \otimes \Q_2 \preceq \Q
\]
\end{proof}

\begin{lemma}
If $\precise\varphi$ and $\varphi\Rightarrow \sure{ e \mapsto X }$, then $\precise{ \bigoplus_{X \sim d(E)} \varphi }$.
\end{lemma}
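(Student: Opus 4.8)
The plan is to mirror the structure of the preceding precision lemmas: fix a context $\Gamma$ under which $\bigoplus_{X\sim d(E)}\varphi$ is satisfiable (otherwise the claim is vacuous), construct an explicit candidate for the unique smallest model, verify it satisfies the assertion, and finally show every other model refines it. Write $\mu = d(\de{E}_{\mathsf{LExp}}(\Gamma))$. From satisfiability of $\bigoplus_{X\sim d(E)}\varphi$ we extract witnesses, so in particular $\varphi$ is satisfiable under $\Gamma[X\coloneqq v]$ for each $v\in\supp(\mu)$; precision of $\varphi$ then yields a unique smallest $\Q_v$ with $\Gamma[X\coloneqq v],\Q_v\vDash\varphi$.

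The key use of the hypothesis $\varphi\Rightarrow\sure{e\mapsto X}$ is to turn these models into a family with pairwise disjoint sample spaces, as required by the direct sum. Since $\Gamma[X\coloneqq v],\Q_v\vDash\sure{e\mapsto v}$, the event $A_v \triangleq \{\sigma\mid\de{e}_\expr(\sigma)=v\}$ has $\mu_{\Q_v}$-measure $1$, and the $A_v$ are pairwise disjoint for distinct $v$. I would restrict each $\Q_v$ to $A_v$, obtaining $\P_v$ with $\Omega_{\P_v}=A_v$; because $A_v$ carries full measure and $\Q_v$ is complete, one checks $\comp(\P_v)=\Q_v$. The candidate smallest model is then $\P^\star \triangleq \comp\!\big(\bigoplus_{v\sim\mu}\P_v\big)$.

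To see $\Gamma,\P^\star\vDash\bigoplus_{X\sim d(E)}\varphi$, use the family $(\P_v)$ as witnesses: $\Gamma[X\coloneqq v],\comp(\P_v)\vDash\varphi$ holds since $\comp(\P_v)=\Q_v$, and $\bigoplus_{v\sim\mu}\P_v\preceq\P^\star$ holds since a space always refines its completion. For minimality, take any $\P'$ with $\Gamma,\P'\vDash\bigoplus_{X\sim d(E)}\varphi$, with witnesses $(\P'_v)$ satisfying $\Gamma[X\coloneqq v],\comp(\P'_v)\vDash\varphi$ and $\bigoplus_{v\sim\mu}\P'_v\preceq\P'$. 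Precision of $\varphi$ gives $\Q_v\preceq\comp(\P'_v)$ for every $v$. I would then argue $\P_v\preceq\P'_v$ --- here the hypothesis $\varphi\Rightarrow\sure{e\mapsto X}$ is used again, since $\comp(\P'_v)$ assigns measure $1$ to $A_v$, forcing the positive-measure part of $\P'_v$ into $A_v$, so that the restriction $\P_v=\Q_v|_{A_v}$ refines $\P'_v$ --- and conclude $\bigoplus_{v\sim\mu}\P_v\preceq\bigoplus_{v\sim\mu}\P'_v$ by monotonicity of the direct sum (\Cref{lem:oplus-mono}). Completing and chaining with $\bigoplus_{v\sim\mu}\P'_v\preceq\P'$ and transitivity of $\preceq$ yields $\P^\star\preceq\P'$. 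Uniqueness of the smallest model then follows from antisymmetry of $\preceq$.

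The main obstacle is the minimality step, and specifically the interaction between $\comp$ and $\bigoplus$. Precision of $\varphi$ only compares completions, $\Q_v\preceq\comp(\P'_v)$, and completion erases information about the underlying sample space of $\P'_v$; recovering the per-slice comparison $\P_v\preceq\P'_v$ therefore requires the measure-$1$ concentration on $A_v$ supplied by $\varphi\Rightarrow\sure{e\mapsto X}$, together with some care about the case where $\P'_v$ ranges over strictly more program variables than $\Q_v$. If the sample-space bookkeeping through \Cref{lem:oplus-mono} proves delicate, the fallback is to verify $\P^\star\preceq\P'$ directly from the definition of $\preceq$, checking the sample-space inclusion and the measure identity $\mu_{\P^\star}(A)=\sum_v\mu(v)\,\mu_{\Q_v}(A\cap A_v)$ against $\P'$ slice by slice, in the style of the measure computations in \Cref{lem:oplus-mono}. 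It is exactly this partitioning side condition that fails for assertions like $\sure{x\mapsto 1}\oplus_{1/2}\sure{x\in\{0,1\}}$, where no expression witnesses a clean split and precision genuinely breaks.
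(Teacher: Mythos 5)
Your proposal is correct and follows essentially the same route as the paper's own proof: restrict the unique smallest models of $\varphi$ (given by precision) to the disjoint slices $\{\sigma \mid \de{e}_\expr(\sigma) = v\}$ witnessed by $\varphi \Rightarrow \sure{e\mapsto X}$, take the direct sum as the candidate minimal model, and prove minimality by applying precision per value, undoing the completion, and invoking monotonicity of the direct sum (\Cref{lem:oplus-mono}) plus transitivity of $\preceq$. The only difference is cosmetic: you wrap the candidate in an outer $\comp(\cdot)$ (arguably a bit more careful, since models in the satisfaction relation carry full sample spaces), whereas the paper uses the direct sum itself.
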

\begin{proof}
Take any $\Gamma$ and let $\nu = d(\de{E}_{\mathsf{LExp}}(\Gamma)$, if $\varphi$ is unsatisfiable under any $\Gamma[X \coloneqq v]$, then so is $\bigoplus_{X \sim d(E)} \varphi $, so the claim holds vacuously. If not, then there is a unique smallest $\P_v$ such that $\Gamma[X \coloneqq v], \P_v\vDash\varphi$ for each $v\in\supp(\nu)$. Since $\varphi\Rightarrow \sure{e \mapsto X}$, we can create new disjoint probability spaces $\P'_v$, where each $\Omega_{\P'_v} =  \{ \sigma\in\Omega_{\P_v} \mid \de{e}_\expr(\sigma) = v \}$. Note that this does not remove any samples that have positive probability, and clearly $\P_v = \comp(\P'_v)$. Let $\P = \bigoplus_{v\sim \nu} \P_v'$, then $\Gamma,\P \vDash \bigoplus_{X \sim d(E)} \varphi$. It only remains to show that $\P$ is minimal.

Take any $\Q$ such that $\Gamma,\Q\vDash\bigoplus_{X \sim d(E)} \varphi$. That means that $\Gamma[X\coloneqq v], \comp(\Q_v)\vDash\varphi$, where $\bigoplus_{v\sim\nu}\Q_v \preceq \Q$. Since $\varphi$ is precise, then $\P_v = \comp(\P'_v) \preceq \comp(\Q_v)$ for each $v$. Since the completion only expands the sample space with zero probability events, this must mean that $\P'_v\preceq\Q_v$ as well. Therefore, by \Cref{lem:oplus-mono}:
\[
  \P = \bigoplus_{v\sim\nu} \P'_v \preceq \bigoplus_{v\sim\nu} \Q_v \preceq \Q
\]
\end{proof}

\begin{lemma}
If $\convex{\varphi_1,\varphi_2}$, then $\convex{\varphi_1\osep\varphi_2}$.
\end{lemma}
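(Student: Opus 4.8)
The plan is to mirror the proof that $\precise{\varphi,\psi}$ implies $\precise{\varphi\sep\psi}$, replacing ``unique smallest probability space'' by ``convex set of measures on a fixed pair $\tuple{\Omega,\F}$'' and the independent product $\otimes$ by an \emph{arbitrary coupling} (since $\osep$ imposes no independence). Fix a context $\Gamma$; if $\varphi_1\osep\varphi_2$ is unsatisfiable under $\Gamma$ the claim is vacuous, so assume it is satisfiable. A witness supplies disjoint full-memory spaces realizing $\varphi_1$ and $\varphi_2$ separately, so both conjuncts are satisfiable under $\Gamma$ and $\convex{\varphi_1}$, $\convex{\varphi_2}$ yield fixed spaces $\tuple{\Omega_1,\F_1}$, $\tuple{\Omega_2,\F_2}$ and convex measure sets $S_1$, $S_2$ with the stated $\preceq$-characterizations. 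Writing $U,V$ for the variable sets with $\Omega_1\subseteq\mem U$ and $\Omega_2\subseteq\mem V$, I first note $U\cap V=\emptyset$: the forced variables of the minimal spaces $\Omega_1,\Omega_2$ must lie inside the disjoint footprints of any $\diamond_\wk$-witness, so they cannot overlap.

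Next I build the data witnessing $\convex{\varphi_1\osep\varphi_2}$. Take $\Omega\triangleq\Omega_1\sep\Omega_2$, let $\F$ be the $\sigma$-algebra generated by the rectangles $\{A\sep B\mid A\in\F_1,\ B\in\F_2\}$, and set
\[
  S \triangleq \bigl\{\, m \text{ a measure on }\tuple{\Omega,\F}\ \big|\ \pi_U(\tuple{\Omega,\F,m})\text{ has measure in }S_1,\ \pi_V(\tuple{\Omega,\F,m})\text{ has measure in }S_2 \,\bigr\}.
\]
Thus $S$ collects \emph{all} couplings of a measure of $S_1$ with a measure of $S_2$, matching the fact that $\osep$ requires no independence. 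In the discrete setting each such $m$ is a joint distribution on the cells $\ev(\tuple{\Omega_1,\F_1})\times\ev(\tuple{\Omega_2,\F_2})$, so $S$ is the polytope cut out by the affine marginal constraints. Convexity of $S$ is then immediate: marginalization is linear, so for $m,m'\in S$ and $p\in[0,1]$ the marginals of $p\,m+(1-p)\,m'$ are the corresponding convex combinations of marginals, which remain in $S_1,S_2$ as these are convex.

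It remains to verify $\Gamma,\P\vDash\varphi_1\osep\varphi_2 \iff \exists m\in S.\ \tuple{\Omega,\F,m}\preceq\P$. For $(\Leftarrow)$, given $m\in S$ with $\tuple{\Omega,\F,m}\preceq\P$, set $\P_1\triangleq\comp(\pi_U(\tuple{\Omega,\F,m}))$ and $\P_2\triangleq\comp(\pi_V(\tuple{\Omega,\F,m}))$; their marginals lie in $S_1,S_2$, so the characterizations give $\Gamma,\P_1\vDash\varphi_1$ and $\Gamma,\P_2\vDash\varphi_2$, while $\P'\triangleq\comp(\tuple{\Omega,\F,m})$ has $\pi_U(\P')=\P_1$, $\pi_V(\P')=\P_2$ (so $\P'\in\P_1\diamond_\wk\P_2$) and $\P'\preceq\P$ because completion adds only null information already present in the complete space $\P$. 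For $(\Rightarrow)$, unfolding the semantics gives $\P_1,\P_2,\P'$ with $\P'\in\P_1\diamond_\wk\P_2$, $\P'\preceq\P$, $\Gamma,\P_1\vDash\varphi_1$, $\Gamma,\P_2\vDash\varphi_2$; the characterizations then produce $\mu_1\in S_1$, $\mu_2\in S_2$ with $\tuple{\Omega_1,\F_1,\mu_1}\preceq\P_1=\pi_U(\P')$ and $\tuple{\Omega_2,\F_2,\mu_2}\preceq\P_2=\pi_V(\P')$. I define $m$ on $\F$ by pushing $\mu_{\P'}$ down along $\pi_{U\cup V}$, i.e.\ $m(C)=\mu_{\P'}\bigl(\bigcup\{B\in\F_{\P'}\mid\pi_{U\cup V}(B)=C\}\bigr)$, which gives $\tuple{\Omega,\F,m}\preceq\P'\preceq\P$ directly; its $U$- and $V$-marginals equal $\mu_1$ and $\mu_2$ because $\preceq$ pins down the measure on the fixed spaces $\tuple{\Omega_1,\F_1}$, $\tuple{\Omega_2,\F_2}$ uniquely and projection is $\preceq$-monotone, whence $m\in S$.

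The routine parts are convexity of $S$ and the $(\Leftarrow)$ direction. The main obstacle is the $(\Rightarrow)$ direction: showing the induced coupling $m$ simultaneously refines into $\P$ and has marginals exactly $\mu_1\in S_1$, $\mu_2\in S_2$. This is the measure-theoretic bookkeeping linking the projection operator $\pi_U$, marginalization, and the three clauses of $\preceq$ (in particular the uniqueness of a $\preceq$-minorant measure on a fixed event space). It is the weak-separation analogue of the step where the precise-case proof invoked monotonicity of $\otimes$ (\Cref{lem:otimes-mono}), except that here there is no product to appeal to and an arbitrary coupling must be tracked by hand.
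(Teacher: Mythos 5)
Your proof is correct and follows essentially the same route as the paper's: both construct the witnessing data as the rectangle-generated $\sigma$-algebra on $\Omega_1\sep\Omega_2$ together with the convex set of all couplings (measures whose $U$- and $V$-marginals lie in $S_1$ and $S_2$), observe convexity from linearity of marginalization, and verify the two directions of the characterization by projecting along $\pi_U,\pi_V$ and pushing the ambient measure down along $\pi_{U\cup V}$. The only differences are cosmetic bookkeeping: you push the measure down from the intermediate witness $\P'$ and appeal to uniqueness of $\preceq$-minorant measures on a fixed event space (also handling completions and disjointness of footprints explicitly), whereas the paper pushes down from $\P$ itself and computes the marginals directly.
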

\begin{proof}
Take any $\Gamma$, if $\varphi_1\osep\varphi_2$ is satisfiable under $\Gamma$, then so are each $\varphi_k$.
Since each $\varphi_k$ is convex, we know that there exist $\Omega_k$, $\F_k$, and $S_k$ such that $\Gamma,\P\vDash\varphi$ iff $\tuple{\Omega_k, \F_k, \mu}\preceq\P$ for some $\mu \in S_k$.
Let $U$ and $V$ be sets such that $\Omega_1 = \mem U$ and $\Omega_2 = \mem V$, let $\Omega = \Omega_1 \sep \Omega_2$, $\F$ be the smallest $\sigma$-algebra containing $\{ A \sep B \mid A\in \F_1, B\in\F_2 \}$, and $S = \{ \mu \mid \pi_U(\mu) \in S_1, \pi_V(\mu) \in S_2 \}$ ($S$ is clearly convex since $S_1$ and $S_2$ are convex).

We complete the proof by showing that $\Gamma,\P\vDash \varphi_1\osep\varphi_2$ iff $\tuple{\Omega,\F,\mu} \preceq \P$ for some $\mu \in S$. For the forward direction, suppose that $\Gamma,\P\vDash \varphi_1\osep\varphi_2$, so $\Gamma,\pi_U(\P)\vDash \varphi_1$ and $\Gamma,\pi_V(\P)\vDash \varphi_2$. Due to convexity of $\varphi_1$ and $\varphi_2$, there must be $\mu_1 \in S_1$ and $\mu_2\in S_2$ such that $\tuple{\Omega_1, \F_1, \mu_1} \preceq \pi_U(\P)$ and $\tuple{\Omega_2, \F_2, \mu_2} \preceq \pi_V(\P)$. Now, let:
\[
  \mu(A) \triangleq \mu_\P\left(
    \bigcup \left\{
      B \in \F_\P \mid \pi_{U \cup V}(B) = A
    \right\}
  \right)
\]
This gives us:
\[
  \pi_U(\mu)(A)
  = \mu(A \sep \mem V)
  = \mu_\P\left(\bigcup \left\{B \in \F_\P \mid \pi_{U \cup V}(B) = A\sep\mem{V}  \right\}\right)
  = \mu_\P\left(\bigcup \left\{B \in \F_\P \mid \pi_{U}(B) = A \right\}\right)
  = \mu_1(A)
\]
And similarly, $\pi_V(\mu)(B) = \mu_2(B)$, therefore $\mu \in S$ by construction, and so also by construction $\tuple{\Omega, \F, \mu}\preceq \P$.

For the reverse direction, suppose that $\tuple{\Omega, \F, \mu}\preceq \P$ for some $\mu \in S$. That means that $\pi_U(\mu) \in S_1$ and $\pi_V(\mu) \in S_2$, and therefore $\Gamma,\tuple{\Omega_1, \F_1, \pi_U(\mu)}\vDash\varphi_1$ and $\Gamma,\tuple{\Omega_2, \F_2, \pi_V(\mu)}\vDash\varphi_2$. Since $\tuple{\Omega, \F, \mu} \in \tuple{\Omega_1, \F_1, \pi_U(\mu)}\diamond_\wk \tuple{\Omega_2, \F_2, \pi_V(\mu)}$, then $\Gamma,\P\vDash\varphi_1\osep\varphi_2$.
\end{proof}

\begin{lemma}
If $\convex{\varphi_1,\varphi_2}$, $\varphi_1\Rightarrow \sure{e\mapsto 1}$, and $\varphi_2\Rightarrow\sure{e\mapsto 0}$, then $\convex{\varphi_1\oplus_{\ge E}\varphi_2}$.
\end{lemma}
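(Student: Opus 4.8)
The plan is to fix a context $\Gamma$ under which $\varphi_1\oplus_{\ge E}\varphi_2$ is satisfiable, write $p=\de{E}_{\mathsf{LExp}}(\Gamma)$, and exhibit a single $\Omega$, $\F$, and convex set $S$ of measures witnessing $\convex{\varphi_1\oplus_{\ge E}\varphi_2}$. First I would unfold the syntactic sugar: $\Gamma,\P\vDash\varphi_1\oplus_{\ge E}\varphi_2$ holds iff there is some $x\in[\max(0,p),1]$ and a family $(\P_v)_{v\in\supp(\bern{x})}$ with $\bigoplus_{v\sim\bern{x}}\P_v\preceq\P$ such that $\comp(\P_1)\vDash\varphi_1$ and $\comp(\P_0)\vDash\varphi_2$. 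Here the disjuncts $\sure{Y=1}$ and $\sure{Y=0}$ inside $\oplus$ select the correct branch in each component, and the outer $\sure{X\ge E}\sep(-)$ contributes only the guard $x\ge p$, since $\sure{X\ge E}$ has empty footprint. This reduces the whole connective to a guarded Bernoulli direct sum of a $\varphi_1$-model (weight $x$) and a $\varphi_2$-model (weight $1-x$).

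In the generic case where both $\varphi_1$ and $\varphi_2$ are satisfiable under $\Gamma$, I would invoke $\convex{\varphi_1}$ and $\convex{\varphi_2}$ to obtain canonical data $(\Omega_1,\F_1,S_1)$ and $(\Omega_2,\F_2,S_2)$. Using the hypotheses $\varphi_1\Rightarrow\sure{e\mapsto1}$ and $\varphi_2\Rightarrow\sure{e\mapsto0}$, every measure in $S_1$ assigns probability $1$ to the event $e=1$ and every measure in $S_2$ to $e=0$, so---exactly as in the precision proof for $\bigoplus$---the canonical sample spaces may be relocated inside the disjoint regions $e=1$ and $e=0$, making the direct sum well-defined. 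I would then set $\Omega=\Omega_1\uplus\Omega_2$, take $\F$ to be the direct-sum $\sigma$-algebra $\{A\cup B\mid A\in\F_1,\ B\in\F_2\}$, and define
\[
  S \triangleq \Big\{\, \theta_{x,\mu_1,\mu_0} \;\Big|\; x\in[\max(0,p),1],\ \mu_1\in S_1,\ \mu_0\in S_2 \,\Big\}, \qquad \theta_{x,\mu_1,\mu_0}(A\cup B) = x\,\mu_1(A) + (1-x)\,\mu_0(B),
\]
so that $\tuple{\Omega,\F,\theta_{x,\mu_1,\mu_0}}$ is precisely $\bigoplus_{v\sim\bern{x}}\tuple{\Omega_v,\F_v,\mu_v}$, with the $v=1$ component being $\tuple{\Omega_1,\F_1,\mu_1}$ and the $v=0$ component $\tuple{\Omega_2,\F_2,\mu_0}$.

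The two remaining obligations are that $S$ is convex and that it characterizes the models. For convexity I would take $\theta_{x,\mu_1,\mu_0}$ and $\theta_{x',\mu_1',\mu_0'}$ in $S$ and a weight $q\in[0,1]$: the combination $q\theta+(1-q)\theta'$ again has the form $\theta_{x'',\mu_1'',\mu_0''}$, where $x''=qx+(1-q)x'$, the measure $\mu_1''$ is the renormalized combination $\tfrac{qx\mu_1+(1-q)x'\mu_1'}{x''}$ (and symmetrically for $\mu_0''$); these lie in $S_1,S_2$ by their convexity, and $x''\in[\max(0,p),1]$ because that interval is convex. For the characterization I would run the reduction of the first paragraph in both directions, using Lemma~\ref{lem:oplus-mono} (monotonicity of $\oplus$), Lemma~\ref{lem:oplus-mono2}, and the fact that completion only adjoins null events, to pass between $\bigoplus_v\P_v\preceq\P$ and $\bigoplus_v\tuple{\Omega_v,\F_v,\mu_v}\preceq\P$, together with Lemma~\ref{lem:mono-sat} for the easy reverse implication.

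The main obstacle I anticipate is the bookkeeping around degenerate mixtures and completions rather than the algebra: when $p=1$ (forcing $x=1$), or when one of $\varphi_1,\varphi_2$ is unsatisfiable under $\Gamma$, the support of $\bern{x}$ collapses to a singleton and only one of the two convex characterizations is available. In those sub-cases the connective degenerates to $\varphi_1$ (resp. $\varphi_2$), and I would fall back directly on $\convex{\varphi_1}$ (resp. $\convex{\varphi_2}$) after checking that satisfiability of $\varphi_1\oplus_{\ge E}\varphi_2$ forces $p\le1$ (resp. $p\le0$), so the required extreme value of $x$ is admissible. The other delicate point, shared with the preceding $\osep$ lemma, is justifying that the canonical $\Omega_k$ can be moved into the disjoint $e$-regions without disturbing the $\preceq$-characterization; this is exactly where the points-to hypotheses on $e$ are essential, and where I would mirror the completion argument already used for $\bigoplus$-precision.
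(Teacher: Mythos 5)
Your proposal is correct and takes essentially the same approach as the paper's proof: both construct the witnessing convex set as the mixtures $\mu_1 \oplus_q \mu_2$ (with $\mu_k$ drawn from the canonical families $S_1, S_2$ given by $\convex{\varphi_1,\varphi_2}$ and with weight $q \ge p$), use the hypotheses $\varphi_1\Rightarrow\sure{e\mapsto1}$ and $\varphi_2\Rightarrow\sure{e\mapsto0}$ to split events along the $e=1$ and $e=0$ regions so the sum is well-defined, and establish the two-directional characterization via monotonicity of the direct sum and of satisfaction. You are in fact slightly more thorough than the paper, which neither verifies convexity of the mixture set explicitly (your renormalization step) nor treats the degenerate cases where the Bernoulli support collapses (e.g., one conjunct unsatisfiable and $q$ forced to an endpoint), cases in which the paper's opening claim that satisfiability of $\varphi_1\oplus_{\ge E}\varphi_2$ forces satisfiability of both $\varphi_1$ and $\varphi_2$ does not literally hold.
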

\begin{proof}
Take any $\Gamma$, if $\varphi_1\oplus_{\ge E}\varphi_2$ is satisfiable under $\Gamma$, then so are $\varphi_1$ and $\varphi_2$.
Let $p = \de{E}_{\mathsf{LExp}}(\Gamma)$ and let $X$ be the variable that is bound by $\oplus_{\ge E}$.
Since the $\varphi_k$ are convex, then for each $k\in\{1,2\}$ there exist $\Omega_k$, $\F_k$, and $S_k$ such that $\Gamma[X\coloneqq 2-k], \P\vDash\varphi_k$ iff $\tuple{\Omega_k, \F_k, \mu} \preceq \P$ for some $\mu \in S_k$. Now, let:
\begin{mathpar}
\Omega \triangleq \Omega_1 \cup \Omega_2

\F \triangleq \left\{ A \subseteq \Omega \mid \{ \sigma\in A \mid \de{e}_\expr(\sigma) = 1 \} \in \F_1, \{ \sigma\in A \mid \de{e}_\expr(\sigma) = 0 \} \in \F_2 \right\}

S \triangleq \{ \mu \oplus_q \nu \mid \mu\in S_1, \nu \in S_2, p \le q \le 1 \}

\text{where}\quad
(\mu \oplus_q \nu)(A) = q\cdot \mu(\{ \sigma\in A \mid \de{e}_\expr(\sigma) = 1 \}) + (1-q) \cdot \nu(\{ \sigma\in A \mid \de{e}_\expr(\sigma) = 0 \})
\end{mathpar}
We complete the proof by showing that $\Gamma,\P\vDash \varphi_1\oplus_{\ge E}\varphi_2$ iff $\tuple{\Omega,\F,\mu} \preceq \P$ for some $\mu \in S$. For the forward direction, suppose that $\Gamma,\P\vDash \varphi_1\oplus_{\ge E}\varphi_2$, so there exists a $q \ge p$ such that $\Gamma[X\coloneqq 1], \comp(\P_1) \vDash\varphi_1$ and $\Gamma[X\coloneqq 0], \comp(\P_2) \vDash\varphi_2$ for some $\P_1$ and $\P_2$ such that $\P_1 \oplus_q \P_2 \preceq \P$. This means that $\tuple{\Omega_k, \F_k, \mu_k} \preceq \comp(P_k)$ for some $\mu_k \in S_k$ for each $k\in\{1,2\}$. So, letting $\mu = \mu_1 \oplus_q \mu_2$, clearly $\mu\in S$ by construction. Finally, we get:
\[
  \tuple{\Omega, \F, \mu}
  = \tuple{\Omega_1, \F_1, \mu_1} \oplus_q \tuple{\Omega_2, \F_2, \mu_2}
  \preceq \P_1 \oplus_q \P_2 
  \preceq \P
\]
For the reverse direction, suppose that $\tuple{\Omega,\F,\mu} \preceq \P$ for some $\mu\in S$.
Since $\mu \in S$, then there exist $\mu_1\in S_1$, $\mu_2\in S_2$, and $q \ge p$ such that $\mu = \mu_1 \oplus_q \mu_2$. We know by construction that $\Gamma[X\coloneqq 2-k],\tuple{\Omega_k, \F_k, \mu_k}\vDash\varphi_k$ for each $k$, therefore clearly $\Gamma,\tuple{\Omega,\F,\mu}\vDash\varphi_1\oplus_q\varphi_2$. Therefore, by \Cref{lem:mono-sat}, we know that $\Gamma,\P\vDash\varphi_1\oplus_q\varphi_2$. Finally, we weaken this assertion to get $\Gamma,\P\vDash\varphi_1\oplus_{\ge E}\varphi_2$.
\end{proof}

\begin{lemma}
If $\convex{\varphi}$ and $\varphi\Rightarrow\sure{e\mapsto X}$, then $\convex{\bignd_{X\in E}\varphi}$.
\end{lemma}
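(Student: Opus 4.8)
The plan is to mirror the earlier proof that $\bigoplus_{X\sim d(E)}\varphi$ is precise when $\varphi\Rightarrow\sure{e\mapsto X}$, adapting it to the convex (rather than precise) setting exactly as the preceding convexity lemmas did. Fix a $\Gamma$ under which $\bignd_{X\in E}\varphi$ is satisfiable and write $\mathcal V = \de{E}_\mathsf{LExp}(\Gamma)$ and $\mathcal V' = \{v\in\mathcal V \mid \varphi\text{ is satisfiable under }\Gamma[X\coloneqq v]\}$, which is nonempty since any witnessing distribution for the $\bignd$ has support contained in $\mathcal V'$. For each $v\in\mathcal V'$, convexity of $\varphi$ supplies $\Omega_v$, $\F_v$, and a convex set $S_v$ with $\Gamma[X\coloneqq v],\P\vDash\varphi$ iff $\tuple{\Omega_v,\F_v,\mu}\preceq\P$ for some $\mu\in S_v$. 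Because $\varphi\Rightarrow\sure{e\mapsto X}$, every $\mu\in S_v$ assigns probability $1$ to $\Omega'_v \triangleq \{\sigma\in\Omega_v\mid \de{e}_\expr(\sigma)=v\}$, so I restrict to the pairwise disjoint spaces $\tuple{\Omega'_v,\F'_v,\mu'_v}$ with $\mu'_v = \mu|_{\F'_v}$, noting $\comp(\tuple{\Omega'_v,\F'_v,\mu'_v})=\tuple{\Omega_v,\F_v,\mu}$ and that $S'_v \triangleq \{\mu|_{\F'_v}\mid\mu\in S_v\}$ is again convex.

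I then take the witnessing data to be $\Omega \triangleq \biguplus_{v\in\mathcal V'}\Omega'_v$, the $\sigma$-algebra $\F \triangleq \{A\subseteq\Omega\mid \forall v\in\mathcal V'.\ A\cap\Omega'_v\in\F'_v\}$, and $S$ the set of direct-sum measures $A\mapsto\sum_{v\in\mathcal V'}\nu(v)\cdot\mu'_v(A\cap\Omega'_v)$ ranging over all $\nu\in\D(\mathcal V')$ and all choices $\mu'_v\in S'_v$. The extra freedom in the mixing distribution $\nu$ is precisely what distinguishes this from the precise case. To see $S$ is convex, given two such measures with data $(\nu,\mu'_v)$ and $(\nu',\mu''_v)$, their $\oplus_p$-combination is the direct-sum measure with mixing distribution $\nu'' = p\nu+(1-p)\nu'$ and components $\mu'_v\oplus_{q_v}\mu''_v$ where $q_v = p\nu(v)/\nu''(v)$, each of which lies in $S'_v$ by convexity of $S'_v$. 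For the biconditional, the forward direction unfolds $\Gamma,\P\vDash\bignd_{X\in E}\varphi$ to a $\nu$ and spaces $(\P_v)_{v\in\supp\nu}$ with $\Gamma[X\coloneqq v],\comp(\P_v)\vDash\varphi$ and $\bigoplus_{v\sim\nu}\P_v\preceq\P$; convexity of $\varphi$ plus the restriction step yields $\tuple{\Omega'_v,\F'_v,\mu'_v}\preceq\P_v$, and the direct-sum measure assembled from $\nu$ and these $\mu'_v$ lies in $S$ and refines $\P$. The reverse direction runs the same computation backwards, reading $\P_v = \tuple{\Omega'_v,\F'_v,\mu'_v}$ off a given $\mu\in S$ and using $\comp$ to recover satisfaction of $\varphi$.

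The main obstacle is the bookkeeping around the mismatch between the fixed sample space $\Omega$, which unions over all of $\mathcal V'$, and a particular model whose mixing distribution $\nu$ may be supported on a strict subset of $\mathcal V'$. For such a model the pieces $\Omega'_v$ with $v\notin\supp\nu$ must receive measure zero, so establishing $\tuple{\Omega,\F,\mu}\preceq\P$ requires showing these pieces are $\P$-measurable of measure zero: the set on which $\nu$ is supported is the full sample space of the direct sum $\bigoplus_{v\sim\nu}\P_v$, hence the projection of a measure-one event of $\F_\P$, so its complement and every subset thereof is measurable by completeness of $\P$. The three clauses of $\preceq$ then follow component-wise from $\bigoplus_{v\sim\nu}\P_v\preceq\P$ together with \Cref{lem:oplus-mono} and transitivity of $\preceq$; none of these steps is conceptually hard, but care is needed to keep the index sets $\supp\nu$ and $\mathcal V'$ straight throughout.
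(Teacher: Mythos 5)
Your proposal is correct and takes essentially the same approach as the paper's proof: both construct the witnessing data as the sample space partitioned by the value of $e$, the induced direct-sum $\sigma$-algebra, and the convex set of all mixtures $\sum_{v}\nu(v)\cdot\mu_v$ ranging over arbitrary mixing distributions $\nu$ and component measures drawn from the convex sets supplied by $\convex{\varphi}$, and then verify the defining biconditional in both directions by restricting to, and completing from, the pieces where $\de{e}_\expr(\sigma)=v$. If anything, you are slightly more careful than the paper, which indexes over all of $\de{E}_{\mathsf{LExp}}(\Gamma)$ without restricting to values where $\varphi$ is satisfiable and dispatches the measure-zero bookkeeping for values outside $\supp(\nu)$ with ``by construction.''
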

\begin{proof}
Take any $\Gamma$, if $\bigoplus_{X\in E}\varphi$ is satisfiable under $\Gamma$, then so is $\varphi$. Let $S = \de{E}_{\mathsf{LExp}}(\Gamma)$, so for each $v\in S$, there must be $\Omega_v$, $\F_v$, and $S_v$ such that $\Gamma[X\coloneqq v],\P\vDash \varphi$ iff $\tuple{\Omega_v, \F_v, \mu} \preceq \P$ for some $\mu \in S_v$.
%Since $\varphi \Rightarrow\sure{e\mapsto X}$, we can restrict those spaces to be only over states $\sigma$ such that $\de{e}_\expr(\sigma) = v$.
Now let:
\begin{mathpar}
  \Omega \triangleq \bigcup_{v \in S} \Omega_v
  
  \F \triangleq \left\{ A \subseteq \Omega \mid \forall v\in S.\ \{ \sigma \in A \mid \de{e}_\expr(\sigma) = v \} \in \F_v \right\}
  
  T \triangleq \left\{ \sum_{v\in S} \xi(v)\cdot \mu_v \mid \xi \in \D(S), \forall v.\ \mu_v \in S_v \right\}
\end{mathpar}
Clearly $T$ is convex, since it is constructed as countable convex combinations of convex sets \cite[Lemma B.1]{zilberstein2025demonic}.
We now show that $\Gamma,\P\vDash\bignd_{X\in E} \varphi$ iff $\tuple{\Omega,\F, \mu} \preceq \P$ for some $\mu \in T$. For the forward direction, suppose that $\Gamma,\P\vDash\bignd_{X\in E}\varphi$, so $\Gamma[X\coloneqq v], \comp(\P_v)\vDash\varphi$ for each $v\in \supp(\nu)$ where $\nu \in \D(S)$ and $\bigoplus_{v\sim\nu}\P_v \preceq\P$. This means that for each $v$, there is a $\mu_v$ such that $\tuple{\Omega_v, \F_v, \mu_v} \preceq \comp(\P_v)$. So, letting $\mu = \sum_{v\in S} \nu(v) \cdot \mu_v$, we get that $\tuple{\Omega,\F, \mu} \preceq \P$ by construction.

For the reverse direction, suppose that $\tuple{\Omega,\F, \mu} \preceq \P$ for some $\mu\in T$. Since $\mu\in T$, then there exists a $\xi \in \D(S)$ and $\mu_v \in S_v$ for all $v\in S$ such that $\mu = \sum_{v\in S}\xi(v)\cdot \mu_v$. By definition, $\Gamma[X\coloneqq v],\tuple{\Omega_v, \F_v, \mu_v} \vDash\varphi$. Let $\P_v$ be the restriction of $\tuple{\Omega_v, \F_v, \mu_v}$ to the states where $\de{e}_\expr(\sigma) = v$, then in order to show that $\Gamma,\P\vDash\bignd_{X\in E}\varphi$, it will suffice to show that $\bigoplus_{v\sim\xi}\P_v \preceq \tuple{\Omega,\F, \mu}$, and then we can conclude that $\bigoplus_{v\sim\xi}\P_v \preceq \P$ by transitivity and complete the proof by \Cref{lem:mono-sat}. The conditions on $\Omega$ and $\F$ hold by construction. For the probability measure, we have:
\begin{align*}
  \P_{\bigoplus_{v\sim\xi}\P_v}(A)
  &= \sum_{v\in S} \xi(v) \cdot \mu_{\P_v}(A \cap \Omega_v)
  \\
  &= \sum_{v\in S} \xi(v) \cdot \mu_v(A\cap\Omega_v)
  \intertext{Since each $\mu_v$ assigns 0 probability outside of $\Omega_v$, we can remove the intersection.}
  &= \sum_{v\in S} \xi(v) \cdot \mu_v(A)
  \\
  &= \mu(A)
\end{align*}
\end{proof}

\subsection{Entailment Rules}

\begin{lemma}
The entailment rules in \Cref{fig:entailment} are valid.
\end{lemma}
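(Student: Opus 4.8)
The plan is to verify each entailment in \Cref{fig:entailment} independently by unfolding the satisfaction relation of \Cref{fig:prob-sem}: an entailment $\varphi \vdash \psi$ asserts that $\Gamma, \P \vDash \varphi$ implies $\Gamma, \P \vDash \psi$ for every logical context $\Gamma$ and every complete probability space $\P$. The rules fall into three groups, and within each group the arguments share a common shape, so I would present one representative argument per group and dispatch the remaining cases by routine adaptation.

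The first group concerns the interaction of $\sure{-}$ with the separating conjunctions. For $P \vdash Q \implies \sure P \vdash \sure Q$, I would observe that $\sem{P}_\Gamma^S \subseteq \sem{Q}_\Gamma^S$, so the complement of $\sem{Q}_\Gamma^S$ is contained in the null set $\mem S \setminus \sem{P}_\Gamma^S$; completeness of $\P$ then makes $\sem{Q}_\Gamma^S$ measurable with measure $1$. Monotonicity of $\sep_m$ is immediate, since the witnesses $\P_1, \P_2$ for the premise also witness the conclusion once the inner entailments are applied. The implication $\varphi \sep \psi \vdash \varphi \osep \psi$ follows directly from the stated inclusion $\P_1 \diamond_\st \P_2 \subseteq \P_1 \diamond_\wk \P_2$. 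For $\sure{P \sep Q} \dashv\vdash \sure P \sep_m \sure Q$ and $\varphi \sep \sure P \dashv\vdash \varphi \osep \sure P$, the key point is that a probability-$1$ event is independent of every other event, so the independent product and an arbitrary coupling with the correct marginals assign the same probabilities; here I would appeal to \Cref{lem:prod-equiv} to reconstruct the product measure from its marginals.

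The second group, concerning $\bignd$, is largely bookkeeping against its definition that $\Gamma, \P \vDash \bignd_{X\in E}\varphi$ iff $\Gamma, \P \vDash \bigoplus_{X\sim\mu}\varphi$ for some $\mu \in \D(\de{E}_\mathsf{LExp}(\Gamma))$. Weakening $\bigoplus_{X\sim d(E)}\varphi \vdash \bignd_{X\in\supp(d(E))}\varphi$ instantiates $\mu \coloneqq d(E)$; the singleton law $\varphi[E/X] \dashv\vdash \bignd_{X\in\{E\}}\varphi$ uses the point mass together with a substitution lemma relating $\Gamma[X \coloneqq \de{E}_\mathsf{LExp}(\Gamma)]$ to $\Gamma$; and $\sure{E \subseteq E'} \sep \bignd_{X\in E}\varphi \vdash \bignd_{X\in E'}\varphi$ reuses the same witnessing distribution, now viewed as an element of $\D(E')$. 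The fibrewise weakening and $\alpha$-renaming laws for $\bigoplus$ are equally direct: the former applies the inner entailment in each fibre (invoking \Cref{lem:mono-sat} where needed), and the latter is a bijective relabelling of the bound variable.

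The third group---distributing and factoring $\sep_m$ through $\bigoplus$---is where the real work lies, and the factoring-out direction is the main obstacle. For distribution $(\bigoplus_{X\sim d(E)}\varphi) \sep \psi \vdash \bigoplus_{X\sim d(E)}(\varphi\sep\psi)$ with $X \notin \mathsf{fv}(\psi)$, given witnesses $\bigoplus_{v\sim\mu}\P_v$ for the outcome conjunction and $\P_2$ for $\psi$, I would form $\P_v \otimes \P_2$ in each fibre and invoke \Cref{lem:otimes-oplus-dist} to obtain $(\bigoplus_{v\sim\mu}\P_v) \otimes \P_2 = \bigoplus_{v\sim\mu}(\P_v \otimes \P_2)$, which is the required witnessing direct sum. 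The converse factoring law demands that the $\psi$-component be the \emph{same} space in every fibre: here I would use precision (\Cref{def:precise}) to extract the unique smallest $\Q$ with $\Gamma, \Q \vDash \psi$, show that each fibre's $\psi$-witness refines $\Q$, and then pull $\Q$ out of the direct sum via \Cref{lem:otimes-oplus-dist} together with monotonicity (\Cref{lem:otimes-mono}, \Cref{lem:oplus-mono}). The weak-separation analogue replaces precision with convexity, using the convex set of measures guaranteed by $\convex\psi$ to assemble a uniform $\psi$-witness across fibres. Finally, the collapse law $\bigoplus_{X\sim d(E)}\varphi \vdash \varphi$ for $X$-free convex $\varphi$ follows from \Cref{lem:sum-convex-mono} (or \Cref{lem:oplus-mono2}): each fibre refines the common convex witness, whose direct sum is therefore refined by $\P$. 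The hard part throughout is ensuring that the per-fibre probability spaces can be recombined into a single well-defined product or direct sum refining $\P$, which is precisely what the precision and convexity side conditions are designed to guarantee.
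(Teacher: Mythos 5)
Your overall strategy is the same as the paper's: a case-by-case verification that unfolds the satisfaction relation, using \Cref{lem:otimes-oplus-dist} to distribute $\sep$ over $\bigoplus$, minimality from precision (\Cref{def:precise}) for the strong factoring law, and convex combinations of per-fibre measures for the weak-separation law. The first two groups and the distribution/strong-factoring cases are argued essentially as in the paper, and your appeal to \Cref{lem:prod-equiv} for the $\sure{P\sep Q}$ equivalences is a reasonable substitute for the paper's explicit construction of trivial minimal spaces.

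The gap is in your final case, the collapse law $\bigoplus_{X\sim d(E)}\varphi\vdash\varphi$. You claim it follows from \Cref{lem:sum-convex-mono} (or \Cref{lem:oplus-mono2}) because ``each fibre refines the common convex witness,'' but convexity of $\varphi$ does not provide a single common witness: it provides a shared $\Omega$ and $\F$ together with a \emph{set} $S$ of measures, and each fibre $v$ only yields some $\mu_v\in S$ with $\tuple{\Omega,\F,\mu_v}\preceq\comp(\P_v)$, where the $\mu_v$ may genuinely differ across fibres (if a single witness worked for all fibres, convexity would be superfluous). Consequently the hypothesis of \Cref{lem:sum-convex-mono}---one fixed space refined by every $\comp(\P_v)$---fails, and the lemma cannot be invoked. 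The repair is the construction you already sketched for the weak factoring law: writing $\nu$ for the distribution denoted by $d(E)$, form $\mu=\sum_{v}\nu(v)\cdot\mu_v$, use convexity to conclude $\mu\in S$, so $\Gamma,\tuple{\Omega,\F,\mu}\vDash\varphi$, and then verify $\tuple{\Omega,\F,\mu}\preceq\P$ by a direct computation against the direct-sum measure formula, using that each $\mu_v$ agrees with $\mu_{\P_v}$ on measurable events and that $\bigoplus_{v\sim\nu}\P_v\preceq\P$; this computation resembles the proof of \Cref{lem:sum-convex-mono} but is not an instance of it. Note also that the paper's proof of the weak factoring law itself defers to this collapse argument (it projects $\P$ onto the $\psi$-variables and cites the collapse case for convexity), so getting this case right is load-bearing for your entire third group.
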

\begin{proof}
\;

\begin{enumerate}[leftmargin=*]

\item $\inferrule{P \vdash Q}{\sure P\vdash \sure{Q}}$

Suppose that $\Gamma,\P\vDash \sure P$, where $\Omega_\P = \mem S$. That means that $\sem{P}_\Gamma^S \in \F_\P$ and $\mu(\sem{P}_\Gamma^S) = 1$. Since $P\vdash Q$, then it must be that $\sem{P}_\Gamma^S \subseteq \sem{Q}_\Gamma^S$, therefore $\sem{Q}_\Gamma^S \in\F_\P$ because $\P$ is a complete probability space and all samples outside of $\sem{P}_\Gamma^S$ must have measure 0. By the additivity property of probability measures $\mu(\sem{Q}_\Gamma^S) = 1$.

\medskip
\item $\inferrule{\varphi\vdash\varphi' \\ \psi\vdash\psi'}{\varphi\sep_m\psi \vdash\varphi'\sep_m\psi'}$

Suppose that $\Gamma,\P\vDash\varphi\sep_m\psi$, so $\Gamma,\P_1\vDash\varphi$ and $\Gamma,\P_2\vDash\psi$ and $\P' \preceq \P$ for some $\P_1$, $\P_2$, and $\P' \in \P_1 \diamond_m \P_2$. Since $\varphi\vdash\varphi'$ and $\psi\vdash\psi'$, then $\Gamma,\P_1 \vDash\varphi'$ and $\Gamma,\P_2\vDash\psi'$, therefore we immediately conclude that $\Gamma,\P\vDash\varphi'\sep_m\psi'$.

\medskip
\item \label{item:wk-sep}$\varphi\sep\psi \vdash \varphi\osep\psi$

Suppose that $\Gamma,\P\vDash\varphi\sep\psi$, so $\Gamma,\P_1\vDash\varphi$ and $\Gamma,\P_2\vDash\psi$ for some $\P_1$ and $\P_2$ such that $\P_1 \otimes \P_2 \preceq \P$. Clearly, $\P_1 \otimes \P_2 \in \P_1 \diamond_\wk \P_2$, therefore this immediately implies that $\Gamma,\P\vDash\varphi\osep \psi$.

\medskip
\item $\sure{P\sep Q} \dashv\vdash \sure P \sep_m \sure Q$

We first show that $\sure{P\sep Q} \vdash \sure P \sep_m \sure Q$. Suppose that $\Gamma,\P\vDash \sure{P\sep Q}$, where $\Omega_\P = \mem S$. That means that $\sem{P\sep Q}^S_\Gamma \in \F_\P$ and $\mu_\P(\sem{P\sep Q}^S_\Gamma) = 1$. 
%Now, note that:
%\begin{align*}
%  \sem{P\sep Q}^S_\Gamma
%  &= \{ \sigma \mid \sigma \in \mem S,\ \Gamma,\sigma\vDash P \sep Q \}
%  \\
%  &= \{ \sigma \uplus \tau \mid \sigma \in \mathsf{fv}(P),\ \tau \in S \setminus \mathsf{fv}(P),\ \Gamma,\sigma \vDash P,\ \Gamma,\tau\vDash Q \}
%  \\
%  &= \sem{P}_\Gamma \sep \sem{Q}_\Gamma^{S \setminus\mathsf{fv}(P)}
%\end{align*}
Now let $\P_1$ and $\P_2$ be the smallest probability spaces such that $\Gamma,\P_1\vDash \sure P$ and $\Gamma,\P_2\vDash\sure Q$, so clearly $\P_1 \diamond_\wk \P_2 = \P_1\diamond_\st \P_2 = \{ \P_1 \otimes \P_2 \}$ and therefore $\Gamma,\P_1 \otimes \P_2 \vDash \sure{P}\sep_m\sure{Q}$. It is also clearly the case that $\P_1 \otimes \P_2 \preceq \P$ since the smallest event with nonzero measure in $\P_1 \otimes \P_2$ is $\sem{P}_\Gamma \sep \sem{Q}_\Gamma$, which has probability 1, therefore everything larger also has probability 1, and it suffices to show that:
\[
  \mu_{\P_1\otimes \P_2}(\sem{P}_\Gamma \sep \sem{Q}_\Gamma)
  = 1
  = \mu_\P\left(\sem{P\sep Q}^S_\Gamma\right)
  = \mu_\P\left(\textstyle\bigcup_{B\mid \pi_{\mathsf{fv}(P,Q)}(B) = \sem{P\sep Q}^S_\Gamma} B\right)
\]
Therefore, by \Cref{lem:mono-sat}, $\Gamma,\P\vDash \sure{P}\sep_m\sure{Q}$.

Now we show that $\sure P \sep_m \sure Q\vdash \sure{P\sep Q}$, suppose that $\Gamma,\P\vDash\sure P \sep_m \sure Q$. That means that $\P' \preceq \P$ and $\Gamma,\P_1\vDash\sure P$ and $\Gamma,\P_2 \vDash \sure Q$ for some $\P_1$, $\P_2$, and $\P'\in\P_1\diamond_m\P_2$. This also means that $\mu_{\P_1}(\sem{P}^S_\Gamma) = 1$ and $\mu_{\P_2}(\sem{Q}^T_\Gamma) = 1$ (where $\Omega_{\P_1} = \mem S$ and $\Omega_{\P_2}= \mem T$). Therefore, we know that $\pi_S(\mu_{\P'})(\sem{P}_\Gamma^S) = 1$ and $\pi_T(\mu_{\P'})(\sem{Q}_\Gamma^T) = 1$, and therefore we know that anything outside of $P$ and $Q$ has measure zero, and therefore it must be the case that $\mu_{\P'}(\sem{P\sep Q}_\Gamma^{S\cup T}) = 1$.
So, $\Gamma,\P'\vDash \sem{P\sep Q}$, and since $\P'\preceq \P$, then by \Cref{lem:mono-sat}, $\Gamma,\P\vDash \sem{P\sep Q}$.
%
%have that:
%\begin{align*}
%  \mu_{\P_1\otimes\P_2}\left(\sem{P\sep Q}^{S\cup T}_\Gamma\right)
%  &= \mu_{\P_1\otimes\P_2}\left( \{ \sigma \in \mem{S\cup T} \mid\ \Gamma,\sigma\vDash P \sep Q \} \right)
%  \\
%  &= \mu_{\P_1\otimes\P_2}\left( \{ \sigma \uplus \tau \mid \Gamma,\sigma\vDash P,\ \Gamma,\tau \vDash Q \} \right)
%  \\
%  &= \mu_{\P_1\otimes\P_2}\left( \sem{P}^S_\Gamma \sep \sem{Q}^T_\Gamma \right)
%  \\
%  &= \mu_{\P_1}\left(\sem{P}^S_\Gamma\right) \cdot \mu_{\P_2}\left(\sem{Q}^T_\Gamma\right)
%  = 1\cdot 1 = 1
%\end{align*}

\medskip
\item $\varphi\sep\sure P \dashv\vdash \varphi\osep\sure P$

The forward direction follows immediately from \Cref{item:wk-sep}, so we prove only the reverse direction. Suppose that $\Gamma,\P\vDash \varphi\osep \sure{P}$. This means that $\Gamma,\P_1\vDash\varphi$ and $\Gamma,\P_2\vDash\sure{P}$ for some $\P_1$, $\P_2$, and $\P' \in \P_1 \diamond_\wk \P_2$. Now let $\P_2'$ be the smallest probability space such that $\Gamma,\P_2' \vDash \sure P$. 
Since $\P_2'$ contains events only of measure 0 or 1, then $\P_1 \diamond_\wk \P_2' = \{ \P_1 \otimes \P_2' \}$, therefore it must be that $\P_1 \otimes \P_2' \preceq \P' \preceq \P$. Therefore, by definition, $\Gamma,\P\vDash\varphi\sep\sure P$.

\medskip
\item $\bigoplus_{X\sim d(E)} \varphi \vdash \bignd_{X\in\supp(d(E))} \varphi$

Suppose that $\Gamma,\P\vDash \bigoplus_{X\sim d(E)}\varphi$, therefore $\Gamma[X\coloneqq v],\comp(\P_v)\vDash\varphi$ for each $v\in\supp(\xi)$ where $\xi = d(\de{E}_{\mathsf{LExp}}(\Gamma))$. Obviously $\xi \in \D(\supp(\xi))$, so this immediately implies that $\Gamma,\P\vDash\bignd_{\supp(d(E))}\varphi$.

\medskip
\item $\varphi[E/X]\dashv\vdash\bignd_{X\in\{E\}}\varphi$

For the forward direction, suppose that $\Gamma,\P\vDash\varphi[E/X]$. Let $v = \de{E}_{\mathsf{LExp}}(\Gamma)$. We therefore have that $\Gamma[X \coloneqq v]\vDash\varphi$. Obviously, $\delta_v \in \D(\{ v \})$, therefore we get that $\Gamma,\P\vDash \bignd_{X \in \{E\}}\varphi$.

For the reverse direction, suppose that $\Gamma,\P\vDash\bignd_{X\in \{E\}}\varphi$. Since the only distribution over a singleton support is the point-mass distribution, this immediately gives us $\Gamma[X \coloneqq v]\vDash\varphi$, where again $v = \de{E}_{\mathsf{LExp}}(\Gamma)$. Finally, we conclude that $\Gamma,\P\vDash\varphi[E/X]$.

\medskip
\item $\sure{E\subseteq E'} \sep \bignd_{X\in E}\varphi \vdash \bignd_{X\in E'}\varphi$

Suppose that $\Gamma,\P\vDash \sure{E\subseteq E'} \sep \bignd_{X\in E}\varphi$. Let $S = \de{E}_{\mathsf{LExp}}(\Gamma)$ and $S' = \de{E'}_{\mathsf{LExp}}(\Gamma)$. We therefore know that $S \subseteq S'$ and $\Gamma[X \coloneqq v], \comp(\P_v)\vDash\varphi$ for all $v \in \supp(\xi)$ and some $\xi \in \D(S)$ such that $\bigoplus_{v\sim \xi}\P_v \preceq \P$. Obviously, it is also the case that $\xi \in \D(S')$, since $S \subseteq S'$, therefore we immediately have that $\Gamma,\P\vDash\bignd_{X\in E'}\varphi$.

\medskip
\item $\inferrule{\varphi \vdash \psi}{\bigoplus_{X\sim d(E)} \varphi \vdash \bigoplus_{X\sim d(E)}\psi}$

Suppose that $\Gamma,\P\vDash \bigoplus_{X\sim d(E)} \varphi$ and let $\nu = d(\de{E}_{\mathsf{LExp}}(\Gamma)$. So, $\bigoplus_{v\sim\nu} \P_v \preceq \P$ such that $\Gamma[X\coloneqq v], \comp(\P_v) \vDash \varphi$ for each $v$. Since $\varphi \vdash \psi$, we get that $\Gamma[X\coloneqq v], \comp(\P_v) \vDash \psi$ for each $v$. Therefore, $\Gamma,\P\vDash \bigoplus_{X\sim d(E)}\psi$. Note that $\psi$ may not witness a partition of the sample space, but the $\P_v$s are still disjoint.

\medskip
\item $\inferrule{Y \notin\mathsf{fv}(\varphi)}{ \bigoplus_{X\sim d(E)} \varphi \vdash \bigoplus_{Y\sim d(E)} \varphi[Y/X]}$

Suppose that $\Gamma,\P\vDash \bigoplus_{X\sim d(E)} \varphi$, and let $\nu = d(\de{E}_{\mathsf{LExp}}(\Gamma))$. This means that $\bigoplus_{v\sim \nu} \P_v \preceq \P$ such that $\Gamma[X \coloneqq v],  \comp(\P_v) \vDash \varphi$ for each $v \in \supp(\nu)$. Since $Y \notin\mathsf{fv}(\varphi)$, then clearly $\Gamma[Y \coloneqq v],\comp(\P_v) \vDash \varphi[Y/X]$. Therefore, we get that $\Gamma,\bigoplus_{v\sim\nu} \P_v \vDash \bigoplus_{Y\sim d(E)} \varphi[Y/X]$, and since $\bigoplus_{v\sim \nu} \P_v \preceq \P$, then $\Gamma,\P \vDash \bigoplus_{Y\sim d(E)} \varphi[Y/X]$ by \Cref{lem:mono-sat}.

\medskip
\item $\inferrule{X\notin\mathsf{fv}(\psi)}{(\smashoperator{\bigoplus_{X\sim d(E)}} \varphi) \sep \psi \vdash \smashoperator{\bigoplus_{X\sim d(E)}} (\varphi\sep \psi)}$

Suppose that $\Gamma,\P\vDash (\bigoplus_{X\sim d(E)} \varphi) \sep \psi$ and let $\nu = d(\de{E}_{\mathsf{LExp}}(\Gamma))$. So, $(\bigoplus_{v\sim\nu} \P_v) \otimes \P' \preceq \P$ such that $\Gamma[X\coloneqq v],\comp(\P_v) \vDash \varphi$ for each $v$ and $\Gamma,\P' \vDash\psi$. Since $X\notin\mathsf{fv}(\psi)$, then $\Gamma[X\coloneqq v],\P' \vDash\psi$ for each $v$. Also note that $\comp(\P_v) \otimes \P' = \comp(\P_v\otimes \P')$ since $\P'$ is already a complete probability space. This gives us $\Gamma[X\coloneqq v],\comp(\P_v\otimes \P')\vDash \varphi\sep \psi$. Now, by \Cref{lem:otimes-oplus-dist}, we have:
\[
  \bigoplus_{v\sim\nu} (\P_v \otimes \P')
  = \left( \bigoplus_{v\sim\nu} \P_v \right) \otimes \P'
  \preceq \P
\]
Therefore, $\Gamma,\P \vDash \bigoplus_{X\sim d(E)} (\varphi\sep\psi)$.

\medskip
\item $\inferrule{X\notin\mathsf{fv}(\psi)  \\  \precise\psi}{\bigoplus_{X\sim d(E)} (\varphi\sep \psi) \vdash (\bigoplus_{X\sim d(E)} \varphi) \sep \psi}$

Suppose that $\Gamma,\P\vDash \bigoplus_{X\sim d(E)} (\varphi\sep \psi)$, and let $\nu = d(\de{E}_{\mathsf{LExp}}(\Gamma))$. This means that $\bigoplus_{v\sim\nu}(\P_v \otimes \Q_v) \preceq \P$ such that $\Gamma[X\coloneqq v], \comp(\P_v) \vDash\varphi$ and $\Gamma[X\coloneqq v], \comp(\Q_v) \vDash\psi$. Since $X\notin\mathsf{fv}(\psi)$, we also know that $\Gamma, \comp(\Q_v) \vDash\psi$, and since $\psi$ is precise, there is a unique $\Q$ such that $\Gamma,\Q\vDash\psi$ and $\Q \preceq \comp(\Q_v)$ for all $v$. Therefore, by recombining the components, we get that $\Gamma,\P\vDash (\bigoplus_{X\sim d(E)} \varphi) \sep \psi$.

\medskip
\item $\inferrule{X\notin\mathsf{fv}(\psi)  \\  \convex{\psi}}{\bigoplus_{X\sim d(E)} (\varphi\osep \psi) \vdash (\bigoplus_{X\sim d(E)} \varphi) \osep \psi}$

Suppose that $\Gamma,\P\vDash\bigoplus_{X\sim d(E)} (\varphi\osep \psi)$, which means that $\Gamma[X\coloneqq v],\comp(\P_v)\vDash\varphi\osep \psi$ for each $v\in\supp(\xi)$ where $\xi = d(\de{E}_{\mathsf{LExp}}(\Gamma))$ and $\bigoplus_{v\sim\xi}\P_v \preceq \P$. Let $S$ be the set such that $\Omega_\P = \mem S$ and $T = \free(\psi)$, so $\Gamma[X\coloneqq v], \pi_{S\setminus T}(\comp(\P_v))\vDash\varphi$, and therefore $\Gamma,\pi_{S\setminus T}(\P)\vDash\varphi$. In addition, we know that $\Gamma[X\coloneqq v],\pi_T(\comp(\P_v))\vDash \psi$ for each $v$, and since $\psi$ is convex, then $\Gamma,\pi_T(\P)\vDash\psi$ (see \Cref{item:convex-idem} for more details). Combining these two facts, we get that $\Gamma,\P\vDash (\bigoplus_{X\sim d(E)} \varphi) \osep \psi$.

\medskip
\item\label{item:convex-idem} $\inferrule{X\notin\mathsf{fv}(\varphi) \\ \convex\varphi}{\bigoplus _{X \sim d(E)} \varphi \vdash \varphi}$

Suppose that $\Gamma,\P\vDash \bigoplus _{X \sim d(E)} \varphi$, and let $\nu = d(\de{E}_{\mathsf{LExp}}(\Gamma))$. This means that $\bigoplus_{v\sim\nu}\P_v\preceq \P$ and $\Gamma[X\coloneqq v],\comp(\P_v)\vDash \varphi$. Since $X\notin\mathsf{fv}(\varphi)$,this also means that $\Gamma,\comp(\P_v)\vDash \varphi$. Since $\varphi$ is convex, there exist $\Omega$, $\F$, and $S$ and for each $v$ $\mu_v \in S$ such that $\Gamma,\tuple{\Omega,\F, \mu_v}\vDash \varphi$ and $\tuple{\Omega,\F, \mu_v} \preceq \comp(\P_v)$. Now, let $\mu = \sum_{x\in\supp(\nu)} \nu(v) \cdot \mu_v$, so clearly $\mu \in S$ since it is a convex combination of elements of $S$, and therefore, $\Gamma,\tuple{\Omega, \F, \mu}\vDash \varphi$.
It remains only to show that $\tuple{\Omega,\F, \mu} \preceq \P$. The conditions on $\Omega$ and $\F$ hold trivially since $\tuple{\Omega,\F, \mu_v} \preceq \comp(\P_v)$ for all $v$. Now, we show the condition on $\mu_\P$:
\begin{align*}
   \mu(A)
   &= \sum_{v\in \supp(\nu)}\nu(v)\cdot\mu_v(A)
   \intertext{Let $U$ be the set such that $\Omega = \mem U$. Since $\tuple{\Omega,\F, \mu_v} \preceq \comp(\P_v)$.}
   &= \sum_{v\in \supp(\nu)}\nu(v)\cdot\mu_{\P_v}\left( \bigcup \left\{ B \in \F_{\P_v} \mid \pi_U(B) = A \right\} \right)
   \intertext{Since $\tuple{\Omega,\F, \mu_v} \preceq \comp(\P_v)$, then $\mu_v$ gives measure 0 to all events outside of $\Omega_{\P_v}$, therefore we can add the following intersection.}
   &= \sum_{v\in \supp(\nu)}\nu(v)\cdot\mu_{\P_v}\left( \bigcup \left\{ B \in \F_{\P_v} \mid \pi_U(B) = A \right\} \cap \Omega_{\P_v}\right)
   \\
   &= \mu_{\bigoplus_{v\sim\nu} \P_v}\left( \bigcup \left\{ B \in \F_{\P} \mid \pi_U(B) = A \right\} \right)
   \intertext{Since $\bigoplus_{v\sim\nu}\P_v\preceq \P$.}
   &= \mu_{\P}\left( \bigcup \left\{ B \in \F_{\P} \mid \pi_U(B) = A \right\} \right)
\end{align*}

\end{enumerate}
\end{proof}

\subsection{Soundness of Inference Rules}

We start by providing a lemma stating that weak triples can be stated without the frame preservation property.

\begin{lemma}[Alternative Characterization of Weak Triples]
\label{lem:weak-triple}
\[
  I\vDash_\wk\triple\varphi{C}\psi
  \qquad\text{iff}\qquad
  \forall \Gamma,\mu.\
  \Gamma,\mu\vDash\varphi\sep\sure I \implies \forall \nu\in\lin^{\sem{I}_\Gamma}(\de{C})^\dagger(\mu).\
  \Gamma,\nu\vDash\psi\sep\sure I
\]
\end{lemma}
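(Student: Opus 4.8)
The statement is an equivalence, so the plan is to prove each direction separately. The essential simplification afforded by the right-hand side is that it drops the frame space $\P_F$ entirely, and the whole point is that for weak triples this frame is always recoverable, since $\diamond_\wk$ only constrains the two marginals of a space and imposes no independence. Throughout I will lean on the fact (from \Cref{def:triple} specialized to $m=\wk$) that $\Q'\in\Q\diamond_\wk\P_F$ means exactly $\Q=\pi_{U'}(\Q')$ and $\P_F=\pi_V(\Q')$ for the respective footprints, together with monotonicity of satisfaction, \Cref{lem:mono-sat}.

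For the forward direction (\Cref{def:triple} $\Rightarrow$ the right-hand side) I would instantiate the frame with the trivial probability space $\P_F$ over $\mem{\emptyset}$, i.e.\ the single empty memory $\emp$ carrying all its mass. Given $\Gamma$ and $\mu$ with $\Gamma,\mu\vDash\varphi\sep\sure I$, take $\P=\P'=\mu$ (viewing $\mu$ as a probability space). Then $\P'\in\P\diamond_\wk\P_F$ holds because $\pi_U(\mu)=\mu$ and $\pi_\emptyset(\mu)$ is exactly the trivial space, and $\P'\preceq\mu$ is immediate. Unfolding \Cref{def:triple} yields, for each $\nu\in\lin^{\sem{I}_\Gamma}(\de{C})^\dagger(\mu)$, witnesses $\Q,\Q'\in\Q\diamond_\wk\P_F$ with $\Q'\preceq\nu$ and $\Gamma,\Q\vDash\psi\sep\sure I$. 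Since $\P_F$ is trivial, $\Q=\pi_{U'}(\Q')\preceq\Q'\preceq\nu$, so $\Q\preceq\nu$ by transitivity, and \Cref{lem:mono-sat} upgrades $\Gamma,\Q\vDash\psi\sep\sure I$ to $\Gamma,\nu\vDash\psi\sep\sure I$, as required.

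For the reverse direction (right-hand side $\Rightarrow$ \Cref{def:triple}) I would start from arbitrary $\Gamma,\mu,\P,\P_F$ and $\P'\in\P\diamond_\wk\P_F$ with $\P'\preceq\mu$ and $\Gamma,\P\vDash\varphi\sep\sure I$. The first move is to discharge the frame on the input side: since $\P=\pi_U(\P')\preceq\P'\preceq\mu$ (projection is $\preceq$-decreasing and $\preceq$ is transitive), \Cref{lem:mono-sat} gives $\Gamma,\mu\vDash\varphi\sep\sure I$, so the right-hand side applies and yields $\Gamma,\nu\vDash\psi\sep\sure I$ for every $\nu\in\lin^{\sem{I}_\Gamma}(\de{C})^\dagger(\mu)$. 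It then remains, for each such $\nu$, to manufacture the output witnesses $\Q,\Q'$ with $\Q'\in\Q\diamond_\wk\P_F$, $\Q'\preceq\nu$, and $\Gamma,\Q\vDash\psi\sep\sure I$. I would take $\Q=\pi_{U'}(\nu)$, the marginal of $\nu$ onto the variables outside the frame footprint $V$, and build $\Q'$ over $U'\cup V$ whose events are generated by the $\F_\Q$-events paired with the (coarser) $\F_{\P_F}$-events.

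The main obstacle is checking that these output witnesses are well formed. For $\Gamma,\Q\vDash\psi\sep\sure I$ I would use that $V$ is disjoint from $\mathsf{fv}(\psi,I)$, so that projecting $\nu$ away from $V$ leaves the witnessing subspace for $\psi\sep\sure I$ intact and \Cref{lem:mono-sat} applies. The genuinely delicate requirement is $\Q'\preceq\nu$ together with $\pi_V(\Q')=\P_F$: because $\diamond_\wk$ demands the $V$-marginal be \emph{exactly} $\P_F$, this forces $\P_F\preceq\pi_V(\nu)$, i.e.\ the frame's distribution must survive in the $V$-marginal of the output. This is precisely a \emph{locality} property of the semantics---that $C$ does not alter variables outside its footprint, so that $\pi_V$ is preserved by $\lin^{\sem{I}_\Gamma}(\de{C})$ and hence $\P_F=\pi_V(\P')\preceq\pi_V(\mu)=\pi_V(\nu)$ by monotonicity of projection. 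Granting this, $\Q'$ is obtained as a coarsening of $\nu$ that records $\Q$ on $U'$ and $\P_F$ on $V$, completing the construction. I expect establishing (or invoking) this locality of $\lin$---the same fact that underlies soundness of the weak \ruleref{Frame} rule---to be the crux, with the remainder being routine manipulation of $\preceq$ and \Cref{lem:mono-sat}.
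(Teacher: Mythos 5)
Your proposal is correct and takes essentially the same route as the paper's own proof: the forward direction instantiates the frame with the trivial space on $\mem\emptyset$ so that $\mu \diamond_\wk \P_F = \{\mu\}$, and the reverse direction recovers $\Gamma,\mu\vDash\varphi\sep\sure I$ via projection monotonicity and \Cref{lem:mono-sat}, takes $\Q=\pi_U(\nu)$, and builds $\Q'$ as the coarsening of $\nu$ whose $\sigma$-algebra is generated by products of $\F_\Q$- and $\F_{\P_F}$-events. The locality fact you flag as the crux ($\P_F\preceq\pi_V(\nu)$ because $C$ cannot disturb the frame's variables) is invoked at exactly the same point in the paper, and with the same brevity (``since $C$ does not alter any variables in $V$'').
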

\begin{proof}
We show both directions:
\iffcases{
  Suppose that $\Gamma,\mu\vDash\varphi\sep\sure I$.
  Let $\P_F$ be the trivial probability space on $\mem\emptyset$, so clearly $\mu \diamond_\wk \P_F = \{ \mu \}$.
  Now take any $\nu \in \lin^{\sem{I}_\Gamma}(\de{C})^\dagger(\mu)$. Since $I\vDash_\wk\triple\varphi{C}\psi$, we get that there exists a $\Q$ and $\Q'\in \Q\diamond_\wk \P_F$ such that $\Q' \preceq \nu$ and $\Gamma,\Q \vDash \psi\sep\sure I$. Since $\Q\diamond_\wk \P_F = \{\Q\}$, then $\Q' = \Q$, therefore $\Gamma, \Q'\vDash\psi\sep\sure I$. In addition, since $\Q'\preceq \nu$, then by \Cref{lem:mono-sat}, $\Gamma,\nu\vDash\psi \sep\sure I$.
}{
  Suppose that $\P' \in \P\diamond_\wk \P_F$ such that $\P' \preceq\mu$ and $\Gamma,\P\vDash\varphi\sep\sure I$.  
  Let $U \subseteq\mathsf{Var}$ be the variables of $\P$ and $I$ and $V$ be the variables of $\P_F$. Clearly $\Gamma,\pi_U(\mu)\vDash\varphi\sep\sure I$. Since $\pi_U(\mu) \preceq \mu$, then by \Cref{lem:mono-sat}, $\Gamma,\mu\vDash\varphi\sep\sure I$ too. So, by the premise, $\Gamma,\nu\vDash\psi\sep\sure I$ for any $\nu \in \lin^{\sem{I}_\Gamma}(\de{C})^\dagger(\mu)$. But $\psi$ and $I$ only depend on the variables $U$, so $\Gamma,\pi_U(\nu)\vDash\psi\sep\sure I$ too. In addition, it must be the case that $\P_F\preceq \pi_V(\nu)$ since $C$ does not alter any variables in $V$. Now, let $\Q = \pi_{U}(\nu)$ and construct $\Q'$ as follows:
  \[
    \Omega_{\Q'}\triangleq \Omega_{\P'}
    \qquad
    \F_{\Q'} \triangleq \sigma(\{ A _1 \sep A_2 \mid A_1 \in\F_\Q, A_2 \in\F_{\P_F} \})
    \qquad
    \mu_{\Q'}(A) = \nu(A)
\]
So clearly by construction, $\Q' \in \Q  \diamond_\wk \P_F$ and $\Q' \preceq \nu$ and $\Gamma,\Q\vDash\psi\sep\sure{I}$. Therefore, we are done.
}
\end{proof}

\begin{figure}
\begin{mathpar}
%\ruledef{Conj}{
%  I\vdash_m\triple{\varphi_1}C{\psi_1}
%  \\
%  I\vdash_m\triple{\varphi_2}C{\psi_2}
%}{
%    I\vdash_m\triple{\varphi_1\land\varphi_2}C{\psi_1\land\psi_2}
%}
%
\ruledef{Disj}{
  I\vdash_m\triple{\varphi_1}C{\psi_1}
  \\
  I\vdash_m\triple{\varphi_2}C{\psi_2}
}{
    I\vdash_m\triple{\varphi_1\vee\varphi_2}C{\psi_1\vee\psi_2}
}

\ruledef{Exists2}{
  I\vdash_m\triple{\varphi}C{\psi}
  \\
  X\notin\free(\psi,I)
}{
    I\vdash_m\triple{\exists X.\ \varphi}C{\psi}
}

\ruledef{Subst}{
  I\vdash_m\triple{\varphi}C{\psi}
}{
  I[E/X]\vdash_m\triple{\varphi[E/X]}C{\psi[E/X]}
}
\end{mathpar}
\caption{Additional Inference Rules}
\label{fig:add-rules}
\end{figure}

\soundnessthm*
\begin{proof}
The proof is by induction on the derivation.
\begin{itemize}[leftmargin=*]
\item\ruleref{Skip}.
\[
  \inferrule{\;}{I\vdash_m \triple\varphi\skp\varphi}{\ruleref{Skip}}
\]
Suppose that $\P' \in \P\diamond_m \P_F$, $\P' \preceq \mu$, and $\Gamma,\P \vDash \varphi\sep\sure{I}$. By \citet[Lemma 5.2]{zilberstein2025denotational}, we have:
\[
  \lin^{\sem{I}_\Gamma}(\de{\skp})^\dagger(\mu) = \eta^\dagger(\mu) = \{\mu\}
\]
So, letting $\Q = \P$ and $\Q' = \P'$, clearly $\Q' \in \Q\diamond_m \P_F$ and $\Q' \preceq\mu$ and $\Gamma,\Q\vDash\varphi\sep\sure I$.

\item\ruleref{Seq}.
\[
  \inferrule{
    I\vdash_m\triple\varphi{C_1}{\vartheta}
    \\
    I\vdash_m\triple\vartheta{C_1}\psi
  }{
    I\vdash_m\triple\varphi{C_1\fatsemi C_2}\psi
  }{\ruleref{Seq}}
\]
Suppose that $\P' \in \P\diamond_m \P_F$, $\P' \preceq \mu$, and $\Gamma,\P \vDash \varphi\sep\sure I$. By \citet[Lemma 5.2]{zilberstein2025denotational}, we have:
\begin{align*}
  \lin^{\sem{I}_\Gamma}(\de{C_1\fatsemi C_2})^\dagger(\mu)
  &= \lin^{\sem{I}_\Gamma}(\de{C_1}\fatsemi \de{C_2})^\dagger(\mu)
  \\
  &= \lin^{\sem{I}_\Gamma}(\de{C_2})^\dagger(\lin^{\sem{I}_\Gamma}(\de{C_1}^\dagger(\mu))
  \\
  &= \smashoperator{\bigcup_{\mu'\in \lin^{\sem{I}_\Gamma}(\de{C_1})^\dagger(\mu)}} \lin^{\sem{I}_\Gamma}(\de{C_2})^\dagger(\mu')
\end{align*}
So, for every $\nu \in \lin^{\sem{I}_\Gamma}(\de{C_1\fatsemi C_2})^\dagger(\mu)$, there is a $\mu' \in\lin^{\sem{I}_\Gamma}(\de{C_1})^\dagger(\mu)$ such that $\nu \in \lin^{\sem{I}_\Gamma}(\de{C_2})^\dagger(\mu')$. By the induction hypotheses, we know there exist $\Q_1$ and $\Q_1' \in \Q_1 \diamond_m \P_F$ such that $\Q_1' \preceq \mu'$ and $\Gamma,\Q_1 \vDash\vartheta\sep\sure I$. By the induction hypothesis again, we get that there exist $\Q$ and $\Q' \in \Q \diamond_m \P_F$ such that $\Q' \preceq \nu$ and $\Gamma,\Q\vDash\psi\sep\sure I$.

\item\ruleref{IfT}.
\[
  \inferrule{
    \varphi \Rightarrow \sure {b \mapsto \tru}
    \\
    I\vdash_m\triple\varphi{C_1}\psi
  }{
    I\vdash_m\triple\varphi{\iftf b{C_1}{C_2}}{\psi}
  }{\ruleref{IfT}}
\]
Suppose that $\P' \in \P\diamond_m \P_F$, $\P' \preceq \mu$, and $\Gamma,\P \vDash \varphi\sep\sure I$. We therefore know that $\de{b}_\test(\sigma) = \tru$ for all $\sigma\in\supp(\mu)$. So, by \citet[Lemma 5.2]{zilberstein2025denotational}, we get that:
\begin{align*}
  \lin^{\sem{I}_\Gamma}(\de{\iftf b{C_1}{C_2}})^\dagger(\mu)
  &=  \lin^{\sem{I}_\Gamma}(\mathsf{guard}(b, \de{C_1}, \de{C_2}))^\dagger(\mu)
  \\
  &= \left(\lambda\sigma.\left\{
    \begin{array}{ll}
      \lin^{\sem{I}_\Gamma}(\de{C_1})(\sigma) & \text{if}~ \de{b}_\test(\sigma) = \tru \\
      \lin^{\sem{I}_\Gamma}(\de{C_2})(\sigma) & \text{if}~ \de{b}_\test(\sigma) = \fls
    \end{array}
    \right.\right)^\dagger(\mu)
    \\
    &= \lin^{\sem{I}_\Gamma}(\de{C_1})^\dagger(\mu)
\end{align*}
So any $\nu \in \lin^{\sem{I}_\Gamma}(\de{\iftf b{C_1}{C_2}})^\dagger(\mu)$ must also be in $\lin^{\sem{I}_\Gamma}(\de{C_1})^\dagger(\mu)$, and therefore we can use the induction hypothesis to conclude that there is a $\Q$ and $\Q' \in \Q\diamond_m \P_F$ such that $\Q' \preceq \nu$ and $\Gamma,\Q\vDash\psi\sep\sure I$.

%%% IF 2 %%%
\item\ruleref{IfF}. Symmetric to the \ruleref{IfT} case.

%%% ASSIGN %%%
\item\ruleref{Assign}.
\[
  \inferrule{
    \varphi \Rightarrow \sure{e\mapsto E} \land (\psi \sep \sure{\own(x)})
  }{
    I\vdash_m\triple{\varphi}{x \coloneqq e}{\psi\sep \sure{x \mapsto E}}
  }{\ruleref{Assign}}
\]
We prove only the $m=\st$ case, as the $m=\wk$ case follows from the soundness of the \ruleref{Weakening} rule.
Suppose that $\P\otimes \P_F \preceq\mu$ and $\Gamma,\P\vDash \varphi\sep\sure{x\mapsto E}\sep\sure I$. Let $S$ be the set of variables such that $\mu \in \D(\mem S)$.
Since $\varphi \Rightarrow \sure{e \mapsto E}$, we know that $\de{e}_\expr(\sigma) = \de{E}_{\mathsf{LExp}}(\Gamma)$ for all $\sigma\in\supp(\mu)$. Now take any $\nu \in \lin^{\sem{I}_\Gamma}(\de{x \coloneqq e})^\dagger(\mu)$. We know from \citet[Lemma 5.2]{zilberstein2025denotational} that:
\begin{align*}
  \lin^{\sem{I}_\Gamma}(\de{x \coloneqq e})^\dagger(\mu)
  &= \lin^{\sem{I}_\Gamma}(\singleton{x \coloneqq e})^\dagger(\mu)
  \\
  &= \left( \de{x \coloneqq e}^{\sem{I}_\Gamma}_\act \right)^\dagger(\mu)
  \intertext{From the premise of the rule, we know that $x\coloneqq e$ does not depend on $\free(I)$, so the invariant sensitive semantics is the same as the regular semantics.}
  &= \de{x \coloneqq e}_\act^\dagger(\mu)
  \\
  &= \left\{
    \smashoperator[r]{\sum_{\sigma \in \supp(\mu)}} \mu(\sigma) \cdot \nu_\sigma
    \;\;\Big|\;\;
    \forall \sigma.\
    \nu_\sigma \in \de{x\coloneqq e}_\act(\sigma)
  \right\}
  \\
  &= \left\{
    \smashoperator[r]{\sum_{\sigma \in \supp(\mu)}} \mu(\sigma) \cdot \delta_{\sigma[ x \coloneqq \de{e}_\expr(\sigma) ]}
  \right\}
  \\
  &= \left\{
    \smashoperator[r]{\sum_{\sigma \in \supp(\mu)}} \mu(\sigma) \cdot \delta_{\sigma[ x \coloneqq \de{E}_{\mathsf{LExp}}(\Gamma) ]}
  \right\}
\end{align*}
Now, since $\varphi\Rightarrow \psi\sep\sure{\own(x)}$, we get that $\Gamma, \P \vDash \psi \sep \sure{\own(x)} \sep \sure{I}$, so there exist $\P_1$, $\P_2$, and $\P_3$ such that $\P_1\otimes\P_2\otimes \P_3\preceq \P$ and $\Gamma,\P_1\vDash \psi$ and $\Gamma,\P_2\vDash\sure{\own(x)}$ and $\Gamma,\P_3\vDash\sure I$. Since $\sure{\own(x)}$ is precise, let $\P_2'$ be the unique smallest probability space such that $\Gamma,\P_2'\vDash\sure{\own(x)}$, and note that $\P_2'\preceq\P_2$ and $\Omega_{\P_2'} = \mem{\{x\}}$. So, by \Cref{lem:otimes-mono}, we have:
\[
  \P_1\otimes \P_2' \otimes \P_3 \otimes \P_F
  \preceq \P_1\otimes \P_2 \otimes \P_3 \otimes \P_F
  \preceq \P\otimes \P_F
  \preceq \mu
\]
From above, we know that $\nu = {\sum_{\sigma \in \supp(\mu)}} \mu(\sigma) \cdot \delta_{\sigma[ x \coloneqq \de{E'}_{\mathsf{LExp}}(\Gamma) ]}$, so it is easy to see that $\pi_{S\setminus\{x\}}(\nu) = \pi_{S\setminus\{x\}}(\mu)$, therefore:
\[
  \P_1 \otimes \P_3\otimes \P_F\preceq \pi_{S\setminus\{x\}}(\mu) = \pi_{S\setminus\{x\}}(\nu)
\]
Let $\Q$ be the trivial probability space that satisfies $\sure{x\mapsto E}$, so that $\Omega_\Q = \mem{\{x\}}$ all events in $\F_\Q$ have probability 0 or 1, and $\Q \preceq \pi_{\{x\}}(\nu)$. So, clearly $\Gamma,\P_1\otimes \Q\otimes\P_3\vDash\psi\sep\sure{x\mapsto E}\sep\sure{I}$ and: $(\P_1 \otimes \Q\otimes\P_3)\otimes \P_F\preceq \nu$.

%%% SAMP %%%
\item\ruleref{Samp}.
\[
  \inferrule{
    \varphi \Rightarrow \sure{ e \mapsto E} \land (\psi\sep\sure{\own(x)}
  }{
    I\vdash_m\triple{\varphi}{x\samp d( e)}{\psi \sep (x \sim d( E)}
  }{\ruleref{Samp}}
\]
We prove only the $m=\st$ case, as the $m=\wk$ case follows from the soundness of the \ruleref{Weakening} rule.
Suppose that $\P\otimes \P_F \preceq\mu$ and $\Gamma,\P\vDash \varphi \sep\sure I$, where $\mu \in \D(\mem S)$.
Since $\varphi \Rightarrow \sure{e \mapsto E}$, we know that $\de{e}_\expr(\sigma) = \de{E}_{\mathsf{LExp}}(\Gamma)$ for all $\sigma\in\supp(\mu)$. Now take any $\nu \in \lin^{\sem{I}_\Gamma}(\de{x \samp d(e)})^\dagger(\mu)$. We know that:
\begin{align*}
  \lin^{\sem{I}_\Gamma}(\de{x \samp d(e)})^\dagger(\mu)
  &= \left( \de{x \samp d(e)}^{\sem{I}_\Gamma}_\act \right)^\dagger(\mu)
  \intertext{From the premise of the rule, we know that $x\samp e$ does not depend on $\free(I)$, so the invariant sensitive semantics is the same as the regular semantics.}
  &= \de{x \samp d(e)}_\act^\dagger(\mu)
  \\
  &= \left\{
    \smashoperator[r]{\sum_{\sigma \in \supp(\mu)}} \mu(\sigma) \cdot \nu_\sigma
    \;\;\Big|\;\;
    \forall \sigma.\
    \nu_\sigma \in \de{x\samp d(e)}_\act(\sigma)
  \right\}
  \\
  &= \left\{
    \smashoperator[r]{\sum_{\sigma \in \supp(\mu)}} \mu(\sigma) \cdot
    \smashoperator{\sum_{v \in \mathsf{Val}}} d(\de{e}_\expr(\sigma))(v) \cdot
    \delta_{\sigma[ x \coloneqq v ]}
  \right\}
  \\
  &= \left\{
    {\sum_{v \in \mathsf{Val}}} d(\de{E}_{\mathsf{LExp}}(\Gamma))(v) \cdot
    \smashoperator{\sum_{\sigma \in \supp(\mu)}} \mu(\sigma) \cdot
    \delta_{\sigma[ x \coloneqq v ]}
  \right\}
\end{align*}
Let $\nu_v = {\sum_{\sigma \in \supp(\mu)}} \mu(\sigma) \cdot \delta_{\sigma[ x \coloneqq v ]}$, therefore $\nu = \sum_{v \in \mathsf{Val}} d(\de{E}_{\mathsf{LExp}}(\Gamma))(v) \cdot \nu_v$. Note also that $\pi_{S \setminus\{x\}}(\mu) = \pi_{S \setminus\{x\}}(\nu)$.
Since $\varphi \Rightarrow \psi\sep\sure{\own(x)}$, we know that $\Gamma,\P \vDash \psi \sep \sure{\own(x)}\sep\sure{I}$, so there exist $\P_1$, $\P_2$, and $\P_3$ such that $\Gamma,\P_1\vDash\psi$, $\Gamma,\P_2\vDash \sure{\own(x)}$, and $\Gamma,\P_3\vDash\sure I$. This gives us:
\[
  \P_1 \otimes \P_3 \otimes \P_F\preceq \pi_{S \setminus\{x\}}(\mu) = \pi_{S \setminus\{x\}}(\nu) = \pi_{S \setminus\{x\}}(\nu_v)
\]
%Since $\sure{\own(x)}$ is precise, let $\P_2'$ be the smallest probability space satisfying it, so $\P_2'\preceq \P_2$ and $\Omega_{\P_2'} = \mem{\{x\}}$.
Now, let $\Q_v$ be the trivial probability space such that $\Omega_{\Q_v} = \{ \sigma \in \mem{\{x\}} \mid \sigma(x) = v \}$ and $\Gamma[X \coloneqq v],\Q_v\vDash \sure{x\mapsto X}$. By Construction:
\[
  \P_1 \otimes \comp(\Q_v) \otimes \P_3 \otimes \P_F
  = (\P_1 \otimes \P_3\otimes\P_F) \otimes \comp(\Q_v)
  \preceq \pi_{S\setminus\{x\}}(\nu_v) \otimes \pi_{\{x\}}(\nu_v)
  = \nu_v
\]
Let $\xi = d(\de{E}_{\mathsf{LExp}}(\Gamma))$ and $\Q = \P_1 \otimes (\bigoplus_{v\sim \xi}\Q_v) \otimes \P_3$, then using \Cref{lem:oplus-mono,lem:otimes-oplus-dist} we get:
\[
  \Q\otimes P_F
  = \P_1 \otimes \left(\bigoplus_{v\sim \xi}\Q_v \right) \otimes \P_3 \otimes \P_F
  = \bigoplus_{v\sim \xi} \P_1 \otimes \Q_v \otimes \P_3 \otimes \P_F
  \preceq \sum_{v\in\supp(\xi)} \xi(v) \cdot \nu_v
  = \nu
\]
By construction, we also have $\Gamma,\Q \vDash \psi\sep(x \sim d(E))\sep\sure{I}$.

%%% PAR %%%
\item\ruleref{Par}.
\[
\inferrule{
    I\vdash\triple{\varphi_1}{C_1}{\psi_1}
    \\
    I\vdash\triple{\varphi_2}{C_2}{\psi_2}
    \\
    \precise{\psi_1,\psi_2}
}{
    I\vdash\triple{\varphi_1\sep\varphi_2}{C_1 \parallel C_2}{\psi_1 \sep\psi_2}
}{\ruleref{Par}}
\]
Since we are using a strong triple, suppose that $\P_1 \otimes \P_2 \otimes \P_I \otimes \P_F \preceq \mu$ such that $\Gamma,\P_1\vDash \varphi_1$ and $\Gamma,\P_2\vDash\varphi_2$ and $\Gamma,\P_I\vDash\sure I$.
Without loss of generality, suppose that $\P_I$ is minimal.
To complete the proof, we need to establish the premise of \Cref{lem:par}. 
Since $\psi_1$ and $\psi_2$ are precise, let $\Q_1$ and $\Q_2$ be the unique smallest probability spaces that satisfy them under $\Gamma$ (note that if $\psi_1$ or $\psi_2$ is unsatisfiable, then the premise of the rule is false, so the claim holds vacuously).
Take any $\mu_1$ such that $\P_1\otimes \P_I \preceq\mu_1$ and any $\nu_1 \in \lin^{\sem{I}_\Gamma}(\de{C_1})^\dagger(\mu_1)$. By the premise of the \ruleref{Par} rule, we know that there is a $\Q'_1 \preceq \nu_1$ such that $\Gamma,\Q'_1\vDash\psi_1\sep\sure I$, therefore $\Q_1\sep\P_I \preceq \Q'_1$, and therefore we have shown that:
\[
  \forall \mu_1.\ \P_1\otimes\P_I \preceq \mu_1 \implies \forall \nu_1\in \lin^{\sem{I}_\Gamma}(\de{C_1})^\dagger(\mu_1).\ \Q_1\otimes \P_I \preceq \nu_1
\]
We now perform a nearly identical argument for $C_2$, but we also handle the frame. Take any $\mu_2$ such that $\P_2 \otimes\P_I\otimes \P_F \preceq \mu_2$ and any $\nu_2 \in \lin^{\sem{I}_\Gamma}(\de{C_2})^\dagger(\mu_2)$. By the second premise of the \ruleref{Par} rule, we get that there is a $\Q'_2$ such that $\Q'_2 \otimes \P_F \preceq \nu_2$ and $\Gamma,\Q'_2\vDash \psi_2\sep\sure I$. Due to precision, we know that $\Q_2 \sep\P_I \preceq \Q'_2$ and so by \Cref{lem:otimes-mono}, $\Q_2 \otimes\P_I\otimes \P_F \preceq \Q'_2 \otimes \P_F\preceq \nu_2$, so we have shown that:
\[
  \forall \mu_2.\ \P_2 \otimes\P_I\otimes\P_F \preceq \mu_2 \implies \forall \nu_2.\ \lin^{\sem{I}_\Gamma}(\de{C_2})^\dagger(\mu_2).\ \Q_2 \otimes\P_I\otimes \P_F \preceq \nu_2
\]
Now, by \Cref{lem:par}, $\Q_1 \otimes \Q_2 \otimes \P_I\otimes \P_F\preceq \nu$ for all $\nu \in \lin^{\sem{I}_\Gamma}(\de{C_1}\parallel\de{C_2})^\dagger(\mu)$. Let $\Q = \Q_1\otimes \Q_2\otimes \P_I$. So $\Q\otimes \P_F\preceq\nu$ and since $\Gamma,\Q_1\vDash\psi_1$ and $\Gamma,\Q_2\vDash\psi_2$, we get that $\Gamma,\Q\vDash\psi_1\sep\psi_2\sep\sure{I}$, so we are done.

%%% ATOM %%%
\item\ruleref{Atom}.
\[
\inferrule{
  J \vdash_m \triple{\varphi \sep \sure I}a{\psi\sep \sure I}
}{
  {I \sep J}\vdash_m\triple\varphi{a}\psi
}{\ruleref{Atom}}
\]
Suppose that $\P' \in \P\diamond_m \P_F$, $\P' \preceq \mu$, and $\Gamma,\P \vDash \varphi\sep\sure{I\sep J}$. This means that there exist $\P_1$ and $\P_2$ such that $\P_1 \otimes \P_2 \otimes \P_3\preceq \P$ and $\Gamma,\P_1\vDash\varphi$ and $\Gamma,\P_2\vDash\sure{I\sep J}$. Let $\P_I$ and $\P_J$ be the trivial probability spaces that satisfy $\sure I$ and $\sure J$, respectively, so clearly $\P_I\otimes \P_J\preceq \P_2$.
Now, take any $\nu\in\lin^{\sem{I\sep J}_\Gamma}(\de{a})^\dagger(\mu)$ and note that:
\begin{align*}
  \lin^{\sem{I\sep J}_\Gamma}(\de{a})^\dagger(\mu)
  &= (\de{a}^{\sem{I\sep J}_\Gamma}_\act)^\dagger(\mu)
  \\
  &= \left(
    \left(\mathsf{check}^{\sem{I\sep J}_\Gamma}\right)^\dagger \circ \de{a}_\act^\dagger \circ \left(\mathsf{replace}^{\sem{I \sep J}_\Gamma}\right)^\dagger \circ \left(\mathsf{check}^{\sem{I\sep J}_\Gamma}\right) \right)^\dagger(\mu)
  \\
  &= \left(\mathsf{check}^{\sem{I}_\Gamma\sep \sem{J}_\Gamma}\right)^\dagger \left( \de{a}_\act^\dagger \left( \left(\mathsf{replace}^{\sem{I}_\Gamma\sep \sem{J}_\Gamma}\right)^\dagger \left( \left( \mathsf{check}^{\sem{I}_\Gamma\sep \sem{J}_\Gamma}\right)^\dagger(\mu) \right)\right)\right)
  \\
  &= \left(\mathsf{check}^{\sem{I}_\Gamma} \right)^\dagger\left( \lin^{\sem{J}_\Gamma}(\de{a})^\dagger\left( \left(\mathsf{replace}^{\sem{I}_\Gamma}\right)^\dagger \left( \left( \mathsf{check}^{\sem{I}_\Gamma}\right)^\dagger(\mu) \right)\right)\right)
 \intertext{Since we already assumed that $\mu$ satisfies $I\sep J$, the first check does nothing.}
  &= \left(\mathsf{check}^{\sem{I}_\Gamma} \right)^\dagger\left( \lin^{\sem{J}_\Gamma}(\de{a})^\dagger\left( \left(\mathsf{replace}^{\sem{I}_\Gamma}\right)^\dagger(\mu)\right)\right)
\end{align*}
Therefore, we know that $\nu= (\mathsf{check}^{\sem{I}_\Gamma})^\dagger(\nu')$ for some $\nu' \in \lin^{\sem{J}_\Gamma}(\de{a})^\dagger(\mu')$ and $ \mu' \in (\mathsf{replace}^{\sem{I}_\Gamma})^\dagger(\mu) )$.
We know that $\P_1 \otimes \P_I\otimes \P_J\preceq \P_1 \otimes \P_2 \preceq \P$, so there must be a $\P'' \in (\P_1 \otimes \P_I\otimes \P_J)\diamond_m \P_F$ such that $\P'' \preceq \P'$ and therefore $\P'' \preceq\mu$. Since $\P''$ only contains trivial information about $I$, and $\mu$ and $\mu'$ differ only in the states that satisfy $I$, then $\P'' \preceq \mu'$ too.
Therefore, by the induction hypothesis, there must be $\Q$ and $\Q' \in \Q \diamond_m \P_F$ such that $\Q' \preceq \nu'$ and $\Gamma,\Q\vDash \psi\sep\sure I\sep\sure J$. This implies that $\Gamma,\nu'\vDash\sure I$, and therefore $\nu = \nu'$, since the check operation does nothing. Therefore, we get that $\Q' \preceq \nu' = \nu$, and so we are done.

%%% SHARE %%%
\item\ruleref{Share}.
\[
\inferrule{
  I\sep J \vdash_m \triple{\varphi}C\psi
  \\
  \mathsf{finitary}(I)
}{
  J \vdash_m \triple{\varphi\sep \sure I}C{\psi\sep \sure I}
}{\ruleref{Share}}
\]
Suppose that $\P' \in \P\diamond_m \P_F$, $\P' \preceq \mu$, and $\Gamma,\P \vDash \varphi \sep\sure{I}\sep\sure{J}$. Now, take any $\nu \in \lin^{\sem{J}_\Gamma}(\de{C})^\dagger(\mu)$. By \Cref{lem:inv-mono}, we know that:
\[
  \lin^{\sem{I\sep J}_\Gamma}(\de{C})^\dagger(\mu)
  =
  \lin^{\sem{I}_\Gamma\sep\sem{J}_\Gamma}(\de{C})^\dagger(\mu)
  \lec
  \lin^{\sem{J}_\Gamma}(\de{C})^\dagger(\mu)
\]
And since $\lec$ is equivalent to $\supseteq$, then $\nu \in \lin^{\sem{I\sep J}_\Gamma}(\de{C})^\dagger(\mu)$ as well. Therefore, by the induction hypothesis, we know that there exist $\Q$ and $\Q' \in \Q\diamond_m \P_F$ such that $\Q' \preceq \nu$ and $\Gamma,\Q\vDash\psi \sep \sure{I\sep J}$. Since $\sure{I\sep J} \Leftrightarrow \sure{I}\sep\sure{J}$, we are done.

\item\ruleref{Frame}.
\[
\inferrule{
  I\vdash_m\triple{\varphi}C{\psi}
}{
  I\vdash_m\triple{\varphi \sep_m \vartheta}C{\psi\sep_m \vartheta}
}{\ruleref{Frame}}
\]
Suppose that $\P' \in \P \diamond_m \P_F$ and $\P'\preceq \mu$ and $\Gamma,\P\vDash (\varphi \sep_m \vartheta) \sep \sure{I}$.
That means that there exist $\P_1$ and $\P_I$ such that $\P_1 \otimes \P_I \preceq \P$ and $\Gamma,\P_1\vDash \varphi\sep_m\vartheta$, and $\Gamma,\P_I\vDash\sure{I}$, and without loss of generality, suppose that $\P_I$ is the trivial probability space that satisfies $I$. We also know that $\P_1 \in \P_2\diamond_m\P_3$ where $\Gamma,\P_2\vDash\varphi$ and $\Gamma,\P_3\vDash\vartheta$. Since $\P_I$ is a trivial probability space, then $\P_2 \diamond_m \P_I = \{ \P_2\otimes\P_I \}$, so by associativity of projections, we get that there must be a $\P_F' \in \P_F \diamond_m \P_3$ such that $\P' \in (\P_2\otimes \P_I) \diamond_m \P_F'$.

Now, take any $\nu \in \lin^{\sem{I}_\Gamma}(\de{C})^\dagger(\mu)$. By the induction hypothesis, there exist $\Q$ and $\Q' \in \Q\diamond_m \P_F'$ such that $\Q'\preceq\nu$ and $\Q \vDash\psi\sep\sure I$. So, there must be a $\Q'' \in \Q \diamond_m \P_3$ and $\Q' \in \Q''\diamond_m \P_F$, and by constructions $\Gamma,\Q''\vDash (\psi\sep\sure I)\sep_m \vartheta$. Since $\sure I$ is a pure assertion, this also means that $\Gamma,\Q''\vDash (\psi\sep_m\vartheta) \sep \sure I$.

%%% WEAKEN %%%
\item\ruleref{Weaken}.
\[
  \inferrule{
    I \vdash\triple{\varphi}C\psi
  }{
    I\vdash_\wk \triple{\varphi}C\psi
  }{\ruleref{Weaken}}
\]
Immediate from the induction hypothesis, letting $\P_F$ be the trivial probability space on $\mem\emptyset$, and \Cref{lem:weak-triple}.

%%% STRENGTHEN %%%
\item\ruleref{Strengthen}.
\[
  \inferrule{
    I \vdash_\wk \triple{\varphi}C\psi
    \\
    \precise\psi
  }{
    I\vdash \triple{\varphi}C\psi
  }{\ruleref{Strengthen}}
\]
Suppose that $\P\otimes \P_F \preceq \mu$, and $\Gamma,\P \vDash \varphi\sep\sure{I}$.
This means that there exists a $\P'$ such that $\Gamma,\P'\vDash\varphi$ and $\Gamma,\P_I\vDash\sure I$ where $\P_I$ is the trivial probability space satisfying $\sure I$ and $\P'\otimes \P_I\preceq \P$.
Since $\psi$ is precise, let $\Q$ be the unique smallest probability space such that $\Gamma,\Q\vDash\psi$. From the premise of the rule, we know that:
\[
  \forall \mu_1.\ \P' \otimes \P_I \preceq \mu_1 \implies \forall \nu_1 \in \lin^{\sem{I}_\Gamma}(\de{C})^\dagger(\mu_1).\ \Q\otimes \P_I \preceq \nu_1
\]
In addition, since $\lin^{\sem{I}_\Gamma}(\de{\skp}) = \eta$ \cite[Lemma 5.2]{zilberstein2025denotational} it is obvious that:
\[
  \forall \mu_2.\ \P_F\otimes \P_I \preceq \mu_2 \implies \forall \nu_2 \in \lin^{\sem{I}_\Gamma}(\de{\skp})^\dagger(\mu_2).\ \P_F\otimes \P_I\preceq \nu_2
\]
So, by \Cref{lem:par}, we get that $\Q \otimes \P_F\otimes\P_I \preceq \nu$ for any $\nu \in \lin^{\sem{I}_\Gamma}(\de{C} \parallel\de{\skp})^\dagger(\mu)$. But clearly $\lin^{\sem{I}_\Gamma}(\de{C} \parallel\de{\skp}) = \lin^{\sem{I}_\Gamma}(\de{C})$, so we are done.

%%% SPLIT1 %%%
\item\ruleref{Split1}.
\[
\inferrule{
  I\vdash_m\triple{\varphi}C{\psi}
  \\
  \psi \Rightarrow \sure{e \mapsto X}
%  \precise{\textstyle\bigoplus_{X\sim\nu}\psi}
  \\
  X\notin\mathsf{fv}(I)
%  \psi \Rightarrow \sure{e = X}
}{
  I\vdash_m\triple{\smashoperator{\bigoplus_{X\sim d(E)}}\varphi}C{\smashoperator{\bigoplus_{X\sim d(E)}}\psi}
}{\ruleref{Split1}}
\]
We first prove the claim for the case where $m = \st$.
Suppose that $\P\otimes \P_F \preceq \mu$ and $\Gamma,\P\vDash(\bigoplus_{X\sim d(E)}\varphi) \sep\sure I$. This means that $(\bigoplus_{v\sim \xi} \P_v) \otimes \P_I \preceq \P$ such that $\Gamma[X \coloneqq v], \comp(\P_v)\vDash \varphi$ for each $v$ and $\xi = d(\de{E}_{\mathsf{LExp}}(\Gamma))$ and $\P_I$ is the trivial probability space satisfying $\sure I$. By \Cref{lem:otimes-oplus-dist,lem:otimes-mono} we have:
\[
  \bigoplus_{v\sim\xi} (\P_v \otimes \P_I \otimes \P_F)
  =   \left(\bigoplus_{v\sim\xi} \P_v\right) \otimes \P_I \otimes \P_F
  \preceq   \P\otimes \P_F
  \preceq\mu
\]
Now, for each $v\in\supp(\xi)$, let $\mu_v(\sigma) \triangleq \frac{1}{\xi(v)} \cdot \mu(\sigma)$ if $\sigma\in \Omega_{\P_v}\sep \Omega_{\P_I} \sep \Omega_{\P_F}$, which clearly gives us $\P_v\otimes \P_I\otimes \P_F \preceq \mu_v$ and $\mu = \sum_{v\in\supp(\xi)} \xi(v) \cdot \mu_v$. Now, take any $\nu \in \lin^{\sem{I}_\Gamma}(\de{C})^\dagger(\mu)$. We know that:
\begin{align*}
  \lin^{\sem{I}_\Gamma}(\de{C})^\dagger(\mu)
  &= \lin^{\sem{I}_\Gamma}(\de{C})^\dagger(\smashoperator{\sum_{v\in\supp(\xi)}} \nu(v)\cdot \mu_v)
  \\
  &= \left\{ \smashoperator[r]{\sum_{v\in \supp(\xi)}} \xi(v) \cdot \nu_v \;\; \Big|\;\; \forall v.\ \nu_v\in \lin^{\sem{I}_\Gamma}(\de{C})^\dagger(\mu_v) \right\}
\end{align*}
So $\nu =\sum_{v\in \supp(\xi)} \xi(v)\cdot \nu_v$ where $\nu_v \in \lin^{\sem{I}_\Gamma}(\de{C})^\dagger(\mu_v)$ for each $v$.
Note that $X\notin\mathsf{fv}(I)$, so $\sem{I}_\Gamma = \sem{I}_{\Gamma[X\coloneqq v]}$, and therefore by the induction hypothesis, we get that there exists $\Q_v$ such that $\Q_v\otimes \P_F \preceq \nu_v$ and $\Gamma[X \coloneqq v],\Q_v \vDash \psi\sep\sure I$. Since $\psi\Rightarrow\sure{e=X}$, we can restrict the $\Q_v$s to disjoint probability spaces $\Q_v'$ such that $\comp(\Q_v') \otimes \P_I \preceq \Q_v$.
% as follows:
%\[
%  \Q'_v \triangleq \tuple{
%    \{ \sigma \in \pi_S(\Omega_{\Q_v}) \mid \de{e}_\expr(\sigma) = v \},
%    \{ A \cap \Omega_{\Q'_v} \mid A \in \F_{\Q_v} \},
%    \mu_{\Q_v} 
%  }
%\]

Therefore $\bigoplus_{v\sim \nu}\Q'_v$ exists and $\Gamma, \bigoplus_{v\sim \xi}\Q'_v\vDash \bigoplus_{v\sim d(E)}\psi$.
%Also, clearly $\Q'_v \preceq \Q_v$ since both agree on the probability of event with nonzero mass, and $\Q_v$ contains strictly more measurable events.
%So, by \Cref{lem:otimes-mono}:
%\[
%  \Q'_v\otimes \trivP{\sem{I}_\Gamma}\otimes\P_F \preceq \nu_v
%\]
So, we get that:
\[
  \left(\bigoplus_{v\sim\xi} \Q'_v\right) \otimes \P_I\otimes\P_F
  =\bigoplus_{v\sim\xi}\left(\Q'_v\otimes \P_I\otimes\P_F\right)
  \preceq \sum_{v\in\supp(\xi)} \xi(v)\cdot \nu_v = \nu
\]

The case where $m=\wk$ follows immediately from \Cref{lem:weak-triple} using the proof above with $\P_F$ being the trivial probability space on empty memories.

%%% NSPLIT1 %%%
\item\ruleref{NSplit1}.
\[
\inferrule{
  I\vdash_m\triple{\varphi}C{\psi}
  \\
  \psi \Rightarrow \sure{e \mapsto X}
  \\
  X\notin\mathsf{fv}(I)
}{
  I\vdash_m\triple{\smashoperator{\bignd_{X\in E}}\varphi}C{\smashoperator{\bignd_{X\in E}}\psi}
}{\ruleref{NSplit1}}
\]
Suppose that $\P' \in \P\diamond_m \P_F$ and $\P' \preceq \mu$ and $\Gamma,\P\vDash(\bignd_{X\in E}\varphi)\sep\sure I$. This means that there is some $\xi \in \D(\de{E}_\mathsf{LExp}(\Gamma))$ such that $\Gamma,\P\vDash(\bigoplus_{X\sim \xi}\varphi)\sep\sure I$. 
Now take any $\nu \in \lin^{\sem{I}_\Gamma}(\de{C})^\dagger(\mu)$.
Using the \ruleref{Split1} rule, we get that there exists a $\Q$ and $\Q' \in \Q\diamond_m \P_F$ such that $\Q' \preceq \nu$ and $\Gamma,\Q\vDash(\bigoplus_{X\sim\xi}\psi)\sep\sure I$. This immediately implies that $\Gamma,\Q\vDash(\bignd_{X\in E}\psi)\sep\sure I$.

%%% SPLIT2 %%%
\item\ruleref{Split2}.
\[
\inferrule{
  I\vdash_m\triple{\varphi}C{\psi}
  \\
  \convex{\psi}
  \\
  X\notin\free(I, \psi)
}{
  I\vdash_m\triple{\smashoperator{\bigoplus_{X\sim d(E)}}\varphi}C{\psi}
}{\ruleref{Split2}}
\]
We start with the case where $m = \st$.
Suppose that $\P \otimes \P_F \preceq \mu$ and $\Gamma,\P\vDash (\bigoplus_{X\sim d(E)} \varphi) \sep\sure I$ and let $\xi = d(\de{E}_{\mathsf{LExp}}(\Gamma))$. Since $X \notin\free(I)$, then $\Gamma,\P\vDash \bigoplus_{X\sim d(E)} \varphi \sep\sure I$ and so there exists a family of $\P_v$ such that $\bigoplus_{v\sim\xi}\P_v \preceq \P$ and $\Gamma[X\coloneqq v],\comp(\P_v)\vDash\varphi\sep\sure I$ for all $v\in\supp(\xi)$. By \Cref{lem:oplus-mono,lem:otimes-oplus-dist}, we also have that:
\[
  \bigoplus_{v\sim\xi} \left( \P_v\otimes \P_F \right)
  = \left( \bigoplus_{v\sim\xi} \P_v \right) \otimes \P_F
  \preceq \P \otimes \P_F
  \preceq\mu
\]
%Using \Cref{lem:weak-triple}, suppose that $\Gamma,\mu\vDash(\bigoplus_{X\sim d(E)} \varphi)\sep \sure{I}$. This implies that $\Gamma,\mu\vDash\bigoplus_{X\sim d(E)} (\varphi\sep \sure{I})$, which means that $\bigoplus_{v\sim\xi} \P_v \preceq \mu$ such that $\xi = d(\de{E}_{\mathsf{LExp}}(\Gamma))$ and $\Gamma[X\coloneqq v], \comp(\P_v)\vDash \varphi\sep\sure{I}$ for all $v\in\supp(\xi)$.
Now, for each $v\in\supp(\xi)$, let $\mu_v(\sigma) \triangleq \frac{1}{\xi(v)} \cdot \mu(\sigma)$ if $\sigma\in \Omega_{\P_v}$, which clearly gives us $\comp(\P_v)\otimes \P_F \preceq \mu_v$ and $\mu = \sum_{v\in\supp(\xi)} \xi(v) \cdot \mu_v$. Now, take any $\nu \in \lin^{\sem{I}_\Gamma}(\de{C})^\dagger(\mu)$. We know that:
\begin{align*}
  \lin^{\sem{I}_\Gamma}(\de{C})^\dagger(\mu)
  &= \lin^{\sem{I}_\Gamma}(\de{C})^\dagger(\smashoperator{\sum_{v\in\supp(\xi)}} \nu(v)\cdot \mu_v)
  \\
  &= \left\{ \smashoperator[r]{\sum_{v\in \supp(\xi)}} \xi(v) \cdot \nu_v \;\;\Big|\;\; \forall v.\ \nu_v\in \lin^{\sem{I}_\Gamma}(\de{C})^\dagger(\mu_v) \right\}
\end{align*}
So $\nu =\sum_{v\in \supp(\xi)} \xi(v)\cdot \nu_v$ where $\nu_v \in \lin^{\sem{I}_\Gamma}(\de{C})^\dagger(\mu_v)$ for each $v$.
Note that $X\notin\free(I)$, so $\sem{I}_\Gamma = \sem{I}_{\Gamma[X\coloneqq v]}$, and therefore by the induction hypothesis, we get that there exist a family of $\Q_v$ such $\Q_v\otimes \P_F \preceq \nu_v$ and $\Gamma[X \coloneqq v],\Q_v \vDash \psi\sep\sure{I}$. Since $X \notin\free(\psi)$, we can remove the update of $X$ to conclude that $\Gamma,\Q_v \vDash \psi\sep\sure I$.
Since $\psi$ is convex, then we can presume that there exists $U$, $\F$, and $S$ such that $\Q_v = \tuple{\mem U, \F, \xi_v}$ for some $\xi_v\in S$. Now, let $\Q = \tuple{\mem U, \F, \sum_{v\in \supp(\xi)} \xi(v)\cdot \xi_v}$. Since $\sum_{v\in \supp(\xi)} \xi(v)\cdot \xi_v$ is a convex combinations of elements of $S$, then $\sum_{v\in \supp(\xi)} \xi(v)\cdot \xi_v \in S$ and therefore by construction $\Gamma,Q\vDash \psi\sep\sure{I}$. We now complete the proof by showing that $\Q \otimes \P_F \preceq \nu$. Take any $A \in \F_{\Q}$ and $B \in \F_{\P_F}$, and let $V$ be the set such that $\Omega_{\P_F} = \mem V$:
\begin{align*}
  \mu_{\Q\otimes\P_F}(A\sep B)
  &= \mu_\Q(A) \cdot \mu_{\P_F}(B) 
  \\
  &= \left( \sum_{v\in\supp(\xi)} \xi(v)\cdot \xi_v(A) \right) \cdot \mu_{\P_F}(B) 
  \\
  &= \sum_{v\in\supp(\xi)} \xi(v)\cdot \xi_v(A) \cdot \mu_{\P_F}(B) 
  \\
  &= \sum_{v\in\supp(\xi)} \xi(v)\cdot \mu_{\Q_v}(A) \cdot \mu_{\P_F}(B) 
  \intertext{Since $\Q_v\otimes \P_F \preceq \nu_v$:}
  &= \sum_{v\in\supp(\xi)}\xi(v) \cdot \pi_{U\cup V}(\nu_v)(A\sep B)
  \\
  &= \pi_{U\cup V}(\nu)(A)
\end{align*}
The case where $m= \wk$ follows from \Cref{lem:weak-triple} using the same reasoning as above, but where $\P_F$ is the trivial probability space on $\mem\emptyset$.

%%% NSPLIT2 %%%
\item\ruleref{NSplit2}.
\[
\inferrule{
  I\vdash_m\triple{\varphi}C{\psi}
  \\
  \convex{\psi}
  \\
  X\notin\mathsf{fv}(I, \psi)
}{
  I\vdash_m\triple{\smashoperator{\bignd_{X\in E}}\varphi}C{\psi}
}{\ruleref{NSplit2}}
\]
Suppose that $\P' \in \P\diamond_m \P_F$ and $\P' \preceq \mu$ and $\Gamma,\P\vDash(\bignd_{X\in E}\varphi)\sep\sure I$. This means that there is some $\xi \in \D(\de{E}_\mathsf{LExp}(\Gamma))$ such that $\Gamma,\P\vDash(\bigoplus_{X\sim \xi}\varphi)\sep\sure{I}$. 
Now take any $\nu \in \lin^{\sem{I}_\Gamma}(\de{C})^\dagger(\mu)$.
Using the \ruleref{Split2} rule, we get that there exists a $\Q$ and $\Q' \in \Q \diamond_m \P_F$ such that $\Q' \preceq \nu$ and $\Gamma,\Q\vDash\psi\sep\sure I$.

\item\ruleref{Exists}.
\[
\inferrule{
  I\vdash_\wk\triple{\textstyle\bignd_{X\in E}\sure{P}}C{\psi}
  \\
  \sure{P}\Rightarrow\sure{e \mapsto X}
}{
  I\vdash_{\wk}\triple{\sure{\exists X\in E.\ P}}C{\psi}
}{\ruleref{Exists}}
\]
By \Cref{lem:weak-triple}, it suffices to show that the claim holds without frame preservation.
Suppose that $\Gamma,\mu\vDash \sure{\exists X \in E.P}\sep\sure I$. Without loss of generality, suppose that $X$ is not free in $I$ (if not, then we could $\alpha$-rename $X$ in $P$ to obtain an equivalent assertion, \ie $\sure{\exists X\in E.\ P} \Leftrightarrow \sure{\exists Y\in E.\ P[Y/X]}$ where $Y$ is some fresh variable). So, we get that $\Gamma, \mu\vDash \sure{\exists X\in E.\ P\sep I}$, which means that for every $\sigma \in \supp(\mu)$ there exists $v_\sigma\in\de{E}_{\mathsf{LExp}}(\Gamma)$ such that $\Gamma[X\coloneqq v_\sigma], \sigma\vDash P\sep I$.
Let $\xi \triangleq \sum_{\sigma \in \supp(\mu)} \mu(\sigma) \cdot \delta_{v_\sigma}$, so $\xi(v)$ is the probability that $\Gamma[X \coloneqq v]$ is the context that satisfies $P$. Also, let:
\[
  \mu_v(\sigma) \triangleq \frac{1}{\xi(v)} \left\{
    \begin{array}{ll}
      \mu(\sigma) & \text{if}~ v = v_\sigma \\
      0 & \text{if}~ v \neq v_\sigma
    \end{array}
  \right.
\]
Clearly, by construction $\Gamma[X\coloneqq v],\mu_v\vDash\sure{P}\sep\sure{I}$ for all $v\in\supp(\xi)$.
Also, since $\sure{P}\Rightarrow\sure{e\mapsto X}$, then $\de{e}_\expr(\sigma) = v$ for each $\sigma\in\supp(\mu_v)$, and so each $\mu_v = \comp(\mu_v')$ where $\mu_v'$ has a restricted sample space to only be over states where $e$ evaluates to $v$. Thus, we clearly have $\bigoplus_{v\sim \xi} \mu'_v \preceq \mu$, and therefore $\Gamma,\mu\vDash (\bignd_{X\in E}\sure P) \sep \sure I$.
So, by the induction hypothesis and \Cref{lem:weak-triple}, $\Gamma, \nu \vDash \psi \sep \sure{I}$ for all $\nu \in \lin^{\sem{I}_{\Gamma}}(\de{C})^\dagger(\mu)$.

\item\ruleref{Consequence}.
\[
  \inferrule{
    \varphi'\Rightarrow\varphi
    \\
    I\vdash_m\triple\varphi{C}\psi
    \\
    \psi\Rightarrow\psi'
  }{
    I\vdash_m\triple{\varphi'}C{\psi'}
  }{\ruleref{Consequence}}
\]
Suppose that $\P' \in (\P\otimes \trivP{\sem{I}_\Gamma})\diamond_m \P_F$, $\P' \preceq \mu$, and $\Gamma,\P \vDash \varphi'$. 
Since $\varphi'\Rightarrow\varphi$, then $\Gamma,\P\vDash\varphi$. By the induction hypothesis, there exists $\Q$ and $\Q' \in (\Q\otimes \trivP{\sem{I}_\Gamma})\diamond_m \P_F$ such that $\Q' \preceq\nu$ and $\Gamma,\Q\vDash\psi$ for any $\nu \in \lin^{\sem{I}_\Gamma}(\de{C})^\dagger(\mu)$. Since $\psi\Rightarrow\psi'$, then $\Gamma,\Q\vDash\psi'$, so we are done.

%%% BOUNDED RANK %%%

\item\ruleref{BoundedRank}.
Subject to the following conditions:
\begin{align*}
  (1)\ & \varphi\Rightarrow\sure{\ell \le R\le h}
  &
  (2)\ &\varphi\sep\sure{R=\ell} \Rightarrow \sure{b\mapsto\fls}
  &
  (3)\ &\varphi\sep\sure{R>\ell} \Rightarrow \sure{b\mapsto\tru}
 \\
  (4)\ &\precise{\varphi[\ell/R]}
  &
  (5)\ &N\notin\free(\varphi)
  &
  (6)\ &0<p\le 1
\end{align*}
The following inference is valid:
\[
  \inferrule{\textstyle
    I \vdash_m \triple{\varphi \sep \sure{R = N > \ell}}C{(\bignd_{R=\ell}^{N-1} \varphi) \oplus_{\ge p} (\bignd_{R=N}^h \varphi)}
  }{\textstyle
    I\vdash_m\triple{\bignd_{R=\ell}^h\varphi}{\whl bC}{\varphi[\ell/R]}
  }{\ruleref{BoundedRank}}
\]
We prove the rule without frame preservation, as frame preservation follows from \Cref{lem:weak-triple} if $m= \wk$ and the \ruleref{Strengthen} rule if $m=\st$ (since the postcondition is precise).
Suppose that $\Gamma,\mu\vDash \bignd_{R=\ell}^h \varphi\sep\sure{I}$.
Let $\varphi' = \bignd_{R=\ell+1}^h\varphi$ and $\psi = \varphi[\ell/R]$. We first show that $\tuple{\varphi',\psi}$ is an invariant pair for $\whl bC$ under $I$:
\begin{enumerate}
  \item $\varphi' = \bignd_{R=\ell+1}^h\varphi \Rightarrow\bignd_{R=\ell+1}^h \sure{b\mapsto\tru} \Rightarrow\sure{b\mapsto\tru}$
  
  \item $\psi = \varphi[\ell/R] \Rightarrow \sure{b\mapsto\fls}$
  
  \item We can weaken the postcondition of the premise of rule as follows:
  \[
    (\bignd_{R=\ell}^{N-1} \varphi) \oplus_{\ge p} (\bignd_{R=N}^h \varphi)
    \;\;\Rightarrow\;\;
    (\bignd_{R=\ell}^{N-1} \varphi) \nd (\bignd_{R=N}^h \varphi)
    \;\;\Rightarrow\;\;
    \bignd_{R=\ell}^h \varphi
    \;\;\Rightarrow\;\;
    \varphi' \nd\psi
  \]
  So, after an application of \ruleref{NSplit2}, we get that $I\vDash_m\triple{\varphi'}C{\varphi'\nd\psi}$. If $m=\st$, \ruleref{Weaken} can be used to obtain $I\vDash_\wk\triple{\varphi'}C{\varphi'\nd\psi}$.
  
  \item By assumption, we have $\precise{\psi}$.
\end{enumerate}
In addition, the premise of the rule implies that each iteration, the rank decreases by at least 1 with probability at least $p$, so starting in any state satisfying $\varphi$, the loop will terminate with probability at least $p^{h-\ell} >0$. Therefore by \Cref{cor:ast}, $\minterm(\lin^{\sem{I}_\Gamma}(\de{\whl bC})^\dagger(\mu)) = 1$, \ie the loop almost surely terminates.

Since $\varphi[\ell/R]$ is precise, let $\Q$ be the unique smallest probability space such that $\Gamma,\Q\vDash\varphi[\ell/R]$. Take any event $B \in \F_\Q$. By \Cref{lem:par-cor}, we get:
\[
  \minProb\left(\lin^{\sem{I}_\Gamma}(\de{\whl bC})^\dagger(\mu), B\sep\sem{I}_\Gamma \right)
  = \minterm\left(\lin^{\sem{I}_\Gamma}(\de{\whl bC})^\dagger(\mu)\right) \cdot \mu_\Q(B)
  = \mu_\Q(B)
\]
So, $\mu_\Q(B) \le \nu(B\sep\sem{I}_\Gamma)$ for any $\nu \in \lin^{\sem{I}_\Gamma}(\de{\whl bC})^\dagger(\mu)$. However, we already established that the loop almost surely terminates, so $\nu(\bot) = 0$, therefore $\nu(B\sep\sem{I}_\Gamma) = 1-\nu(\mem V \setminus (B\sep\sem{I}_\Gamma))$ (where $V$ is the domain of $\nu$), which gives us:
\begin{align*}
  \mu_\Q(B)
  &\le \nu(B\sep\sem{I}_\Gamma)
  \\& = 1 - \nu(\mem V\setminus (B\sep\sem{I}_\Gamma))
  \\&\le 1 - \mu_\Q(\mem V\setminus (B\sep\sem{I}_\Gamma))
  \\&= 1 - (1 - \mu_\Q(B\sep\sem{I}_\Gamma))
  \\&= \mu_\Q(B)
\end{align*}
Therefore $\mu_\Q(B) = \nu(B\sep\sem{I}_\Gamma)$, and since this is true for any $B\in\F_\Q$, then $\Q \otimes \P_I\preceq \nu$ where $\P_I$ is the trivial probability space satisfying $I$. By definition, $\Gamma,\Q\vDash\varphi[\ell/R]$, so we are done.

%\item\ruleref{Conj}.
%\[
%\inferrule{
%  I\vdash_m\triple{\varphi_1}C{\psi_1}
%  \\
%  I\vdash_m\triple{\varphi_2}C{\psi_2}
%}{
%    I\vdash_m\triple{\varphi_1\land\varphi_2}C{\psi_1\land\psi_2}
%}{\ruleref{Conj}}
%\]
%Suppose that $\P' \in \P \diamond \P_F$ and $\P' \preceq \mu$ and $\Gamma,\P\vDash (\varphi_1\land\varphi_2)\sep \sure I$

\item\ruleref{Disj}.
\[
\inferrule{
  I\vdash_m\triple{\varphi_1}C{\psi_1}
  \\
  I\vdash_m\triple{\varphi_2}C{\psi_2}
}{
    I\vdash_m\triple{\varphi_1\vee\varphi_2}C{\psi_1\vee\psi_2}
}{\ruleref{Disj}}
\]
Suppose that $\P' \in \P \diamond \P_F$ and $\P' \preceq \mu$ and $\Gamma,\P\vDash (\varphi_1\vee\varphi_2)\sep \sure I$. This means that $\P_1 \otimes \P_2 \preceq \P$ such that $\Gamma,\P_1 \vDash \varphi_1 \vee \varphi_2$ and $\Gamma,\P_2\vDash\sure I$. Without loss of generality, suppose that $\Gamma,\P_1 \vDash \varphi_1$, therefore $\Gamma,\P\vDash\varphi_1\sep\sure I$. Now take any $\nu \in \lin^{\sem{I}_\Gamma}(\de{C})^\dagger(\mu)$. By the induction hypothesis, we know that there exists $\Q$ and $\Q' \in \Q\diamond_m\P_F$ such that $\Q' \preceq\nu$ and $\Gamma,\Q\vDash \psi_1\sep\sure I$. We can weaken this to conclude that $\Gamma,\Q\vDash(\psi_1\vee\psi_2)\sep \sure I$.

\item\ruleref{Exists2}.
\[
\inferrule{
  I\vdash_m\triple{\varphi}C{\psi}
  \\
  X\notin\free(\psi,I)
}{
    I\vdash_m\triple{\exists X.\ \varphi}C{\psi}
}{\ruleref{Exists2}}
\]
Suppose that $\P' \in \P \diamond \P_F$ and $\P' \preceq \mu$ and $\Gamma,\P\vDash (\exists X.\varphi)\sep \sure I$. This means that $\P_1 \otimes \P_2 \preceq \P$ such that $\Gamma,\P_1 \vDash \exists X.\varphi$ and $\Gamma,\P_2\vDash\sure I$. 
Therefore, $\Gamma[X\coloneqq v],\P_1 \vDash \varphi$ for some $v\in\mathsf{Val}$. Since $X \notin\free(I)$, then we also have that $\Gamma[X\coloneqq v],\P \vDash \varphi\sep\sure I$.
Now take any $\nu \in \lin^{\sem{I}_\Gamma}(\de{C})^\dagger(\mu)$.
Since $I\notin\free(I)$, then $\sem{I}_{\Gamma[X\coloneqq v]} = \sem{I}_{\Gamma}$, so $\nu \in \lin^{\sem{I}_{\Gamma[X\coloneqq v]}}(\de{C})^\dagger(\mu)$ as well.
By the induction hypothesis, we know that there exists $\Q$ and $\Q' \in \Q\diamond_m\P_F$ such that $\Q' \preceq\nu$ and $\Gamma[X\coloneqq v],\Q\vDash \psi\sep\sure I$. Since $X \notin \free(\psi,I)$, then this implies that $\Gamma,\Q\vDash \psi\sep\sure I$.

\item\ruleref{Subst}
\[
\inferrule{
  I\vdash_m\triple{\varphi}C{\psi}
}{
  I[E/X] \vdash_m\triple{\varphi[E/X]}C{\psi[E/X]}
}{\ruleref{Subst}}
\]
Suppose that $\P' \in \P\diamond_m\P_F$ such that $\P'\preceq \mu$ and $\Gamma,\P\vDash\varphi[E/X] \sep \sure{I[E/X]}$. Let $\Gamma' = \Gamma[X \coloneqq \de{E}_{\mathsf{LExp}}(\Gamma)]$, so by construction $\Gamma', \P\vDash \varphi\sep\sure I$, and $\sem{I[E/X]}_{\Gamma} = \sem{I}_{\Gamma'}$. Now, take any $\nu' \in \lin^{\sem{I[E/X]}_\Gamma}(\de{C})^\dagger(\mu) = \lin^{\sem{I}_{\Gamma'}}(\de{C})^\dagger(\mu)$. By the induction hypothesis, we know that there exist $\Q'$ and $\Q'\in \Q \diamond_m\P_F$ such that $\Q' \preceq \nu$ and $\Gamma', \Q\vDash\psi$. This means that $\Gamma,\Q\vDash\psi[E/X]$.

\end{itemize}
\end{proof}

\subsection{Derived Rules}

%\begin{lemma}
%The following inference rule is derivable:
%\[
%\ruledef{Atom'}{
%  \varphi\Rightarrow\sure{e \mapsto E}
%}{
%  I\vdash_m\triple{\varphi\sep\own(x)}{x \coloneqq e}{\varphi\sep\sure{x\mapsto E}}
%}
%\]
%\end{lemma}
%\begin{proof}
%\inferrule*[right=\ruleref{Exists}]{
%  \inferrule*[Right=\ruleref{NSplit2}]
%}{
%  I\vdash_m\triple{\varphi\sep\own(x)}{x \coloneqq e}{\varphi\sep\sure{x\mapsto E}}
%} % EXISTS
%\end{proof}

\begin{lemma}
The following inference rule is derivable:
\[
\ruledef{If}{
  I\vdash_m\triple{\varphi \sep \sure{X=\tru}}{C_1}{\psi}
  \\
  I\vdash_m\triple{\varphi \sep \sure{X=\fls}}{C_2}{\psi}
  \\
  \varphi\Rightarrow\sure{b\mapsto X}
  \\
  \psi\Rightarrow\sure{b\mapsto X}
}{
  I\vdash_m\triple{\textstyle\bigoplus_{X\sim\bern{p}}\varphi}{\iftf b{C_1}{C_2}}{\bigoplus_{X\sim\bern{p}}\psi}
}
\]
\end{lemma}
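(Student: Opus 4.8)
The plan is to peel off the outcome conjunction with \ruleref{Split1} and then resolve the conditional by an ordinary two-way case analysis on the (now deterministic) logical variable $X$. Assume without loss of generality that $X \notin \mathsf{fv}(I)$, as is standard for the splitting rules (otherwise $\alpha$-rename $X$ in the two outcome conjunctions, which is sound by the $\alpha$-renaming law of \Cref{fig:entailment}). Since $\psi \Rightarrow \sure{b \mapsto X}$, the guard $b$ witnesses a partition of the sample space, so \ruleref{Split1} (instantiated with $d(E) = \bern p$ and $e = b$) reduces the goal to the ``pointwise'' triple
\[
  I \vdash_m \triple{\varphi}{\iftf b{C_1}{C_2}}{\psi},
\]
in which $X$ is a free, implicitly universally quantified logical variable.

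To discharge this triple, I would first apply \ruleref{Consequence} to strengthen the precondition along the entailment $\varphi \Rightarrow (\varphi \sep \sure{X=\tru}) \lor (\varphi \sep \sure{X=\fls})$, and then apply \ruleref{Disj} to split into the two branches. On the branch with precondition $\varphi \sep \sure{X = \tru}$ I would invoke \ruleref{IfT}: its guard side condition $\varphi \sep \sure{X=\tru} \Rightarrow \sure{b \mapsto \tru}$ holds because affine weakening of $\sep$ yields both $\sure{b\mapsto X}$ (from $\varphi$ and the hypothesis $\varphi \Rightarrow \sure{b\mapsto X}$) and $\sure{X=\tru}$, and under $X = \tru$ the former collapses to $\sure{b\mapsto\tru}$; its second premise $I \vdash_m \triple{\varphi \sep \sure{X=\tru}}{C_1}{\psi}$ is exactly the first hypothesis of the rule. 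The branch with $\sure{X = \fls}$ is symmetric, using \ruleref{IfF} and the second hypothesis. Assembling the two gives $I \vdash_m \triple{(\varphi \sep \sure{X=\tru}) \lor (\varphi \sep \sure{X=\fls})}{\iftf b{C_1}{C_2}}{\psi}$, and the \ruleref{Consequence} step then delivers the pointwise triple, completing the \ruleref{Split1} derivation.

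The crux—and the step I expect to require the most care—is justifying the precondition entailment $\varphi \Rightarrow (\varphi \sep \sure{X=\tru}) \lor (\varphi \sep \sure{X=\fls})$. Fix $\Gamma$ and $\P$ with $\Gamma, \P \vDash \varphi$. From $\varphi \Rightarrow \sure{b \mapsto X}$, the measure $\mu_\P$ assigns probability one to $\{\sigma \mid \de{b}_\test(\sigma) = \Gamma(X)\}$; since this set is nonempty and $\de{b}_\test$ is $\mathbb B$-valued, $\Gamma(X) \in \{0,1\} = \{\fls, \tru\}$, so exactly one of $\sure{X=\tru}$, $\sure{X=\fls}$ holds under $\Gamma$ (both being pure comparisons over a logical variable, they depend only on $\Gamma$). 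In the case $\Gamma(X) = \tru$, decomposing $\P$ against the trivial space $\P_\emptyset$ on $\mem\emptyset$, so that $\P = \P \otimes \P_\emptyset$, shows $\Gamma, \P \vDash \varphi \sep \sure{X=\tru}$; the other case is symmetric. This is precisely where the choice of $\bern p$ matters: its support $\{0,1\}$ coincides with the two truth values of the guard, making the binary case split exhaustive. The remaining side entailments (affine weakening $\varphi \sep \sure{X=\tru} \Rightarrow \varphi$ and the collapse $\sure{b\mapsto X} \land \sure{X=\tru} \Rightarrow \sure{b\mapsto\tru}$) are routine consequences of \Cref{lem:mono-sat} and the pure-assertion semantics of \Cref{fig:pure-assertions,fig:prob-sem}.
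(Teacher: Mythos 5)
Your derivation is correct and is essentially the paper's own proof: both reduce the conclusion via \ruleref{Split1} (instantiated with $e=b$, justified by the hypothesis $\psi\Rightarrow\sure{b\mapsto X}$ and $\alpha$-renaming if $X\in\mathsf{fv}(I)$), perform the case analysis with \ruleref{Disj} plus \ruleref{IfT}/\ruleref{IfF} on the preconditions $\varphi\sep\sure{X=\tru}$ and $\varphi\sep\sure{X=\fls}$, and absorb the resulting disjunctions with \ruleref{Consequence}. The only (immaterial) difference is the order of the outer two steps: the paper puts \ruleref{Consequence} outermost, so the entailment $\varphi\Rightarrow(\varphi\sep\sure{X=\tru})\vee(\varphi\sep\sure{X=\fls})$ is only needed under the outcome conjunction where $X$ ranges over $\supp(\bern p)\subseteq\{0,1\}$, whereas you put \ruleref{Split1} outermost and therefore justify that same entailment pointwise for arbitrary $\Gamma$ --- which you do correctly, using $\varphi\Rightarrow\sure{b\mapsto X}$ and the Boolean-valuedness of tests to force $\Gamma(X)\in\{\fls,\tru\}$ whenever $\varphi$ is satisfiable.
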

\begin{proof}
Note in the proof below that $\tru=1$ and $\fls=0$.
\[
\inferrule*[right=\rulereff{Consequence}]{
  \inferrule*[Right=\rulereff{Split1}]{
    \inferrule*[Right=\rulereff{Disj}]{
      \inferrule*[right=\rulereff{IfT}]{
        I\vdash_m\triple{\varphi \sep \sure{X=1}}{C_1}{\psi}
      }{
        I\vdash_m\triple{\varphi \sep \sure{X=1}}{\iftf b{C_1}{C_2}}{\psi}
      }
      \inferrule*[right=\rulereff{IfF},vdots=4.5em,leftskip=5em,rightskip=5em]{
        I\vdash_m\triple{\varphi \sep \sure{X=0}}{C_2}{\psi}
      }{
        I\vdash_m\triple{\varphi \sep \sure{X=0}}{\iftf b{C_1}{C_2}}{\psi}
      }
    }{
      I\vdash_m\triple{(\varphi \sep \sure{X=1})\vee (\varphi \sep \sure{X=0})}{\iftf b{C_1}{C_2}}{\psi\vee\psi}
    }
  }{
    I\vdash_m\triple{\textstyle\bigoplus_{X\sim\bern{p}}(\varphi \sep \sure{X=1})\vee (\varphi \sep \sure{X=0})}{\iftf b{C_1}{C_2}}{\bigoplus_{X\sim\bern{p}}\psi\vee\psi}
  }
}{
  I\vdash_m\triple{\textstyle\bigoplus_{X\sim\bern{p}}\varphi}{\iftf b{C_1}{C_2}}{\bigoplus_{X\sim\bern{p}}\psi}
}
\]
\end{proof}

\begin{lemma}
The following inference rule is derivable:
\[
\ruledef{IfPure}{
  I\vdash_m\triple{\sure{P \land b\mapsto 1}}{C_1}{\sure Q}
  \\
  I\vdash_m\triple{\sure{P \land b\mapsto 0}}{C_2}{\sure Q}
  \\
  \sure P \Rightarrow\sure{b\in \{0,1\}}
}{
  I\vdash_m\triple{\sure P}{\iftf b{C_1}{C_2}}{\sure Q}
}
\]
\end{lemma}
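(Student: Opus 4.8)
The plan is to reduce $\iftf b{C_1}{C_2}$ to the two branch specifications by case analysis on the value of $b$, in the same spirit as the derivation of the \ruleref{If} rule above, but adapted to a setting where both the pre- and postcondition are plain almost-sure assertions. The essential difficulty is that $\sure P$ permits $b$ to be \emph{mixed}: even though $\sure P \Rightarrow \sure{b\in\{0,1\}}$, the test $b$ may evaluate to $\tru$ with a probability strictly between $0$ and $1$. Consequently one cannot simply weaken $\sure P$ to $\sure{P\land b\mapsto 1}\vee\sure{P\land b\mapsto 0}$ and apply \ruleref{Disj}, because that entailment is false for genuinely mixed $b$. This is exactly the situation \ruleref{Exists} is designed for, so the derivation must route through it; as a result it will only yield a \emph{weak} triple, and the strong mode is recovered at the very end using precision of $\sure Q$.

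First I would rewrite the precondition. Since $\sure P \Rightarrow \sure{b\in\{0,1\}}$ entails the pure fact $P\Rightarrow b\in\{0,1\}$ (instantiate the hypothesis at point masses), we have $\sure P \dashv\vdash \sure{\exists X\in\{0,1\}.\ P\land b\mapsto X}$ for a fresh $X$. Using \ruleref{Consequence} to move to this form and then \ruleref{Exists} with witness expression $e=b$ (whose side condition $\sure{P\land b\mapsto X}\Rightarrow\sure{b\mapsto X}$ is immediate), it suffices to derive the weak triple whose precondition is $\bignd_{X\in\{0,1\}}\sure{P\land b\mapsto X}$. I would then apply \ruleref{NSplit2}: its side conditions hold because $\sure Q$ is precise, hence convex, and $X$ is fresh so $X\notin\mathsf{fv}(Q,I)$. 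This leaves the obligation $I\vdash_\wk\triple{\sure{P\land b\mapsto X}}{\iftf b{C_1}{C_2}}{\sure Q}$, with $X$ now a free (universally quantified) logical variable.

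The remaining step disposes of the free $X$. Because every model of $\sure{P\land b\mapsto X}$ must, by the pure fact $P\Rightarrow b\in\{0,1\}$, have $\Gamma(X)\in\{0,1\}$, the entailment $\sure{P\land b\mapsto X}\Rightarrow \sure{P\land b\mapsto 1}\vee\sure{P\land b\mapsto 0}$ is valid; I would apply \ruleref{Consequence} to weaken to this disjunction, which crucially no longer mentions $X$. A use of \ruleref{Disj} then splits the goal into the two cases $\sure{P\land b\mapsto 1}$ and $\sure{P\land b\mapsto 0}$. In the first, $\sure{P\land b\mapsto 1}\Rightarrow\sure{b\mapsto\tru}$, so \ruleref{IfT} with the first hypothesis discharges it; in the second, $\sure{P\land b\mapsto 0}\Rightarrow\sure{b\mapsto\fls}$, so \ruleref{IfF} with the second hypothesis discharges it (weakening each hypothesis to mode $\wk$ via \ruleref{Weaken} if the ambient mode $m$ is $\st$).

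Finally, the above yields the weak triple $I\vdash_\wk\triple{\sure P}{\iftf b{C_1}{C_2}}{\sure Q}$, which settles the $m=\wk$ case directly. For the $m=\st$ case I would invoke \ruleref{Strengthen}, whose precision side condition is satisfied since $\sure Q$ is precise. The main obstacle is conceptual rather than computational: recognizing that naive case analysis via \ruleref{Disj} applied to $\sure P$ is unsound for mixed $b$, and that \ruleref{Exists} together with \ruleref{NSplit2} are needed to perform the split \emph{inside} a nondeterministic outcome conjunction, at the temporary cost of dropping to a weak triple that the precision of $\sure Q$ allows us to strengthen back.
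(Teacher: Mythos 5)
Your proposal is correct and follows essentially the same route as the paper's own derivation: rewrite $\sure P$ via \textsc{Consequence} into an existential over $X\in\{0,1\}$ witnessed by $b$, apply \textsc{Exists} and then \textsc{NSplit2} (justified because the precise postcondition $\sure Q$ is convex), case-split with \textsc{Disj} discharged by \textsc{IfT}/\textsc{IfF}, and recover the strong mode at the end with \textsc{Strengthen}. The only cosmetic difference is that the paper carries the conjunct $X=0\vee X=1$ explicitly through the derivation, whereas you drop it and re-derive $\Gamma(X)\in\{0,1\}$ semantically from the pure fact $P\Rightarrow b\in\{0,1\}$; both are sound.
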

\begin{proof}
We only show the derivation of the weak version. The strong version can be easily derived using \ruleref{Strengthen}, since the postcondition is precise (\eg see \Cref{lem:exists-strong}). Let $X$ be some fresh logical variable such that $X \notin\free(P, I)$.
\[
\inferrule*[right=\rulereff{Consequence}]{
  \inferrule*[Right=\rulereff{Exists}]{
    \inferrule*[Right=\rulereff{NSplit2}]{
      \inferrule*[Right=\rulereff{Disj}]{
        \inferrule*[right=\rulereff{IfT}]{
          I\vdash_\wk\triple{\sure{P\land b\mapsto 1}}{C_1}{\sure Q}
        }{
          I\vdash_\wk\triple{\sure{P\land b\mapsto 1}}{\iftf b{C_1}{C_2}}{\sure Q}
        } % IFT
        \inferrule*[right=\rulereff{IfF},vdots=4.5em,leftskip=4em,rightskip=8em]{
          I\vdash_\wk\triple{\sure{P\land b\mapsto 0}}{C_2}{\sure Q}
        }{
          I\vdash_\wk\triple{\sure{P\land b\mapsto 0}}{\iftf b{C_1}{C_2}}{\sure Q}
        } % IFF
      }{
        I\vdash_\wk\triple{\sure{P\land b\mapsto X \land (X=0\vee X=1)}}{\iftf b{C_1}{C_2}}{\sure Q}
      } % DISJ
    }{
      I\vdash_\wk\triple{\textstyle\bignd_{X\in\{0,1\}} \sure{P\land b\mapsto X \land (X=0\vee X=1)}}{\iftf b{C_1}{C_2}}{\sure Q}
    } % NSPLIT2
  }{
    I\vdash_\wk\triple{\sure{\exists X.\ P\land b\mapsto X \land (X=0\vee X=1)}}{\iftf b{C_1}{C_2}}{\sure Q}
  } % EXISTS
}{
  I\vdash_\wk\triple{\sure P}{\iftf b{C_1}{C_2}}{\sure Q}
} % CONSEQ
\]
\end{proof}

\begin{lemma}\label{lem:exists-strong}
The following inference rule is derivable:
\[
\ruledef{ExistsStrong}{
  I\vdash\triple{\textstyle\bignd_{X\in E} \sure P}C{\psi}
  \\
  \sure{P} \Rightarrow \sure{e \mapsto X}
  \\
  \precise{\psi}
}{
    I\vdash\triple{\sure{\exists X\in E.\ P}}C{\psi}
}
\]
\end{lemma}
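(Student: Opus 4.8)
The plan is to derive \textsc{ExistsStrong} as a short composition of three rules already available in the logic: \ruleref{Weaken}, \ruleref{Exists}, and \ruleref{Strengthen}. The key idea is to pass through a \emph{weak} triple in the middle, since the existing \ruleref{Exists} rule can only produce a weakly frame-preserving triple; strong frame preservation is then recovered at the very end using the precision hypothesis on $\psi$.

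First, I would take the strong premise $I\vdash\triple{\bignd_{X\in E}\sure P}C\psi$ and apply \ruleref{Weaken} to obtain the weak triple $I\vdash_\wk\triple{\bignd_{X\in E}\sure P}C\psi$. Next, I would invoke \ruleref{Exists}, whose premise is exactly this weak triple together with the side condition $\sure P \Rightarrow \sure{e\mapsto X}$ --- the second hypothesis of \textsc{ExistsStrong}. Its conclusion is $I\vdash_\wk\triple{\sure{\exists X\in E.\ P}}C\psi$, which replaces the nondeterministic outcome conjunction in the precondition by the pure existential. Finally, since $\precise\psi$ holds by the third hypothesis, I would apply \ruleref{Strengthen} to convert this weak triple back into the desired strong triple $I\vdash\triple{\sure{\exists X\in E.\ P}}C\psi$.

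There is no substantial obstacle here; the whole argument is just threading the three rules together and checking that their side conditions are discharged by the hypotheses of \textsc{ExistsStrong} (namely $\sure P \Rightarrow \sure{e\mapsto X}$ for \ruleref{Exists}, and $\precise\psi$ for \ruleref{Strengthen}). The only conceptual point worth stating explicitly is \emph{why} the detour through a weak triple is forced: as explained in \Cref{sec:structural}, turning a pure assertion into a $\bignd$ interacts badly with the strong separating conjunction, so \ruleref{Exists} is only compatible with weak frame preservation. Precision of the postcondition is exactly the condition under which \ruleref{Strengthen} can re-enable strong frame preservation, which is why it must appear as a hypothesis and is consumed in the final step.
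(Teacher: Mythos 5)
Your derivation is exactly the paper's own proof: apply \ruleref{Weaken} to the premise, then \ruleref{Exists} with the side condition $\sure P \Rightarrow \sure{e\mapsto X}$, then \ruleref{Strengthen} using $\precise\psi$. Both the rule composition and the justification for the detour through a weak triple match the paper, so the proposal is correct.
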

\begin{proof}
\[
\inferrule*[right=\rulereff{Strengthen}]{
  \inferrule*[right=\rulereff{Exists}]{
    \inferrule*[right=\rulereff{Weaken}]{
      I\vdash\triple{\textstyle\bignd_{X\in E} \sure P}C{\psi}
    }{
      I\vdash_\wk\triple{\textstyle\bignd_{X\in E} \sure P}C{\psi}
    }
    \\
    \sure{P} \Rightarrow \sure{e \mapsto X}
  }{
    I\vdash_\wk\triple{\sure{\exists X\in E.\ P}}C{\psi}
  }
  \\
  \precise{\psi}
}{
  I\vdash\triple{\sure{\exists X\in E.\ P}}C{\psi}
}
\]
\end{proof}

\newpage
\begin{landscape}

\section{Examples}
\label{app:examples}

In this section, we provide derivations that were omitted in the main text. The content is rotated to take advantage of the full width of the page.

\subsection{Conditional Independence Example}
\label{app:cond-ind}
In this section, we show the derivation for the program that was introduced in \Cref{sec:overview}. The program is repeated below:
\[
  z \samp \bern{\tfrac12} \fatsemi \left( x \coloneqq z \;\mathlarger\parallel\; y \coloneqq 1-z \right)
\]
The first step is to break up the sequential composition, and derive a specification for the sampling operation. This is simple using the \ruleref{Seq}, \ruleref{Frame}, and \ruleref{Samp} rules. However, the derivation for the second part of the program is more difficult, and will be filled in shortly where the $(\bigstar)$ appears.
\[
\inferrule*[right=\ruleref{Seq}]{
  \inferrule*[right=\ruleref{Frame}]{
    \inferrule*[Right=\ruleref{Samp}]{\;}{
      \triple{\own(z)}{ z \samp \bern{1/2}}{z\sim\bern{1/2}}
    }
  }{
    \triple{\own(x, y, z)}{ z \samp \bern{1/2}}{\own(x, y) \sep z\sim\bern{1/2}}
  }
  \\
  (\bigstar)
}{
  \triple{\own(x, y, z)}{ z \samp \bern{1/2} \fatsemi ( x \coloneqq z \parallel y \coloneqq 1-z )}{\textstyle\bigoplus_{\bern{1/2}} \sure{x \mapsto Z}\sep \sure{y\mapsto 1-Z}\sep \sure{z \mapsto Z}}
}
\]
We now show the $(\bigstar)$ derivation. The first step is to rearrange the precondition using the entailment laws from \Cref{fig:entailment} to bring the outcome conjunction to the outside, so that we can apply the \ruleref{Cond1} rule. Conditioning makes $z$ deterministic, and therefore trivially independent from the rest of the state, so we can allocate an invariant with the \ruleref{Share} rule. The rest of the proof is straightforward.
\[
  \inferrule*[right=\ruleref{Consequence}]{
    \inferrule*[Right=\ruleref{Cond1}]{
      \inferrule*[Right=\ruleref{Share}]{
        \inferrule*[Right=\ruleref{Par}]{
          \inferrule*[right=\ruleref{Atom}]{
            \inferrule*[Right=\ruleref{Assign}]{\;}{
              \triple{\own(x) \sep \sure{z\mapsto Z}}{x \coloneqq z}{\sure{x \mapsto Z} \sep \sure{z\mapsto Z}}
            }
          }{
            z\mapsto Z\vdash\triple{\own(x)}{x \coloneqq z}{\sure{x \mapsto Z}}
          }
          \qquad
          \inferrule*[Right=\ruleref{Atom}]{
            \inferrule*[Right=\ruleref{Assign}]{\;}{
              \triple{\own(y) \sep \sure{z\mapsto Z}}{y \coloneqq 1-z}{\sure{y \mapsto 1-Z} \sep \sure{z\mapsto Z}}
            }
          }{
            z\mapsto Z\vdash\triple{\own(y)}{y \coloneqq 1-z}{\sure{y \mapsto 1-Z}}
          }
        }{
          z\mapsto Z\vdash\triple{\own(x, y)}{x \coloneqq z \parallel y \coloneqq 1-z}{\sure{x \mapsto Z}\sep \sure{y\mapsto 1-Z}}
        }
      }{
        \triple{\own(x, y) \sep \sure{z\mapsto Z}}{x \coloneqq z \parallel y \coloneqq 1-z}{\sure{x \mapsto Z}\sep \sure{y\mapsto 1-Z}\sep \sure{z \mapsto Z}}
      }
    }{
      \triple{\textstyle\bigoplus_{\bern{1/2}}\own(x, y) \sep \sure{z\mapsto Z}}{x \coloneqq z \parallel y \coloneqq 1-z}{\textstyle\bigoplus_{\bern{1/2}} \sure{x \mapsto Z}\sep \sure{y\mapsto 1-Z}\sep \sure{z \mapsto Z}}
    }
  }{
    \triple{\own(x, y) \sep z\sim\bern{1/2}}{x \coloneqq z \parallel y \coloneqq 1-z}{\textstyle\bigoplus_{\bern{1/2}} \sure{x \mapsto Z}\sep \sure{y\mapsto 1-Z}\sep \sure{z \mapsto Z}}
  }
\]
%\caption{Derivation of the program from \Cref{sec:overview}.}
%\label{fig:overview-full-deriv}
%\end{sidewaysfigure}

\subsection{Almost Sure Termination of a Random Walk}
\label{app:ast-example}

Recall the following random walk program from \Cref{sec:ast-rules}.
\[
%\begin{array}{l}
%0 \le y \le 5 \vdash\triple{\sure{0 \le x \le 5}}{
\whl{x > 0}{
 b \samp \textstyle\bern{\frac12}\fatsemi
 \iftf{b}{
  x \coloneqq x-1
}{
  x \coloneqq y
}}
%}{\sure{x = 0}}
%\end{array}
\]
We now give the full derivation for this program. Below in (\ref{eq:ast1}), we derive a specification for the sampling operation.
\begin{equation}\label{eq:ast1}
\inferrule*[right=\rulereff{Weaken}]{
  \inferrule*[Right=\rulereff{Consequence}]{
    \inferrule*[Right=\rulereff{Frame}]{
      \inferrule*[Right=\rulereff{Samp}]{\;}{
        y \in \{0, \ldots, 5\} \vdash\triple{\sure{\own(b)}}{
          {b \samp \textstyle\bern{\frac12}}
        }{b\sim\bern{\tfrac12}}
      }
    }{
      y \in \{0, \ldots, 5\} \vdash\triple{\sure{\own(b)} \sep \sure{ x\mapsto R \sep 0 < R = N \le 5}}{
        {b \samp \textstyle\bern{\frac12}}
      }{b\sim\bern{\tfrac12} \sep \sure{x\mapsto R \sep 0 < R = N \le 5}}
    }
  }{
    y \in \{0, \ldots, 5\} \vdash\triple{\sure{ x\mapsto R \sep \own(b)\sep 0 < R = N \le 5}}{
      {b \samp \textstyle\bern{\frac12}}
    }{\textstyle\bigoplus_{X \sim \bern{\frac12}} \sure{ b\mapsto X \sep x\mapsto R \sep 0 < R = N \le 5}}
  }
}{
  y \in \{0, \ldots, 5\} \vdash_\wk\triple{\sure{ x\mapsto R \sep \own(b)\sep 0 < R = N \le 5}}{
    {b \samp \textstyle\bern{\frac12}}
  }{\textstyle\bigoplus_{X \sim \bern{\frac12}} \sure{ b\mapsto X \sep x\mapsto R \sep 0 < R = N \le 5}}
}
\end{equation}
Next, in (\ref{eq:ast-if2}), we show the `false' branch of the if statement.
\begin{equation}\label{eq:ast-if2}
\inferrule*[right=\rulereff{Atom}]{
  \inferrule*[Right={\rulereff{Exists}+\rulereff{Consequence}}]{
    \inferrule*[Right=\rulereff{NSplit}]{
      \inferrule*[Right=\rulereff{Assign}]{\;}{
        \vdash_\wk\triple{\sure{ b\mapsto \fls \sep x\mapsto N \sep y\mapsto R}}{
         {x \coloneqq y}
        }{\sure{x\mapsto R\sep b\mapsto\fls \sep y \mapsto R}}          
      }
    }{
      \vdash_\wk\triple{\textstyle\bignd_{R=0}^5 \sure{ b\mapsto \fls \sep x\mapsto N \sep y\mapsto R}}{
       {x \coloneqq y}
      }{\textstyle\bignd_{R = 0}^5\sure{x\mapsto R\sep b\mapsto\fls \sep y \mapsto R}}    
    }
  }{
    \vdash_\wk\triple{\sure{ b\mapsto \fls \sep x\mapsto N}\sep\sure{y \in \{0, \ldots, 5\}}}{
     {x \coloneqq y}
    }{(\textstyle\bignd_{R = 0}^5\sure{x\mapsto R\sep b\mapsto\fls})\sep\sure{y \in \{0, \ldots, 5\}}}
  }
}{
  y \in \{0, \ldots, 5\} \vdash_\wk\triple{\sure{ b\mapsto \fls \sep x\mapsto N}}{
   {x \coloneqq y}
  }{\textstyle\bignd_{R = 0}^5\sure{x\mapsto R\sep b\mapsto\fls}}
}
\end{equation}
In (\ref{eq:ast2}), we give the full if statement.
\begin{equation}\label{eq:ast2}
\inferrule*[right=\rulereff{Consequence}]{
  \inferrule*[Right=\rulereff{If}]{
    \inferrule*[right=\rulereff{Assign}]{\;}{
      y \in \{0, \ldots, 5\} \vdash_\wk\triple{\sure{ b\mapsto \tru \sep x\mapsto N}}{
        {x \coloneqq x-1}
      }{\sure{x \mapsto N-1 \sep b\mapsto \tru}}      
    }
    \\
    (\ref{eq:ast-if2})
  }{
    y \in \{0, \ldots, 5\} \vdash_\wk\triple{\textstyle\bigoplus_{X \sim \bern{\frac12}} \sure{ b\mapsto X \sep x\mapsto N}}{
      \iftf{b}{x \coloneqq x-1}{x \coloneqq y}
    }{\sure{x \mapsto N-1 \sep b\mapsto \tru} \oplus_{\frac12} \textstyle\bignd_{R = 0}^5\sure{x\mapsto R\sep b\mapsto\fls}}
  }
}{
  y \in \{0, \ldots, 5\} \vdash_\wk\triple{\bigoplus_{X \sim \bern{\frac12}} \sure{ b\mapsto X \sep x\mapsto R \sep 0 < R = N \le 5}}{
    \iftf{b}{x \coloneqq x-1}{x \coloneqq y}
  }{\bignd_{R=0}^{N-1}\sure{x \mapsto R\sep\own(b)} \oplus_{\ge\frac12} \bignd_{R=N}^{5}\sure{x \mapsto R\sep\own(b)}}
}
\end{equation}
Finally, we complete the derivation below. Recall that the loop invariant used in \ruleref{BoundedRank} is $\varphi\triangleq{\sure{x \mapsto R \sep 0\le R\le 5 \sep \own(b)}}$.
\begin{mathpar}
\inferrule*[right=\rulereff{Strengthen}]{
  \inferrule*[Right=\rulereff{Exists}]{
    \inferrule*[Right=\rulereff{Consequence}]{
      \inferrule*[Right=\rulereff{BoundedRank}]{
        \inferrule*[Right=\rulereff{Seq},rightskip=5em]{
          (\ref{eq:ast1})
          \\
          (\ref{eq:ast2})
        }{
          y \in \{0, \ldots, 5\} \vdash_\wk\triple{\sure{ x\mapsto R \sep \own(b)\sep 0 < R = N \le 5}}{
            {b \samp \textstyle\bern{\frac12}\fatsemi \iftf{b}{x \coloneqq x-1}{x \coloneqq y}}
          }{\textstyle\bignd_{R=0}^{N-1}\sure{x \mapsto R\sep\own(b)} \oplus_{\ge\frac12} \bignd_{R=N}^{5}\sure{x \mapsto R\sep\own(b)}}
        }
      }{
        y \in \{0, \ldots, 5\} \vdash_\wk\triple{\textstyle\bignd_{R=0}^5 \sure{ x\mapsto R \sep \own(b)}}{
          \whl{x > 0}{b \samp \textstyle\bern{\frac12}\fatsemi \iftf{b}{x \coloneqq x-1}{x \coloneqq y}}
        }{\sure{x \mapsto 0\sep\own(b)}}
      }
    }{
      y \in \{0, \ldots, 5\} \vdash_\wk\triple{\textstyle\bignd_{R=0}^5 \sure{ x\mapsto R \sep \own(b)}}{
        \whl{x > 0}{b \samp \textstyle\bern{\frac12}\fatsemi \iftf{b}{x \coloneqq x-1}{x \coloneqq y}}
      }{\sure{x \mapsto 0}}
    }
  }{
    y \in \{0, \ldots, 5\} \vdash_\wk\triple{\sure{x \in \{0, \ldots, 5\} \sep \own(b)}}{
      \whl{x > 0}{b \samp \textstyle\bern{\frac12}\fatsemi \iftf{b}{x \coloneqq x-1}{x \coloneqq y}}
    }{\sure{x \mapsto 0}}
  }
}{
  y \in \{0, \ldots, 5\} \vdash\triple{\sure{x \in \{0, \ldots, 5\} \sep \own(b)}}{
    \whl{x > 0}{b \samp \textstyle\bern{\frac12}\fatsemi \iftf{b}{x \coloneqq x-1}{x \coloneqq y}}
  }{\sure{x \mapsto 0}}
}
\end{mathpar}

\subsection{Entropy Mixer}
\label{app:entropy-mixer}

Recall the following entropy mixer program from \Cref{sec:entropy-mixer}.
\[
  y \coloneqq 0 \fatsemi \left(\;\; x_1 \coloneqq y \fatsemi x_2 \samp \bern{\tfrac12} \fatsemi z \coloneqq \xor(x_1, x_2)  \;\;\mathlarger{\mathlarger\parallel}\;\;
  y \coloneqq 1 \;\;\right)
\]
We will analyze this program using the invariant $y \in \{0,1\}$, and conclude in the end that $z \sim \bern{\frac12}$. We begin by analyzing the read from shared state in the first thread. The derivation is given below in (\ref{eq:em-x1}).
\begin{equation}\label{eq:em-x1}
\inferrule*[right=\rulereff{Atom}]{
  \inferrule*[Right={\rulereff{Exists}}]{
    \inferrule*[Right=\rulereff{Consequence}]{
      \inferrule*[Right=\rulereff{NSplit1}]{
        \inferrule*[Right=\rulereff{Assign}]{\;}{
          \vdash_\wk\triple{\sure{y\mapsto Y \sep \own(x_1)}}{x_1 \coloneqq y}{\sure{x_1 \mapsto Y \sep y \mapsto Y }}
        } % ASSIGN
      }{
        \vdash_\wk\triple{\textstyle\bignd_{Y\in\{0,1\}} \sure{y\mapsto Y \sep {\own(x_1)}}}{x_1 \coloneqq y}{\textstyle\bignd_{Y\in\{0,1\}} \sure{x_1 \mapsto Y \sep y \mapsto Y}}
      } % NSPLIT1
    }{
      \vdash_\wk\triple{\textstyle\bignd_{Y\in\{0,1\}} \sure{y\mapsto Y \sep {\own(x_1)}}}{x_1 \coloneqq y}{(\textstyle\bignd_{Y\in\{0,1\}} \sure{x_1 \mapsto Y}) \sep \sure{y \in \{0,1\}}}
    } % CONSEQUENCE
  }{
    \vdash_\wk\triple{\sure{\own(x_1)} \sep \sure{y \in \{0,1\}}}{x_1 \coloneqq y}{(\textstyle\bignd_{Y\in\{0,1\}} \sure{x_1 \mapsto Y}) \sep \sure{y \in \{0,1\}}}
  } % EXISTS
}{
  y\in\{0,1\}\vdash_\wk\triple{\sure{\own(x_1)}}{x_1 \coloneqq y}{\textstyle\bignd_{Y\in\{0,1\}} \sure{x_1 \mapsto Y}}
} % ATOM
\end{equation}
First, \ruleref{Atom} is applied to open the invariant. Next, we use \ruleref{Exists} and \ruleref{NSplit1} to gain access to the value of $y$, so that we can apply the \ruleref{Assign} rule. The rule of \ruleref{Consequence} is used to weaken the information about $y$ and move it out of the scope of the $\bignd$, so that the invariant can be closed. Since the postcondition at this stage is not precise, we are forced to use a weak triple. We now move on to derive a specification for the write to $z$ below in (\ref{eq:em-z}).
\begin{equation}\label{eq:em-z}
    \inferrule*[right=\!\!\rulereff{Consequence}]{
      \inferrule*[Right=\rulereff{Split1}]{
        \inferrule*[Right=\rulereff{Assign}]{
        }{
          \triple{\sure{x_1 \mapsto X} \sep \sure{x_2 \mapsto X'} \sep \sure{\own(z)}}{z \coloneqq \xor(x_1,x_2)}{\sure{z \mapsto \xor(X,X')}}
        }
      }{
        \triple{\textstyle\bigoplus_{X'\sim\bern{1/2}} \sure{x_1 \mapsto X} \sep \sure{x_2 \mapsto X'} \sep \sure{\own(z)}}{z \coloneqq \xor(x_1,x_2)}{\textstyle\bigoplus_{X'\sim\bern{1/2}}\sure{z \mapsto \xor(X,X')}}
      }
    }{
      \triple{\sure{x_1 \mapsto X} \sep (x_2 \sim \bern{1/2}) \sep \sure{\own(z)}}{z \coloneqq \xor(x_1,x_2)}{z \sim \bern{1/2}}
    }
\end{equation}
At this point, we have that $x_1$ is deterministic and $x_2$ is distributed according to a Bernoulli distribution.
We use \ruleref{Split1} to do case analysis on the result of the sampling operation, at which point we use \ruleref{Assign} to conclude that $z \mapsto \xor(X,X')$. Since $X$ is constant, then $\xor(X,X')$ is a bijection from $\{0,1\}$ to $\{0,1\}$, and we can therefore use the rule of \ruleref{Consequence} to conclude that $z$ is uniformly distributed. We can now compose this information with the sampling operation immediately before.
\begin{equation}\label{eq:em-x2-z}
\inferrule*[right={\rulereff{Consequence}, \rulereff{NSplit2}}]{
  \inferrule*[Right=\rulereff{Seq}]{
    \inferrule*[right=\rulereff{Frame}]{
      \inferrule*[right=\rulereff{Samp}]{
      }{
        y\in\{0,1\}\vdash_\wk\triple{\sure{\own(x_2)}}{x_2 \samp \bern{1/2}}{x_2 \sim \bern{1/2})}
      }
    }{
     y\in\{0,1\}\vdash_\wk \triple{\sure{x_1 \mapsto X} \sep \sure{\own(x_2, z)}}{x_2 \samp \bern{1/2}}{\sure{x_1 \mapsto X} \sep (x_2 \sim \bern{1/2}) \sep \sure{\own(z)}}
    }
  \\
  (\ref{eq:em-z})
  }{
    y\in\{0,1\}\vdash_\wk\triple{\sure{x_1 \mapsto X} \sep \sure{\own(x_2, z)}}{x_2 \samp \bern{1/2} \fatsemi z \coloneqq \xor(x_1, x_2)}{z \sim \bern{1/2}}
  }
}{
  y\in\{0,1\}\vdash_\wk\triple{\sure{\own(x_2, z)} \sep \textstyle\bignd_{Y\in\{0,1\}} \sure{x_1 \mapsto Y}}{x_2 \samp \bern{1/2} \fatsemi z \coloneqq \xor(x_1, x_2)}{z \sim \bern{1/2}}
} % NSPLIT2
\end{equation}
We start by applying \ruleref{NSplit2} in order to gain access to the nondeterministic value of $x_1$. It is important to do this \emph{before} the sampling operation, since we need to ensure that $x_2$ is uniformly distributed given any fixed value for $x_1$. Next, we apply \ruleref{Seq} and derive specs for the individual commands. The sampling operation is simple, we complete the proof with the \ruleref{Frame} and \ruleref{Samp} rules. Now, we can derive a specification for the entire first thread in (\ref{eq:em-t1}).
\begin{equation}\label{eq:em-t1}
\inferrule*[right=\rulereff{Strengthen}]{
  \inferrule*[Right=\rulereff{Seq}]{
    \inferrule*[right=\rulereff{Frame}]{
      (\ref{eq:em-x1})
    }{
      y\in\{0,1\}\vdash_\wk\triple{\sure{\own(x_1, x_2, z)}}{x_1\coloneqq y}{\sure{\own(x_2, z)} \sep \textstyle\bignd_{Y\in\{0, 1\}} \sure{x_1\mapsto Y}}
    } % FRAME
    \\
    (\ref{eq:em-x2-z})
  }{
    y\in\{0,1\}\vdash_\wk\triple{\sure{\own(x_1, x_2, z)}}{x_1\coloneqq y \fatsemi x_2 \samp\bern{\tfrac12} \fatsemi z\coloneqq \xor(x_1, x_2)}{z \sim\bern{\tfrac12}}
  } % SEQ
}{
  y\in\{0,1\}\vdash\triple{\sure{\own(x_1, x_2, z)}}{x_1\coloneqq y \fatsemi x_2 \samp\bern{\tfrac12} \fatsemi z\coloneqq \xor(x_1, x_2)}{z \sim\bern{\tfrac12}}
} % STRENGTHEN
\end{equation}
This step essentially just involves using \ruleref{Seq} and \ruleref{Frame} to compose the triples that we previously derived. At the end, we also use \ruleref{Strengthen}, since the postcondition is now precise.
%\begin{equation}\label{eq:em-t2}
%\inferrule*[right=\rulereff{Atom}]{
%  \inferrule*[Right=\rulereff{Consequence}]{
%    \inferrule*[Right=\rulereff{Assign}]{
%    }{
%      \triple{ \sure{y\in\{0,1\}} }{ y \coloneqq 1 }{ \sure{y \mapsto 1} }
%    }
%  }{
%    \triple{ \sure{y\in\{0,1\}} }{ y \coloneqq 1 }{ \sure{y\in\{0,1\}} }
%  }
%}{
%  y\in\{0,1\} \vdash \triple{ \sure\tru }{ y \coloneqq 1 }{\sure\tru }
%}
%\end{equation}
Finally, we complete the derivation for the whole program.
\[
\hspace{-8em}
\inferrule*[right=\rulereff{Seq}]{
  \inferrule*[right=\rulereff{Assign}]{\;}{
    \vdash\triple{\sure{\own(x_1, x_2, y, z)}}{ y\coloneqq 0}{\sure{y\mapsto 0 \sep \own(x_1, x_2, z)}}
  } % ASSIGN
  \inferrule*[right=\rulereff{Consequence},vdots=3em,leftskip=20em,rightskip=10em]{
    \inferrule*[Right=\rulereff{Share}]{
      \inferrule*[Right=\rulereff{Par}]{
        (\ref{eq:em-t1})
        \\
        \inferrule*[Right=\rulereff{Atom}]{
          \inferrule*[Right=\rulereff{Consequence}]{
            \inferrule*[Right=\rulereff{Assign}]{
            }{
              \triple{ \sure{y\in\{0,1\}} }{ y \coloneqq 1 }{ \sure{y \mapsto 1} }
            }
          }{
            \triple{ \sure{y\in\{0,1\}} }{ y \coloneqq 1 }{ \sure{y\in\{0,1\}} }
          }
        }{
          y\in\{0,1\} \vdash \triple{ \sure\tru }{ y \coloneqq 1 }{\sure\tru }
        }
      }{
        y \in \{0, 1\}\vdash\triple{\sure{\own(x_1, x_2, z)}}{ x_1\coloneqq y \fatsemi x_2 \samp\bern{\tfrac12} \fatsemi z\coloneqq \xor(x_1, x_2) \parallel y \coloneqq 1}{z \sim \bern{\tfrac12}}
      } % PAR
    }{
      \vdash\triple{\sure{\own(x_1, x_2, z)}\sep\sure{y \in \{0, 1\}}}{ x_1\coloneqq y \fatsemi x_2 \samp\bern{\tfrac12} \fatsemi z\coloneqq \xor(x_1, x_2) \parallel y \coloneqq 1}{(z \sim \bern{\tfrac12})\sep\sure{y \in \{0, 1\}}}
    } % SHARE
  }{
    \vdash\triple{\sure{y\mapsto 0 \sep \own(x_1, x_2, z)}}{ x_1\coloneqq y \fatsemi x_2 \samp\bern{\tfrac12} \fatsemi z\coloneqq \xor(x_1, x_2) \parallel y \coloneqq 1}{z \sim \bern{\tfrac12}}
  } % CONSEQUENCE
}{
  \vdash\triple{\sure{\own(x_1, x_2, y, z)}}{ y\coloneqq 0 \fatsemi ( x_1\coloneqq y \fatsemi x_2 \samp\bern{\tfrac12} \fatsemi z\coloneqq \xor(x_1, x_2) \parallel y \coloneqq 1 )}{z \sim \bern{\tfrac12}}
} % SEQ
\]
After concluding that $y\mapsto 0$ after the first command, we can weaken this information and use \ruleref{Share} to move $y$ into the invariant. Next, we use \ruleref{Par} to compositionally analyze the two threads. We have already seen the proof of the first thread above. The second thread has a simple derivation, since $y\coloneqq 1$ is clearly atomic and obeys the invariant.

\subsection{Concurrent Shuffling}
\label{app:shuffle}

Recall the following program from \Cref{sec:shuffle} for concurrent shuffling a list using two parallel threads.
\[
\begin{array}{ll}
\begin{array}{l}
a_1 \coloneqq [] \fatsemi  a_2 \coloneqq [] \fatsemi i\coloneqq 0 \fse \\
\whl{i < \code{len}(a)}{} ( \\
\quad b \samp \bern{\frac12} \fse \\
\quad \iftf b{a_1 \coloneqq a_1 \app [a[i]]}{a_2 \coloneqq a_2 \app [a[i]]} \fse \\
\quad i\coloneqq i+1 \\
) \fse \\
\code{shuffle}_1 \parallel \code{shuffle}_2 \fse \\
a \coloneqq a_1 \app a_2
\end{array}
&
\begin{array}{l}
\code{shuffle}_k: \\
\quad i_k \coloneqq \code{len}(a_k) - 1 \fse \\
\quad \whl{i_k > 0}{} \\
\qquad j_k \samp \unif{[0, \ldots, i_k]} \fse \\
\qquad a_k \coloneqq \code{swap}(a_k, i_k, j_k) \fse \\
\qquad i_k \coloneqq i_k - 1
\end{array}
\end{array}
\]
We now give the complete derivation of the correctness proof. We begin with the $\code{shuffle}_k$ program. To define the loop invariant, we first recursively define $\code{swaps}(n, \ell)$ which gives the set of possible lists obtained when $i_k \mapsto n$. It is easy to see that $\code{swaps}(0, \ell) = \Pi(\ell)$, since once $i$ reaches $n$, all permutations of the elements above position $n$ are accounted for, so once $i$ reaches 0, all permutations of the entire list are accounted for (the final position does not explicitly need to be chosen, since only one element remains).
\[
  \code{swaps}(n, \ell) = \left\{
  \begin{array}{ll}
     \{ \ell \} & \text{if} ~ n = \code{len}(\ell) - 1
     \\
    \{ \code{swap}(\ell', n+1, j) \mid \ell' \in \code{swaps}(n+1, \ell), j \in \{0, \ldots, n+1\} \} & \text{if} ~ 0 \le n < \code{len}(\ell) - 1
  \end{array}\right.
\]
We now define the loop invariant $\varphi$ in using $\code{swaps}$. The rank $R$ is given by $i_k$, which is bounded between $0$ and $\code{len}(A) - 1$. The final postcondition $\varphi[0/R]$ can also be simplified as follows.
\[
  \varphi \triangleq \sure{i_k \mapsto R \sep 0 \le R\le \code{len}(A) - 1 \sep \own(j_k)} \sep
     (a_k \sim \unif{\code{swaps}(R, A)})
   \qquad
   \varphi[0/R] = \sure{i_k \mapsto 0 \sep \own(j_k)} \sep (a_k\sim \unif{\Pi(A)})
\]
We now show the derivation for the loop body as a decorated program:
\begin{equation}\label{eq:fy-body}
\def\arraystretch{1.2}
\begin{array}{ll}
  &\ob{\varphi \sep \sure{R = N > 0}} \\
  \ruleref{Consequence}&\ob{\sure{i_k \mapsto N \sep 0 < N \le \code{len}(A) - 1 \sep \own(j_k)} \sep (a_k \sim \unif{\code{swaps}(\code{len}(A) - 1 - N, A)})}
  \\
  \ruleref{Samp}&\quad j_k \samp \unif{[0, \ldots, i_k]} \fse \\
  &\ob{\sure{i_k \mapsto N \sep 0 < N \le \code{len}(A) - 1} \sep (a_k \sim \unif{\code{swaps}(N, A)}) \sep (j_k\sim  \unif{[0, \ldots, N]}}\\
  \ruleref{Consequence} &\ob{\bigoplus_{X \sim \unif{\code{swaps}(N, A)}} \bigoplus_{Y\sim  \unif{[0, \ldots, N]}} \sure{i_k \mapsto N \sep 0 < N \le \code{len}(A) - 1\sep a_k \mapsto X \sep j_k \mapsto Y} }\\
  \ruleref{Split1},\ruleref{Assign}&\quad a_k \coloneqq \code{swap}(a_k, i_k, j_k) \fse \\
  &\ob{\bigoplus_{X \sim \unif{\code{swaps}(N, A)}} \bigoplus_{Y\sim  \unif{[0, \ldots, N]}} \sure{i_k \mapsto N \sep 0 < N \le \code{len}(A) - 1 \sep a_k \mapsto \code{swap}(X, N, Y) \sep j_k \mapsto Y} }\\
  \ruleref{Consequence} &\ob{\sure{i_k \mapsto N \sep 0 < N \le \code{len}(A) - 1 \sep  \own(j_k)} \sep (a_k \sim \unif{\code{swaps}(N-1, A)})} \\
  \ruleref{Assign} &\quad i_k \coloneqq i_k - 1 \\
  &\ob{\sure{i_k \mapsto N-1 \sep 0 < N \le \code{len}(A) - 1 \sep  \own(j_k)} \sep (a_k \sim \unif{\code{swaps}(N-1, A)})} \\
  \ruleref{Consequence} &\ob{(\bignd_{R=0}^{N-1} \varphi) \oplus_1 (\bignd_{R=N}^{\code{len}(A)-1} \varphi)}
\end{array}
\end{equation}
When entering the loop body, we know that the tail of $a_k$ is already shuffled. After performing the swap of $i_k$ and $j_k$, we extend the shuffled tail by one position. Clearly, the shuffles remain uniformly distributed, since $X$ is uniformly distributed, and the element at position $N$ is also chosen uniformly.
Now, we give the remaining derivation of $\code{shuffle}_k$, which is quite mechanical given the derivation of the loop body above:
\begin{equation}\label{eq:fy}
\inferrule*[right=\rulereff{Seq}]{
  \inferrule*[right=\rulereff{Assign},vdots=6em,rightskip=30em]{\;}{
    \vdash\triple{\sure{a_k\mapsto A \sep \own(i_k, j_k)}}{i_k \coloneqq\code{len}(a_k)-1}{\sure{a_k\mapsto A \sep i_k \mapsto \code{len}(A) - 1 \sep \own(j_k)}}
  }
  \inferrule*[Right=\rulereff{Consequence}]{
    \inferrule*[Right=\rulereff{BoundedRank}]{(\ref{eq:fy-body})}{
      \vdash\triple{\textstyle\bignd_{R = 0}^{\code{len}(A) -1} \varphi}{\whl{i_k> 0}{(\cdots)}}{(a_k\sim \unif{\Pi(A)}) \sep \sure{i_k\mapsto 0 \sep \own(j_k)}}
    }
  }{
    \vdash\triple{\sure{a_k\mapsto A \sep i_k \mapsto \code{len}(A) - 1 \sep \own(j_k)}}{\whl{i_k> 0}{(\cdots)}}{a_k\sim \unif{\Pi(A)}}
  }
}{
  \vdash\triple{\sure{a_k\mapsto A \sep \own(i_k, j_k)}}{i_k \coloneqq\code{len}(a_k)-1 \fatsemi \whl{i_k> 0}{(\cdots)}}{a_k\sim \unif{\Pi(A)}}
}
\end{equation}
We now move to deriving the specification for the main program. To analyze the loop, we use the following loop invariant, where $M = \code{len}(A)$. Clearly, the rank $R$ is bounded between $0$ and $M$, and when $R=0$ the loop must terminate since $i= \code{len}(a)$.
%\[
%  \psi = \sure{i \mapsto M-R \sep 0 \le R \le M \sep a\mapsto A \sep\own(b)} \sep \bigoplus_{X \sim \unif{\{0,1\}^{M-R}}} \sure{a_1 \mapsto [ A[K] \mid 0 \le K < M-R, X[K] = 1 ] \sep a_2 \mapsto [ A[K] \mid 0 \le K < M-R, X[K] = 0 ] }
%\]
\[
  \psi = \sure{i \mapsto M-R \sep 0 \le R \le M \sep a\mapsto A \sep\own(b)} \sep \bigoplus_{X \sim \unif{\{0,1\}^{M-R}}} \sure{a_1 \mapsto A[X] \sep a_2 \mapsto A[\neg X] }
\]
We now show the derivation of the loop body as a decorated program. The key idea is to merge the newly sampled value of $b$ into the $\bigoplus$ over $X$ to extend the length of the bit-string on each iteration.
\begin{equation}\label{eq:shuffle-body}
\def\arraystretch{1.1}
\begin{array}{ll}
& \ob{\psi \sep \sure{R = N > 0}} \\
\ruleref{Consequence} & \lrob{\sure{i \mapsto M-N \sep 0 < N \le M \sep a\mapsto A\sep\own(b)} \sep \displaystyle\bigoplus_{X \sim \unif{\{0,1\}^{M-N}}} \sure{a_1 \mapsto A[X] \sep a_2 \mapsto A[\neg X]} } \\
\ruleref{Samp}&\quad b \samp \bern{\frac12} \fse \\
& \lrob{\sure{i \mapsto M-N \sep 0 < N \le M \sep a\mapsto A} \sep (b\sim\bern{\frac12}) \sep \displaystyle\bigoplus_{X \sim \unif{\{0,1\}^{M-N}}} \sure{a_1 \mapsto A[X] \sep a_2 \mapsto A[\neg X]} } \\
\ruleref{If} &\quad \ift{b}{} \\
& \quad \lrob{\sure{i \mapsto M-N \sep 0 < N \le M \sep a\mapsto A \sep b \mapsto Y \sep Y=1} \sep \displaystyle\bigoplus_{X \sim \unif{\{0,1\}^{M-N}}} \sure{a_1 \mapsto A[X] \sep a_2 \mapsto A[\neg X]} } \\
\ruleref{Assign}& \qquad a_1 \coloneqq a_1 \app [a[i]] \\
& \quad \lrob{\sure{i \mapsto M-N \sep 0 < N \le M \sep a\mapsto A \sep b \mapsto Y \sep Y=1} \sep \displaystyle\bigoplus_{X \sim \unif{\{0,1\}^{M-N}}} \sure{a_1 \mapsto A[X\app[Y]] \sep a_2 \mapsto A[\neg (X\app[Y])]} } \\
& \quad \code{else} \\
& \quad \lrob{\sure{i \mapsto M-N \sep 0 < N \le M \sep a\mapsto A \sep b \mapsto Y\sep Y=0} \sep \displaystyle\bigoplus_{X \sim \unif{\{0,1\}^{M-N}}} \sure{a_1 \mapsto A[X] \sep a_2 \mapsto A[\neg X]} } \\
\ruleref{Assign}& \qquad a_2 \coloneqq a_2 \app [a[i]] \fse \\
& \quad \lrob{\sure{i \mapsto M-N \sep 0 < N \le M \sep a\mapsto A \sep b \mapsto Y\sep Y=0} \sep \displaystyle\bigoplus_{X \sim \unif{\{0,1\}^{M-N}}} \sure{a_1 \mapsto A[X\app [Y]] \sep a_2 \mapsto A[\neg (X \app [Y])]\app A[M-N]} } \\
& \lrob{\displaystyle\bigoplus_{Y\sim \bern{\frac12}}\sure{i \mapsto M-N \sep 0 < N \le M \sep a\mapsto A \sep b \mapsto Y} \sep \bigoplus_{X \sim \unif{\{0,1\}^{M-N}}} \sure{a_1 \mapsto A[X\app [Y]] \sep a_2 \mapsto A[\neg (X \app [Y])]\app A[M-N]} } \\
\ruleref{Consequence} & \lrob{\displaystyle\sure{i \mapsto M-N \sep 0 < N \le M \sep a\mapsto A \sep \own(b)} \sep \bigoplus_{X \sim \unif{\{0,1\}^{M-(N-1)}}} \sure{a_1 \mapsto A[X] \sep a_2 \mapsto A[\neg X]} } \\
\ruleref{Assign}&\quad i\coloneqq i+1 \\
& \lrob{\displaystyle\sure{i \mapsto M-(N-1) \sep 0 < N \le M \sep a\mapsto A \sep \own(b)} \sep \bigoplus_{X \sim \unif{\{0,1\}^{M-(N-1)}}} \sure{a_1 \mapsto A[X] \sep a_2 \mapsto A[\neg X]} } \\
\ruleref{Consequence} & \lrob{\psi[N-1/R] }
\end{array}
\end{equation}
Finally, we conclude by showing the derivation of the main program, which is also quite mechanical. The final consequence relies on some combinatorial reasoning, which is explained in \Cref{sec:shuffle}.
\[
\def\arraystretch{1.35}
\begin{array}{ll}
& \lrob{\sure{a\mapsto A \sep \own(b, a_1, a_2, i, i_1, i_2, j_1, j_2)}}
\\
\ruleref{Assign} & \quad a_1 \coloneqq [] \fse
\\
& \lrob{\sure{a\mapsto A \sep a_1 \mapsto [] \sep \own(b, a_2, i, i_1, i_2, j_1, j_2)}}
\\
\ruleref{Assign} & \quad a_2 \coloneqq [] \fse
\\
& \lrob{\sure{a\mapsto A \sep a_1 \mapsto [] \sep a_2 \mapsto []  \sep \own(b, i, i_1, i_2, j_1, j_2)}}
\\
\ruleref{Assign} & \quad i \coloneqq 0 \fse
\\
& \lrob{\sure{a\mapsto A \sep a_1 \mapsto [] \sep a_2 \mapsto [] \sep i \mapsto 0 \sep \own(b, i_1, i_2, j_1, j_2)}}
\\
\ruleref{Consequence} & \lrob{\psi[\code{len}(A) / R]}
\\
\ruleref{BoundedRank}, (\ref{eq:shuffle-body}) & \quad \whl{i <\code{len}(A)}{(\cdots)} \fse
\\
& \lrob{\sure{i \mapsto \code{len}(A) \sep a\mapsto A \sep\own(\cdots)} \sep{\bigoplus_{X \sim \unif{\{0,1\}^{\code{len}(A)}}}} \sure{a_1 \mapsto A[X] \sep a_2 \mapsto A[\lnot X]}}
\\
\ruleref{Consequence} & \lrob{ {\bigoplus_{X \sim \unif{\{0,1\}^{\code{len}(A)}}}} \sure{a_1 \mapsto A[X] \sep a_2 \mapsto [ A[\neg X]  \sep\own(\cdots) }}
\\
\ruleref{Split1} & \lrob{ \sure{a_1 \mapsto A[X] \sep a_2 \mapsto A[\neg X] \sep\own(\cdots) }}
\\
\ruleref{Frame}, \ruleref{Par}, (\ref{eq:fy}) & \left(
  \begin{array}{l||l}
    \lrob{\sure{a_1 \mapsto A[X]\sep \own(i_1, j_1) }}
    &
    \lrob{\sure{a_2 \mapsto A[\neg X] \sep \own(i_2, j_2) }}
    \\
    \quad \code{shuffle}_1 & \quad \code{shuffle}_2
    \\
    \lrob{a_1 \sim \unif{\Pi(A[X])}}
    &
    \lrob{a_2 \sim \unif{\Pi(A[\neg X])}  }
  \end{array}\right) \fse
  \\
  & \lrob{ a_1 \sim \unif{\Pi(A[X])} \sep a_2 \sim \unif{\Pi(A[\neg X])} \sep \own(\cdots) }
  \\
  & \lrob{ {\bigoplus_{X \sim \unif{\{0,1\}^{\code{len}(A)}}}} a_1 \sim \unif{\Pi(A[X])} \sep a_2 \sim \unif{\Pi(A[\neg X])} \sep \sure{\own(a)} }
  \\
  \ruleref{Consequence} & \lrob{ {\bigoplus_{X \sim \unif{\{0,1\}^{\code{len}(A)}}}} \bigoplus_{A_1 \sim \unif{\Pi(A[X])}} \bigoplus_{A_2 \sim \unif{\Pi(A[\neg X])}} \sure{ a_1 \mapsto A_1 \sep a_2 \mapsto A_2 \sep\own(a)  } }
  \\
  \ruleref{Split1}, \ruleref{Assign} & a \coloneqq a_1 \app a_2
  \\
  & \lrob{ {\bigoplus_{X \sim \unif{\{0,1\}^{\code{len}(A)}}}} \bigoplus_{A_1 \sim \unif{\Pi(A[X])}} \bigoplus_{A_2 \sim \unif{\Pi(A[\neg X])}} \sure{ a \mapsto A_1 \app A_2 } }
  \\
  \ruleref{Consequence} & \lrob{a \sim \unif{\Pi(A)}}
\end{array}
\]
%
%
%
%Omitting the initialization, we can now finish the derivation as follows:
%%\newcommand{\mltriple}[3]{{\begin{array}{rl} & \ob{ #1 } \\ \vdash & \;\; #2 \\ & \ob{ #3 } \end{array}}}
%\[
%\inferrule*[right=\ruleref{Seq}]{
%  \inferrule*[right=\ruleref{BoundedRank}]{
%  }{
%    \vdash\triple{\sure{a\mapsto A \sep a_1\mapsto []\sep a_2\mapsto[]\sep i\mapsto 0\sep \own(i_1, i_2, j_1, j_2)}}{
%    \whl{i <\code{len}(A)}{(\cdots)}}{}
%  } % BoundedRank
%}{
%  \vdash\triple{\sure{a\mapsto A \sep a_1\mapsto []\sep a_2\mapsto[]\sep i\mapsto 0\sep \own(i_1, i_2, j_1, j_2)}}{
%    \whl{i <\code{len}(A)}{(\cdots)} \fatsemi (\code{shuffle}_1\parallel\code{shuffle}_2)\fatsemi a\coloneqq a_1 \app a_2
%  }{a \sim \unif{\Pi(A)}}
%} % SEQ
%\]
%
%\begin{equation}
%  \inferrule*[Right=\ruleref{Seq}]{
%  }{
%    \vdash\triple{\sure{a\mapsto A \sep a_1\mapsto []\sep a_2\mapsto[]\sep i\mapsto 0\sep \own(i_1, i_2, j_1, j_2)}}{
%    (\code{shuffle}_1\parallel\code{shuffle}_2)\fatsemi a\coloneqq a_1 \app a_2
%  }{a \sim \unif{\Pi(A)}}
%  } % SEQ
%\end{equation}

\subsection{Private Information Retrieval}
\label{app:pir}

Recall the following private information retrieval program from \Cref{sec:pir}.
\[
\begin{array}{l}
  \code{PrivFetch}: \\
  \quad q_1 \samp \unif{ \{0, 1\}^n} \fse \\
  \quad q_2 \coloneqq \xor(q_1, x) \fse \\
  \quad \code{fetch}_1 \parallel \code{fetch}_2 \fse \\
  \quad r \coloneqq \xor(r_1, r_2)
  \\\;
\end{array}
\qquad\qquad
\begin{array}{l}
\code{fetch}_k: \\
  \quad i_k \coloneqq 0 \fatsemi
  r_k \coloneqq 0 \fse \\
  \quad \whl{i_k < \code{len}(q_k)}{\;} \\
  \qquad \ift{q_k[i_k] = 1}{\;} \\
  \qquad\quad r_k \coloneqq \xor(r_k, d[i_k]) \fse \\
  \qquad i_k\coloneqq i_k+1
\end{array}
\]
We now give the complete correctness proof, showing that $\code{PrivFetch}$ correctly fetches the data entry indicated by $x$ from the database. We begin with the $\code{fetch}_k$ procedure, whose proof is shown below in (\ref{eq:pir-fetch}).
\begin{equation}\label{eq:pir-fetch}
\inferrule*[right=\rulereff{Seq}]{
  \inferrule*[right=\rulereff{Assign}]{\;}{
    d\mapsto D \vdash \triple{\sure{q_k \mapsto Q \sep \own(i_k, r_k)}}{i_k\coloneqq 0}{\sure{q_k \mapsto Q \sep i_k \mapsto 0\sep\own(r_k)}}
  } % ASSIGN
  \inferrule*[Right=\rulereff{Seq},vdots=4em,leftskip=20em]{
    \inferrule*[right=\rulereff{Assign}]{\;}{
      d\mapsto D \vdash \triple{\sure{q_k \mapsto Q \sep i_k \mapsto 0\sep\own(r_k)}}{r_k\coloneqq 0}{\sure{q_k \mapsto Q \sep i_k \mapsto 0\sep r_k\mapsto 0}}
    } % ASSIGN
    \\
    (\ref{eq:pir-loop})
  }{
    d\mapsto D \vdash \triple{\sure{q_k \mapsto Q \sep i_k \mapsto 0\sep\own(r_k)}}{r_k\coloneqq 0\fatsemi \whl{i_k < \code{len}(q_k)}{(\cdots)}}{\sure{r_k \mapsto \xor_{0 \le i < n : Q[i] = 1}D[i]}}
  } % SEQ
}{
  d\mapsto D \vdash \triple{\sure{q_k \mapsto Q \sep \own(i_k, r_k)}}{ i_k\coloneqq 0\fatsemi r_k\coloneqq 0\fatsemi \whl{i_k < \code{len}(q_k)}{(\cdots)}}{\sure{r_k \mapsto \xor_{0 \le i < n : Q[i] = 1}D[i]}}
} % SEQ
\end{equation}
The proof above is quite mechanical, just using \ruleref{Seq} and \ruleref{Assign} to analyze the initialization. The next step is to analyze the loop. To do so, we use the following invariant, which states that after $i_k$ iterations, $r_k$ holds a dot product of the first $i_k$ elements of $D$ and $Q$.
\[
  \varphi\triangleq \sure{q_k \mapsto Q \sep i_k \mapsto n - R \sep 0\le R\le n \sep r_k \mapsto \xor_{0 \le i < n-R : Q[i] = 1}D[i]}
\]
The rank $R$ indicates the remaining number of iterations until termination, and it clearly bounded between 0 and $n$. We can also specialize $\varphi$ to obtain the pre- and postconditions of the whole loop, shown below:
\[
  \varphi[n/R] = \sure{q_k \mapsto Q \sep i_k \mapsto 0 \sep r_k \mapsto 0}
  \qquad\qquad
  \varphi[0/R] = \sure{q_k\mapsto Q \sep i_k\mapsto n-R\sep r_k\mapsto\xor_{0 \le i < n : Q[i] = 1}D[i]}
\]
Analyzing the loop is simply a matter of applying \ruleref{BoundedRank}.
\begin{equation}\label{eq:pir-loop}
\inferrule*[right=\rulereff{Consequence}]{
  \inferrule*[Right=\rulereff{BoundedRank}]{
    (\ref{eq:pir-loop-body})
  }{
    d\mapsto D \vdash \triple{\textstyle\bignd_{R=0}^{n}\varphi}{\whl{i_k < \code{len}(q_k)}{(\cdots)}}{\varphi[0/R]}
  }
}{
  d\mapsto D \vdash \triple{\sure{q_k \mapsto Q \sep i_k \mapsto 0\sep r_k\mapsto 0}}{\whl{i_k < \code{len}(q_k)}{(\cdots)}}{\sure{r_k \mapsto \xor_{0 \le i < n : Q[i] = 1}D[i]}}
}
\end{equation}
Now, we show that the loop body upholds the invariant. We show this as a decorated program. We use the derived \ruleref{IfPure} rule to analyze the if statement, which is similar to the rule for if statements in Hoare Logic. In each case of the if statement, the respective value of $Q[n-N]$ allows us to extend the dot product.
\begin{equation}\label{eq:pir-loop-body}
%\inferrule*[right=\rulereff{Consequence}]{
%  \inferrule*[Right=\rulereff{Seq}]{
%  }{
%    d\mapsto D \vdash\triple{\sure{q_k \mapsto Q \sep i_k \mapsto n - N \sep r_k \mapsto \xor_{K=0}^{n-1 -N}( D[K] \cdot Q[K] )}}{(\ift{q_k[i_k] = 1}{\cdots}) \fatsemi i_k\coloneqq i_k+1}{\sure{q_k \mapsto Q \sep i_k \mapsto n - (N-1) \sep r_k \mapsto \xor_{K=0}^{n-1 -(N-1)}( D[K] \cdot Q[K] )}}
%  } % SEQ
%}{
%  d\mapsto D \vdash\triple{\varphi\sep \sure{R = N > 0}}{(\iftf{q_k[i_k] = 1}{r_k \coloneqq \xor(r_k, d[i_k])}\skp) \fatsemi i_k\coloneqq i_k+1}{\textstyle(\bignd_{R=0}^{N-1} \varphi) \oplus_1 (\bignd_{R=N}^n \varphi)}
%} % CONSEQ
\def\arraystretch{1.35}
\begin{array}{ll}
  & \lrob{\varphi\sep \sure{R = N > 0}}
  \\
  \ruleref{Consequence} & \lrob{\sure{q_k \mapsto Q \sep i_k \mapsto n - N \sep r_k \mapsto \xor_{0 \le i < n-N : Q[i] = 1}D[i]}}
  \\
  \ruleref{IfPure} &\quad \ift{q_k[i_k]}{} \\
  \ruleref{Atom} & \quad\lrob{\sure{q_k \mapsto Q \sep i_k \mapsto n - N \sep r_k \mapsto \xor_{0 \le i < n-N : Q[i] = 1}D[i] \sep Q[n-N] = 1}}
  \\
  & \qquad\lrob{\sure{q_k \mapsto Q \sep i_k \mapsto n - N \sep r_k \mapsto \xor_{0 \le i < n-N : Q[i] = 1}D[i] \sep Q[n-N] = 1} \sep \sure{d\mapsto D}}
  \\
  \ruleref{Assign} &\quad\qquad r_k = \xor(r_k, d[i_k])
  \\
  & \qquad\lrob{\sure{q_k \mapsto Q \sep i_k \mapsto n - N \sep r_k \mapsto \xor\left(\xor_{0 \le i < n-N : Q[i] = 1}D[i], D[n-N]\right) \sep Q[n-N] = 1} \sep \sure{d\mapsto D}}
  \\
  & \quad\lrob{\sure{q_k \mapsto Q \sep i_k \mapsto n - N \sep r_k \mapsto \xor\left(\xor_{0 \le i < n-N : Q[i] = 1}D[i], D[n-N]\right) \sep Q[n-N] = 1}}
  \\
  \ruleref{Consequence} & \quad\lrob{\sure{q_k \mapsto Q \sep i_k \mapsto n - N \sep r_k \mapsto \xor_{0 \le i < n-(N-1) : Q[i] = 1}D[i]}}
  \\
  & \code{else}
  \\
  &\quad\lrob{\sure{q_k \mapsto Q \sep i_k \mapsto n - N \sep r_k \mapsto \xor_{0 \le i < n-N : Q[i] = 1}D[i] \sep Q[n-N] = 0}}
  \\
  \ruleref{Skip} & \qquad\skp
  \\
  &\quad\lrob{\sure{q_k \mapsto Q \sep i_k \mapsto n - N \sep r_k \mapsto \xor_{0 \le i < n-N : Q[i] = 1}D[i] \sep Q[n-N] = 0}}
  \\
  \ruleref{Consequence}&\quad\lrob{\sure{q_k \mapsto Q \sep i_k \mapsto n - N \sep r_k \mapsto \xor_{0 \le i < n-(N-1) : Q[i] = 1}D[i]}}
    \\
  &\lrob{\sure{q_k \mapsto Q \sep i_k \mapsto n - N \sep r_k \mapsto \xor_{0 \le i < n-(N-1) : Q[i] = 1}D[i]}}
  \\
  \ruleref{Assign} & i_k \coloneqq i_k+1
  \\
  &\lrob{\sure{q_k \mapsto Q \sep i_k \mapsto n - (N-1) \sep r_k \mapsto \xor_{0 \le i < n-(N-1) : Q[i] = 1}D[i]}}
  \\
  \ruleref{Consequence} & \lrob{\varphi[N-1/R]}
  \\
  \ruleref{Consequence} & \lrob{(\bignd_{R=0}^{N-1}\varphi) \oplus_1 (\bignd_{R=N}^n \varphi)}
\end{array}
\end{equation}
Now we move on to deriving the specification for the main program. Ultimately, the postcondition states that $\sure{r \mapsto D[K]}$, \ie that we selected the correct data from the database.
\[
\inferrule*[right=\rulereff{Seq}]{
  \inferrule*[right=\rulereff{Samp}]{\;}{
    \vdash\triple{\sure{ x \mapsto \code{onehot}(K) \sep d\mapsto D \sep \own(\cdots)}}{q_1 \samp \unif{\{0,1\}^n}}{\sure{ x \mapsto \code{onehot}(K) \sep d\mapsto D \sep \own(\cdots)} \sep (q_1 \sim \unif{\{0,1\}^n})}
  } % SAMP
  \\
  (\ref{eq:pir-priv})
}{
  \vdash\triple{\sure{ x \mapsto \code{onehot}(K) \sep d\mapsto D \sep \own(\cdots)}}{
    q_1 \samp \unif{\{0,1\}^n} \fatsemi q_2 \coloneqq \xor(q_1, x) \fatsemi (\code{fetch}_1 \parallel \code{fetch}_2) \fatsemi r\coloneqq \xor(r_1, r_2)
  }{\sure{r \mapsto D[K]}}
} % SEQ
\]
After dispatching the first sampling command, we move on to analyze the remainder of the program. To do so, we must do case analysis on the sampled bit-string using the \ruleref{Split2} rule, as shown below in (\ref{eq:pir-priv}).
\begin{equation}\label{eq:pir-priv}
\inferrule*[right=\rulereff{Consequence}]{
  \inferrule*[Right=\rulereff{Split2}]{
    \inferrule*[Right=\rulereff{Seq}]{
      \inferrule*[Right=\rulereff{Assign},vdots=3em,rightskip=10em]{\;}{
        \vdash\triple{\sure{ x \mapsto \code{onehot}(K) \sep d\mapsto D \sep q_1 \mapsto Q \sep\own(\cdots)}}{
    q_2 \coloneqq \xor(q_1, x)}{\sure{ x \mapsto \code{onehot}(K) \sep d\mapsto D \sep q_1 \mapsto Q \sep q_2\mapsto \xor(Q, \code{onehot}(K)) \sep\own(\cdots)}}
      } % ASSIGN
      \\
      (\ref{eq:pir-par})
    }{
      \vdash\triple{\sure{ x \mapsto \code{onehot}(K) \sep d\mapsto D \sep q_1 \mapsto Q \sep\own(\cdots)}}{
    q_2 \coloneqq \xor(q_1, x) \fatsemi (\code{fetch}_1 \parallel \code{fetch}_2) \fatsemi r\coloneqq \xor(r_1, r_2)
  }{\sure{r \mapsto D[X]}}
    } % SEQ
  }{
    \vdash\triple{\textstyle\bigoplus_{Q \sim \unif{\{0,1\}^n}} \sure{ x \mapsto \code{onehot}(K) \sep d\mapsto D \sep q_1 \mapsto Q \sep\own(\cdots)}}{
    q_2 \coloneqq \xor(q_1, x) \fatsemi (\code{fetch}_1 \parallel \code{fetch}_2) \fatsemi r\coloneqq \xor(r_1, r_2)
  }{\sure{r \mapsto D[K]}}
  } % SPLIT2
}{
  \vdash\triple{\sure{ x \mapsto \code{onehot}(K) \sep d\mapsto D \sep \own(\cdots)} \sep (q_1 \sim \unif{\{0,1\}^n})}{
    q_2 \coloneqq \xor(q_1, x) \fatsemi (\code{fetch}_1 \parallel \code{fetch}_2) \fatsemi r\coloneqq \xor(r_1, r_2)
  }{\sure{r \mapsto D[K]}}
} % CONSEQ
\end{equation}
Next, we analyze the concurrent fetch commands. Since the shared state $d$ is deterministic, it is easy to allocate the invariant with \ruleref{Share}. Then, we use \ruleref{Par} and the derivation for $\code{fetch}_k$ that we showed previously. In the second thread, we must use the \ruleref{Subst} rule, since the query is not $Q$, but rather a more complex logical expression.
\begin{equation}\label{eq:pir-par}
\inferrule*[right=\rulereff{Seq}]{
  \inferrule*[right=\rulereff{Share},vdots=2.5em,rightskip=10em]{
    \inferrule*[Right=\rulereff{Frame}]{
      \inferrule*[Right=\rulereff{Par}]{
        (\ref{eq:pir-fetch})
        \\
        \inferrule*[Right=\rulereff{Subst}]{
          (\ref{eq:pir-fetch})
        }{
          d\mapsto D \vdash\triple{\sure{q_2\mapsto \xor(Q, \code{onehot}(K)) \sep\own(i_2,r_2)}}{\code{fetch}_2}{\sure{r_2 \mapsto \!\!\!\!\!\!\xor_{0 \le i < n : \xor(Q, \code{onehot}(K))[i] = 1}\!\!\!\!\!\!D[i]}}
        }
      }{
        d\mapsto D \vdash\triple{\sure{ q_1 \mapsto Q\sep \own(i_1, r_1)} \sep \sure{q_2\mapsto \xor(Q, \code{onehot}(K)) \sep\own(i_2,r_2)}}{\code{fetch}_1 \parallel \code{fetch}_2}{\sure{r_1 \mapsto \!\!\!\xor_{0 \le i < n : Q[i] = 1}\!\!\!D[i]} \sep \sure{r_2 \mapsto \!\!\!\!\!\!\xor_{0 \le i < n : \xor(Q, \code{onehot}(K))[i] = 1}\!\!\!\!\!\!D[i]}}
      } % PAR
    }{
      d\mapsto D \vdash\triple{\sure{ q_1 \mapsto Q \sep q_2\mapsto \xor(Q, \code{onehot}(K)) \sep\own(\cdots)}}{\code{fetch}_1 \parallel \code{fetch}_2}{\sure{r_1 \mapsto \!\!\!\xor_{0 \le i < n : Q[i] = 1}\!\!\!D[i] \sep r_2 \mapsto \!\!\!\!\!\!\xor_{0 \le i < n : \xor(Q, \code{onehot}(K))[i] = 1}\!\!\!\!\!\!D[i] \sep \own(r)}}
    } % FRAME
  }{
    \vdash\triple{\sure{ d\mapsto D \sep q_1 \mapsto Q \sep q_2\mapsto \xor(Q, \code{onehot}(K)) \sep\own(\cdots)}}{\code{fetch}_1 \parallel \code{fetch}_2}{\sure{r_1 \mapsto \!\!\!\xor_{0 \le i < n : Q[i] = 1}\!\!\!D[i] \sep r_2 \mapsto \!\!\!\!\!\!\xor_{0 \le i < n : \xor(Q, \code{onehot}(K))[i] = 1}\!\!\!\!\!\!D[i] \sep d\mapsto D\sep \own(r)}}
  } % SHARE
  \\
  (\ref{eq:pir-r})
}{
  \vdash\triple{\sure{ d\mapsto D \sep q_1 \mapsto Q \sep q_2\mapsto \xor(Q, \code{onehot}(K)) \sep\own(\cdots)}}{(\code{fetch}_1 \parallel \code{fetch}_2) \fatsemi r\coloneqq \xor(r_1, r_2)}{\sure{r \mapsto D[K]}}
}\hspace{2em} % SEQ
\end{equation}
Finally, we give a derivation for the final assignment.
\begin{equation}\label{eq:pir-r}
\inferrule*[right=\rulereff{Consequence}]{
  \inferrule*[Right=\rulereff{Assign}]{\;}{
    \vdash\triple{\sure{r_1 \mapsto \!\!\!\xor_{0 \le i < n : Q[i] = 1}\!\!\!D[i] \sep r_2 \mapsto \!\!\!\!\!\!\xor_{0 \le i < n : \xor(Q, \code{onehot}(K))[i] = 1}\!\!\!\!\!\!D[i] \sep\own(r)}}{r \coloneqq\xor(r_1,r_2)}{\sure{r\mapsto D[K]}}
    %{\sure{r\mapsto \xor(\xor_{K=0}^{n-1}(D[K]\cdot Q[K]), \xor_{K=0}^{n-1}(D[K]\cdot \xor(Q, X)[K]) )}}
  } % ASSIGN
}{
  \vdash\triple{\sure{r_1 \mapsto \!\!\!\xor_{0 \le i < n : Q[i] = 1}\!\!\!D[i] \sep r_2 \mapsto \!\!\!\!\!\!\xor_{0 \le i < n : \xor(Q, \code{onehot}(K))[i] = 1}\!\!\!\!\!\!D[i] \sep d\mapsto D\sep \own(r)}}{r \coloneqq\xor(r_1,r_2)}{\sure{r\mapsto D[K]}}
} % CONSEQ
\end{equation}
Below, we show why $\sure{\xor(r_1, r_2)\mapsto D[K]}$:
\begin{align*}
  \xor\left(\xor_{0\le i< n : Q[i] = 1} D[i], \xor_{0 \le i < n : \xor(Q, \code{onehot}(K))[i] = 1}D[i] \right)
  &= \xor_{i=0}^{n-1} \left\{
    \begin{array}{ll}
      \xor(D[i], D[i]) & \text{if}~ Q[i]= \xor(Q, \code{onehot}(K))[i] = 1
      \\
      D[i] & \text{if}~ Q[i]= \neg \xor(Q, \code{onehot}(K))[i]
      \\
      0 & \text{otherwise}
    \end{array}\right.
  \intertext{Since $\xor(D[i], D[i]) = 0$, we can combine the first and last cases.}
  &= \xor_{i=0}^{n-1} \left\{
    \begin{array}{ll}
      D[i] & \text{if}~ Q[i]= \neg \xor(Q, \code{onehot}(K))[i]
      \\
      0 & \text{otherwise}
    \end{array}\right.
  \intertext{$Q[i]= \neg \xor(Q, \code{onehot}(K))[i]$ occurs only when $i = K$.}
  &= \xor_{i=0}^{n-1} \left\{
    \begin{array}{ll}
      D[i] & \text{if}~ i = K
      \\
      0 & \text{otherwise}
    \end{array}\right.
  \intertext{Since $\xor(x, 0) = x$, then we can drop all the zero terms.}
  &= D[K]
\end{align*}

\subsection{The von Neumann Trick}

Recall the following program from \Cref{sec:von-neumann}, which simulates a fair coin given a coin whose bias can be altered by a parallel thread.
\[
\begin{array}{l}
x \coloneqq 0 \fatsemi \\
y \coloneqq 0 \fse \\
\whl{x=y}{} \\
\quad p' \coloneqq p \fse\\
\quad x \samp\bern{p'} \fse \\
\quad y \samp\bern{p'}
\end{array}
\]
In this section, we provide the complete derivation for the correctness of this program, showing both that it almost surely terminates, and also that $x$ is distributed like a fair coin flip at the end of the program execution. We are going to use the resource invariant $I \triangleq (p\in\ell)$, where $\ell$ is a finite list of values between $\varepsilon$ and $1-\varepsilon$, with $0 < \varepsilon \le \frac12$.
For the purposes of analyzing the while loop, we will use the loop invariant $\varphi$ below:
\[
  \varphi \triangleq \varphi_0 \vee \varphi_1
   \qquad\qquad
   \varphi_0\triangleq \smashoperator{\bigoplus_{X \sim \bern{1/2}}} \sure{x \mapsto X \sep y \mapsto \lnot X \sep R=0}
   \qquad\qquad
    \varphi_1\triangleq\sure{x=y\mapsto \tru \sep R=1 \sep \own(p')}
\]
Note that this gives us $\varphi\Rightarrow\sure{x=y \mapsto R}$, so when $R=0$ the loop will terminate. Clearly, this also implies that $R$ is bounded between 0 and 1. In addition, $\varphi[0/R] = \bigoplus_{X\in\bern{1/2}} \sure{x\mapsto X \sep y\mapsto\lnot X}$ and it is precise.
We start with the derivation of the command $p\coloneqq p'$ in (\ref{eq:vn-inv}), which reads a value from shared state. This derivation proceeds by using the \ruleref{Atom} rule to open the invariant and the \ruleref{Exists} rule to get the specific value of $p'$ at the point of the read.
\begin{equation}\label{eq:vn-inv}
\inferrule*[right=\rulereff{Atom}]{
  \inferrule*[Right={\rulereff{Exists}+\rulereff{Consequence}}]{
    \inferrule*[Right=\rulereff{NSplit1}]{
      \inferrule*[Right=\rulereff{Assign'}]{\;}{
        \vdash_\wk\triple{\sure{p\mapsto X \sep \own(x,y,p')}}{p' \coloneqq p}{\sure{p'\mapsto X\sep p\mapsto X\sep \own(x,y)}}
      }
    }{
      \vdash_\wk\triple{\textstyle\bignd_{X\in \ell} \sure{p\mapsto X \sep \own(x,y,p')}}{p' \coloneqq p}{\textstyle\bignd_{X\in \ell}\sure{p'\mapsto X\sep p\mapsto X\sep \own(x,y)}}
    } % NSPLIT1
  }{
    \vdash_\wk\triple{\sure{x=y\mapsto \tru \sep R=1 \sep \own(p')} \sep \sure{p\in\ell}}{p' \coloneqq p}{(\textstyle\bignd_{X\in \ell}\sure{p'\mapsto X\sep \own(x,y)})\sep\sure{p\in\ell}}
  } % EXISTS
}{
  p\in\ell\vdash_\wk\triple{\sure{x=y\mapsto \tru \sep R=N=1 \sep \own(p')}}{p' \coloneqq p}{\textstyle\bignd_{X\in \ell}\sure{p'\mapsto X\sep \own(x,y)}}
}% ATOM
\end{equation}
Next, we derive a specification for the two sampling operations in (\ref{eq:vn-samp}). First, the \ruleref{NSplit2} rule is used to compositionally reason about the nondeterministic outcomes. In the premise of that rule, the value of $p'$ is deterministic, so that \ruleref{Samp} can be used twice for the writes to $x$ and $y$. In the end, $x$ and $y$ are independently and identically distributed according to $\bern X$. We complete the proof using the rule of \ruleref{Consequence}, with implication (\ref{eq:vn-cons}).
\begin{equation}\label{eq:vn-cons}
\begin{split}
  \sure{p'\mapsto X} \sep (x \sim \bern{X}) \sep (y \sim \bern{X})
  &\implies \textstyle\sure{p'\mapsto X} \sep \left( \bigoplus_{Y\sim\bern{X}} \sure{x \mapsto Y} \right) \sep \left( \bigoplus_{Z\sim\bern{X}} \sure{y \mapsto Z} \right)
  \\
  &\implies \textstyle \bigoplus_{Y\sim\bern{X}} \bigoplus_{Z\sim\bern{X}} \sure{p'\mapsto X\sep x \mapsto Y\sep y \mapsto Z}
  \\
  &\implies \textstyle \left(\bigoplus_{Y\sim\bern{1/2}} \sure{p'\mapsto X\sep x \mapsto Y\sep y \mapsto \lnot Y}\right) \oplus_{2X(1-X)} \sure{x=y\mapsto \tru \sep \own(p')}
  \\
  &\implies \varphi_0 \oplus_{2X(1-X)} \varphi_1
  \\
  &\implies \varphi_0 \oplus_{\ge 2\varepsilon(1-\varepsilon)} \varphi_1
\end{split}
\end{equation}
Now, the derivation is shown below.
\begin{equation}\label{eq:vn-samp}
\inferrule*[right=\rulereff{NSplit2}]{
  \inferrule*[Right={\rulereff{Consequence},(\ref{eq:vn-cons})}]{
    \inferrule*[Right=\rulereff{Seq}]{
      \inferrule*[right=\rulereff{Samp}]{\;}{
        I\vdash_\wk\triple{\sure{p'\mapsto X\sep \own(x,y)}}{x \samp\bern{p'}}{\sure{p'\mapsto X \sep \own(y)} \sep (x\sim\bern{X})}
      } % SAMP
      \inferrule*[right=\rulereff{Samp},vdots=3em,leftskip=15em,rightskip=14em]{\;}{
        I\vdash_\wk\triple{\sure{p'\mapsto X \sep \own(y)} \sep (x\sim\bern{X})}{y \samp\bern{p'}}{\sure{p'\mapsto X} \sep (x\sim\bern{X})\sep (y\sim\bern{X})}
      } % SAMP
    }{
      I\vdash_\wk\triple{\sure{p'\mapsto X\sep \own(x,y)}}{x \samp\bern{p'} \fatsemi y\samp\bern{p'}}{\sure{p'\mapsto X} \sep (x\sim\bern{X})\sep (y\sim\bern{X})}
    } % SEQ
  }{
    I\vdash_\wk\triple{\sure{p'\mapsto X\sep \own(x,y)}}{x \samp\bern{p'} \fatsemi y\samp\bern{p'}}{\varphi_0 \oplus_{\ge 2\varepsilon (1-\varepsilon)} \varphi_1}
  } % CONSEQUENCE
}{
  I\vdash_\wk\triple{\textstyle\bignd_{X\in \ell}\sure{p'\mapsto X\sep \own(x,y)}}{x \samp\bern{p'} \fatsemi y\samp\bern{p'}}{\varphi_0 \oplus_{\ge 2\varepsilon (1-\varepsilon)} \varphi_1}
} % NSPLIT2
\qquad\qquad\qquad
\end{equation}
Note that the use of \ruleref{NSplit2} results in a weak triple, and given that the postcondition of (\ref{eq:vn-samp}) is not precise, it is not possible to use a strong version of the rule. However, after the loop terminates, the postcondition \emph{is} precise, and so we can use \ruleref{Strengthen} later, as seen in (\ref{eq:vn-whl}). Note that for the derivation of the loop body, the only possibility for $R > 0$ is $R=1$, so $\varphi \sep \sure{R = N > 0}$ is equivalent to $\sure{x=y \mapsto \tru \sep\own(p') \sep R=N=1} = \varphi_1 \sep \sure{N=1}$. Now, letting $p = 2\varepsilon(1-\varepsilon)$ be the minimum probability that the rank decreases, the premise of \ruleref{BoundedRank} requires us to show that $\varphi_0$ occurs with probability at least $p$. This derivation is shown below.
\begin{equation}\label{eq:vn-whl}
\inferrule*[right=\rulereff{Consequence}]{
  \inferrule*[Right=\rulereff{Strengthen}]{
    \inferrule*[Right=\rulereff{BoundedRank}]{
      \inferrule*[Right=\rulereff{Seq}]{
        (\ref{eq:vn-inv}) \\ (\ref{eq:vn-samp})
      }{
        I \vdash_\wk\triple{\varphi_1 \sep\sure{N=1}}{p' \coloneqq p \fatsemi x \samp\bern{p'} \fatsemi y\samp\bern{p'}}{\varphi_0 \oplus_{\ge 2\varepsilon (1-\varepsilon)} \varphi_1}
      }
    }{
      I\vdash_\wk\triple{\textstyle\bignd_{R\in\{0,1\}} \varphi}{\whl{x = y}{p' \coloneqq p \fatsemi x \samp\bern{p'} \fatsemi y\samp\bern{p'}}}{\varphi_0}
    }
  }{
    I\vdash\triple{\textstyle\bignd_{R\in\{0,1\}} \varphi}{\whl{x = y}{p' \coloneqq p \fatsemi x \samp\bern{p'} \fatsemi y\samp\bern{p'}}}{\varphi_0}
  }
}{
  I\vdash\triple{
    \sure{x \mapsto 0 \sep y\mapsto 0 \sep \own(p')}
  }{
    \whl{x = y}{p' \coloneqq p \fatsemi x \samp\bern{p'} \fatsemi y\samp\bern{p'}}
  }{x \sim \bern{1/2}}
}
\end{equation}
Finally, we include the initial assignments to $x$ and $y$ to complete the proof.
\[
\inferrule*[right=\rulereff{Seq}]{
  \inferrule*[right=\rulereff{Assign}]{\;}{
    I\vdash\triple{\sure{\own(x,y,p')}}{x \coloneqq 0}{\sure{x\mapsto 0 \sep \own(y,p')}}
  } % ASSIGN
  \quad
  \inferrule*[Right=\rulereff{Seq},vdots=3em,leftskip=5em]{
    \inferrule*[right=\rulereff{Assign}]{\;}{
      I\vdash\triple{\sure{x\mapsto 0 \sep \own(y,p')}}{y \coloneqq 0}{\sure{x\mapsto 0 \sep y\mapsto 0 \sep \own(p')}}
    } % ASSIGN
    \\
    (\ref{eq:vn-whl})
  }{
    I\vdash\triple{\sure{x\mapsto 0 \sep \own(y,p')}}{y\coloneqq 0\fatsemi \whl{x = y}{p' \coloneqq p \fatsemi x \samp\bern{p'} \fatsemi y\samp\bern{p'}}}{x \sim \bern{1/2}}
  }% SEQ
}{
  I\vdash\triple{\sure{\own(x,y,p')}}{x \coloneqq 0\fatsemi y\coloneqq 0\fatsemi \whl{x = y}{p' \coloneqq p \fatsemi x \samp\bern{p'} \fatsemi y\samp\bern{p'}}}{x \sim \bern{1/2}}
}% SEQ
\]

\end{landscape}
\fi

\end{document}